 \definecolor{darkblue}{rgb}{0,0,.5}
 \definecolor{BLUE}{rgb}{0,0,1}
 \definecolor{BLACK}{rgb}{0,0,0}
\newtcolorbox[auto counter,number within=section,]{mybox}[3][]{
	arc=5mm,
	lower separated=false,
	fonttitle=\bfseries,
	colbacktitle=blue!10,
	coltitle=blue!50!black,
	enhanced,
	attach boxed title to top left={xshift=0.5cm,
		yshift=-2mm},
	colframe=blue!50!black,
	colback=blue!10,
	overlay={
		\node[draw=blue!50!black,thick,
		fill= blue!10,rounded corners=1mm,
		yshift=0pt,
		xshift=-0.5cm,
		left,
		text=blue!50!black,
		anchor=east,
		font=\bfseries]
		at (frame.north east) {#3};},
	overlay={
		\node[draw=blue!50!black,thick,
		fill= blue!10,rounded corners=1mm,
		yshift=+1.2mm, 
		xshift=-0.5cm,
		left,
		text=blue!50!black,
		anchor=east,
		font=\bfseries]
		at (frame.north east) {#3};},
	title=#2 \thetcbcounter,#1,breakable}
\def\squareforqed{\hbox{\rlap{$\sqcap$}$\sqcup$}}
\def\qed{\ifmmode\squareforqed\else{\unskip\nobreak\hfil
		\penalty50\hskip1em\null\nobreak\hfil\squareforqed
		\parfillskip=0pt\finalhyphendemerits=0\endgraf}\fi}
\newcommand{\nc}{\newcommand}
\nc{\proj}[1]{| #1\rangle\!\langle #1 |}
\newcommand{\C}[1]{{\cal{#1}}}
\newcommand{\bb}[1]{\textbf{#1}}
\newcommand{\bs}[1]{\boldsymbol{#1}}
\newcommand{\mf}[1]{{\mathfrak{#1}}}
\newcommand{\lr}[1]{{\langle {#1}\rangle}}
\newcommand{\rl}[0]{{\rangle\langle}}
\definecolor{orcidlogocol}{HTML}{A6CE39}
\tikzset{
	orcidlogo/.pic={
		\fill[orcidlogocol] svg{M256,128c0,70.7-57.3,128-128,128C57.3,256,0,198.7,0,128C0,57.3,57.3,0,128,0C198.7,0,256,57.3,256,128z};
		\fill[white] svg{M86.3,186.2H70.9V79.1h15.4v48.4V186.2z}
		svg{M108.9,79.1h41.6c39.6,0,57,28.3,57,53.6c0,27.5-21.5,53.6-56.8,53.6h-41.8V79.1z M124.3,172.4h24.5c34.9,0,42.9-26.5,42.9-39.7c0-21.5-13.7-39.7-43.7-39.7h-23.7V172.4z}
		svg{M88.7,56.8c0,5.5-4.5,10.1-10.1,10.1c-5.6,0-10.1-4.6-10.1-10.1c0-5.6,4.5-10.1,10.1-10.1C84.2,46.7,88.7,51.3,88.7,56.8z};
	}
}
\newcommand\orcid[1]{\!%
	\href{https://orcid.org/#1}{%
		\mbox{%
			\scaleto{%
				\begin{tikzpicture}[yscale=-1,transform shape]
				\pic{orcidlogo};
				\end{tikzpicture}
			}{8pt}%
		}%
	}%
}
\begin{document}

\title{Approximate Decoherence, Recoherence and Records in Isolated Quantum Systems}

\author{Philipp Strasberg~\orcid{0000-0001-5053-2214}}
\affiliation{Instituto de F\'isica de Cantabria (IFCA), Universidad de Cantabria--CSIC, 39005 Santander, Spain}
\affiliation{F\'isica Te\`orica: Informaci\'o i Fen\`omens Qu\`antics, Departament de F\'isica, Universitat Aut\`onoma de Barcelona, 08193 Bellaterra (Barcelona), Spain}
\author{Joseph Schindler~\orcid{0000-0002-8799-9800}}
\affiliation{F\'isica Te\`orica: Informaci\'o i Fen\`omens Qu\`antics, Departament de F\'isica, Universitat Aut\`onoma de Barcelona, 08193 Bellaterra (Barcelona), Spain}
\author{Jiaozi Wang~\orcid{0000-0001-6308-1950}}
\affiliation{Department of Mathematics/Computer Science/Physics, University of Osnabr\"uck, D-49076 Osnabr\"uck, Germany}
\author{Andreas Winter~\orcid{0000-0001-6344-4870}}
\affiliation{Department Mathematik/Informatik–Abteilung Informatik, Universit\"at zu K\"oln, Albertus-Magnus-Platz, 50923 K\"oln, Germany}
\affiliation{F\'isica Te\`orica: Informaci\'o i Fen\`omens Qu\`antics, Departament de F\'isica, Universitat Aut\`onoma de Barcelona, 08193 Bellaterra (Barcelona), Spain}
\affiliation{ICREA -- Instituci\'o Catalana de Recerca i Estudis Avan{\c{c}}ats, Pg. Lluis Companys 23, 08010 Barcelona, Spain}

\date{\today}

\begin{abstract}
 Using the framework of decoherent histories, we study which past events leave detectable records in isolated quantum systems under the realistic assumption that decoherence is approximate and not perfect. In the first part we establish -- asymptotically for a large class of (pseudo-)random histories -- that the number of reliable records can be much smaller than the number of possible events, depending on the degree of decoherence. In the second part we reveal a clear decoherence structure for long histories based on a numerically exact solution of a random matrix model that, as we argue, captures generic aspects of decoherence. We observe recoherence between histories with a small Hamming distance, for localized histories admitting a high purity Petz recovery state, and for maverick histories that are statistical outliers with respect to Born's rule. From the perspective of the Many Worlds Interpretation, the first part -- which views the self-location problem as a coherent version of quantum state discrimination -- reveals a ``branch selection problem'', and the second part sheds light on the emergence of Born's rule and the theory confirmation problem. 
\end{abstract}

\maketitle

\newtheorem{mydef}{Definition}[section]
\newtheorem{lemma}{Lemma}
\newtheorem{conj}{Conjecture}
\newtheorem{thm}{Theorem}[section]
\newtheorem{res}{Result}[section]
\theoremstyle{remark}
\newtheorem{rmrk}{Remark}[section]

\section{Introduction}

Decoherent histories are a tool developed in quantum cosmology to answer two fundamental questions about isolated quantum systems~\cite{GellMannHartleInBook1990, HartleLecture1992, HalliwellANY1995, DowkerKentJSP1996}: which quantum processes can be simulated by a classical stochastic process? and which past events can be unambiguously identified (i.e., leave \emph{records}) in the quantum state? Apart from a few exceptions~\cite{DowkerHalliwellPRD1992, McElwainePRA1996}, it is (often tacitly) assumed that decoherence between histories is ideal or at least so strong such that it does not cause problems for the emergence of classicality and the formation of records. However, recent results have quantified the strength of decoherence in realistic quantum many-body systems and showed that approximate decoherence is generic~\cite{StrasbergReinhardSchindlerPRX2024, WangStrasbergPRL2025}, which motivates the following two further questions. 

First, we ask how many approximately decoherent histories are potentially out there (for a given finite-dimensional Hilbert space) and how many can be reliably distinguished by records? Asymptotically, and for a large class of (pseudo-)random histories, we find that there are many more histories than records. Intuitively, while decoherence is strong enough to guarantee a negligible discrimination error for any \emph{specific} pair of histories, discrimination errors start to accumulate when considering a large number of approximately decoherent histories. This is the content of part one (Sec.~\ref{sec approximate decoherence}) of the paper.

Second, we investigate the structure of decoherence for \emph{long} histories, i.e., when we ``run out'' of Hilbert space to accommodate all histories in an approximately decoherent way. This is done by numerical exact integration of the Schr\"odinger equation for a random matrix model, which we argue captures generic features of decoherence. We find that decoherence develops a clear structure: a specific set of histories ``recoheres'' while the others remain decoherent. This specific set is characterized by histories that tend to be localized, have a high-purity Petz-recovery state, have a small mutual Hamming distance and be atypical according to Born's rule. This significantly extends previous findings about the structure of decoherence~\cite{StrasbergSchindlerArXiv2023} and is the content of part two (Sec.~\ref{sec numerics}) of the paper.

Since the superposition principle is confirmed at increasing scales (e.g., in particle interference experiments~\cite{JoenssonZfP1961, JoenssonAJP1974, ArndtEtAlNat1999, GerlichEtAlNC2011, FeinEtAlNP2019}), our investigation about the structure of decoherence and the (non-)existence of reliable records has relevance for various fields such as quantum computation~\cite{NielsenChuangBook2000, PreskillQuantum2018}, quantum stochastic processes and higher-order operations~\cite{MilzModiPRXQ2021, TarantoEtAlArXiv2025}, quantum state discrimination (in particular for random states)~\cite{BarnettCrokeAOP2009, BaeKwekJPA2015, WatrousBook2018, CieslinskiEtAlPR2024}, among others.

From a fundamental point of view decoherent histories were originally introduced to talk about the entire universe as an isolated quantum system and as such they are closely related to the Many Worlds Interpretation (MWI)~\cite{EverettRMP1957, DeWittPT1970, SaundersEtAlBook2010, WallaceBook2012, Vaidman2021}. The MWI has become one of the leading candidates to solve the quantum measurement problem by positing that the superposition principle applies without limits -- similar to relational quantum mechanics~\cite{RovelliIJTP1996} but in contrast to Griffith's related consistent history interpretation (which rejects the universal validity of unitary time evolution)~\cite{GriffithsJSP1984, GriffithsBook2002, Griffiths2019} or collapse models~\cite{BassiEtAlRMP2013, GhirardiBassi2024}. Thus, its potential (in)validity clearly influences questions such as: Can we (and do we need to) quantize gravity~\cite{OppenheimIJMP2023}? On which ensemble of universes should we base anthropic reasoning~\cite{Carr2007}? Which quantum effects exist in the early universe~\cite{MukhanovBook2005}? 

Owing to the success of decoherence theory~\cite{ZurekRMP2003, JoosEtAlBook2003, SchlosshauerPR2019}, the MWI is often portrayed as leading to a structureless multiverse in which non-interfering parallel branches (or histories in our context) coexist in a peaceful way, describing a multiverse in which ``everything everywhere all at once'' happens (as the title of a famous Hollywood movie suggests~\cite{EEAAO2022}), without anything particularly quantum left. Indeed, Tegmark proclaims that the quantum multiverse ``adds nothing new''~\cite{TegmarkSA2003} on top of a purely classical multiverse formed by the many cosmic horizon marbles in an infinitely extended homogenous universe. Yet, many quantitative questions related to the MWI are still awaiting answer, and our results suggest that the picture of peacefully coexisting and eternally splitting branches has its limits.

\subsection*{Overview and Outline}

This paper is split into two parts. While these parts are related, they can be read independently, and our goal was to structure the paper to facilitate independent reading.

To this end, we start in Sec.~\ref{sec background} with reviewing useful background information. Section~\ref{sec maths} reviews basic mathematical notation and terminology, and it lists mathematical results used later on. In Sec.~\ref{sec decoherent histories} we review the decoherent histories formalism in some detail and explain its importance and its relation to the MWI. Readers familiar with the formalism can look only at the first paragraphs to take note of the notation.

Part one of the paper---investigating the number of approximately decoherent versus the number of distinguishable histories---is presented in Sec.~\ref{sec approximate decoherence}. Part two of the paper---investigating in detail the properties of long histories, their decoherence structure and their relation to Born's rule---is presented in Sec.~\ref{sec numerics}. Both parts have their own summary in Secs.~\ref{sec summary part 1} and~\ref{sec summary part 2}, respectively. These summaries focus on the technical aspects, but our aim was to make them understandable even when skipping the main part and taking note only of the setting explained in Secs.~\ref{sec intro part 1} and~\ref{sec intro part 2}, respectively.

In the final Section~\ref{sec conclusions} we take the freedom to discuss our results from the broader and more speculative angle of the MWI, and we point to open questions.

To keep the paper at a reasonable length, a supplemental material (SM, Secs.~\ref{sec SM proofs} and~\ref{sec SM numerics}) details mathematical derivations and additional numerical results.

\section{Background}\label{sec background}

\subsection{Mathematical Preliminaries}\label{sec maths}

We consider Hilbert spaces $\C H$ of dimension $D$. State vectors without a prime, such as $|\psi\rangle, |\phi\rangle, \dots$, are always \emph{normalized} (i.e., $\|\psi\| = \sqrt{\lr{\psi|\psi}} = 1$), whereas unnormalized states get a prime: $|\psi'\rangle, |\phi'\rangle, \dots$ Clearly $|\psi\rangle = |\psi'\rangle/\|\psi'\|$ (except for the zero vector). Collections of vectors $\{|\psi'_1\rangle, |\psi'_2\rangle, \dots, |\psi'_N\rangle\}$ are denoted by $\{\psi'_i\}_N$. If the quantum context does not matter, we also use a boldface notation ($\bs a, \bs b, \bs c, \dots$) for vectors and sequences.

Below, it is particularly important to understand the \emph{typical} behavior of various random quantities. In essence, typicality (also known as concentration of measure) arises whenever there is a low-dimensional and sufficiently smooth function $f(\bs a)$ that depends on a high-dimensional vector $\bs a$ of sufficiently independent random variables~\cite{TalagrandAP1996}. It then turns out that for most $\bs a$, $f(\bs a)$ is very close to its ensemble average $\mu_f \equiv \mathbb{E}[f(\bs a)]$, i.e., fluctuations around its mean are strongly suppressed relative to the mean itself. How strong the suppression is depends on the context, but we will generically denote this phenomenology by $f(\bs a)\simeq \mu_f$.

We start with considerations about Gaussian vectors. For any fixed orthonormal basis $\{e_i\}_D$ they are defined as $|\psi'\rangle = \sum_{i=1}^D c_i|e_i\rangle$ with independent random complex numbers such that $\mbox{Re}(c_i) \sim \C N(0,1/2D)$ and $\mbox{Im}(c_i) \sim \C N(0,1/2D)$, where $\sim\C N(\mu,\sigma^2)$ means drawn from a Gaussian distribution with mean $\mu$ and variance $\sigma^2$. The distribution of $|\psi'\rangle$ is independent of the orthonormal basis $\{|e_i\rangle\}_D$ above and invariant under unitary rotations. Gaussian vectors are normalized on average: $\mathbb{E}[\lr{\psi'|\psi'}] = 1$. 
In fact, we have $\lr{\psi'|\psi'} \simeq 1$ because (Lemma 1 in Ref.~\cite{HaydenShorWinterOSID2008}):
\begin{equation}\label{eq norm concentration}
 \mbox{Pr}\left[\big|\|\psi'\|^2-1\big|>\epsilon\right] \le 2\exp(-\epsilon^2D/6).
\end{equation}
Since we are later interested only in rough estimates in high dimensional spaces, eqn~(\ref{eq norm concentration}) permits the assumption that results derived for $|\psi'\rangle$ are also (approximately) true for $|\psi\rangle = |\psi'\rangle/\|\psi'\|$ and vice versa. This can be also easily verified numerically for the results reviewed below. Note that normalized Gaussian vectors are distributed according to the Haar measure.

We call $F = |\lr{\psi|\chi}|$ the fidelity between $|\psi\rangle$ and $|\chi\rangle$, which can be viewed as an inverse distance measure: unit/zero fidelity corresponds to parallel/orthogonal states~\cite{NielsenChuangBook2000}. The probability density for the squared fidelity $F^2$ between an arbitrary fixed vector $|\chi\rangle$ and a Haar random vector $|\psi\rangle$ is~\cite{KusMostowskiHaakeJPA1988}
\begin{equation}\label{eq fidelity pd}
	\rho(F^2) = (D-1)(1-F^2)^{D-2},
\end{equation}
which is also known as the beta distribution $B(1,D-1)$. It follows that we can write
\begin{equation}\label{eq Haar overlap}
 \lr{\psi|\chi} = \frac{r}{\sqrt{D}}
\end{equation}
with a complex random number $r\in\mathbb{C}$ with expectation $\mathbb{E}[r] = 0$ and variance $\mathbb{V}(r) = \mathbb{E}[|r|^2] = 1$.

Gram matrices $G$ are particularly important in this paper. Their elements $G_{ij} = \lr{\psi'_i|\psi'_j}$ are given by the mutual scalar products or overlaps of a collection of vectors $\{\psi'_i\}_N$. Note that we can also write $G = X^\dagger X$ with $X = [|\psi'_1\rangle\cdots|\psi'_N\rangle]$ the $D\times N$ matrix of $N$ column vectors $|\psi'_i\rangle$. We have $\det(G) = 0$ if and only if the vectors are linearly dependent. Random Gram matrices of $N$ independent Gaussian vectors $|\psi'_i\rangle\in\mathbb{C}^D$ are called (complex) Wishart distributed matrices $W$~\cite{GoodmanAMSa1963, GoodmanAMSb1963} or, more generally, sample covariance matrices. An important parameter of the Wishart ensemble is the aspect ratio $\gamma \equiv N/D$. Whenever we talk about an asymptotic result below we mean the limit $N,D\rightarrow\infty$ with $\gamma$ fixed. We have~\cite{BaiSilversteinBook}:

\begin{lemma}\label{lemma Wishart}
	For $\gamma\le1$ the eigenvalues of the Wishart ensemble are asymptotically distributed according to the Marchenko-Pastur measure~\cite{MarchenkoPastur1967}
	\begin{equation}\label{eq MP measure}
		d\rho_\text{MP}(\lambda) = \frac{\sqrt{(\lambda_+-\lambda)(\lambda-\lambda_-)}}{2\pi\gamma\lambda} d\lambda
	\end{equation}
	for $\lambda_-\le\lambda\le\lambda_+$ with $\lambda_\pm = (1\pm\sqrt{\gamma})^2$ (and zero outside). Moreover, the eigenbasis is independent of the eigenvalues $\{\lambda_i\}$ and distributed like a random orthonormal basis drawn from the Haar measure.
\end{lemma}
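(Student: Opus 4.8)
I would prove the two assertions separately: the statement about the eigenbasis is a symmetry argument, while the Marchenko--Pastur limit requires the actual random-matrix computation. For the eigenbasis, the plan is to exploit unitary invariance. Write $W=X^\dagger X$ with $X=[\,|\psi_1'\rangle\cdots|\psi_N'\rangle\,]$ the $D\times N$ matrix whose columns are the i.i.d.\ complex Gaussian vectors, so that $W$ is exactly the Gram matrix $G$. Each \emph{row} of $X$ is itself a complex Gaussian vector in $\mathbb{C}^N$, whose law is invariant under unitary rotations (as recalled above for single Gaussian vectors), so for every fixed $V\in U(N)$ the matrix $XV$ has the same law as $X$, and hence $V^\dagger W V=(XV)^\dagger(XV)$ has the same law as $W$: the distribution of $W$ is invariant under unitary conjugation. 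Writing the spectral decomposition $W=U\Lambda U^\dagger$ with $\Lambda$ the diagonal matrix of ordered eigenvalues and $U$ fixed by any measurable convention, conjugation invariance gives $(U,\Lambda)\overset{d}{=}(VU,\Lambda)$ for all fixed $V$, so the conditional law of $U$ given $\Lambda$ is left-invariant under $U(N)$, therefore Haar, and in particular independent of $\Lambda$. The residual $U(1)^N$ phase ambiguity of $U$ and the measure-zero set of coincident eigenvalues do not affect the statement.

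For the spectrum, the plan is to use the resolvent (Stieltjes-transform) method. Set $s_N(z)=\tfrac1N\operatorname{tr}(W-zI)^{-1}$ for $\operatorname{Im}z>0$. A leave-one-column-out analysis---expressing, via the Schur complement, the $j$-th diagonal entry of $(W-zI)^{-1}$ through a Gaussian quadratic form in the removed column $|\psi_j'\rangle$ against a resolvent built from the other columns, and using concentration of such quadratic forms (in the spirit of \eqref{eq norm concentration}) to replace it by a normalized trace of a resolvent, which the relation $N/D=\gamma$ turns into an affine function of $s_N(z)$ up to $O(1/D)$---yields, on averaging over $j$ and letting $N,D\to\infty$ at fixed $\gamma\le1$, a self-consistent fixed-point equation for the limit $s(z)$ of $s_N(z)$. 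I would then verify directly that the Stieltjes transform of the Marchenko--Pastur measure \eqref{eq MP measure} solves this equation and is its unique Herglotz solution with $s(z)\sim-1/z$ as $z\to\infty$; by the Stieltjes continuity theorem this pins down the limiting empirical spectral distribution, and since deleting or replacing one column shifts $s_N(z)$ by $O(1/N)$, McDiarmid's bounded-differences inequality upgrades this to almost-sure convergence of the empirical spectral distribution. An equivalent, more combinatorial route I would mention is the method of moments: $\mathbb{E}[\tfrac1N\operatorname{tr}W^k]$ collapses to a sum over closed index walks whose non-vanishing contributions are indexed by non-crossing pair partitions of the $2k$ Wick factors (equivalently, non-crossing partitions of $\{1,\dots,k\}$), producing the Narayana-weighted sum $\sum_{r=0}^{k-1}\frac{1}{r+1}\binom{k}{r}\binom{k-1}{r}\gamma^r$---precisely the moments of \eqref{eq MP measure}---with Carleman's condition giving moment determinacy and a variance bound the almost-sure statement.

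The main obstacle is the bookkeeping in the spectral part, in whichever form. In the moment method the content is to track which of the two index ranges---the $D$ rows (the Hilbert-space dimension) or the $N$ columns (the number of histories)---stays free after each Wick constraint is imposed, since it is exactly the excess of free $N$-indices over free $D$-indices that generates the powers of $\gamma=N/D$; in the resolvent method the same asymmetry reappears as the factor $\gamma$ converting $\tfrac1N\operatorname{tr}$ into $\tfrac1D\operatorname{tr}$ in the self-consistent equation, and one must bound the error terms (the fluctuation of the Gaussian quadratic form around its normalized-trace value, and the rank-one/rank-two perturbation bound $O(1/N)$ for deleting or replacing a column) uniformly enough to close the fixed point. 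By contrast, the unitary-invariance argument for the eigenbasis, and---should it be needed in the applications---the transfer of statements between the normalized vectors $|\psi_i\rangle$ and the Gaussian vectors $|\psi_i'\rangle$ via \eqref{eq norm concentration}, are routine.
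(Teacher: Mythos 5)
Your proposal is correct. The eigenbasis part coincides with the paper's own one-line argument (the law of $W$ is invariant under conjugation $W\mapsto VWV^\dagger$, hence the conditional law of the eigenvector matrix given the spectrum is Haar), while for the spectrum the paper offers no proof but simply cites Marchenko--Pastur and Bai--Silverstein, whose standard Stieltjes-transform (or moment-method/non-crossing-partition) derivations are exactly what you sketch, so the approaches are essentially the same.
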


For $\gamma>1$ the Marchenko-Pastur measure has an additional delta peak at $\lambda=0$, but we are mostly interested in the case $\gamma\le1$ in this paper. Moreover, the Haar distribution of the eigenvectors requires no asymptotic limit, as it follows from the fact that $W$ and $VWV^\dagger$ for any $N\times N$ unitary matrix $V$ are identically distributed.

The eigenvalues statistics satisfy a (modified) central limit theorem (CLT) of the following form:

\begin{lemma}[Theorem 4.2 in Ref.~\cite{LytovaPasturAP2009}]\label{lemma CLT}
 Let $L_f \equiv \sum_{j=1}^N f(\lambda_j)$ with $f$ a bounded function with bounded derivative. Then, asymptotically we have almost surely
\begin{equation}\label{eq mu def}
 \frac{L_f}{N} \longrightarrow \int_{\lambda_-}^{\lambda_+} f(\lambda) d\rho_{MP}(\lambda) \equiv \mu_f(\gamma).
\end{equation}
Moreover, $L_f-N\mu_f(\gamma)$ asymptotically fluctuates like a Gaussian with zero mean and a variance, which depends on $N$ and $D$ only through $\gamma$.\footnote{See eqn~(4.28) in Ref.~\cite{LytovaPasturAP2009}. Note that Ref.~\cite{LytovaPasturAP2009} assumes $\gamma\ge 1$, which is related to the case $\gamma\le1$ by realizing that $W = X^\dagger X$ and $W' = XX^\dagger$ have the same non-zero eigenvalues but a reciprocal aspect ratio. }
\end{lemma}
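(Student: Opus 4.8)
\emph{Proof strategy.}
The law-of-large-numbers half, $L_f/N\to\mu_f(\gamma)$ almost surely, follows directly from Lemma~\ref{lemma Wishart}: almost-sure weak convergence of the empirical spectral distribution of $W$ to $\rho_{\mathrm{MP}}$ gives $\frac1N\sum_j f(\lambda_j)\to\int f\,d\rho_{\mathrm{MP}}=\mu_f(\gamma)$ for any bounded continuous $f$ (a differentiable $f$ with bounded derivative is Lipschitz, hence continuous). The substance is the fluctuation claim, which I would establish by the Stieltjes-transform (resolvent) method. First, reduce general $C^1$ test functions to resolvents: writing $g_N(z)=\operatorname{tr}(W-z)^{-1}=\sum_j(\lambda_j-z)^{-1}$ for $z\in\mathbb C\setminus\mathbb R$, the Helffer--Sjöstrand formula gives $L_f-\mathbb E L_f=\frac1\pi\int_{\mathbb C}\bar\partial\tilde f(z)\,\bigl(g_N(z)-\mathbb E g_N(z)\bigr)\,dA(z)$, with $\tilde f$ an almost-analytic extension supported near the real axis. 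Since this is a \emph{linear} functional of the centred resolvent, it suffices to prove that the random field $z\mapsto g_N(z)-\mathbb E g_N(z)$ converges, jointly over finitely many $z$, to a Gaussian field and to identify its limiting covariance.

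The key estimates exploit $W=X^\dagger X$ with $X$ a $D\times N$ matrix of i.i.d.\ complex Gaussians, so that Wick's theorem---equivalently the Gaussian integration-by-parts identity $\mathbb E[X_{ki}F]=\frac1D\mathbb E[\partial_{\overline{X_{ki}}}F]$---applies. Applying it to $\mathbb E[z\,g_N(z)]$ and to products of resolvent traces, I would (i) derive the self-consistent equation pinning $\frac1N\mathbb E g_N(z)$ to the Marchenko--Pastur Stieltjes transform $s(z)=\int(\lambda-z)^{-1}d\rho_{\mathrm{MP}}(\lambda)$, tracking the deterministic $O(1)$ correction $b(z)=\mathbb E g_N(z)-Ns(z)$ whose contribution fixes the mean of the limit (asserted to vanish in the present centring); and (ii) compute the limiting covariance $C(z_1,z_2)=\lim\operatorname{Cov}(g_N(z_1),g_N(z_2))$ as an explicit rational function of $s(z_1),s(z_2)$ and the aspect ratio $\gamma$. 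Both $b$ and $C$ depend on the two dimensions only through the counting factor $N/D=\gamma$ and through $s$, which itself depends only on $\gamma$---this is the origin of the claimed $\gamma$-only dependence of the variance. The absence of a $\sqrt N$ normalisation is not an oversight: the eigenvalues are strongly correlated (spectral rigidity), so $L_f-\mathbb E L_f$ is genuinely $O(1)$ with finite, generically nonzero variance.

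It remains to prove Gaussianity of the limiting field, which I would do by showing that all joint cumulants $\kappa_p(g_N(z_1),\dots,g_N(z_p))$ vanish as $D\to\infty$ for every $p\ge3$: either through a graphical/moment expansion (the Gaussian structure restricts the contributing diagrams, and those of order $p\ge3$ are subleading in $1/D$), or through the martingale-difference decomposition $g_N-\mathbb E g_N=\sum_k(\mathbb E_k-\mathbb E_{k-1})g_N$, where $\mathbb E_k$ conditions on the first $k$ columns of $X$: the increments are $O(1/D)$, form a martingale difference array, and the martingale CLT yields the Gaussian limit once the conditional variance is shown to converge. Feeding the Gaussian field back through Helffer--Sjöstrand, $L_f-\mathbb E L_f$ is asymptotically Gaussian with $V_f(\gamma)=\frac1{\pi^2}\iint\bar\partial\tilde f(z_1)\,\overline{\bar\partial\tilde f(z_2)}\,C(z_1,z_2)\,dA(z_1)\,dA(z_2)$; the case $\gamma>1$ treated in Ref.~\cite{LytovaPasturAP2009} transfers to $\gamma\le1$ via the reciprocity of the aspect ratio between $X^\dagger X$ and $XX^\dagger$ noted in the footnote (a deterministic shift by $(D-N)f(0)$). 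I expect the main obstacle to be analytic control of the resolvent identities \emph{uniformly up to the real axis and near the edges} $\lambda=0$ (for $\gamma$ near $1$) and $\lambda=\lambda_\pm$: every error term in the self-consistent equation and in the cumulant bounds deteriorates as $\operatorname{Im}z\to0$, so one must either keep the Helffer--Sjöstrand contour at a controlled distance from the support (and track the resulting errors) or combine it with a local law for $g_N$; this, together with passing from smooth to merely $C^1$ test functions (which stresses the decay of the almost-analytic extension and typically needs an extra mollification step), is where essentially all the work lies---which is why the statement is imported wholesale from Ref.~\cite{LytovaPasturAP2009}.
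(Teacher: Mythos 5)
The paper does not prove this lemma at all: it is imported verbatim as Theorem 4.2 of Ref.~\cite{LytovaPasturAP2009}, and the only "argument" the paper supplies is the footnote observing that the $\gamma\ge1$ case treated there transfers to $\gamma\le1$ because $X^\dagger X$ and $XX^\dagger$ share their nonzero spectrum. Your sketch is therefore doing strictly more than the paper, and as a proof outline it is sound and standard: the almost-sure law of large numbers does follow from the a.s.\ Marchenko--Pastur convergence in Lemma~\ref{lemma Wishart}, the reciprocity argument with the deterministic $(D-N)f(0)$ shift is exactly right, and the observation that the absence of a $\sqrt{N}$ normalisation reflects spectral rigidity is the correct heuristic. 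Your route (Helffer--Sj\"ostrand reduction to the centred resolvent, Gaussian integration by parts for the self-consistent equation and covariance, cumulant or martingale-CLT argument for Gaussianity) is the now-standard resolvent approach; it differs in technique from what Lytova and Pastur actually do, which is to derive a differential equation for the characteristic function $\mathbb{E}[e^{ix(L_f-\mathbb{E}L_f)}]$ of the linear statistic combined with a Fourier-integral representation of the test function, under weaker smoothness assumptions on $f$. Both methods are legitimate; yours is arguably more transparent for the Gaussian (Wishart) case the paper needs, while the cited route handles general entry distributions. Two caveats you already flag but that deserve emphasis: the claim that the $O(1)$ mean correction $b(z)$ vanishes is specific to the complex ($\beta=2$) ensemble and would fail for real Wishart matrices, so it must be verified rather than asserted; and the edge/real-axis control in the Helffer--Sj\"ostrand step is genuinely nontrivial, which is precisely why the paper outsources the whole statement to the literature.
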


Lemma~\ref{lemma CLT} should be contrasted with the standard CLT where the variance of a sum of iid variables scales with $N$. The remarkable strong concentration of the eigenvalue statistics holds more widely in random matrix theory~\cite{BaiSilversteinBook, PasturShcherbinaBook2010}. 

Finally, let $\bs\psi\in\mathbb{C}^N$ be a Haar random vector and consider the $N$-dimensional vector $\bs w$ of weights $w_i = |\psi_i|^2$. These weights are uniformally distributed over the $(N-1)$-probability simplex (also known as the Dirichlet distribution with parameters $\alpha_1 = \dots = \alpha_N = 1$). Its moments are given by
\begin{equation}\label{eq Dirichlet}
 \mathbb{E}\left[\prod_{i=1}^N w_i^{r_i}\right] = \frac{(N-1)!}{(N+R-1)!} \prod_{i=1}^N r_i!
\end{equation}
with $R = \sum_i r_i$. Thus, for example, $\mathbb{E}[w_i] = 1/N$ or $\mathbb{E}[w_iw_j] = (1+\delta_{ij})/[N(N+1)]$.

\subsection{Decoherent Histories Formalism}\label{sec decoherent histories}

We review the decoherent histories formalism for the non-relativistic quantum mechanics of an isolated system with a given absolute time $t$. For an extension to curved spacetime or timeless formulations see, e.g., Refs~\cite{HartleLecture1992, BlencoweAP1991, HalliwellThorwartPRD2002, ChristodoulakisWalldenJPCS2011}. The state of an isolated quantum system  is denoted $|\Psi\rangle\in\C H$ (also called the ``universal'' wave function if the system is the universe) and it is propagated with a unitary time evolution operator $U_{t} = e^{-iHt}$ for some Hamiltonian $H$ and time $t$ ($\hbar \equiv 1$). Physical properties are described by a complete set of orthogonal projectors $\{\Pi_x\}_{x=1}^M$ satisfying $\Pi_x\Pi_{x'} = \delta_{x,x'}\Pi_x$ (with $\delta_{x,x'}$ the Kronecker delta) and $\sum_x\Pi_x = I$ (with $I$ the identity on $\C H$). Those projectors partition the Hilbert space into orthogonal subspaces $\C H = \bigoplus_x\C H_x$ with dimension $D_x = \dim\C H_x = \mbox{tr}\{\Pi_x\}$. The total dimension is $D = \dim\C H = \sum_x D_x$.\footnote{Throughout the manuscript we assume that $\C H$ has finite dimension except in Sec.~\ref{sec infinite D} where we discuss the validity of the first part in infinite dimensions. A discussion of whether our (observable) universe could be describable by a finite-dimensional Hilbert space can be found in Sec.~\ref{sec conclusions}.} Borrowing terminology from statistical mechanics, we call $\{\Pi_x\}$ a coarse-graining. The terminology ``coarse'' is indeed justified because non-trivial, approximate decoherence requires $M\ll D$.

The histories formalism is introduced using a set of (for simplicity of presentation) equidistant times $t_k = k\Delta t$ with $k\in\{0,1,\dots,L\}$ with $t_0$ the initial and $t_L$ the final or current time. We then write the time evolution of the wave function as a ``sum over histories'' in spirit of (in fact, as a generalization of) Feynman's path integral:
\begin{equation}\label{eq histories}
 |\Psi_L\rangle = U_{t_L-t_0} |\Psi_0\rangle = \sum_{\bs x} |\psi'(\bs x)\rangle.
\end{equation}
Here, $\bs x = (x_L,\dots,x_1)$ is called a history of length $L$ and the associated history state is
\begin{equation}
 |\psi'(\bs x)\rangle \equiv \Pi_{x_L} U_{\Delta t} \cdots U_{\Delta t}\Pi_{x_1} U_{\Delta t}|\Psi_0\rangle
\end{equation}
with $|\Psi_0\rangle$ the initial state. Note that there are $N=M^L$ many history states for a given $M$ and $L$. Furthermore, the $\{\Pi_{x_k}\}$ can describe different coarse-grainings at different times, but this aspect is irrelevant for our treatment and not further emphasized.

As far as single-time observables are concerned, eqn~(\ref{eq histories}) offers no computational advantage (rather the opposite) over, e.g., the Schr\"odinger or Heisenberg picture. However, the conditioning on intermediate events $x_k$ at times $t_k$ causes the $|\psi'(\bs x)\rangle$ to have additional information. Their overlap matrix $\mf D(\bs x,\bs x') \equiv \lr{\psi'(\bs x)|\psi'(\bs x')}$ is called the decoherence functional (DF) and of particular importance is the condition of a diagonal DF,
\begin{equation}\label{eq DHC}
 \mf D(\bs x,\bs x') = 0 ~~~ \forall \bs x\neq\bs x',
\end{equation}
known as the decoherent histories condition (DHC). It is worth highligthing the broad scope of the decoherent histories formalism since the DF appears in many contexts (usually in disguise under a different name). Indeed, the DF is a $(2L-1)$-point out-of-time-order correlator~\cite{SwingleNP2018}, a specific Kirkwood-Dirac quasiprobability~\cite{GherardiniDeChiaraPRXQ2024, ArvidssonEtAlNJP2024}, a Gram matrix in quantum measurement theory~\cite{BarnettCrokeAOP2009, BaeKwekJPA2015, WatrousBook2018, CieslinskiEtAlPR2024}, and a sample covariance matrix in statistics. Moreover, the DF is formally a $M^L\times M^L$ density matrix: it has unit trace and it is Hermitian and positive. As such it is connected to superdensity operators defined over a history Hilbert space~\cite{IshamJMP1994, IshamLindenSchreckenbergJMP2004, CotlerEtAlJHEP2018}, to quantum stochastic processes~\cite{MilzModiPRXQ2021} and higher-order quantum operations~\cite{TarantoEtAlArXiv2025}, and it has been studied in the context of dynamical entropy and quantum chaos~\cite{AlickiFannesLMP1994, DowlingModiPRXQ2024, ODonovanEtAlArXiv2025}. For this paper, however, the following two consequences of the DHC are the most relevant.

First, the DHC implies that the multi-time Born probabilities $q(\bs x) \equiv \lr{\psi'(\bs x)|\psi'(\bs x)}$ obey the probability sum rule~\cite{GriffithsJSP1984}. For an outside (Copenhagen) observer this means that there is no measurement disturbance at the coarse-grained level: averaging over measurement results gives the same statistics as not measuring. The probability sum rule (or Kolmogorov consistency condition) also implies that the histories $\bs x$ can be simulated by a \emph{classical} stochastic process~\cite{BittnerRosskyJCP1997, SmirneEtAlQST2018, StrasbergDiazPRA2019, MilzEtAlQuantum2020, MilzEtAlPRX2020, SzankowskiCywinskiQuantum2024}. Conversely, if one wants to find detectable dynamical quantum effects, one has to look for conditions violating the DHC. Notice that the DHC implies the validity of Leggett-Garg inequalities~\cite{EmaryLambertNoriRPP2014}.

Second, the DHC is \emph{equivalent} to the existence of records about the history $\bs x$, and it thus provides the minimal requirement to solve the preferred basis problem in the MWI, as discussed below. Here, ``existence of records'' means that the DHC holds if and only if the $|\psi'(\bs x)\rangle$ live in orthogonal subspaces at the final time $t_L$, which implies that the history $\bs x$ can be uniquely revealed by a final projective measurement of some (potentially abstract) record observable $R$. Note that existence of records in the sense of decoherent histories does not imply redundantly encoded records, which is central to quantum Darwinism~\cite{ZurekNP2009, KorbiczQuantum2021}. On the other hand, quantum Darwinism does not determine which past events $\bs x$ leave records, i.e., it does not imply the DHC. An attempt to unify both aspects can be found in Ref.~\cite{RiedelZurekZwolakPRA2016}.

The question of which events leave in principle detectable records in a unitarily evolving quantum system is profound.\footnote{In classical mechanics the situation is trivial from a fundamental point of view. Every pure state (point in phase space) has a unique history of events. Every possible $x_j$ happens either with probability zero or one.} For instance, there are events in quantum mechanics, such as $x_j =~$``the particle passed the left slit'' in a double slit experiment, that do not leave any record at later times. In contrast, we strongly believe that events such as $x_j =~$``in a Stern-Gerlach type experiment we found a silver atom with spin up'' or $x_k =~$``there were dinosaurs on Earth 100 million years ago'' faithfully represent objective facts about the past, even though the global wave function $|\Psi\rangle$ might have been in a superposition of different events.\footnote{Including perhaps $x'_k =$~``the asteroid didn't hit Earth 65 million years ago'', so dinosaurs still rule our world in another branch.}

It is instructive to connect the picture above with standard quantum measurement theory, especially the fact that simultaneously measurable observables must commute. We rewrite eqn~(\ref{eq histories}) as
\begin{equation}\label{eq histories Heisenberg}
 |\Psi_L\rangle = \sum_{\bs x} \Pi_{x_L}^{(0)}\Pi_{x_{L-1}}^{(1)} \cdots \Pi_{x_1}^{(L-1)}|\Psi_L\rangle
\end{equation}
with $\Pi_x^{(k)} = U_{-k\Delta t}^\dagger\Pi_x U_{-k\Delta t}$ denoting the Heisenberg picture operator with respect to the current time $t_L$ (note that the final state on the right hand side is $|\Psi_L\rangle$). Thus, we find that the DHC holds when the projectors $\Pi_{x_k}^{(L-k)}$ pairwise commute for different $k$ and $x_k$, i.e., they correspond to simultaneously measurable observables determining the history. In fact, if pairwise commutativity holds, the string of projectors $\Pi_{x_L}^{(0)}\Pi_{x_{L-1}}^{(1)} \cdots \Pi_{x_1}^{(L-1)}$ is itself a projector (in general it is not). The converse direction requires the additional assumption that the DHC holds for an \emph{informationally complete set} of initial states $|\Psi_0\rangle$. This is proven in the SM (Sec.~\ref{sec SM pairwise commutativity}), see also Ref.~\cite{SchererSoklakovJMP2005}.

In light of this result, it is intruiging to ask whether there are records (for instance, about the early universe) that are only a consequence of a specifically fine tuned initial state $|\Psi_0\rangle$, or whether all our records result from commuting projectors, but answering this question goes beyond the scope of the present paper. It is also important to remark that we can not have exact knowledge of $|\Psi\rangle$ or $|\psi'(\bs x)\rangle$ (e.g., owing to the no-cloning theorem), nor is it possible to infer the (squared) norms $q(\bs x)$ of the branches. This would be only possible for an outside observer, who has access to an ensemble of identically prepared universes, but by definition of the universe there is neither any outside observer nor any ensemble.

Finally, we connect the framework above to the preferred basis problem of the MWI, which is: from the infinitely many different ways in which the universal wave function $|\Psi\rangle = \sum_\alpha |\phi'_\alpha\rangle$ can be decomposed into orthogonal states $\{|\phi'_\alpha\rangle\}$, which one is acceptable to define the branches of the quantum multiverse? Unfortunately, there is no agreed on definition of acceptable branch decomposition~\cite{Vaidman2021, RiedelQuantum2025}, but it seems desirable that $\alpha$ allows us to identify the past events $\bs x$ that define our ``history'' in the universe (the act of finding your branch in the multiverse is also known as \emph{self-location}). Thus, there should be a function $f$ such that $\bs x = f(\alpha)$, and if we define $|\psi'(\bs x)\rangle = \sum_{\alpha\in f^{-1}(\bs x)}|\phi'_\alpha\rangle$, the requirement that $|\Psi\rangle = \sum_\alpha |\phi'_\alpha\rangle$ solves the preferred basis problem then implies that the so defined $|\psi'(\bs x)\rangle$ obey the DHC. Conversely, any set of histories satisfying the DHC is a candidate solution to the preferred basis problem. It has been (rightly) argued that the DHC is not enough to define the branches of the multiverse as it leaves a vast amount of freedom in defining a decomposition of $|\Psi\rangle$, some of which are utterly complicated decompositions with no similarity to our experienced world. This is called the set selection problem~\cite{PazZurekPRD1993, DowkerKentPRL1995, DowkerKentJSP1996, RiedelZurekZwolakPRA2016}. This does not conflict with the view taken here as we only demand that the DHC is a \emph{minimal} or \emph{necessary}, not an exhaustive or sufficient, requirement. Different proposals to solve the preferred basis problem can be found in Refs.~\cite{RiedelPRL2017, WeingartenFP2022, OllivierEnt2022, TouilEtAlQuantum2024, TaylorMcCullochQuantum2025}.

To summarize, the DHC identifies which coarse-grainings give rise to a process that is indistinguishable from a classical stochastic process and leave records about said process inside an isolated quantum system. Since our existence is arguably constituted by the records we have about past events, the DHC is an essential tool for the MWI. Even outside the context of the MWI, decoherent histories contain important information to understand the quantum dynamics of unitarily evolving subsystems and which of their properties are detectable.

\section{Approximate Decoherence}\label{sec approximate decoherence}

\subsection{Preliminaries}\label{sec intro part 1}

In reality the best one can hope for is approximate decoherence. Exact decoherence occurs only for exactly conserved quantities, fine-tuned examples, or perhaps in the strict thermodynamic limit when a finite-dimensional Hilbert space (no matter how large) no longer adequately describes the situation. Unfortunately, the implications of approximate decoherence are much harder to understand, and we here report a few steps forward in this direction, after recalling some known results.

First of all, since the norm of $|\psi'(\bs x)\rangle$ can be very small, approximate decoherence is best understood using the normalized history states $|\psi(\bs x)\rangle$ and by considering their Gram matrix~\cite{DowkerHalliwellPRD1992}
\begin{equation}\label{eq NDF}
 G_{\bs x,\bs x'} \equiv \lr{\psi(\bs x)|\psi(\bs x')},
\end{equation}
which we call the normalized decoherence functional (NDF). $G$ is a complex $N\times N$ matrix, with $N=M^L$ the number of histories $\bs x$, which is Hermitian, positive and has trace $\mbox{tr}\{G\} = N$ (note that the trace is over $\mathbb{C}^N$, not over $\C H$). Furthermore, Cauchy-Schwarz implies $|G_{\bs x,\bs x'}|\le1$, and a useful notion of approximate decoherence then arises for $|G_{\bs x,\bs x'}| \ll 1$. Yet, the question \emph{how small} $|G_{\bs x,\bs x'}|$ needs to be to count as ``useful'' has not been answered. The goal of this section is to overcome this deficiency based on recent findings of the generic smallness of eqn~(\ref{eq NDF}).

Before stating them, note that by construction of the history states we have $G_{\bs x,\bs x'} \sim \delta_{x_L,x'_L}$, i.e., orthogonality with respect to the final projection is guaranteed. For notational convenience below, we consider histories $\tilde{\bs x} = (x_{L+1},x_L,\dots,x_1)$ of length $L+1$ such that
\begin{equation}\label{eq DF constraint}
 G_{\tilde{\bs x},\tilde{\bs x}'} = \delta_{x_{L+1},x'_{L+1}} G_{\bs x,\bs x'}(x_{L+1})
\end{equation}
with the \emph{conditional} NDF $G_{\bs x,\bs x'}(x)$. It is the Gram matrix of all history states conditioned on ending in subspace $\C H_x$ at time $(L+1)\Delta t$.

Now, under conditions that we specify afterwards, the findings of Refs.~\cite{StrasbergReinhardSchindlerPRX2024, WangStrasbergPRL2025} suggest that the conditional NDF can be generically expressed as (for earlier evidence in unison with this conjecture see Refs.~\cite{GemmerSteinigewegPRE2014, SchmidtkeGemmerPRE2016, NationPorrasPRE2020, StrasbergEtAlPRA2023, StrasbergSP2023})
\begin{equation}\label{eq DF scaling form}
 G(x) = I + \frac{R}{D^{\alpha}_x}.
\end{equation}
Here, $I$ is the $N\times N$ identity matrix and $R$ is a Hermitian \emph{pseudo-random} matrix with diagonal zero. Its off diagonal elements $R_{\bs x,\bs x'}\in\mathbb{C}$ are zero-mean unit-variance pseudo-random numbers: they are determined by $U_{\Delta t}$, $\{\Pi_x\}$ and $|\Psi_0\rangle$ in such a complicated way such that they \emph{appear} erratically varying without any structure---the exact nature of the (pseudo-)randomness of $R$ is, however, currently unknown, and we will consider different models below. Furthermore, $\alpha$ is a scaling exponent obeying $\alpha\lesssim 1/2$ over a large range of time scales $\Delta t$, including relevant nonequilibrium time scales. At equilibrium time scales one finds $\alpha\approx1/2$, whereas for shorter time scales one usually gets $\alpha<1/2$ owing to the fact that $\alpha\rightarrow0$ for $\Delta t\rightarrow0$. Exceptions with $\alpha>1/2$ are, however, possible~\cite{WangStrasbergPRL2025}.

The validity of eqn~(\ref{eq DF scaling form}) requires three assumptions. First, we need to restrict attention to coarse-grainings that are sufficiently coarse and slow (or quasi-conserved). Only those are perceptible to us humans and relevant for the quantum-to-classical limit. Note that this also includes modern quantum experiments: while the quantum systems we measure are small and evolve fast, the measurement outcomes---which are really what matters---are stored in coarse and stable, long-lived (i.e., slow) memories. Second, the system needs to be non-integrable or chaotic, otherwise decoherence is qualitatively weaker~\cite{WangStrasbergPRL2025}. Third, the number of histories must obey $N \lesssim D$~\cite{StrasbergSchindlerArXiv2023}. What happens for very long histories when $N\gg D$ is the content of Sec.~\ref{sec numerics}, and our findings indicate that eqn~(\ref{eq DF scaling form}) remains valid for a subset of histories.

Equation~(\ref{eq DF scaling form}) is the central \emph{ansatz} for this section. It reveals an exponential (as a function of the particle number $N_\text{particle}\sim\log D$) suppression of coherences for coarse and slow observables in chaotic quantum many-body systems. Moreover, decoherence is \emph{robust} in the sense that almost all initial states, Hamiltonians and coarse and slow coarse-grainings give rise to eqn~(\ref{eq DF scaling form})~\cite{StrasbergReinhardSchindlerPRX2024, WangStrasbergPRL2025}, which could be important to solve the set selection problem~\cite{StrasbergReinhardSchindlerPRX2024}. In contrast, we find suggestions to restore exact decoherence by slightly distorting the coarse-graining $\{\Pi_x\}$ \emph{ad hoc} and unrealistic given that they depend sensitively on the in principle unknowable microscopic details of $|\Psi_0\rangle$~\cite{DowkerKentPRL1995, DowkerKentJSP1996, HalliwellPRD2001, HalliwellPRA2005}. To conclude, it seems that we have to accept approximate decoherence at face value, and this section explores some of its consequences, mostly through the lens of quantum information theory.

We consider two questions. First, we answer in Sec.~\ref{sec end of decoherence} how many histories can potentially satisfy the condition~(\ref{eq DF scaling form}) of approximate decoherence, which---geometrically speaking---corresponds to approximate \emph{orthogonality}. Second, we answer how many histories can be reliably distinguished by observers inside the multiverse in Sec.~\ref{sec records}. A summary is given in Sec.~\ref{sec summary part 1}. Since the mathematical problems below are quite detached from the particular context of decoherent histories, and might even be relevant for other contexts, we switch to a slimmer notations and use subscripts $i,j,\dots$ instead of the history labels $\bs x, \bs x', \dots$.

\subsection{The End of Decoherence}\label{sec end of decoherence}

In a Hilbert space of dimension $D$ there can be at most $D$ pairwise orthogonal vectors or, in the language of decoherent histories, at most $D$ histories that are exactly decoherent.\footnote{When counting the number of histories we have to exclude trivial ``null histories'' $|\psi'(\bs x)\rangle = 0$. For instance, if $[\Pi_x,H] = 0$ for all $x$, then all $\bs x = (x_L,\dots,x_1)$ for any $L$ are formally decoherent, but $|\psi'(\bs x)\rangle = 0$ for all $\bs x$ that do not satisfy $x_L = \dots = x_1$.} However, when vectors do not need to be exactly orthogonal, there might be more than $D$ that are approximatly decoherent. Since there are infinitely many \emph{a priori} equally conceivable history decompositions $|\Psi\rangle = \sum_{i=1}^N|\psi'_i\rangle$, it becomes relevant to ask how large can $N$ be without jeopardizing the approximate decoherence of eqn~(\ref{eq DF scaling form}).

Let $N_{\mathbb{C}}(D,\epsilon)$ be the maximum number of vectors $|\psi_i\rangle \in \mathbb{C}^D$ that satisfy $|\lr{\psi_i|\psi_j}|\le\epsilon$ for all $i\neq j$. Then, the central insight of this section is
\begin{equation}\label{eq N bounds}
 e^{(D-1)\epsilon^2} \lesssim N_{\mathbb{C}}(D,\epsilon) \lesssim e^{\C O(D\epsilon^2\ln \epsilon^{-1})},
\end{equation}
where the symbol $\lesssim$ indicates asymptotic (large $D$) approximations and $\C O[f(x)]$ is used to denote an asymptotic proportionality with $f(x)$. Equation~(\ref{eq N bounds}) shows that for $\epsilon \sim D^{-\alpha}$ with $\alpha < 1/2$ the possible number of approximately orthogonal vectors growth \emph{faster} than any polynomial in $D$ (though the growth is slower than exponential). For $\alpha=1/2$ the lower bound in eqn~(\ref{eq N bounds}) is useless, and $N_{\mathbb{C}}(D,\epsilon)$ scales polynomially with $D$ (see the references below for further details).

In context of decoherent histories, we need to consider $N_\mathbb{C}(D_x,\epsilon_x)$ for each final subspace $x$, but the conclusion remains the same. Instead of having $D_x$ many orthogonal histories per subspace, there are $\gtrsim \exp(D_x^{1-2\alpha})$ many more for $\alpha<1/2$. Of course, whether a given physical system produces history states compatible with the bound in eqn~(\ref{eq N bounds}) is \emph{a priori} unclear. For instance, our numerical example in Sec.~\ref{sec numerics} shows that some histories become recoherent such that no bound $|\lr{\psi_i|\psi_j}|\le\epsilon\ll1$ exists for all $i\neq j$. However, recoherent states do not negatively impact the lower bound in eqn~(\ref{eq N bounds}) and the overall conclusion remains the same: the number of approximately decoherent histories is potentially tremendously larger than $D$.

We remark that McElwaine was the first to consider the above problem in the context of decoherent histories~\cite{McElwainePRA1996}. He mostly specialized to the cases $|\Re G_{\bs x,\bs x'}| \le 1/\sqrt{2D+2}$ and $|\Re G_{\bs x,\bs x'}| \le 1/(2D)$, which according to eqn~(\ref{eq DF scaling form}) do not generically occur in nature. Also in other physical contexts the above problem received attention~\cite{ChaoEtAl2017, ChakravartyJHEP2021, SoulasFP2024}. Mathematically, finding $N_{\mathbb{C}}(D,\epsilon)$ is also known as the spherical packing, spherical code or kissing problem. Equation~(\ref{eq N bounds}) is essentially the Johnson-Lindenstrauss lemma~\cite{JohnsonLindenstraussAMS1984, DasguptaGuptaRSA2003}, which states that any subset of $N$ points in an Euclidean space can be embedded into $\mathbb{R}^D$ with $D = O(\epsilon^{-2}\log N)$ while distorting the pairwise distance between the points by at most $1\pm\epsilon$. Two proofs by geometric and probabilistic means for the lower bound (which for us is the more interesting bound) are given in the SM (Sec.~\ref{sec SM number of histories}). The upper bound can be obtained from Theorem 9.3 in Ref.~\cite{AlonDM2003}. Moreover, for $\epsilon \sim 1/\sqrt{D}$ the growth is bounded by a polynomial, which becomes linear when $\epsilon < 1/\sqrt{D}$~\cite{TaoBlog2013, HaasHammenMixon2017}, which follows from the Welch bound~\cite{WelchIEEE1974} or the Kabatjanskii-Levenstein bound~\cite{KabatianskyLevenshtein1978}.

\subsection{Self-location and Identification of Records}\label{sec records}

In our lighter notation the decomposition of the global wave function $|\Psi\rangle$ (suppressing any time dependence) into histories reads
\begin{equation}\label{eq self-loc 1}
	|\Psi\rangle = \sum_{i=1}^N |\psi'_i\rangle = \sum_{i=1}^N \sqrt{q_i}|\psi_i\rangle.
\end{equation}
While the $|\psi_i\rangle$ are in general not orthogonal, the $q_i$ are positive and normalized by construction. The goal in the following is to find a set of orthonormal record states $\{|r_j\rangle\}_D$ that decomposes the global wave function,
\begin{equation}\label{eq self-loc 2}
	|\Psi\rangle = \sum_{j=1}^D \sqrt{p_j}|r_j\rangle = \sum_{j=1}^D \sum_{i=1}^N \lr{r_j|\psi'_i} |r_j\rangle,
\end{equation}
such that the $|r_j\rangle$ \emph{optimally distinguish} the $|\psi_i\rangle$. We specify the meaning of ``optimally distinguish'' later, but remember that in case of exact decoherence one can choose the $|r_j\rangle$ such that $\lr{r_j|\psi_i} = \delta_{ij}$. In this case the records are perfect and uniquely identify the histories, whereas for approximate decoherence we would hope to get close to this ideal case: $\lr{r_j|\psi_i} \approx \delta_{ij}$. Obviously, as soon as $N$ exceeds $D$ (recall from Sec.~\ref{sec end of decoherence} that $N\gg D$ is possible without sacrificing approximate decoherence) one can hope to distinguish well only $D-1$ many histories whereas the last ``dull'' record $|r_D\rangle$ would only indicate that the system is in one of the possible $N-(D-1)$ remaining histories. To avoid tangled case studies, we will discuss this last case only in Sec.~\ref{sec large N} and for now assume that $N\le D$.

Next, we discuss the subtle probabilistic interpretation of eqns~(\ref{eq self-loc 1}) and~(\ref{eq self-loc 2}). Suppose first that $|\Psi\rangle$ describes the state of a subsystem that can be measured by an outside observer (as in the Copenhagen interpretation). Then, $q_i$ is the probability to obtain history $i = (x_L,\dots,x_1)$ after successive projective measurements at times $k\Delta t$ revealed outcomes $x_k$, but it is \emph{not} the case that the $|\psi_i\rangle$ and $q_i$ can be revealed by any single-time final projective measurement of the subsystem. In contrast, the $p_j$ are the probability to find the subsystem at the final time in the state $|r_j\rangle$. Likewise, the object $q_{ij} \equiv |\lr{r_j|\psi'_i}|^2$ is the joint probability to measure $x_k$ at time $k\Delta t$ for $k\in\{1,\dots,L\}$ --- or, equivalently, the probability to randomly draw a member of the \emph{ensemble} $\rho = \sum_i q_i |\psi_i\rl\psi_i|$ --- \emph{and} to measure record $j$ at time $L\Delta t$ (after measuring $x_L$). Again, $q_{ij}$ is \emph{not} a probability that can be extracted from $|\Psi\rangle$ by any conventional single-time measurement. In particular, $q_i = \sum_j q_{ij}$ but $p_j \neq \sum_i q_{ij}$ in general, and owing to these interpretational difficulties we will call $q_i$ and $q_{ij}$ \emph{weights} instead of probabilities. Nevertheless, we believe that $q_{j|i} = q_{ij}/q_i = |\lr{r_j|\psi_i}|^2$ (the squared fidelity) is a good measure for how close record $|r_j\rangle$ is to history $|\psi_i\rangle$. The closer $q_{j|i}$ is to $\delta_{ij}$, the more certain it is that the system followed history $i$ when finding record $j=i$. We believe the same interpretation of $q_{j|i}$ should hold within the MWI when $|\Psi\rangle$ describes the global wave function of the universe (including its observers) even though there is no way to measure $q_i$, $p_j$ or $q_{ij}$ in the multiverse (of course, one might have more or less plausible theories for them).\footnote{Not to mention that the very notion of ``probability'' is a debated concept within the MWI, see, e.g., Ref.~\cite{SaundersEtAlBook2010} and Sec.~\ref{sec discussion part 2}.}

We call the problem of finding records that optimally distinguish histories the \emph{self-location problem} (SLP), as this is already common terminology in the philosophy of the MWI to describe the problem of how an inside observer in the multiverse identifies their branch or history. To the best of our knowledge, no formal and quantitative discussion of the SLP in this form has been given so far. We believe the SLP (both in case of a subsystem or the whole universe) is best seen as a coherent version of quantum state discrimination (QSD)~\cite{BarnettCrokeAOP2009, BaeKwekJPA2015, WatrousBook2018, CieslinskiEtAlPR2024}. Indeed, the analogous problem in QSD starts from a given ensemble $\{q_i,|\psi_i\rangle\}$ of probabilities and states and seeks to find a set of orthogonal projectors $\{|s_j\rl s_j|\}$ that optimally discriminate the states $|\psi_j\rangle$. The difference between SLP and QSD comes from the constraints that there is a coherent superposition $\sum_i \sqrt{q_i}|\psi_i\rangle$ instead of an ensemble and that the superposition of record states has to add up to $|\Psi\rangle$ as in eqn~(\ref{eq self-loc 2}). 

The rest of this section is structured as follows. In Sec.~\ref{sec concepts} we make the above notion of ``optimally distinguish'' quantitative, we introduce useful concepts in QSD and present general bounds. In Sec.~\ref{sec Haar random histories} we then study these concepts by using maximally (Haar) random histories compatible with the scaling of eqn~(\ref{eq DF scaling form}). Since Haar random histories can be criticized as being unrealistic, we consider less random choices in Sec.~\ref{sec pseudo-random histories}, which give rise to identical results. Section~\ref{sec large N} treats the case $N>D$, in which any figure of merit for QSD or SLP can only become worse, and Sec.~\ref{sec infinite D} shows how to generalizes results to infinite dimensional Hilbert spaces. For a summary see Sec.~\ref{sec summary part 1}.

Finally, eqns~(\ref{eq DF constraint}) and~(\ref{eq DF scaling form}) suggest that each subspace $\C H_x$ has to be considered separately, but the formal analysis and the qualitative results are the same for each subspace---only quantitative differences appear for different $D_x$. Thus, in favor of a simple notation we will write $D$ instead of $D_x$ in the following. 

\subsubsection{Basic concepts and general bounds}\label{sec concepts}

In QSD there is no unique notion to ``optimally distinguish'' quantum states, instead optimality depends on the chosen figure of merit. While we will discuss in total three different notions, a widely used figure of merit is the average success probability to correctly identify state $|\psi_i\rangle$ when measuring $|s_i\rangle$:
\begin{equation}\label{eq average success prob QSD}
	\overline P_S \equiv \sum_i q_i |\lr{\psi_i|s_i}|^2.
\end{equation}
We denote by $\overline P_S^\text{opt} = \max \overline P_S$ the optimal average success probability maximized over all measurement bases $\{|s_i\rangle\}$. Clearly, every choice of $\{|s_i\rangle\}$ gives a lower bound on $\overline P_S^\text{opt}$, but an upper bound can be inferred from the square root (sqrt) or ``pretty good'' measurement that is constructed by setting $|s_j\rangle = \sum_{i=1}^N (\sqrt{G}^{-1})_{ij}|\psi_i\rangle$ for $j\in\{1,2,\dots,N\}$~\cite{Holevo1979, HausladenWoottersJMO1994}. The upper bound is obtained in the following sense: if $\overline P_S$ denotes the average success probability of the sqrt measurement, then $\overline P_S^\text{opt} \le(\overline P_S)^{1/2}$~\cite{BarnumKnillJMP2002}. The explicit expression in terms of the Gram matrix reads
\begin{equation}\label{eq success prob}
	\overline P_S = \sum_{j=1}^N q_j |(\sqrt{G})_{jj}|^2.
\end{equation}

Next, recall that SLP is a version of QSD with \emph{additional constraints} such that any optimized figure of merit for QSD will upper bound the same figure of merit for SLP. The analogy between QSD and SLP also suggests a strategy to solve the SLP. First solve the QSD problem to find good (ideally optimal) projectors $|s_j\rl s_j|$. Next, if $\mbox{span}\{|\psi_j\rangle\}_N = \mbox{span}\{|s_j\rangle\}_N$ and if we know $\lr{s_j|\Psi}$, we can choose records $|r_j\rangle \equiv |s_j\rangle$, and $|\Psi\rangle = \sum_j \lr{r_j|\Psi}|r_j\rangle$ becomes a good (ideally optimal) solution for the SLP. If that is not possible, one can still hope that the superposition $|\Phi\rangle \equiv \sum_j c_j|s_j\rangle$ for some reasonable coefficients $c_j$ is close to $|\Psi\rangle$ in the sense that a unitary $V$ rotating $|\Phi\rangle$ to $|\Psi\rangle = V|\Phi\rangle$ has only a small impact on the figure of merit. This hope is indeed justified if we consider the average success probability, which we define for the SLP as
\begin{equation}\label{eq average success prob SLP}
 \overline Q_S \equiv \sum_j q_j |\lr{\psi_j|r_j}|^2.
\end{equation}
Here, the records $|r_j\rangle = V|s_j\rangle$ are rotated by $V$. To bound $|\overline P_S - \overline Q_S|$, we introduce a projector $\Pi$ on the two-dimensional subspace $\mbox{span}\{|\Phi\rangle,|\Psi\rangle\}$ and denote by $T \equiv \mbox{tr}\{\Pi\varrho\}$ its overlap with the ensemble $\varrho \equiv \sum_j q_j|s_j\rl s_j|$. In the SM (Sec.~\ref{sec SM bound V}) we derive that
\begin{equation}\label{eq QSD SLP bound}
	\left|\overline P_S - \overline Q_S\right| \le 2\sqrt{2(1-F)\overline{P}_ST} + 2(1-F)T.
\end{equation}
Thus, either a large fidelity $F$ \emph{or} a small overlap $T$ imply $\overline P_S \approx \overline Q_S$. In particular, for a sufficiently spread out distribution $q_j$, we expect $T\sim1/N$ (for $q_j=1/N$ one finds $T = 2/N$). Thus, $|\overline P_S - \overline Q_S|$ scales as $\sqrt{(1-F)}/\sqrt{N}$. Since we are here mostly interested in the large-$N$ case, we conclude that the difficult part of the SLP is to solve the QSD problem, on which we will mostly focus in the following. We expect similar conclusions also for other figures of merit because $V$ is a rotation in a small 2-dimensional subspace $\mbox{span}\{|\Phi\rangle,|\Psi\rangle\}$, which can not affect much the subspace $\mbox{span}\{|s_i\rangle\}_N$ for $N\gg1$. 

Besides the average success probability, another useful figure of merit is the classical mutual information between histories $i$ and records $j$ based on the weights $q_{ij}$, which become $q_{ij} = |\sqrt{G}_{ji}|^2q_i$ for the sqrt measurement. Its mutual information is
\begin{equation}\label{eq MI def}
	I = D(q_{ij}|q_i q'_j) = \sum_{ij} q_{ij}\big[\ln q_{ij} - \ln(q_i q'_j)\big],
\end{equation}
where $D(\cdot|\cdot)$ denotes the relative entropy and $q_i = \sum_j q_{ij}$ and $q'_j = \sum_i q_{ij}$ the marginal distributions of history $i$ and record $j$, respectively. 

While it is in general hard to upper-bound any figure of merit, we found the following bounds in terms of the determinant of the Gram matrix, valid if the weights $q_i = 1/N$ are equidistributed:
\begin{equation}\label{eq result Andreas}
	P^\text{opt}_\text{UN} \leq \sqrt[N]{\det G} \leq \overline P_S^\text{opt}.
\end{equation}
Here, $P^\text{opt}_\text{UN}$ is the optimal worst-case success probability of unambiguous detection~\cite{CheflesPLA1998}. Unambiguous detection is a variant of QSD in which one allows for a set of generalized measurements (or POVMs~\cite{NielsenChuangBook2000, WatrousBook2018}) $\{M_i\}_{i=0}^N$ that unambiguously detect the right state $|\psi_i\rangle$, i.e., the probability to confuse it with another state $j$ is zero: $\lr{\psi_j|M_i|\psi_j}\sim\delta_{ij}$. However, this is only possible at the expense of introducing one additional inconclusive measurement $M_0$ that does not reveal anything. Obviously, an important goal of unambiguous detection is to upper bound the probability $P_0$ of obtaining the inconclusive 0-result. Here, the optimal worst-case success probability mentioned above helps, which is defined as
\begin{equation}\label{eq PUN}
	P^\text{opt}_\text{UN} = \max \pi \text{ s.t. } 0\le\pi\le\lr{\psi_i|M_i|\psi_i}~\forall i\ge1
\end{equation}
and implies the bound $P_0\le1-P_\text{UN}$. While we are in general interested in orthonormal records $|r_j\rangle$ instead of generalized measurements $M_j$ for the SLP, eqn~(\ref{eq PUN}) is still a useful indicator for how hard it is to discriminate the history states. In particular, notice that the Gram matrix of the record states is the identity and its determinant equals one. Thus, it seems intuitive that the appearance of a small determinant $\det G$ in eqn~(\ref{eq result Andreas}) indicates that it is hard to distinguish the states $|\psi_i\rangle$. A derivation of eqn~(\ref{eq result Andreas}) using results from Refs.~\cite{SutterTomamichelHarrowIEEE2016, HoroshkoEskandariKilinPLA2019} is given in the SM (Sec.~\ref{sec SM Andreas result}).

Note that the concepts and results of this section are valid for any Gram matrix $G$. From now on we will focus on the Gram matrix in eqn~(\ref{eq DF scaling form}) and, owing to its ``pretty good'' character and its importance to upper bound the optimal average success probability, we will exclusively consider the sqrt measurement in the following.

\subsubsection{Haar random histories}\label{sec Haar random histories}

Since the full statistical properties of the pseudo-random matrix $R$ in eqn~(\ref{eq DF scaling form}) are not known, we have to limit ourselves to models that statistically reproduce what is known. Our first choice is to consider $N$ history states $|\psi_i\rangle$ Haar randomly drawn in $\mathbb{C}^d$ with $d\equiv D^{2\alpha}$, which can be embedded in $\C H$ for $\alpha\le1/2$. Denoting by $X \equiv [|\psi_1\rangle,\dots,|\psi_N\rangle]\in\mathbb{C}^{d\times N}$ the matrix of history states, it follows from eqn~(\ref{eq Haar overlap}) that the corresponding NDF $G = X^\dagger X$ is statistically identical to eqn~(\ref{eq DF scaling form}) for the first two moments for $N\le d$. Unfortunately, for $N>d$ the matrix $X^\dagger X$ is rank deficient and has $N-d$ zero eigenvalues, whereas the true NDF $G$ in eqn~(\ref{eq DF scaling form}) can have full rank as long as $N\le D$. Our model is thus only insightful for $N\le d$, whereas the case $d<N\le D$ corresponds to a ``gray zone'': while it seems unlikely any figure of merit will increase for $N>d$, we can not exclude this. Finally, we also look at $X' \equiv [|\psi'_1\rangle,\dots,|\psi'_N\rangle]$, where $|\psi'_i\rangle$ are Gaussian vectors as defined in Sec.~\ref{sec maths}. Then, $W = X'^\dagger X'$ is (complex) Wishart distributed and, owing to eqn~(\ref{eq norm concentration}), almost identical to $G$. Working with $W$ instead of $G$ will make several algebraic manipulations easier while still being close to the exact result (in fact, asymptotically for $D \rightarrow\infty$ they are identical), which will also become clear numerically.

Recall that Haar random states are maximally random (their only constraint is to be normalized) and therefore they contain too much randomness compared to histories of actual physical systems (which are constrained, e.g., by the Hamiltonian dynamics). However, the current model has the benefit of being analytically tractable and, as we will see in Sec.~\ref{sec pseudo-random histories}, also significantly less random choices reproduce the \emph{same} results. Looking first at the most random case thus seems to be a good starting point.

\begin{figure}[t]
	\centering\includegraphics[width=0.49\textwidth,clip=true]{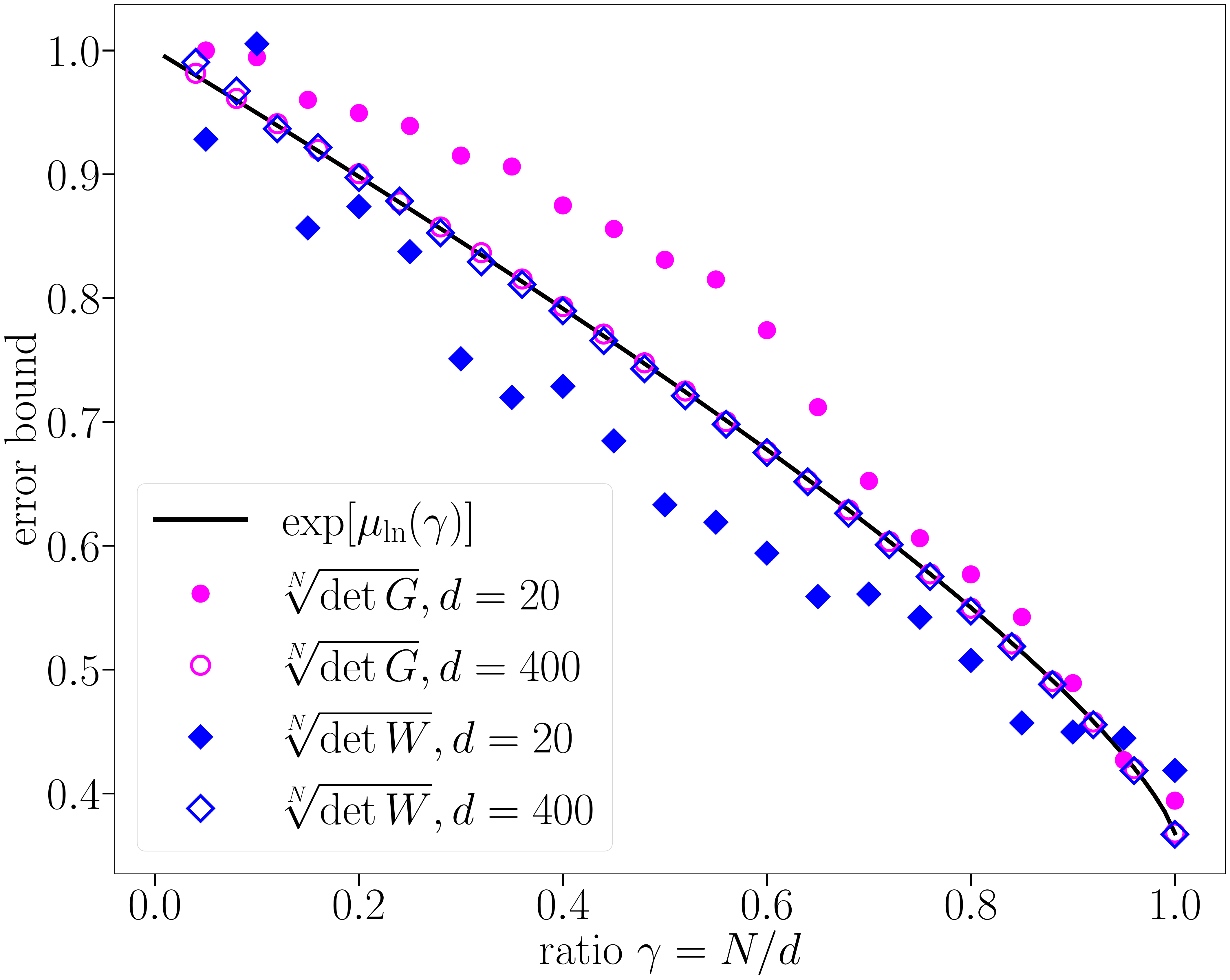}
	\caption{We plot $\sqrt[N]{\det G}$ for $d\in\{20,400\}$ (filled, empty magenta circles) and $\sqrt[N]{\det W}$ for $d\in\{20,400\}$ (filled, empty blue diamonds) for a single realization of the Haar random or Gaussian vectors as a function of $\gamma=N/d$. The solid black line displays the asymptotic, analytical result of eqn~(\ref{eq analytical Andreas}). Discrepancies between $G$, $W$ and the asymptotic result are almost invisible for $d=400$. }
	\label{fig Andreas bound}
\end{figure}

We start our investigation by evaluating the bounds of eqn~(\ref{eq result Andreas}), see also Fig.~\ref{fig Andreas bound} for a numerical evaluation. To compute $\sqrt[N]{\det G}$, we replace $G\approx W$ and use typicality, $\sqrt[N]{\det W} \simeq \sqrt[N]{\mathbb{E}[\det W]}$, which follows from an application of the CLT (Lemma~\ref{lemma CLT}) to $\ln(\sqrt[N]{\det W}) = N^{-1}\sum_j\ln\lambda_j$. Then, from the CLT we find the analytical expression
\begin{equation}\label{eq analytical Andreas}
 \sqrt[N]{\det G} \simeq \exp[\mu_\text{ln}(\gamma)],
\end{equation}
where---in unison with the notation of eqn~(\ref{eq mu def})---$\mu_\text{ln}(\gamma)$ is the expectation value of $\ln(\lambda)$ with respect to the Marchenko-Pastur distribution as a function of the ratio $\gamma=N/d$. The function $\exp[\mu_\text{ln}(\gamma)]$ (black solid line in Fig.~\ref{fig Andreas bound}) decays monotonically and almost linearly from $\exp[\mu_\text{ln}(0)] = 1$ to $\exp[\mu_\text{ln}(1)] = e^{-1} \approx 0.37$. Thus, for $\gamma\rightarrow1$ the optimal worst-case success probability for unambiguous detection becomes small and the inconclusive 0-result dominates the statistics. Of course, eqn~(\ref{eq result Andreas}) provides only a lower bound for $\overline P_S^\text{opt}$. Therefore, we now study the sqrt measurement to upper bound $\overline P_S^\text{opt}$.

\begin{figure}[t]
	\centering\includegraphics[width=0.49\textwidth,clip=true]{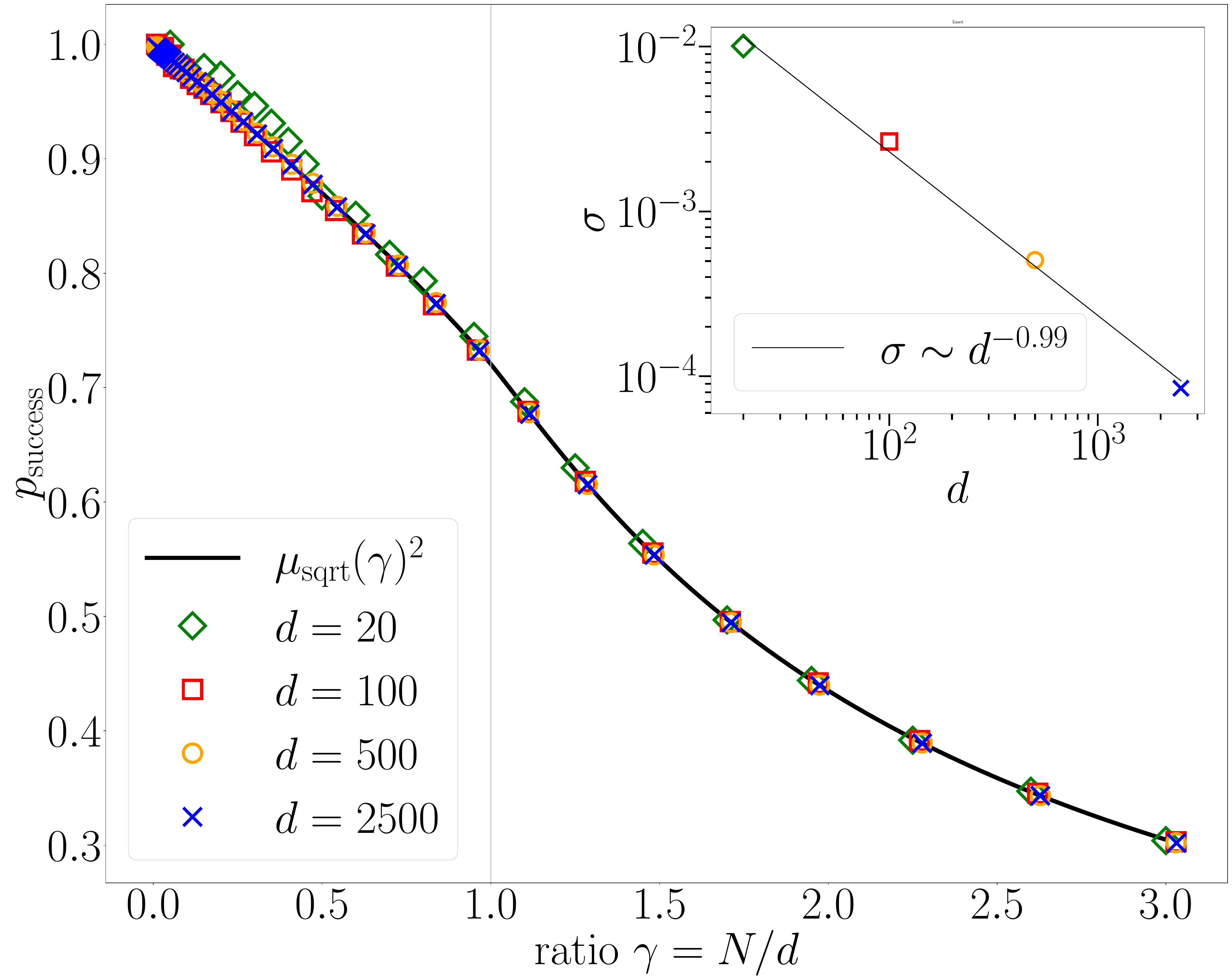}
	\caption{Plot of the success probability for the sqrt measurement as a function of $\gamma$ for $d=20$ (green diamonds), $d=100$ (red squares), $d=500$ (orange circles) and $d=2500$ (blue crosses), numerically evaluated for a single realization of the Gram matrix $G_{jk} = \lr{\psi_j|\psi_k}$ and for weights $q_j=1/N$. The black solid line is the analytical result of eqn~(\ref{eq p success result}). Note that in our setting the plot makes only sense for $\gamma\le1$, but we found it illuminating to show that eqn~(\ref{eq p success result}) remains valid also for $\gamma>1$. Inset: Empirical standard deviation $\sigma$ of the success probability as a function of $d$, averaged over all values of $\gamma$ and obtained from ten realizations of the Gram matrix. We see the expected suppression $\sigma\sim d^{-1} \sim N^{-1}$. }
	\label{fig sqrt meas}
\end{figure}

The evaluation of eqn~(\ref{eq success prob}) is based on Lemma~\ref{lemma Wishart}. As shown in the SM (Sec.~\ref{sec SM av succ prob sqrt meas}), one finds $|(\sqrt{G})_{jj}|^2 \simeq \mu_{\text{sqrt}}(\gamma)^2$ with a standard deviation proportional to $1/\sqrt{N}$. Moreover, the average over the weights $q_j$ will further smooth out fluctuations unless the $q_j$ are peaked around a few histories only (effectively corresponding to a smaller $N$) or the $q_j$ are maliciously correlated with the $|(\sqrt{G})_{jj}|^2$.\footnote{For instance, for $q_j=1/N$ the conventional CLT decreases the standard deviation of $\overline P_S$ by another factor $1/\sqrt{N}$.} Thus, we conclude
\begin{equation}\label{eq p success result}
 \overline P_S \simeq \mu_{\text{sqrt}}(\gamma)^2,
\end{equation}
where $\mu_{\text{sqrt}}(\gamma)$ is the expectation value of $\sqrt{\lambda}$ for the Marchenko-Pastur distribution, see eqn~(\ref{eq mu def}). The function $\mu_{\text{sqrt}}(\gamma)^2$ interpolates almost linearly between $\mu_{\text{sqrt}}(0)^2 = 1$ and $\mu_{\text{sqrt}}(1)^2 = [8/(3\pi)]^2 \approx 0.72$. The result is plotted and numerically confirmed in Fig.~\ref{fig sqrt meas}. Thus, we can conclude for the optimal measurement that
\begin{equation}\label{eq p bound sqrt meas}
 \overline P_S^\text{opt} \le \mu_{\text{sqrt}}(\gamma).
\end{equation}

\begin{figure}[t]
	\centering\includegraphics[width=0.49\textwidth,clip=true]{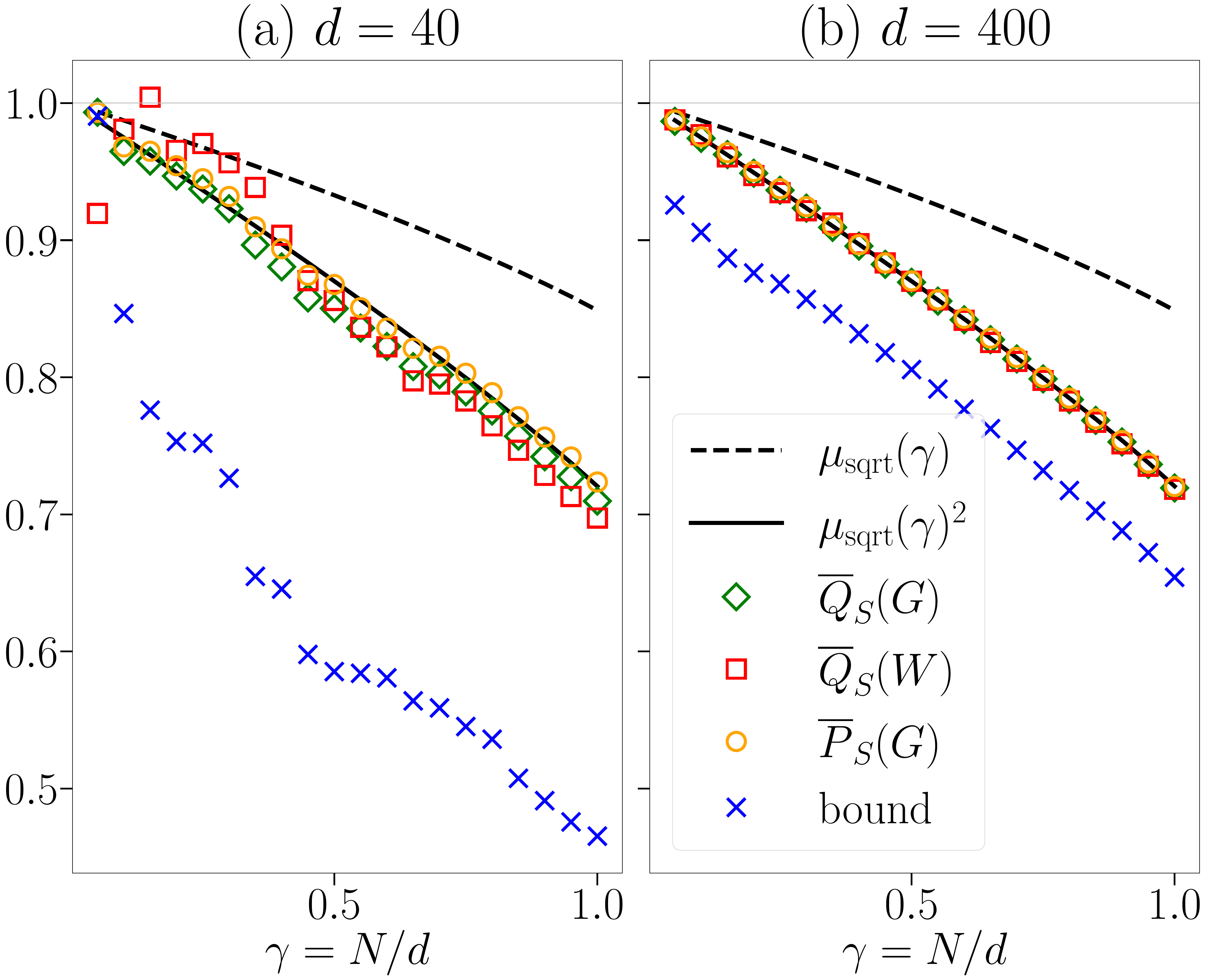}
	\caption{Plot of various central quantities as a function of the ratio $\gamma = N/d$. We evaluate numerically for $d=40$ (a) and $d=400$ (b) the average success probability of QSD (yellow circles) and SLP (green diamonds) for a single random realization of the Gram matrix $G$. The red squares show what happens to $\overline Q_S$ if one replaces $G$ by $W$. The blue crosses show the bound of eqn~(\ref{eq QSD SLP bound}). Furthermore, we show the asymptotic analytical result for $\overline P_S$ [eqn~(\ref{eq p success result}), black solid line] and the bound $\mu_\text{sqrt}(\gamma)\ge\overline P_S^\text{opt}$ (black dashed line).}
	\label{fig P final}
\end{figure}

Let us briefly come back to the difference between QSD and the SLP. The chance that the set $\{|\psi_j\rangle\}_N$ is linearly dependent and $G$ not invertible is of measure zero, hence $\mbox{span}\{|\psi_j\rangle\}_N = \mbox{span}\{|s_j\rangle\}_N$. Thus, if we identify $|r_j\rangle \equiv |s_j\rangle$, we can set $|\Psi\rangle = \sum_j \lr{r_j|\Psi}|r_j\rangle$ to solve the SLP with average success probability equal to eqn~(\ref{eq p success result}). The optimal solution is, of course, not known, but bounded by eqn~(\ref{eq p bound sqrt meas}). Alternatively, for unknown $\lr{r_j|\Psi}$ we have to start from some educated guess for $|\Phi\rangle$. For instance, if $q_j\approx 1/N$, we could set $|\Phi\rangle = \sum_j |s_j\rangle/\sqrt{N}$. For that case we derive in the SM (Sec.~\ref{sec SM fidelity sqrt meas}) the fidelity
\begin{equation}
 F = |\lr{\Phi|\Psi}| \simeq \mu_{\text{sqrt}}(\gamma).
\end{equation}
Since $T\simeq2/N$, we know that the average success probability $\overline Q_S$ of the SLP for the rotated states $|r_j\rangle = V|s_j\rangle$ is close to $\overline P_S$ according to eqn~(\ref{eq QSD SLP bound}). Numerical results are shown in Fig.~\ref{fig P final}. They confirm once again that for large $d$ it is justified to replace $G$ by $W$ and discrepancies from the asymptotic behaviour [eqn~(\ref{eq p success result})] become invisible to the naked eye already for $d=400$. For smaller $d$, of course, this does need to be the case and replacing $G$ by $W$ can cause spurious results such as $\overline Q_S$ seemingly not obeying the bound of eqn~(\ref{eq QSD SLP bound}) or probabilities greater than one. Figure~\ref{fig P final} also confirms $\overline P_S\ge\overline Q_S$ and shows that $\overline P_S-\overline Q_S$ is small already for small $d$. In fact, Fig.~\ref{fig P final} actually reveals that the bound in eqn~(\ref{eq QSD SLP bound}) is not very tight for the present choice of parameters. In the SM (Sec.~\ref{sec SM estimating SLP vs QSD}) we give an explicit expression for $\overline Q_S$, which shows that $\overline P_S - \overline Q_S$ scales as $1/N$ instead of $1/\sqrt{N}$ as suggested by the bound in eqn~(\ref{eq QSD SLP bound}).

\begin{figure}[t]
	\centering\includegraphics[width=0.49\textwidth,clip=true]{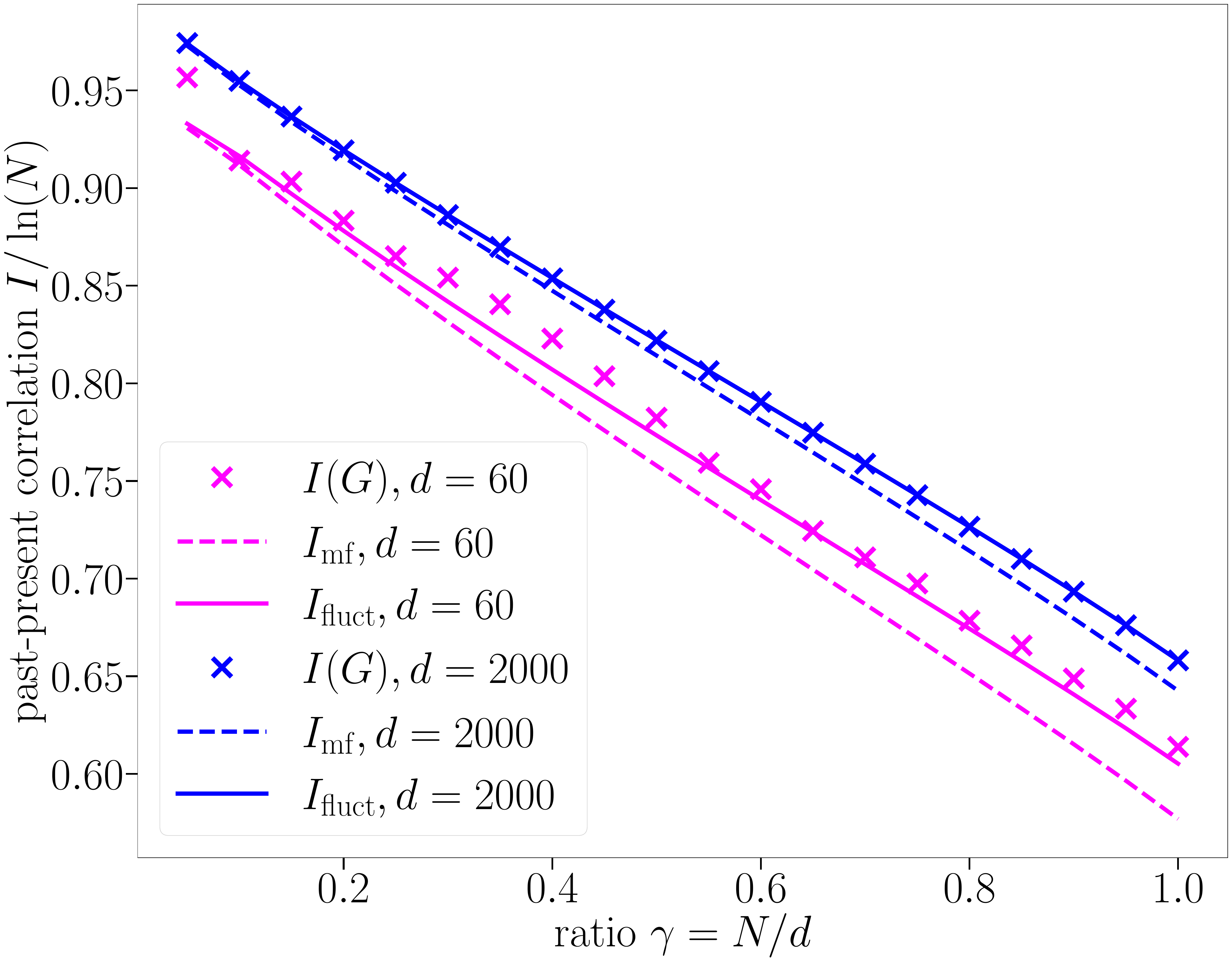}
	\caption{Plot of the normalized mutual information $I/\ln(N)$ for $d=60$ (in magenta) and $d=2000$ (in blue) as a function of $\gamma$. $I(G)$ is the exact mutual information for a single realization of the Gram matrix $G$ (crosses), the dashed line displays the mean field approximation [eqn~(\ref{eq MI mean field})] and the solid line takes off-diagonal fluctuations into account [eqn~(\ref{eq MI fluct})].}
	\label{fig mutual info fits}
\end{figure}

Finally, let us estimate the mutual information in eqn~(\ref{eq MI def}) for $\gamma\le1$ and for simplicity for the case $q_j=1/N$. We know that $q_{i|i} = |\sqrt{G}_{ii}|^2 \simeq \mu_\text{sqrt}(\gamma)^2$. Thus, a first reasonable attempt is to neglect all fluctuations and to use ``mean field probabilities''
\begin{equation}
 \mu_{ij} = \Theta_N(j)\left[\delta_{ij}\overline{P}_S q_i + (1-\delta_{ij})\frac{1-\overline{P}_S}{N-1}q_i\right].
\end{equation}
Here, $\Theta_N(j)$ denotes a discrete Heavi-side step function: $\Theta_N(j) = 1$ if $j\le N$ and zero elsewise. It ensures that the probabilities for records $|r_j\rangle$ with $j>N$ are zero. Using $q_i = 1/N$, the mutual information is
\begin{equation}\label{eq MI mean field}
 I_\text{mf} = \ln(N) - H_2(\overline P_S) - (1-\overline P_S)\ln(N-1),
\end{equation}
where $H_2(p) = -p\ln p -(1-p)\ln(1-p)$ denotes the binary Shannon entropy. Asymptotically, for large $N\le d$, we obtain the simple result
\begin{equation}
 I_\text{mf} \approx \overline P_S(\gamma)\ln N,
\end{equation}
where we made the dependence of $\overline P_S$ on $\gamma$ clear. While the simple mean field approximation looks naive, it actually approximates the true value of the mutual information quite well as shown in Fig.~\ref{fig mutual info fits}. However, we can obtain an improved estimate by taking fluctuations for the off-diagonal elements $|\sqrt{G}_{ji}|^2$ ($j\neq i$) into account. In the SM (Sec.~\ref{sec SM mutual info}) we calculate that
\begin{equation}\label{eq MI fluct}
 I_\text{fluct} = \ln(N) - H_2(\overline P_S) - (1-\overline P_S)(H_{N-1} - 1),
\end{equation}
where $H_n = \sum_{k=1}^n 1/k$ is the $n$'th harmonic number. Asymptotically, $H_{N-1} \approx \ln(N-1) + \Gamma$ with $\Gamma \approx 0.577$ the Euler-Mascheroni constant. As we can see in Fig.~\ref{fig mutual info fits}, differences between $I_\text{fluct}$ and the true $I$ become invisible for large $d$. Note that $I(G) \simeq \mathbb{E}_G[I(G)]$ again concentrates around the mean (not shown here). Moreover, we note that $\mathbb{E}_G[I(G)] \ge I_\text{mf}$ owing to the convexity of relative entropy, see the SM (Sec.~\ref{sec SM mutual info}) for details. We return to the study of mutual information in Sec.~\ref{sec large N}.

To summarize, we have shown for $\gamma\le1$ that various interesting quantities for Haar random states can be computed asymptotically for $N,d\rightarrow\infty$, and the convergence to the asymptotic results happens quickly thanks to measure concentration, i.e., the overwhelming choices of Haar random history states gives almost identical results. Since exceptions are rare, this indicates that also history states in realistic quantum many-body systems should be reasonably well approximated by the above results. This idea is further supported by the findings of the next section, where we consider significantly less than Haar random history states.

\subsubsection{2-designs and pseudo-random histories}\label{sec pseudo-random histories}

Our choice to study Haar random history states might be conceived as too simplistic and unrealistic. In defense, one should note that we only assumed history states conditioned on interesting nonequilibrium histories $\bs x$ to look Haar random in a (for $\alpha<1/2$) very small subspace of dimension $d=D^{2\alpha}$, which is \emph{not} the same as assuming a Haar random universal wave function $|\Psi\rangle$. Be that as it may be, it is certainly true that other much less random history states can also statistically reproduce eqn~(\ref{eq DF scaling form}), and here we consider three possible alternatives. We first introduce them together with numerical evidence showing that they give the \emph{same} result as before. Then, we discuss why this is the case.

\begin{figure}[t]
 \centering\includegraphics[width=0.49\textwidth,clip=true]{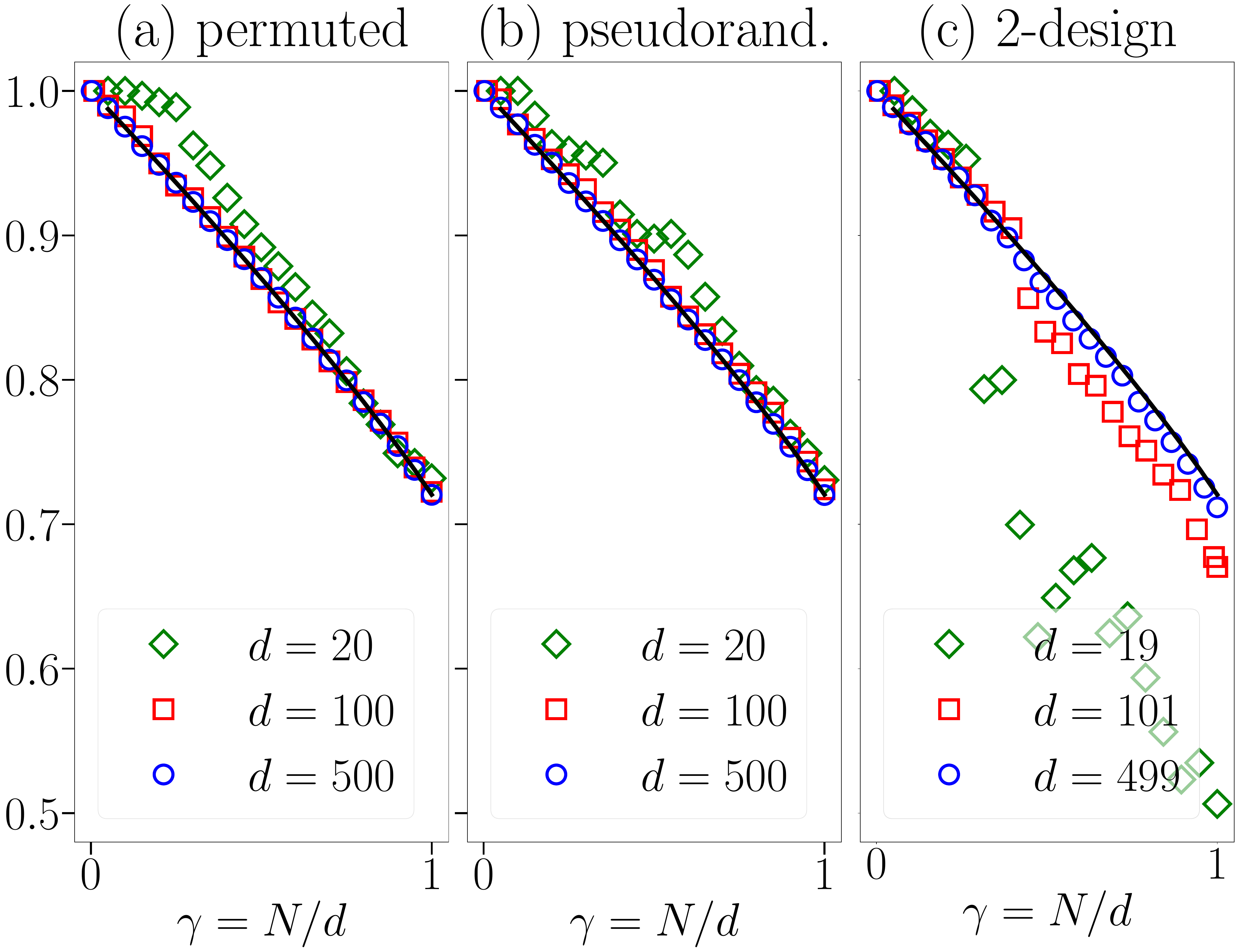}
 \caption{Plot of the average success probability $\overline P_S$ as a function of the ratio $\gamma$ for three different classes of states (a), (b) and (c) as described in the main text. Plots are done for a single realization of the Gram matrix $G$ for different Hilbert space dimensions $d$. As for the Haar random case, we observed measure concentration for large $d$ (not shown here for brevity). The black solid line denotes the asymptotic analytical expression $\mu_\text{sqrt}(\gamma)^2$.}
 \label{fig P comp}
\end{figure}

For the first class of states we start with a Haar random state $|\psi_1\rangle\in\mathbb{C}^d$ as a seed and construct $|\psi_i\rangle$ for $i>1$ by randomly permuting the entries of $|\psi_1\rangle$ in some computational basis~\cite{NicaPJP1993, BordenaveCollinsAM2019}. Since there are $d!$ many permutations, this is (for a fixed seed) a discrete distribution of random states instead of a continuous Haar distribution. Now, for a typical Haar random state one expects for large $d$ that $|\lr{\psi_i|\psi_j}|^2 \approx 1/d$ such that we find statistics in unison with eqn~(\ref{eq DF scaling form}) for $d=D^{2\alpha}$ as before.\footnote{There are $(d-n)!$ many permutations that leave $n$ entries of $|\psi_1\rangle$ fixed. The probability to apply such a permutation is $(d-n)!/d!$, which is negligible small for large $d$. Since the entries of $|\psi_1\rangle$ are independently sampled, apart from the normalization constraint, $\lr{\psi_i|\psi_j}$ will be close to the scalar product of two Haar random states in $\mathbb{C}^d$. Of course, this argument assumes that the initial seed $|\psi_1\rangle$ was not atypical.} Figure~\ref{fig P comp}(a) shows that the average success probability quickly converges to the analytical result for Haar random states [eqn~(\ref{eq p success result})].

In our second example we consider pseudo-random states as studied in Refs.~\cite{JiLiuSongAC2018, BoulandFeffermanVaziraniArXiv2019, AaronsonEtAlArXiv2022}. Here, for some computational basis $\{|x\rangle\}_d$ one sets $|\psi_j\rangle = \sum_x(-1)^{f_j(x)}|x\rangle/\sqrt{d}$, where $f_j(x)\in\{0,1\}$ is an efficiently computable function that is indistinguishable from a true random number generator to any polynomial-time quantum algorithm. For our example we take $f_j(x)$ to be the digit $jd+x$ of $\pi$ in binary representation.\footnote{Our results do not depend on the choice of $\pi$: the set of normal numbers, which are statistically indistinguishable from a random realization of a fair coin flip experiment, has measure one.} Since $R \equiv \sum_k (-1)^{f_k}$ behaves like the sum of $d$ truly random numbers $\pm1$, we know that $R$ has mean zero and fluctuations of order $\sqrt{d}$, implying that $|\lr{\psi_i|\psi_j}|^2 \approx 1/d$ as required. Of course, in this example we have $\lr{\psi_i|\psi_j}\in\mathbb{R}$ by construction, whereas the Gram matrix in eqn~(\ref{eq DF scaling form}) has in general complex entries. Nevertheless, as Fig.~\ref{fig P comp}(b) demonstrates, also here we find the same average success probability as for the Haar random case.

Finally, we consider 2-designs. A $2$-design is a set of states $|\psi\rangle$, whose uniform expectation value of any homogenous degree-$2$ polynomial $f(\psi,\psi^*)$ is identical to the Haar average: $\mathbb{E}_\text{2-design}[f(\psi,\psi^*)] = \mathbb{E}_\text{Haar}[f(\psi,\psi^*)]$. This immediately implies that we get the right scaling: $\mathbb{E}_\text{2-design}[\lr{\psi_i|\psi_j}] = 1/\sqrt{d}$. Specifically, to construct a 2-design we consider $d$ to be prime such that we can use mutually unbiased bases~\cite{KlappeneckerRottelerIEEE2005}. Note that there are (up to a global unitary rotation) $d(d+1)$ many 2-design states such that this set is significantly smaller than the first set of randomly permuted states. In particular, one finds $\lr{\psi_j|\psi_k} = e^{i\phi_{jk}}/\sqrt{d}$ where only the phase $\phi_{jk}$ randomly changes for different realizations $j,k$. Nevertheless, Fig.~\ref{fig P comp}(c) again demonstrates that we asymptotically converge to the same average success probability, even though the smallness of the sample space is now clearly visible in form of a slower convergence. Of course, the largest considered dimension of $d=499$ in Fig.~\ref{fig P comp}(c) is minuscule compared to realistic many-body systems.

It remains to explain why the result in eqn~(\ref{eq p success result}) is so robust, which has also surprised the authors. For instance, the success probability $|\sqrt{G}_{jj}|^2$ can not be well approximated by a degree-2 polynomial such that one would not expect a 2-design to give the same result as the Haar random case. However, recent mathematical results suggest that Lemma~\ref{lemma Wishart} can be substantially generalized and does not require Gaussian vectors~\cite{SilversteinJMA1989, BruJMA1989, SilversteinAP1990, BaiMiaoPanAP2007, TaoVuRMTA2012, XiaQinBaiAS2013, PillaiYinAAP2014, AlexEtAlEJP2014, BourgadeYauCMP2017, DingAAP2019, XiYangYinAS2020}. Indeed, for the eigenvalue statistics this was known since the original paper by Marchenko and Pastur~\cite{MarchenkoPastur1967}, but it recently became clear that this also holds for the eigenvector statistics. However, the technical details of Refs.~\cite{SilversteinJMA1989, BruJMA1989, SilversteinAP1990, BaiMiaoPanAP2007, TaoVuRMTA2012, XiaQinBaiAS2013, PillaiYinAAP2014, AlexEtAlEJP2014, BourgadeYauCMP2017, DingAAP2019, XiYangYinAS2020} differ substantially and do not allow to formulate a lemma as simple as Lemma~\ref{lemma Wishart}. For example, Ref.~\cite{XiaQinBaiAS2013} showed that: 

\begin{lemma}\label{lemma Wishart general}
 Let $W = X^\dagger X/d$ be a sample covariance matrix with $X\in\mathbb{C}^{N\times d}$ with iid zero-mean-unit-variance entries $X_{ij}$ that have a finite 10th moment $\mathbb{E}|X_{ij}|^{10}<\infty$. Let $W = U\Lambda U^\dagger$ with $\Lambda = \mbox{diag}(\lambda_1,\dots,\lambda_N)$ be the eigendecomposition and, for any unit vector $\bs x\in\mathbb{C}^N$, let $\bs d = U^\dagger\bs x$. Then, the stochastic process $X_N(t) \equiv \sqrt{N/2}\sum_{i\le \lfloor tN\rfloor}(|\bs d_i|^2-N^{-1})$ for $t\in[0,1]$ converges to a Brownian bridge for $N\rightarrow\infty$ and fixed $\gamma = N/d \in[0,1]$. 
\end{lemma}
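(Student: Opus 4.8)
The plan is to establish the two standard ingredients for weak convergence of $X_N(\cdot)$ in the Skorokhod space $D[0,1]$ to a Brownian bridge: convergence of the finite-dimensional distributions, and tightness. The starting observation is that, with the eigenvalues in ascending order $\lambda_1 \le \dots \le \lambda_N$, the process is a partial sum of the \emph{vector empirical spectral distribution} (VESD) $F^{\bs x}_N(\lambda) \equiv \sum_i |\bs d_i|^2\,\mathbf{1}(\lambda_i \le \lambda)$, namely $X_N(t) = \sqrt{N/2}\,\big(F^{\bs x}_N(\lambda_{\lfloor tN\rfloor}) - \lfloor tN\rfloor/N\big)$, so it is natural to pass through the bilinear resolvent form $\beta_N(z) \equiv \bs x^\dagger(W - zI)^{-1}\bs x = \int (\lambda - z)^{-1}\,dF^{\bs x}_N(\lambda)$, the Stieltjes transform of the VESD. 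First I would prove a law of large numbers, $\beta_N(z) \to m_\text{MP}(z)$ for $z$ in the upper half-plane -- i.e.\ that the eigenvectors delocalize along $\bs x$, so that $F^{\bs x}_N$ has the same limit $\rho_\text{MP}$ as the ordinary ESD. Together with the strong concentration of the ESD (cf.\ Lemma~\ref{lemma CLT}), which pins the index $i/N$ at $\lambda_i$ to the Marchenko-Pastur quantile up to $o(N^{-1/2})$, this allows a change of variables from $\lambda$ to the fraction $t$ and replacement of the centering $\lfloor tN\rfloor/N$ by $F_\text{MP}(\lambda)$, reducing the whole problem to the fluctuation field $F^{\bs x}_N - F_\text{MP}$.

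The core step is a central limit theorem for that fluctuation field. I would show that $\sqrt{N}\,(\beta_N(z) - m_\text{MP}(z))$ converges to a Gaussian process in $z$ on the upper half-plane, using a martingale-difference decomposition obtained by conditioning successively on the columns of $X$. The martingale CLT then requires (i) the Lindeberg/negligibility condition on the increments -- precisely where the finite tenth-moment hypothesis $\mathbb{E}|X_{ij}|^{10} < \infty$ is spent, to control rare large entries and their interaction with edge eigenvalues -- and (ii) convergence of the conditional variance, which follows from concentration estimates for traces of products of resolvents $(W-zI)^{-1}$ supplied by the local Marchenko-Pastur law; the limiting covariance turns out to be \emph{universal} (independent of the entry distribution) because the $\bs x$-averaging suppresses the fourth-cumulant, basis-dependent terms that make the unweighted linear-eigenvalue-statistics CLT non-universal. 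Passing from this resolvent CLT (a process in $z$) to the counting process $F^{\bs x}_N - F_\text{MP}$ (a process in $\lambda$, hence in $t$) is then carried out by a Cauchy / Helffer-Sj\"ostrand contour-integral inversion. For tightness I would prove the uniform bound $\mathbb{E}\,|X_N(t) - X_N(s)|^4 \le C\,(t-s)^2$: the increment is a centered sum of $\approx (t-s)N$ of the weights $|\bs d_i|^2$, whose fourth moment is controlled by the same resolvent estimates (for Gaussian $X$ it is immediate from the Dirichlet moment formula, eqn~(\ref{eq Dirichlet})); Kolmogorov's criterion then yields tightness.

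Identifying the limit as a Brownian bridge is the clean part. The constraint $\sum_i |\bs d_i|^2 = \|U^\dagger\bs x\|^2 = 1$ forces $X_N(1) = 0$ deterministically, so the limiting Gaussian process is pinned at $t=0$ and $t=1$; by the exchangeability of the increments -- manifest for a Haar eigenbasis and preserved in the limit by the universality above -- its covariance must be proportional to $s\wedge t - st$, and the normalization $\sqrt{N/2}$, read off from $\mathbb{E}[|\bs d_i|^2|\bs d_j|^2] = (1+\delta_{ij})/[N(N+1)]$ in eqn~(\ref{eq Dirichlet}) (which gives $\mathrm{Var}(X_N(t)) \to \tfrac12\,t(1-t)$), fixes the constant. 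It is worth stressing that this covariance does not involve $d\rho_\text{MP}$ at all: the spectral density enters only through the reparametrization $\lambda \leftrightarrow t$, which is exactly why the clean Brownian-bridge structure survives and why it matches the Haar intuition -- for a genuinely Haar eigenbasis $U$ the vector $(|\bs d_i|^2)_i$ is exactly Dirichlet and the statement reduces to Donsker's theorem for the corresponding partial-sum process.

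\textbf{Main obstacle.} The whole difficulty is \emph{universality}. For Gaussian $X$ the eigenbasis $U$ is exactly Haar, $(|\bs d_i|^2)$ is exactly Dirichlet, and the result is essentially classical; for general i.i.d.\ entries one must show that neither the law of large numbers for the VESD nor the shape of its fluctuations depends on the entry distribution, which forces the full local-law / resolvent-comparison machinery (Lindeberg swapping or Green-function comparison, in the spirit of the works cited just before Lemma~\ref{lemma Wishart general}). Within that program the genuinely delicate point is the behaviour at the two spectral edges $\lambda_\pm$, where $d\rho_\text{MP}$ vanishes like a square root, eigenvalue rigidity is weakest, and the tenth-moment tail must be invoked to rule out anomalous contributions of outlier eigenvalues to the partial sums near $t=0$ and $t=1$; making those edge estimates tight enough that they do not spoil the Brownian-bridge limit is where the technical weight of the proof lies.
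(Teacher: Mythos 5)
The paper itself contains no proof of Lemma~\ref{lemma Wishart general}: the statement is imported, with citation, from Ref.~\cite{XiaQinBaiAS2013}, so there is no in-paper argument to measure your proposal against. Judged against the literature proofs behind that citation, your outline is the standard scaffolding and essentially the intended route: reduction to the Stieltjes transform of the eigenvector (vector) empirical spectral distribution, a martingale-difference CLT over the columns of $X$, inversion back to the counting process, a fourth-moment/Kolmogorov tightness bound, and the observation that in the Gaussian case the weights $(|\bs d_i|^2)_i$ are exactly Dirichlet so the claim collapses to Donsker's theorem via eqn~(\ref{eq Dirichlet}); this is precisely the line of Silverstein's work and its descendants~\cite{SilversteinAP1990, BaiMiaoPanAP2007, XiaQinBaiAS2013}.

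The one step that does not hold up as stated is your universality mechanism. You write that the limiting covariance is universal ``because the $\bs x$-averaging suppresses the fourth-cumulant, basis-dependent terms,'' but $\bs x$ is a \emph{fixed}, arbitrary unit vector --- nothing is averaged over $\bs x$. For localized choices such as $\bs x=\bs e_1$ the quadratic form $\bs x^\dagger(W-zI)^{-1}\bs x$ genuinely retains fourth-cumulant contributions in its $N^{-1/2}$ fluctuations; this is exactly why the classical Brownian-bridge theorems either assume a fourth-moment matching condition on the entries or impose delocalization-type restrictions on $\bs x$~\cite{SilversteinAP1990, BaiMiaoPanAP2007}. If the result is to hold for every unit vector under moment bounds alone, the cancellation has to come from the specific centering in $X_N(t)$: the subtraction of $\lfloor tN\rfloor/N = F_N(\lambda_{\lfloor tN\rfloor})$, i.e.\ of the \emph{empirical} spectral distribution at the same eigenvalue, so that only the difference between the VESD and the ESD matters, and one must show that this differencing (together with the strong moment hypotheses) kills the non-universal terms. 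Establishing that cancellation is the real content of the proof, not a side remark, so your sketch has the right architecture but asserts the decisive universality step with an incorrect justification; the edge estimates you highlight are secondary to it.
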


Convergence to a Brownian bridge is consistent with the eigenvector matrix $U$ being isotropic (no preferred direction) and permutation invariance of the components $\bs d_i$ (except for a weak dependence resulting from normalization $\|\bs d\|=1$), which are key properties of Haar randomness. The caveat is that $X_N(t)$ can converge to a Brownian bridge also for different $U$ distributions (similar to the CLT, which shows that many distributions can converge to a Gaussian), so Haar randomness is here understood as some asymptotic macroscopic property. Thus, since the above three examples match the conditions of Lemma~\ref{lemma Wishart general} (assuming we can replace pseudo-randomness by true randomness), we expect that asymptotically we could apply Lemma~\ref{lemma Wishart} (perhaps with minor modifications) again such that we could repeat the same steps as in Sec.~\ref{sec Haar random histories}. This idea is confirmed by our numerical results, which therefore make us confident that they point to something more universal than the idealized Haar random model of Sec.~\ref{sec Haar random histories} \emph{prima facie} suggests.

\subsubsection{Very large $N$}\label{sec large N}

We now suppose that the universal wave function $|\Psi\rangle = \sum_{j=1}^N |\psi'_j\rangle$ contains $N>D$ many histories, which---as we have seen in Sec.~\ref{sec end of decoherence}---can well be the case without jeopardizing approximate decoherence. Since there can be at most $D$ many orthogonal record states in a $D$-dimensional universe, it is clear that we can no longer distinguish all histories, not even approximately. While the basic definition of the SLP remains the same, it is useful to consider this case separately. There are essentially two options.

First, within the decoherent histories formalism it is possible to coarse-grain histories further. We can define a function $f: \{1,\dots,N\} \rightarrow \{1,\dots,D\}$ and new history states $|\chi'_i\rangle = \sum_{j=1}^N \delta_{i,f(j)} |\psi'_j\rangle$ such that $|\Psi\rangle = \sum_{i=1}^D |\chi'_i\rangle$ contains only $D$ many history states.\footnote{In general, such a coarse-graining will give rise to what has been called \emph{inhomogenous} histories in the literature~\cite{IshamJMP1994}, see also Sec.~\ref{sec Born}.} An interesting question would then be how does $f$ need to be chosen in order to make the $|\chi'_i\rangle$ as distinguishable as possible? In essence, once such an $f$ has been found, this maps the problem back to the previous sections. However, it is far from clear which $f$ are physically legitimate, see also the discussion in Sec.~\ref{sec discussion part 1}. Thus, we here focus on the case where we have to deal with $N>D$ many histories.

In that case, one has to choose $D-1$ many histories that are in some sense optimally distinguishable, construct $D-1$ optimal record states for them, and leave the last one $|r_D\rangle$ as some sort of ``dull record''. For instance, assuming a suitable choice of labeling, one could set $|r'_D\rangle = \sum_{j\ge D} |\psi'_j\rangle$ where the histories $j\in\{D,\dots,N\}$ ideally have low weight, i.e., $\sum_{j\ge D} q_j$ is as small as possible, and are as orthogonal as possible to the previous histories $j<D$. Unfortunately, the precise formulation of these conditions will depend on the chosen figure of merit. Moreover, any figure of merit can only decrease for $N>D$ compared to the case $N = D$. 

To give an illustration, we consider only the computation of mutual information for the Haar random model of Sec.~\ref{sec Haar random histories} and we assume for simplicity $d=D$, leaving possible generalizations to future work. Moreover, we will restrict the analysis to the simple mean field case, which well approximates the true mutual information as seen in Sec.~\ref{sec Haar random histories}. Even in this case, there are different ways to choose the dull record $|r_d\rangle$. Given that the Haar random history states $|\psi_i\rangle$ are homogeneously distributed, we first consider the simple choice that we use the sqrt measurement for the first $d-1$ of them, whereas the final dull record $|r_d\rangle$ is chosen orthogonal to $\mbox{span}\{|r_j\rangle\}_{j=1}^{d-1}$. Within the mean field model we have $|\lr{r_d|\psi_i}|^2 = 1/d$ for all $i\in\{1,\dots,N\}$ and $|\lr{r_j|\psi_i}|^2 = 1/d$ for all $j<d$ and $i\ge d$, i.e., we take the expected value of eqn~(\ref{eq Haar overlap}) for all records and histories which we did not tune with the sqrt measurement. Then, the conditional mean field probability for $N>d$ is
\begin{equation}
	\begin{split}\label{eq muij full}
		\mu_{j|i} =&~ \Theta_{d-1}(i)\Theta_{d-1}(j) \left[\delta_{ij}\overline P_S + (1-\delta_{ij})\frac{\overline P_S}{d-2}\right] \\
		&+ [1-\Theta_{d-1}(i)]\frac{1}{d}.
	\end{split}
\end{equation}
Here, the first line is the conditional probability for the first $d-1$ histories and records, which got tuned to each other by the sqrt measurement, whereas the second line describes how the records overlap with the histories $i\ge d$, and by construction $\lr{r_d|\psi_i} = 0$ for $i<d$. The result is plotted in Fig.~\ref{fig mutual info}, which shows, as expected, a rapid decline of the mutual information for $\gamma>1$.

\begin{figure}[t]
 \centering\includegraphics[width=0.49\textwidth,clip=true]{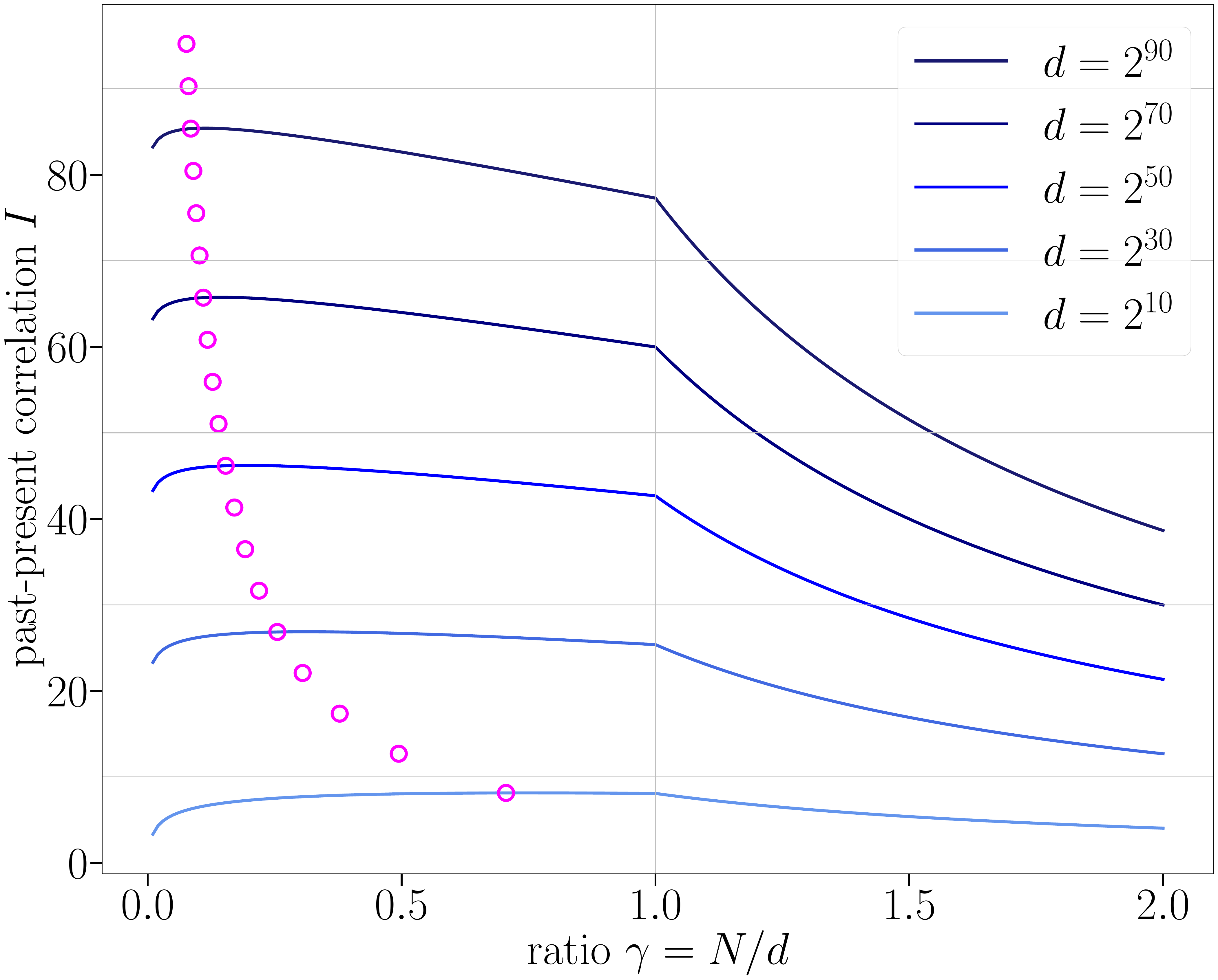}
 \caption{Plot of the mean field mutual information between histories and records (i.e., between past and present) as a function of $N/d$ for increasing dimensions $d=2^K$ from bottom (light blue) to top (dark blue). The horizontal gray lines correspond to the maximum possible mutual information $\log(d) = K$. The pink dots plot the position of the maximum mutual information as a function of $N/d$. }
 \label{fig mutual info}
\end{figure}

In the SM (Sec.~\ref{sec SM mutual info}) we also consider the case where we set $|r'_D\rangle = \sum_{j\ge D} |\psi'_j\rangle$ (as mentioned at the beginning) and implement the sqrt measurement on the remaining subspace. This procedure is not identical to the previously considered choice $|r_D\rangle \perp \mbox{span}\{|r_j\rangle\}_{j=1}^{D-1}$, but for large $d$ and $N-d$ the difference becomes negligible. Another strategy is to choose a subset $\C F$ of histories whose elements have a significantly smaller pairwise fidelity than the average $1/d$. However, while this guarantees a high figure of merit within $\C F$, it does not necessarily imply a high figure of merit in total. Moreover, the probability that the set $\C F$ contains many states is small unless $N\gg d$ for which any figure of merit is anyway close to zero.

Finally, we compute the maximum of the mutual information (pink circles in Fig.~\ref{fig mutual info}), which must occur for $\gamma\le1$. Since $\overline P_S(\gamma)$ is almost a straight line for $\gamma\le1$ (cf.~with Fig.~\ref{fig sqrt meas}), we approximate $\overline P_S \approx 1-c\gamma$ with $c = 1-\overline P_S(1) \approx 0.28$. For large $N,d$ we then obtain from eqn~(\ref{eq MI mean field}) the approximate form $I_\text{mf} \approx (1-c\gamma)\ln(N)$. Interestingly, the maximum of $I_\text{mf}$ is not obtained for $N=d$ but for a value that is approximately logarithmically smaller ($\sim d/\log d$).\footnote{The true maximum can be expressed in terms of the Lambert function $W_0(x)$. We use the simple approximation $W_0(x) \approx \ln(x)$ to investigate the asymptotic regime for large $x$.} For presentational purposes, let us now set $d=2^K$ and consider the mutual information with respect to the base 2 logarithm. Evaluation of the maximum then gives asymptotically for large $K$
\begin{equation}
	\max I \approx \left(1-\frac{1}{K+\log\beta}\right)\left(K-\log[c(K+\log\beta)]\right)
\end{equation}
with $\beta\equiv2\ln(2)/c$. Thus, the strongest past-present correlations are reached for a multiverse with $N\sim d/\log d$ many histories or branches, which allows, in principle, to decode the state of $\sim K-\log K$ qubits of the universe. Of course, in the multiverse observers on different branches cannot share their records with each other. Hence, it is more correct to say that each observer can \emph{expect} to decode $\sim K-\log K$ qubits, but the actual value given a certain record fluctuates around that average. Finally, recall that we assumed $d=D$ but in general $d=D^{2\alpha}$. Thus, if we extrapolate our results to the general case, then we expect to decode approximately $K = 2\alpha K_\text{tot}$ many qubits (minus logarithmic corrections), where  $K_\text{tot}$ is the total number of qubits in a $D$-dimensional universe.

\subsubsection{Infinite dimensions}\label{sec infinite D}

Many of the above results remain valid in infinite dimensions. To be in unison with eqn~(\ref{eq DF scaling form}), we can use an ansatz of the form $G = I + R/\sqrt{d}$ to \emph{define} an effective, finite dimension $d$. Furthermore, the subspace $\mbox{span}\{|\psi_i\rangle\}_N$ spanned by all history states is always finite for any $N\in\mathbb{N}$ and the above results have to be applied to this subspace. One then has to distinguish again the cases $\gamma = N/d \le1$ and $\gamma>1$. For $\gamma\le1$ the results above are directly applicable without any change, whereas for $\gamma>1$ one again enters a gray zone: we can not exclude the case that some figure of merit has its maximum for $\gamma>1$, but given the previous results this would be counterintuitive. Finally, in infinite dimensions it can, of course, not happen that we ``run out of Hilbert space'' and have $N>D$.

\subsection{Summary and discussion}\label{sec summary part 1}

This section rigorously explored the consequences of realistic approximate instead of idealized exact decoherence, extending Refs.~\cite{DowkerHalliwellPRD1992, McElwainePRA1996}. We here summarize our key assumptions and our key results, and we finish by pointing out the resulting branch selection problem.

First, we highlight that this part of the paper did \emph{not} rely on the validity of non-relativistic quantum mechanics but only on the approximate decoherence of the history states. In fact, our results do not even rely on the decoherent histories framework \emph{per se}, we rather used it as a convenient starting point to quantify the near-orthogonality of the ``branches'' in eqn~(\ref{eq DF scaling form}). 

Our interest was then focused on getting the overall behavior right, assuming that the histories (or branches) are homogeneously spread out with an average decoherence scaling with an exponent $\alpha$ \emph{as per} eqn~(\ref{eq DF scaling form}). We also assumed sometimes that the histories have equal statistical weights: $q(\bs x) = \lr{\psi'(\bs x)|\psi'(\bs x)} = 1/N$. Clearly, this corresponds to a ``worst case'' scenario: if there is a subset of histories that is much more decoherent from or much more probable than the rest, then we can restrict our analysis to this subset. The investigation of such subsets is the content of Sec.~\ref{sec numerics}. 

The emerging mathematical picture is the following. For $\alpha<1/2$ the maximum amount of approximately decoherent histories $N_\text{max}$ scales at least as $\exp(D^{1-2\alpha})$ whereas it is possible to reliably distinguish only an amount $N_\text{detectable}$ parametrically smaller than $D^{2\alpha}$, which for any $\alpha<1/2$ spans only a negligible fraction of the Hilbert space --- assuming that our results can be interpolated to the gray zone $D^{2\alpha} < N \le D$. For the limiting case $\alpha=1/2$ the discrepancy is not as severe but still there: $N_\text{max}$ is polynomially larger than $D$ whereas $N_\text{detectable}$ is parametrically smaller than $D$ depending on the desired reliability of the records, see eqn~(\ref{eq p success result}) or Figs.~\ref{fig sqrt meas},~\ref{fig P final} or~\ref{fig P comp}. Finally, for $\alpha>1/2$ the situation becomes apart from negligible corrections identical to the case of exact decoherence.

Importantly, we found the same behavior for a large class of (pseudo-)random histories, indicating some universality of our results. Furthermore, we did not impose any practical constraints on the detection of histories, such as being redundantly encoded or detectable locally, which obviously play a role for us earthly and clumsy creatures. This somewhat ``best case'' scenario makes us believe that our results are rather optimistic.

For $\alpha\le1/2$ we then find a \emph{branch selection problem}, which is particularly severe for $\alpha<1/2$. Outside the context of the MWI this means that we can not extract from a quantum subsystem information about more than $N_\text{detectable}\ll D^{2\alpha}$ past events (unless one looks for specific, strongly decoherent histories with $\alpha>1/2$). Within the MWI the question arises: from the $N_\text{max}\gg D$ many equally decoherent histories how does the multiverse ``select'' the $N_\text{detectable}\ll D$ many branches that have observers aware of their history in it? This will be discussed in more detail in Sec.~\ref{sec conclusions}

\section{Long Histories: \protect\\ A Numerical Analysis}\label{sec numerics}

We here report our results obtained from a numerically exact simulation of long histories in an isolated quantum system, significantly extending the scope of Ref.~\cite{StrasbergSchindlerArXiv2023}. In Sec.~\ref{sec model specification} we specify the model and the details of the numerical calculation, and in Sec.~\ref{sec model justification} we justify why this model is the simplest non-trivial model able to capture generic aspects of decoherence. The numerical results are reported in Sec.~\ref{sec results}. For a summary of them see Sec.~\ref{sec summary part 2}. Related but not identical models have been investigated previously for various different purposes~\cite{PereyraJSP1991, EspositoGaspardPRE2003b, LebowitzPasturJPA2004, GorinEtAlNJP2008, BartschSteinigewegGemmerPRE2008, GenwayHoLeePRL2013, RieraCampenySanperaStrasbergPRXQ2021, AlbrechtBaunachArrasmithPRD2022, YanZurekNJP2022, DasGhoshJSM2022, StrasbergSP2023, StrasbergReinhardSchindlerPRX2024, StrasbergSchindlerSP2024}.

\subsection{Model}\label{sec intro part 2}

\subsubsection{Model specification}\label{sec model specification}

We consider a coarse-graining $\{\Pi_0,\Pi_1\}$ with two projectors of rank $D_0$ and $D_1$, respectively, implying a total Hilbert space dimension $D=D_0+D_1$. The total Hamiltonian $H$ is decomposed into blocks as
\begin{equation}\label{eq H blocks}
 H = \sum_{i,j} \Pi_i H\Pi_j
   = \left(\begin{array}{cc}
            H_{00} & H_{01} \\
            H_{10} & H_{11} \\
           \end{array}\right).
\end{equation}
We set $H_{01} = H_{10}^\dagger = \lambda R$, where $\lambda$ is a coupling strength and $R$ a $D_0\times D_1$  random matrix with entries $\{\pm 1\}$ chosen with equal probability. Furthermore, $H_{00}$ ($H_{11}$) are taken to be diagonal with $D_0$ ($D_1$) evenly spaced entries in an interval of size $\delta\epsilon$.

Numerically, we set $\delta\epsilon=0.5$ and $\lambda = \delta\epsilon/(15\sqrt{D})$, which guarantees that we are in the weak coupling regime~\cite{BartschSteinigewegGemmerPRE2008}. It is convenient to measure time in units of a characteristic nonequilibrium time scale $\tau = \delta\epsilon/(2\pi\lambda^2D)$ (see Fig.~\ref{fig averages}). We choose $D_1 = 2D_0$ throughout and consider Hilbert spaces of dimension $D\in\{60,600,6000,60000\}$. Ultimately, we are interested in computing the normalized decoherence functional (NDF, for a broader discussion see Sec.~\ref{sec intro part 1})
\begin{equation}
	G_{\bs x,\bs x'} \equiv \lr{\psi(\bs x)|\psi(\bs x')},
\end{equation}
which is diagonal with respect to the last projection: $G_{\bs x,\bs x'} \sim \delta_{x_L,x'_L}$. Thus, we require that all histories end up in the same final subspace $x_L = 0$ and hence there are $2^{L-1}$ different histories $\bs x$ for a given length $L$ of the histories.

We have studied two cases corresponding to different initial states and time scales. The first case considers a Haar random initial state restricted to the subspace $\C H_1$, i.e., $\Pi_1|\Psi_0\rangle = |\Psi_0\rangle$, and we choose a long time scale $\Delta t = 8\tau$ giving rise to ``equilibrium histories''. The second case considers a randomly chosen initial energy eigenstate $|\Psi_0\rangle = |E_n\rangle$ for a short time scale $\Delta t=\tau/2$ giving rise to ``nonequilibrium histories''. Below,  Secs~\ref{sec decoherence overall} through~\ref{sec n dependence} consider the first case and Sec.~\ref{sec Born} the second case, results for the respective opposite cases are given in the SM (Sec.~\ref{sec SM numerics}). Importantly, the main phenomenology is in both cases the \emph{same} despite quantitative differences.

Numerical results are reported for a \emph{single} realization of the Haar random initial state and the random matrix interaction (no ensemble averages). We checked, however, that the reported results are typical: different runs give rise to similar quantitative results. Moreover, we relied on exact numerical diagonalization with Mathematica~\cite{Mathematica14}, which ensures high accuracy and guarantees no violation of unitarity up to numerical precision. Calculations were performed on a computer with 256 GB of memory and 16 physical kernels. They took around a week to finish.

\subsubsection{Model justification}\label{sec model justification}

Since our model is seldom used in the decoherence literature, it seems worth justifying it here.

The starting point is any Hamiltonian $H$ restricted to a microcanonical subspace, which is spanned by all energy eigenstates lying in a sufficiently small but finite energy interval (and which, besides energy, might also include other globally conserved quantities such as particle number). Since histories belonging to different conserved quantities never interfere, i.e., they are exactly decoherent, this can be done without loss of generality. Then, surely, the simplest non-trivial histories require at least two different projectors, $\Pi_0$ and $\Pi_1$, whatever their physical meaning is, in that subspace. The decomposition~(\ref{eq H blocks}) then entails no further assumption.

Next, we apply a block diagonal unitary transformation
\begin{equation}
 \left(\begin{array}{cc}
        V_0 & 0 \\
        0 & V_1 \\
       \end{array}\right) H
 \left(\begin{array}{cc}
        V_0^\dagger & 0 \\
        0 & V_1^\dagger \\
       \end{array}\right)
 = \left(\begin{array}{cc}
          V_0 H_{00} V_0^\dagger & V_0 H_{01} V_1^\dagger \\
          V_1 H_{10} V_0^\dagger & V_1 H_{11} V_1^\dagger \\
         \end{array}\right)
\end{equation}
such that $V_0 H_{00} V_0^\dagger$ and $V_1 H_{11} V_1^\dagger$ are diagonal as assumed above. Of course, the eigenvalues are not necessarily evenly distributed in the same interval, but it turns out that their precise distribution is irrelevant: as long as the off-diagonal blocks are random matrices the dynamics is insensitive to the precise eigenvalue structure of the diagonal blocks. Thus, the only aspect that remains to be justified is why we can choose the off-diagonal blocks to be random matrices.

To justify it, we recall that, since the seminal work of Wigner, random matrices were successfully used to capture generic aspects of complex quantum systems in a wide variety of situations~\cite{Wigner1967, BrodyEtAlRMP1981, BeenakkerRMP1997, GuhrMuellerGroelingWeidenmuellerPR1998, HaakeBook2010, BorgonoviEtAlPR2016, DAlessioEtAlAP2016, DeutschRPP2018}. In particular, the now celebrated eigenstate thermalization hypothesis suggests that crucial aspects of generic non-integrable quantum many-body systems can be understood with random matrix theory~\cite{BorgonoviEtAlPR2016, DAlessioEtAlAP2016, DeutschRPP2018, DeutschPRA1991, ReimannDabelowPRE2021}. While using a random matrix $R$ without any further structure (which encodes, e.g., locality of interactions) is unrealistic, the surprisingly often justified hope is that it nevertheless captures essential aspects of the problem. Indeed, for short histories (up to $L=5$) numerical calculations for a quantum spin chain in Ref.~\cite{WangStrasbergPRL2025} have clearly confirmed the conclusions obtained from random matrix theory in a similar model like the one considered here~\cite{StrasbergReinhardSchindlerPRX2024}. Thus, apart from considering a non-relativistic model, our essentially only assumption is here that random matrix theory captures generic effects of decoherence also for long histories.

We further note that the open quantum system Hamiltonian $H = \omega\sigma_z/2 + \lambda\sigma_x B + H_B$, describing a spin coupled to a bath, can be mapped to the above model for $\omega=0$. This is accomplished by taking $\Pi_0 = |0\rl0|$ and $\Pi_1 = |1\rl1|$ to be projectors onto the eigenbasis of $\sigma_z$, and by identifying $\Pi_0H_B\Pi_0 \simeq H_{00}$, $\Pi_1H_B\Pi_1 \simeq H_{11}$ and $\Pi_0B\Pi_1 \simeq R$. From the open quantum system perspective related models have been studied in Refs.~\cite{PereyraJSP1991, EspositoGaspardPRE2003b, LebowitzPasturJPA2004, GorinEtAlNJP2008, GenwayHoLeePRL2013, RieraCampenySanperaStrasbergPRXQ2021, AlbrechtBaunachArrasmithPRD2022, YanZurekNJP2022}, which also includes a study on the decoherence of short histories~\cite{AlbrechtBaunachArrasmithPRD2022}.

Next, we justify the focus on the weak coupling regime of small $\lambda$. This choice ensures that the dynamics of $\lr{\Pi_x}(t) = \lr{\psi_t|\Pi_x|\psi_t}$ is slow compared to microscopic time scales of the quantum system. Since human perception is limited to slowly evolving quantities, this ensures that we look at the right physical regime. Note that such slow processes are commonly well described by master equations and related tools of statistical physics~\cite{VanKampenPhys1954, StrasbergEtAlPRA2023}, and in the decoherent histories literature such quantities are also known as ``quasi-conserved'' because they almost commute with the Hamiltonian. Only for those quantities there is hope that the here observed behavior is generic, whereas fast observables show a distinct behavior of decoherence~\cite{StrasbergReinhardSchindlerPRX2024, WangStrasbergPRL2025}.

Finally, we justify the choice of initial states. First, a Haar random initial state corresponds to the most unbiased choice akin to a maximum entropy principle for pure states: it does not contain any further assumptions apart those contained in the history. For instance, in a more realistic cosmological context (which is not at all captured in the present model) that corresponds to regarding all wave functions giving rise to a Big Bang as currently observed as equally likely. Instead, choosing an initial energy eigenstate mimics the initial or boundary condition of the Wheeler-DeWitt equation $H|\Psi\rangle = 0$~\cite{DeWittPR1967}.

We emphasize that we checked that several assumptions can be relaxed without changing our main results. First, the blocks $H_{00}$ and $H_{11}$ do not require evenly spaced entries, but any distribution sufficiently smeared out over $\delta\epsilon$ suffices. Furthermore, choosing Gaussian real or complex random matrices $R$ also does not change the results. The results below are also robust to changing the ratio $D_0/D_1$ of subspace dimensions (we checked that for various ratios of order one). Also order-one changes of $\lambda$ or $\Delta t$ do not change our basic conclusions below.

\subsection{Results}\label{sec results}

\subsubsection{Average dynamics}

\begin{figure}[t]
 \centering\includegraphics[width=0.45\textwidth,clip=true]{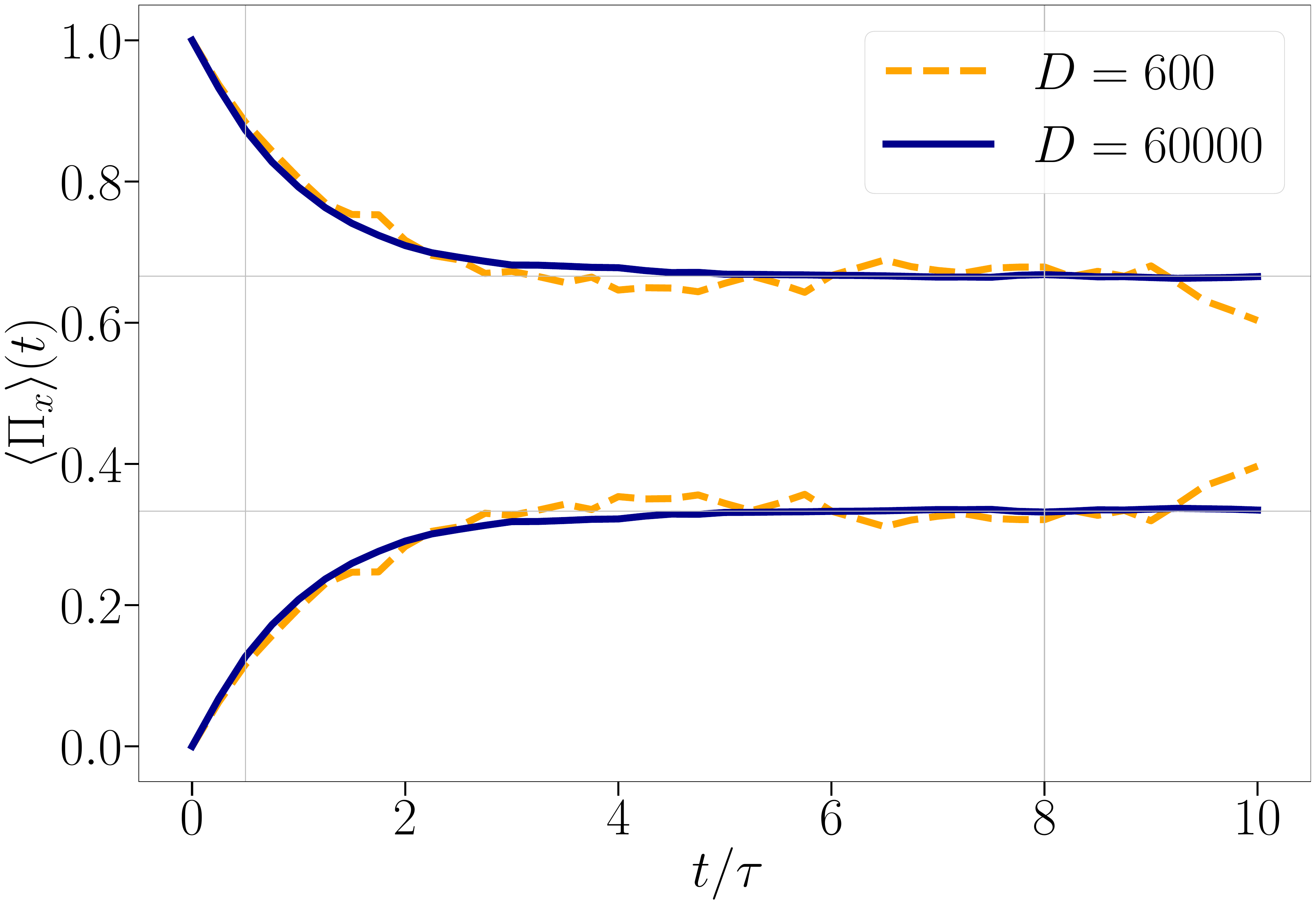}
 \caption{Time evolution (in units of $\tau$) of the probabilities to find the system in subspace $\C H_0$ (lower curves) or $\C H_1$ (upper curves) for $D\in\{600,60000\}$. An exponential relaxation to the equilibrium probabilities (horizontal lines at $D_0/D = 1/3$ and $D_1/D = 2/3)$ is visible with deviations that are suppressed with increasing $D$. The vertical lines indicate the considered nonequilibrium and equilibrium time scales $\Delta t = \tau/2$ and $\Delta t = 8\tau$, respectively. }
 \label{fig averages}
\end{figure}

Before we turn to any question related to decoherence, we consider the relaxation dynamics of the model in Fig.~\ref{fig averages}. We plot $\lr{\Pi_x}(t) = \lr{\Psi_t|\Pi_x|\Psi_t}$ starting from a Haar random initial state confined to subspace $\C H_1$. We clearly see a smooth exponential relaxation to the thermal equilibrium value $D_x/D$ equal to the microcanonical expectation value of $\Pi_x$. Indeed, in the weak coupling regime considered here, one expects that the dynamics are described by a Markovian master equation~\cite{VanKampenPhys1954, StrasbergEtAlPRA2023}, a fact which we will confirm more explicitly in Sec.~\ref{sec Born}.

\subsubsection{Decoherence: the overall picture}\label{sec decoherence overall}

\begin{figure*}[tb]
 \centering\includegraphics[width=0.99\textwidth,clip=true]{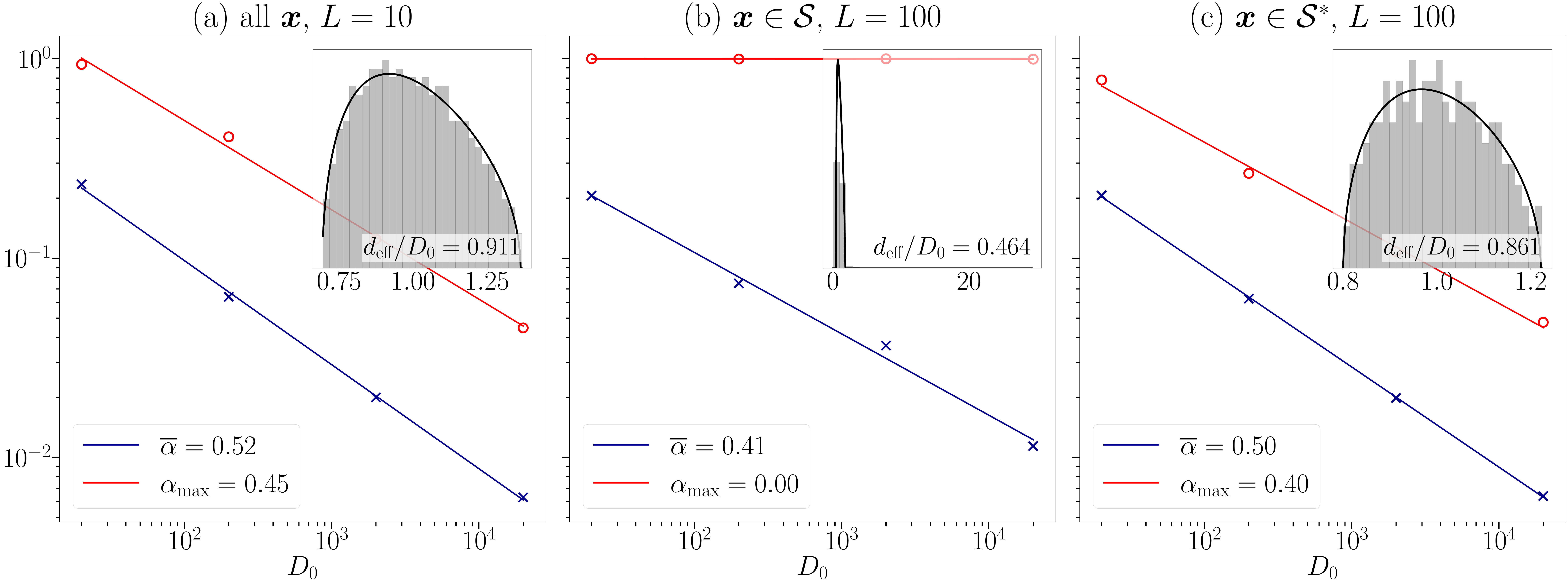}
 \label{fig dec intro eq}
 \caption{Plot of the average decoherence $\overline G$ (blue crosses) and maximum coherence $G_\text{max}$ (red circles) on a double logarithmic scale. The solid blue and red lines are a fit from which the exponents $\overline\alpha$ and $\alpha_\text{max}$ are extracted. Insets display a histogram of the eigenvalue distribution of $G$ fitted to the Marchenko-Pastur distribution with effective dimension $d_\text{eff}$ (solid black line). (a) Plot for all histories $\bs x$ of length $L=10$. (b) Plot for a randomly chosen subset $\C S$ of histories for $L=100$. (d) Plot for a specific subset $\C S^*\subset\C S$ for $L=100$. Here as elsewhere, all plots are generated for a single realization of the random matrix Hamiltionian and the random initial state. }
\end{figure*}

We start by explaining the big picture with Fig.~\ref{fig dec intro eq}. In Fig.~\ref{fig dec intro eq}(a) we consider the normalized DF (NDF) $G_{\bs x,\bs x'} = \lr{\psi(\bs x)|\psi(\bs x')}$ equal to the Gram matrix of the \emph{normalized} history states for $L=10$. This is a $N\times N$ matrix with $N=2^9$ since the histories are forced to end in $x_L=0$ [in view of the notation of Sec.~\ref{sec intro part 1} we consider $G_{\bs x,\bs x'}(0)$]. From the NDF $G$ we extract the quantities
\begin{equation}\label{eq decoherence to plot}
 \overline G \equiv \frac{\sum_{\bs x\neq\bs x'}|G_{\bs x,\bs x'}|}{N^2-N}, ~~~ G_\text{max} \equiv \max_{\bs x\neq\bs x'} |G_{\bs x,\bs x'}|.
\end{equation}
Here, $\overline G$ is the average decoherence and $G_\text{max}$ measures the maximum coherence between two histories, i.e., it measures the significance of statistical outliers with respect to the average $\overline G$. By varying the Hilbert space dimension $D$ and fitting to $1/D_0^\alpha$, we extract the characteristic exponent $\overline\alpha\approx0.52$ for $\overline G$ and $\alpha_\text{max} \approx0.45$ for $G_\text{max}$. The value $\overline\alpha\approx0.52$ is compatible with eqn~(\ref{eq DF scaling form}) and it matches the $1/\sqrt{D_0}$ equilibrium behavior mentioned below eqn~(\ref{eq DF scaling form}). The discrepancy of $\overline\alpha$ and $\alpha_\text{max}$ might hint at the fact that $L=10$ is quite large and the elements $G_{\bs x,\bs x'}$ do not quite behave like independent (pseudo-)random numbers. Overall, however, this confirms the basic picture of Refs.~\cite{StrasbergReinhardSchindlerPRX2024, WangStrasbergPRL2025}.

To get further insights, we check the (pseudo-)random character of $G_{\bs x,\bs x'}$ for the largest system size $D=60000$. We plot the histogram of its eigenvalues $\{\lambda_i\}$ as an inset in Fig.~\ref{fig dec intro eq}(a) and compare it with the predicted Marchenko-Pastur distribution of eqn~(\ref{eq MP measure}) for the complex Wishart ensemble (solid black line). To this end, notice that we can write $G=X^\dagger X$, where $X$ is a matrix whose columns are the history states $|\psi(\bs x)\rangle$. We then find the best matching Marchenko-Pastur distribution (using a least square optimization for the histograms) as a function of the ratio $\gamma_\text{eff} = N/d_\text{eff}$ with $d_\text{eff}$ variable. Figure~\ref{fig dec intro eq}(a) shows that there is a well fitting Marchenko-Pastur distribution with $d_\text{eff}$ roughly matching $D_0$. This indicates that the $|\psi(\bs x)\rangle$ behave like independent random vectors with statistical properties in unison with eqn~(\ref{eq DF scaling form}). Thus, for $L\approx10$ our results are in unison with our basic assumptions from Sec.~\ref{sec approximate decoherence}. 

The situation changes drastically when considering very long histories such that $N=2^{L-1}\gg D$. Figure~\ref{fig dec intro eq}(b) displays the case $L=100$ for a small subset $\C S$ of histories because computing the full $N\times N$ NDF is numerically impossible. Specifically, we construct a sub-NDF by sampling histories $\bs x\in\C S$, where the set $\C S$ is constructed as follows. First, let $L_\text{max}$ be the longest histories we are interested in. Second, we randomly sample $N=1000$ numbers $n\in\{0,1,\dots,L_\text{max}-1\}$, and then randomly sample a history with $n$ many 1s (e.g., $\bs x = 01001$ for $L_\text{max}=5$ and $n=2$). Recall that we require all history states to end up in $\C H_0$, so all histories end with $x_{L_\text{max}} = 0$. Moreover, we impose that the histories with no 1s ($\bs x=000\dots0$) and only 1s except for the last element ($\bs x=011\dots1$) are always elements of $\C S$. Finally, for any value $L<L_\text{max}$ we use the same set $\C S$ but cut the histories after $L$ steps and set $x_L=0$. The so generated sub-NDF is denoted $G_{\bs x,\bs x'}(\C S) = \lr{\psi(\bs x)|\psi(\bs x')}$ with $\bs x,\bs x'\in\C S$.

Figure~\ref{fig dec intro eq}(b) then shows the same quantities as Fig.~\ref{fig dec intro eq}(a) but evaluated for $G(\C S)$. We observe that $\overline{G(\C S)}$ still decays quickly but with a smaller exponent $\overline\alpha\approx0.41$, whereas $G(\C S)_\text{max}$ stopped decaying ($\alpha_\text{max} \approx 0$), indicating pairs of histories with close to maximal coherence for all $D$. Moreover, the inset shows an eigenvalue distribution with long but weak tails (too small to be visible to the naked eye) that can not be fitted in any reasonable way to the Marchenko-Pastur distribution. This indicates that $G(\C S)$ strongly deviates from the behavior of the complex Wishart ensemble.

Remarkably, Fig.~\ref{fig dec intro eq}(c) shows that a behavior similar to the short histories case of Fig.~\ref{fig dec intro eq}(a) can be \emph{restored} by focusing on a suitably chosen subset $\C S^*\subset\C S$ for $L=100$. While Fig.~\ref{fig dec intro eq}(c) is not identical to Fig.~\ref{fig dec intro eq}(a), this could be a consequence of being restricted to a small subset $\C S$ history samples, and consequently an even smaller set $\C S^*$. The study of $\C S^*$ or, more generally, the question which histories remain decoherent and which become recoherent for large $L$ is the main content in the following. In the next subsection, we start our exploration of this question by looking at the localization of history states, where we also explain how Fig.~\ref{fig dec intro eq}(c) was generated.

\subsubsection{Localization and decoherence}\label{sec localization}

\begin{figure}[t]
	\centering\includegraphics[width=0.49\textwidth,clip=true]{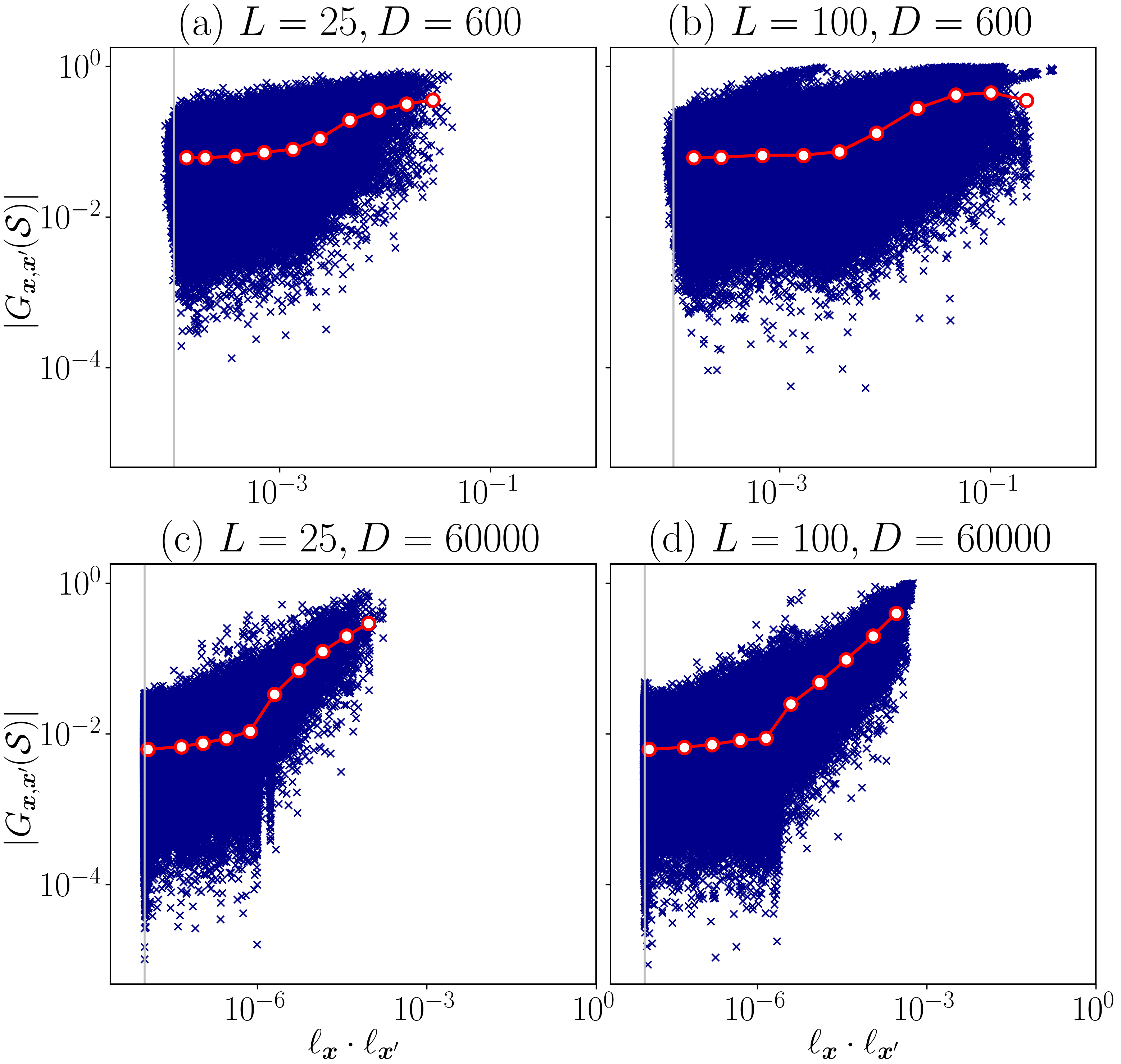}
	\caption{Plot of the off-diagonal elements of the NDF as a function of localization on a double logarithmic scale to better display the four-orders-of-magnitude variations. Note that the scale of the $y$-axes are shared. The gray lines indicate the Haar average. The white disks are generated by dividing the range of $\ell_{\bs x}\ell_{\bs x'}$ (on a logarithmic scale) into ten equal bins and computing the respective averages of $\ell_{\bs x}\ell_{\bs x'}$ and $|G_{\bs x,\bs x'}|$ in each bin (the red line connecting the disks is a guide for the eye).}
	\label{fig dec loc}
\end{figure}

Let $\ell_{\bs x} \equiv \sum_i |\lr{i|\psi(\bs x)}|^4$ be the localization of the history state $|\psi(\bs x)\rangle$ ($1/\ell_{\bs x}$ is also called the inverse participation ratio). Here, $\{|i\rangle\}_D$ is the computational basis defined to set up the Hamiltonian in eqn~(\ref{eq H blocks}), i.e., $|i\rl i|$ commutes with $H_{00}, H_{11}$ as well as $\Pi_0$ and $\Pi_1$. Then, $\ell_{\bs x}\in[1/D_0,1]$ quantifies how spread out $|\psi(\bs x)\rangle$ is over the Hilbert space with the limiting cases $\ell_{\bs x} = 1/D_0$ for a maximally delocalized state with $|\lr{i|\psi(\bs x)}|^2 = 1/D_0$ and $\ell_{\bs x} = 1$ for a fully localized state with $|\lr{i|\psi(\bs x)}|^2 = \delta_{i,i^*}$ for some $i^*$. Moreover, the average localization of a Haar random state is $\lr{\ell}_\text{Haar} = 2/(D_0+1)$, which follows from eqn~(\ref{eq Dirichlet}). Now, the conjecture is that the overlap $\lr{\psi(\bs x)|\psi(\bs x')}$ remains small if both history states are delocalized, assuming that the $\lr{i|\psi(\bs x)}$ and $\lr{i|\psi(\bs x')}$ are approximately uncorrelated. Instead, if some $|\psi(\bs x)\rangle$ is strongly localized in some part of $\C H$, it tends to have a large overlap with other states localized in the same part---though its overlap will be very small with states localized in different parts.

This basic idea is confirmed in Fig.~\ref{fig dec intro eq}(c), where we generated $\C S^*$ by considering the 20\% of histories $\bs x$ with the smallest localization $\ell_{\bs x}$. We see that we obtain exponents $\overline\alpha$ and $\alpha_\text{max}$ close to the short histories case of Fig.~\ref{fig dec intro eq}(a). Moreover, the eigenvalue distribution is reasonably well captured by the Marchenko-Pastur distribution.

The conjectured correlation between coherence and localization is further supported by the scatter plot of Fig.~\ref{fig dec loc}, where we plot $G_{\bs x,\bs x'}(\C S)$ for all $\bs x\neq\bs x'\in\C S$ as a function of their product of localizations $\ell_{\bs x}\cdot\ell_{\bs x'}$. To get a feeling for the absolute scale, the $x$ axes cover the interval $[1/D_0^2,1]$ from the minimum to the maximum of $\ell_{\bs x}\cdot\ell_{\bs x'}$, and the vertical gray line indicates $\lr{\ell}_\text{Haar}^2$. We see that some correlation between localization and decoherence is already visible for relatively short histories of length $L=25$ and the effect becomes more pronounced for $L=100$. We emphasize that the conjectured correlation between decoherence and localization is not one-to-one: as remarked above, it can happen that two history states are localized in different regions of the Hilbert space, thus having very little overlap.

\subsubsection{Petz purity and decoherence}\label{sec Petz}

\begin{figure}[tb]
	\centering\includegraphics[width=0.40\textwidth,clip=true]{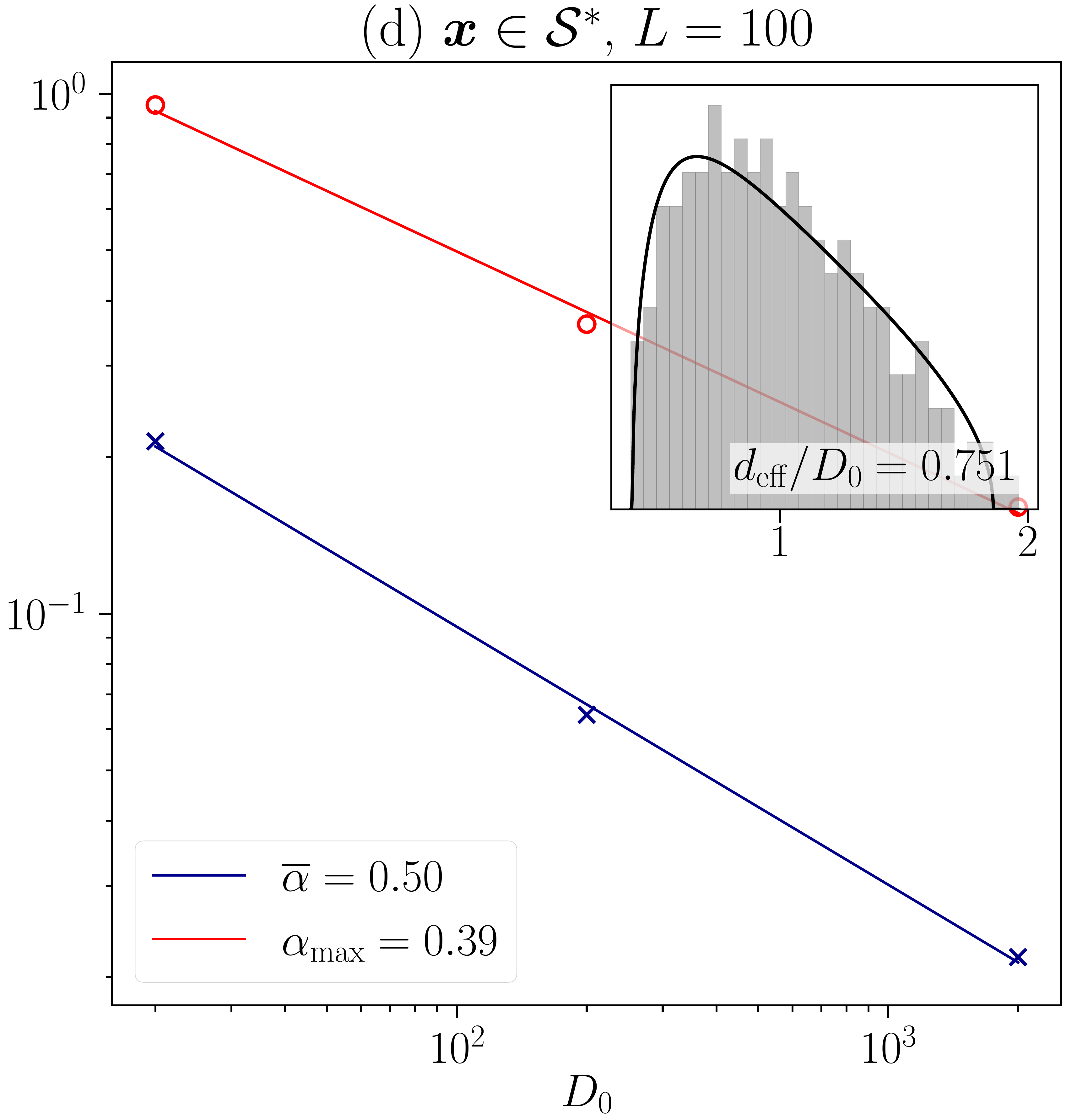}
	\caption{Average decoherence $\overline G(\C S^*)$ (blue crosses) and maximum decoherence $G_\text{max}(\C S^*)$ (red circles) for the 20\% of histories with the lowest purity. Exponents are extracted from the fitted line. Note the double logarithmic scale. Inset: histogram matched to the Marchenko-Pastur distribution by varying $d_\text{eff}$.}
	\label{fig sub dec Petz eq}
\end{figure}

\begin{figure}[tb]
	\centering\includegraphics[width=0.49\textwidth,clip=true]{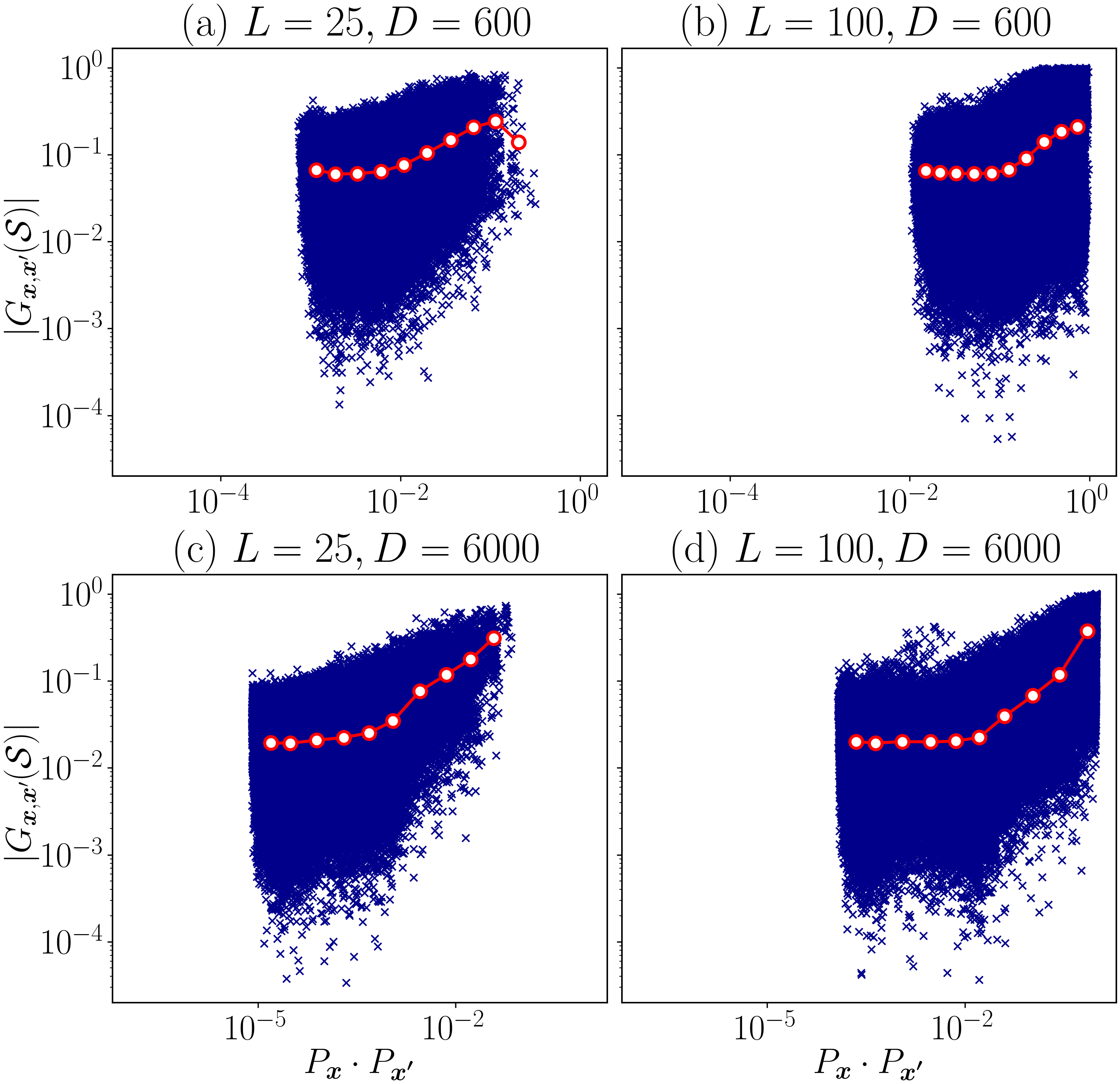}
	\caption{Plot of the off-diagonal elements of the NDF as a function of Petz purity on a double logarithmic scale. The scale of the $y$-axes are shared. The white disks are generated by dividing the range of $P_{\bs x}P_{\bs x'}$ (on a logarithmic scale) into ten equal bins and computing the respective averages of $P_{\bs x}P_{\bs x'}$ and $G_{\bs x,\bs x'}$ in each bin (the red line connecting the disks is a guide for the eye).}
	\label{fig dec pur}
\end{figure}

Since localized states are restricted to a small ``corner'' of the Hilbert space, they seem more predictable (in the sense of having an easier description) than delocalized states, which are smeared out over a large part of the Hilbert space. The results in the previous subsection indicate that such ``easy'' states tend to show larger coherences, but the notion of localization is obviously basis dependent and this sketchy idea lacks full rigor.

A more rigorous formulation of this idea can be based on the Petz recovery map, which generalizes Bayes' rule to the quantum domain~\cite{PetzCMP1986, PetzQJM1988, JungeEtAlAHP2018, ParzygnatBuscemiQuantum2023, BaiBuscemiScaraniPRL2025}. To formulate it, we introduce the Kraus operator $K_{\bs x} = \Pi_{x_L} U_{\Delta t}\cdots\Pi_{x_1}U_{\Delta t}$ (also called a class operator in the context of decoherent histories) such that $|\psi'(\bs x)\rangle = K_{\bs x}|\Psi_0\rangle$. Now, we take the perspective of an observer inside the multiverse who has no prior knowledge about the initial state $|\Psi_0\rangle$ apart from the fact that all histories start in the subspace $\C H_1$. Thus, the associated maximum entropy state is $\rho_0 = \Pi_1/D_1$. We now ask whether knowledge of a particular history $\bs x$ allows its observer to reduce the uncertainty about the initial state by considering the initial Petz-recovered state (akin of a Bayesian retrodicted state), which is
\begin{equation}\label{eq Petz}
 \tau_0(\bs x) = \frac{\sqrt{\rho_0}K_{\bs x}^\dagger K_{\bs x}\sqrt{\rho_0}}{\mbox{tr}\{\rho_0K_{\bs x}^\dagger K_{\bs x}\}} = \frac{\Pi_1 K_{\bs x}^\dagger K_{\bs x}\Pi_1}{\mbox{tr}\{\Pi_1 K_{\bs x}^\dagger K_{\bs x}\}}.
\end{equation}
The purity of that state,
\begin{equation}\label{eq Petz purity}
 P_{\bs x} \equiv \mbox{tr}\{\tau_0(\bs x)^2\} = \frac{\mbox{tr}\{(K_{\bs x}^\dagger K_{\bs x}\Pi_1)^2\}}{\mbox{tr}\{K_{\bs x}^\dagger K_{\bs x}\Pi_1\}^2} \in [D_1^{-1},1],
\end{equation}
then provides another way to judge how informative a history $\bs x$ is.\footnote{We note that there is also a Petz recovered state at the final time, $\tau_L(\bs x) \sim K_{\bs x}\Pi_1 K_{\bs x}^\dagger$, which equals the time evolved state of the guess $\rho_0 = \Pi_1/D_1$. It turns out that it has the same purity as the state of eqn~(\ref{eq Petz}).} High purities indicate that the history $\bs x$ is very informative in the sense that only a few members of the initial ensemble $\rho_0 = \Pi_1/D_1$ are likely compatible with the observed history, whereas low purities indicate that not much can be learned from the history about the initial state. Notice that the Petz purity is a measure \emph{independent} of the initial state $|\Psi_0\rangle$.

As in Fig.~\ref{fig dec intro eq}(c), the idea above is tested  by considering in Fig.~\ref{fig sub dec Petz eq} the NDF for the subset $\C S^*$ containing the 20\% of histories $\bs x$ with the lowest Petz purity $P_{\bs x}$. We find a similar behavior as in Fig.~\ref{fig dec intro eq}(c): the average decoherence is back to $\overline\alpha \approx 0.5$, decoherence for the statistical outliers is restored with $\alpha_\text{max} = 0.39$, and the eigenvalue distribution is reasonably well captured by the Marchenko-Pastur distribution.

Note that the biggest Hilbert space dimension in Fig.~\ref{fig sub dec Petz eq} is $D=6000$. The reason is that the computation of eqn~(\ref{eq Petz purity}) requires to propagate the matrix $\Pi_1$ for every history $\bs x$, which is numerically very costly compared to propagating a pure state vector like $|\Psi_0\rangle$. Even for $D=6000$ computing the Petz purity is one of the major bottlenecks in terms of the required simulation time. 

To further test the idea that Petz purity is correlated with decoherence, we use a scatter plot in Fig.~\ref{fig dec pur}, similar to Fig.~\ref{fig dec loc}. We see again a correlation between coherence and Petz purity, but this is again not one-to-one: also pairs of histories with high Petz purities can stay decoherent. Note that, as in Fig.~\ref{fig dec loc}, the $x$ axes range over the entire range $[D_1^{-2},2]$ to get a feeling for the absolute scale. In particular, we observe that for $L=100$ a large fraction of Petz recovery states got almost purified, confirming the intuitive idea that longer histories contain more information.

\subsubsection{Hamming distance and decoherence}\label{sec Hamming}

\begin{figure}[tb]
	\centering\includegraphics[width=0.49\textwidth,clip=true]{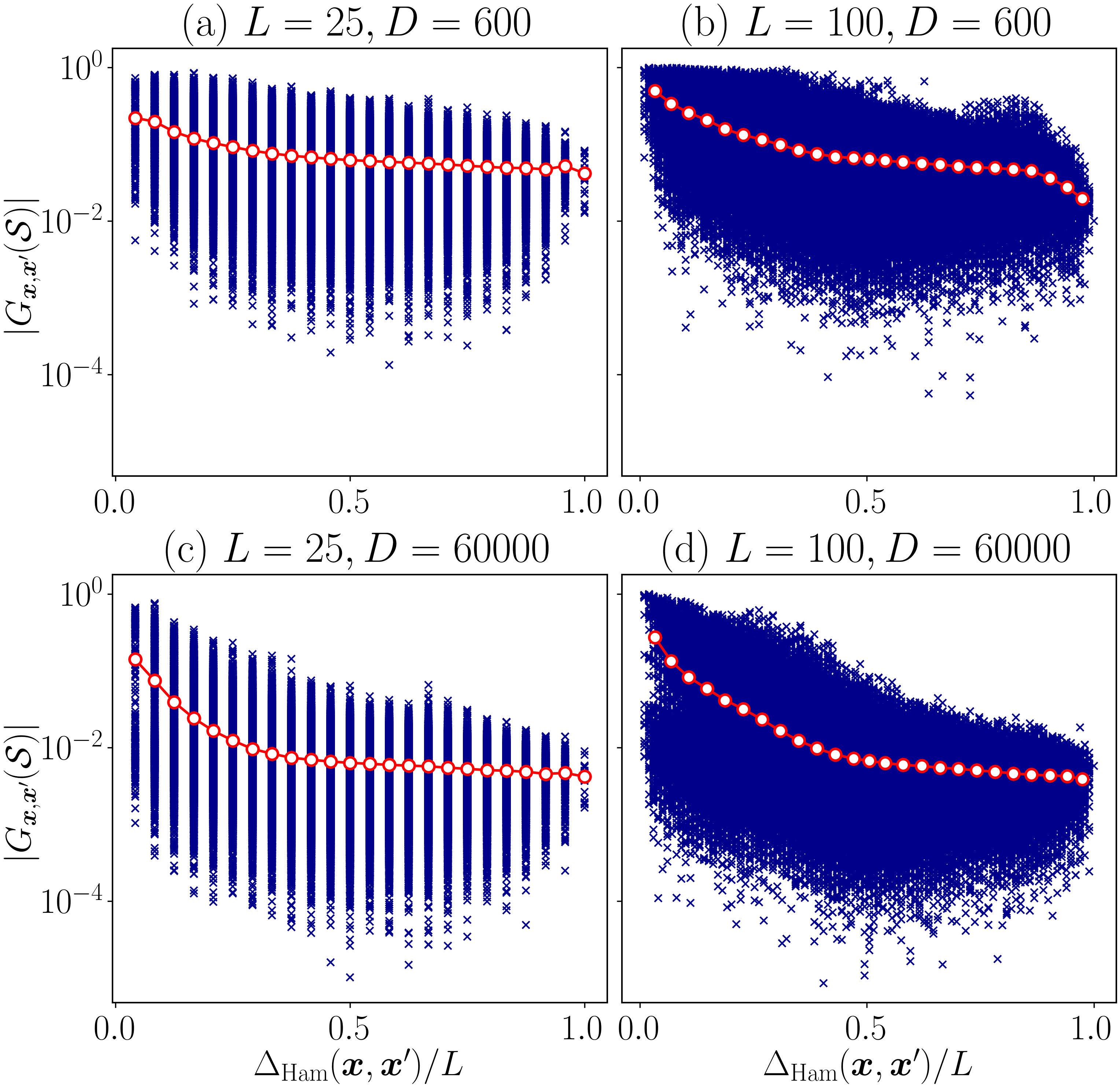}
	\caption{Plot of the off-diagonal elements of the NDF (on a shared logarithmic scale) as a function of the (normalized) Hamming distance for different $L$ and $D$. The white disks again correspond to a sample average by dividing the $x$-axis into 25 bins (red lines are a guide for the eye). }
	\label{fig dec Ham}
\end{figure}

So far, we identified that localized states, either in the sense of being localized in Hilbert space or having a high purity recovery state, can give rise to large coherences with other states. However, as remarked above, it is not the case that localized states necessarily share large coherences with any other localized state, as they might be localized in different parts of the Hilbert space. Thus, it would be desirable to have a kind of metric on the space of histories $\bs x$ that correlates with the amount of (de)coherence between the states $|\psi(\bs x)\rangle$.

Numerically, we found that the Hamming distance
\begin{equation}
 \Delta_\text{Ham}(\bs x,\bs x') = \sum_{i=1}^{L-1} |x_i-x'_i|
\end{equation}
provides a reasonable candidate for such a correlation. Namely, the closer two histories are in terms of the Hamming distance, the more likely it is that they are coherent, as illustrated in Fig.~\ref{fig dec Ham} (preliminary results pointing into this direction can be also found in Fig.~6 of Ref.~\cite{StrasbergReinhardSchindlerPRX2024}). This correlation is, however, again only statistical in nature: while there is a clear tendency, a short Hamming distance does not necessarily imply coherence.

It is interesting to speculate whether it is really the full Hamming distance that best characterizes the decoherence among states, or whether, perhaps, only the last few labels of $\bs x$ matter (say, $x_{L-1},\dots,x_k$ for some $k>1$), somewhat echoing an idea of Markovian decoherence among the histories. However, we did not find any strong evidence for this conjecture (results not shown here for brevity).

\subsubsection{$n$-dependence, decoherence and weights}\label{sec n dependence}

\begin{figure}[tb]
	\centering\includegraphics[width=0.49\textwidth,clip=true]{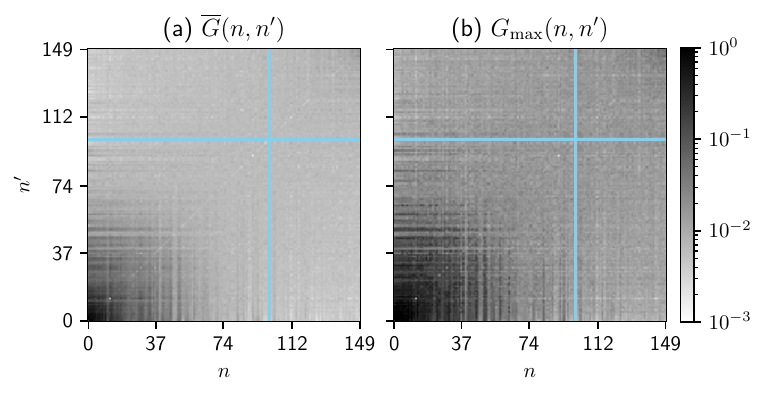}
	\caption{Heat map plot of the average decoherence (a) and maximum coherence (b) as a function of $n = \#_1(\bs x)$ for all $\bs x\in\C S$ and $L=150$. The horizontal and vertical blue lines are at $\overline n$. If there is accidentally a pair $(n,n'$) without a corresponding $(\bs x,\bs x') \in \C S\times\C S$, we set $\overline G(n,n')=G_\text{max}(n,n')=0$.}
	\label{fig heat map eq}
\end{figure}

\begin{figure*}[tb]
	\centering\includegraphics[width=0.99\textwidth,clip=true]{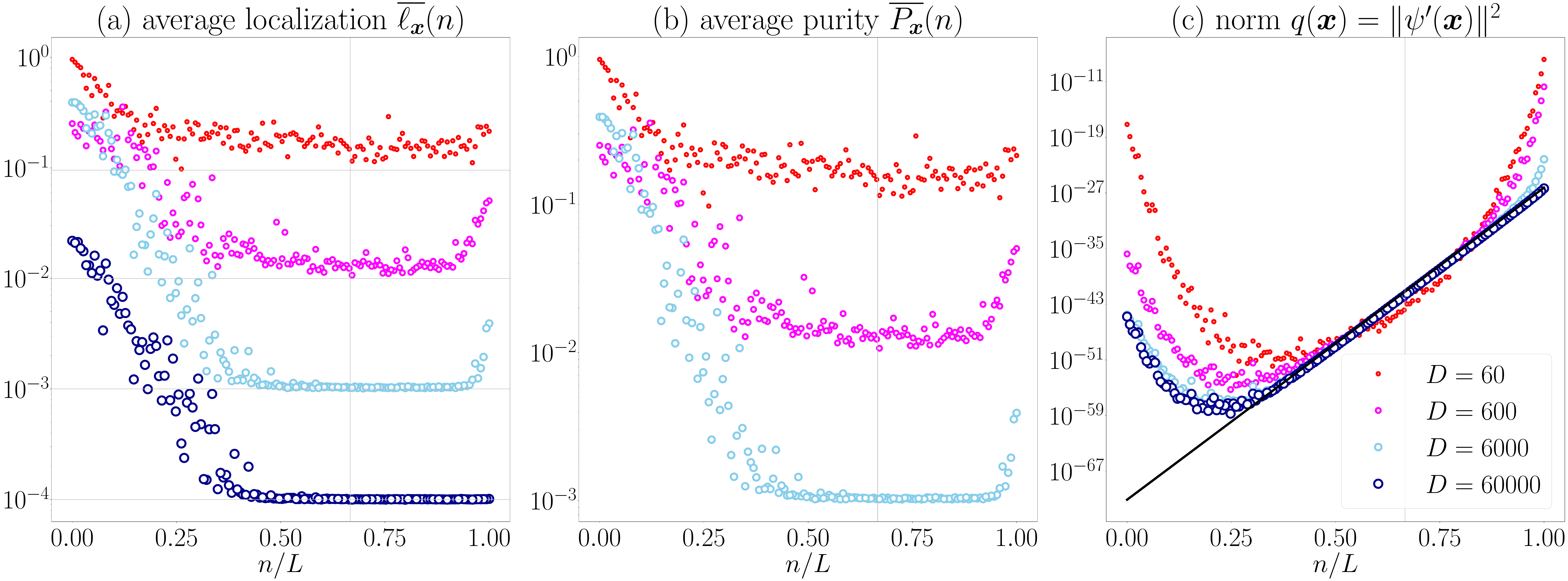}
	\caption{As a function of $n/L$ with $L=150$ we plot on a logarithmic scale the average localization (a), the average Petz purity (b) and the weights $q(\bs x)$ (c). Increasing Hilbert space dimensions from top to bottom are indicated by increasing circle size and different colors, see legend in (c). The vertical gray line indicates $\overline n/L$. The horizontal gray lines in (a) indicate the Haar average $2/(D_0+1)$. The black line in (c) is the probability $p(\bs x)$ obtained from a Bernoulli trial, see the main text. }
	\label{fig N1 eq}
\end{figure*}

In preparation of the next (and final) section, it is useful to consider decoherence as a function of the number of ``1s'' in a history $\bs x$, denoted $n=\#_1(\bs x)$. For example, for $\bs x = 0110010$ one gets $n = \#_1(\bs x) = 3$. Note that the expected number of ``1s'' for histories of length $L$ is $\overline n = (2/3)(L-1)$. The factor 2/3 arises from the fact that we consider time scales for which the system has time to relax back to equilibrium at which $\lr{\Pi_1}_\text{Eq} = D_1/D = 2/3$ for our choice of parameters (recall Fig.~\ref{fig averages}). The factor $L-1$ (instead of $L$) arises from the fact that the histories are conditioned to end up in subspace $\C H_0$.

The dependence of decoherence on $n$ is shown in the ``heat map'' plots of Fig.~\ref{fig heat map eq} for $L=150$, which is created by changing the dependence of the NDF $G(\C S)$ from $\bs x$ to $n=\#_1(\bs x)$. Since multiple $\bs x$ can give rise to the same $n$, we consider the average decoherence
\begin{equation}
	\overline G(n,n') \equiv \frac{\sum_{\bs x\neq\bs x'} \delta_{n,\#_1(\bs x)} \delta_{n',\#_1(\bs x')} |G_{\bs x,\bs x'}|}{\sum_{\bs x\neq\bs x'} \delta_{n,\#_1(\bs x)} \delta_{n',\#_1(\bs x')}},
\end{equation}
obtained by summing all $|G_{\bs x,\bs x'}|$ with a given $n$ and $n'$ divided by the total number of summands, and the maximum coherence for a given $(n,n')$
\begin{equation}
	\begin{split}\label{eq dec S n max}
		&G_\text{max}(n,n') \equiv \\
		& ~\max\{|G_{\bs x,\bs x'}|~|~\bs x\neq\bs x', n=\#_1(\bs x), n'=\#_1(\bs x')\}.
	\end{split}
\end{equation}
As we can see there is a tendency for larger coherences when $n$ and $n'$ both deviate from the expected mean $\overline n$. This will be further investigated, and becomes more pronounced, in the next subsection.

Further details are revealed in Fig.~\ref{fig N1 eq}. In Fig.~\ref{fig N1 eq}(a) we plot the average localization
\begin{equation}
	\overline{\ell_{\bs x}}(n) = \frac{\sum_{\bs x}\delta_{n,\#_1(\bs x)}\ell_{\bs x}}{\sum_{\bs x}\delta_{n,\#_1(\bs x)}}
\end{equation} 
(with $\ell_{\bs x}$ introduced in Sec.~\ref{sec localization}) as a function of $n$. We find that a higher localization correlates well with the deviation $n-\overline n$ from the mean. Of course, since we found that high localization correlates with high coherence, this finding can be anticipated from Fig.~\ref{fig heat map eq}. Moreover, the horizontal gray lines in Fig.~\ref{fig N1 eq}(a) indicate the Haar average $\lr{\ell}_\text{Haar}$, showing that histories with strong decoherence and $n\approx\overline n$ continue to behave Haar random. 

In Fig.~\ref{fig N1 eq}(b) we plot the average Petz purity 
\begin{equation}
	\overline{P_{\bs x}}(n) = \frac{\sum_{\bs x}\delta_{n,\#_1(\bs x)}P_{\bs x}}{\sum_{\bs x}\delta_{n,\#_1(\bs x)}}
\end{equation}
(with $P_{\bs x}$ introduced in Sec.~\ref{sec Petz}) as a function of $n$. The conclusion is similar to before: high purity correlates with stronger deviation $n-\overline n$, which can be anticipated from Fig.~\ref{fig heat map eq} as we already found that high Petz purities correlate with high coherence.

Finally, in Fig.~\ref{fig N1 eq}(c) we study another quantity, which becomes more important in the next subsection. This is the probability weight $q(\bs x) = \lr{\psi'(\bs x)|\psi'(\bs x)}$ of the individual history states. If the hypothesis that the system returns to its equilibrium state is correct, then we expect $q(\bs x)$ to coincide with the probability of a Bernoulli trial $p(\bs x) = (2/3)^n (1/3)^{L-1-n}$ for a given $n=\#_1(\bs x)$ (black line). That tendency is clearly visible for increasing $D$ in Fig.~\ref{fig N1 eq}(c), but convergence is slow for $n$ far from $\overline n$ (particularly visible for $n\rightarrow0)$. Whether $q(\bs x) \rightarrow p(\bs x)$ for $D\rightarrow\infty$ can thus not be answered unambiguously owing to numerical limitations. It is, however, interesting to observe that the maverick branches (histories whose statistics are atypical according to Born's rule) are characterized by weights significantly \emph{larger} than $p(\bs x)$.

\subsubsection{Inhomogeneous histories and Born's rule}\label{sec Born}

A central and final numerical result concerns \emph{inhomogeneous} histories, which arise from further coarse-graining or lumping of the elementary or homogeneous history states $|\psi'(\bs x)\rangle$. Specifically, we consider
\begin{equation}\label{eq inhomo history}
	|\psi'(n)\rangle = \sum_{\bs x} \delta_{n,\#_1(\bs x)} |\psi'(\bs x)\rangle,
\end{equation}
where $\#_1(\bs x)$ denotes the number of ``1s'' in the history $\bs x$ (as in the previous subsection). Thus, $|\psi'(n)\rangle$ is a coherent superposition of all histories $\bs x$ that have $n$ ``1s''. Inhomogeneous histories play an important role in quantum cosmology since there is no absolute scale of, e.g., space and time. Without external reference system, it does not make sense to ask whether a particle passes through the space interval $[x_a,x_b]$, but rather whether it passes through any space interval of length $L=x_b-x_a$. For further discussion see Ref.~\cite{HartlePRD1991}. Obviously, $|\psi'(n)\rangle$ is an abstract object, not related to a particle passing through some spacetime region, but we will nevertheless find an intriguing relation to Born's rule. Moreover, to the best of our knowledge, inhomogeneous histories have never been evaluated before apart from Ref.~\cite{StrasbergSchindlerArXiv2023}, whose results we extend here.

We briefly summarize some properties of inhomogeneous histories. First of all, we have $|\Psi_L\rangle = \sum_{n=0}^L |\psi'(n)\rangle$ and the weights $q(n) \equiv \lr{\psi'(n)|\psi'(n)}$ add up to one. Moreover, orthogonality in form of $\lr{\psi'(n)|\psi'(n')} = 0$ for $n\neq n'$ implies again the existence of records, but it does not imply decoherence of the elementary histories $|\psi'(\bs x)\rangle$. However, if the $|\psi'(\bs x)\rangle$ are decoherent, then so are the $|\psi'(n)\rangle$. Finally, since exact decoherence does not happen, we consider in the following the Gram matrix $G_{n,n'} = \lr{\psi(n)|\psi(n')}$ of the normalized inhomogeneous history states $|\psi(n)\rangle$. Furthermore, there is one more advantage related to inhomogeneous histories: since there are only $L+1$ many of them, we can track \emph{all} of them for very large $L$, in contrast to our previous case where we were restricted to an exponentially small sample set $\C S$ of histories $\bs x$.

To add some value to the already existing results in Ref.~\cite{StrasbergSchindlerArXiv2023} (apart from looking at a slightly different model with slightly different parameters), we now consider histories defined for a short nonequilibrium time scale $\Delta t=\tau/2$ (instead of $\Delta t=8\tau$) and a random initial energy eigenstate $|E_i\rangle$ (instead of a Haar random state confined to $\C H_1$). Results for the equilibrium and Haar random setting, as considered in Ref.~\cite{StrasbergSchindlerArXiv2023}, are shown in the SM (Sec.~\ref{sec SM numerics}). Furthermore, we also investigate here questions related to the Markovianity of the histories, which were not considered before.

\begin{figure*}[tb]
	\centering\includegraphics[width=0.99\textwidth,clip=true]{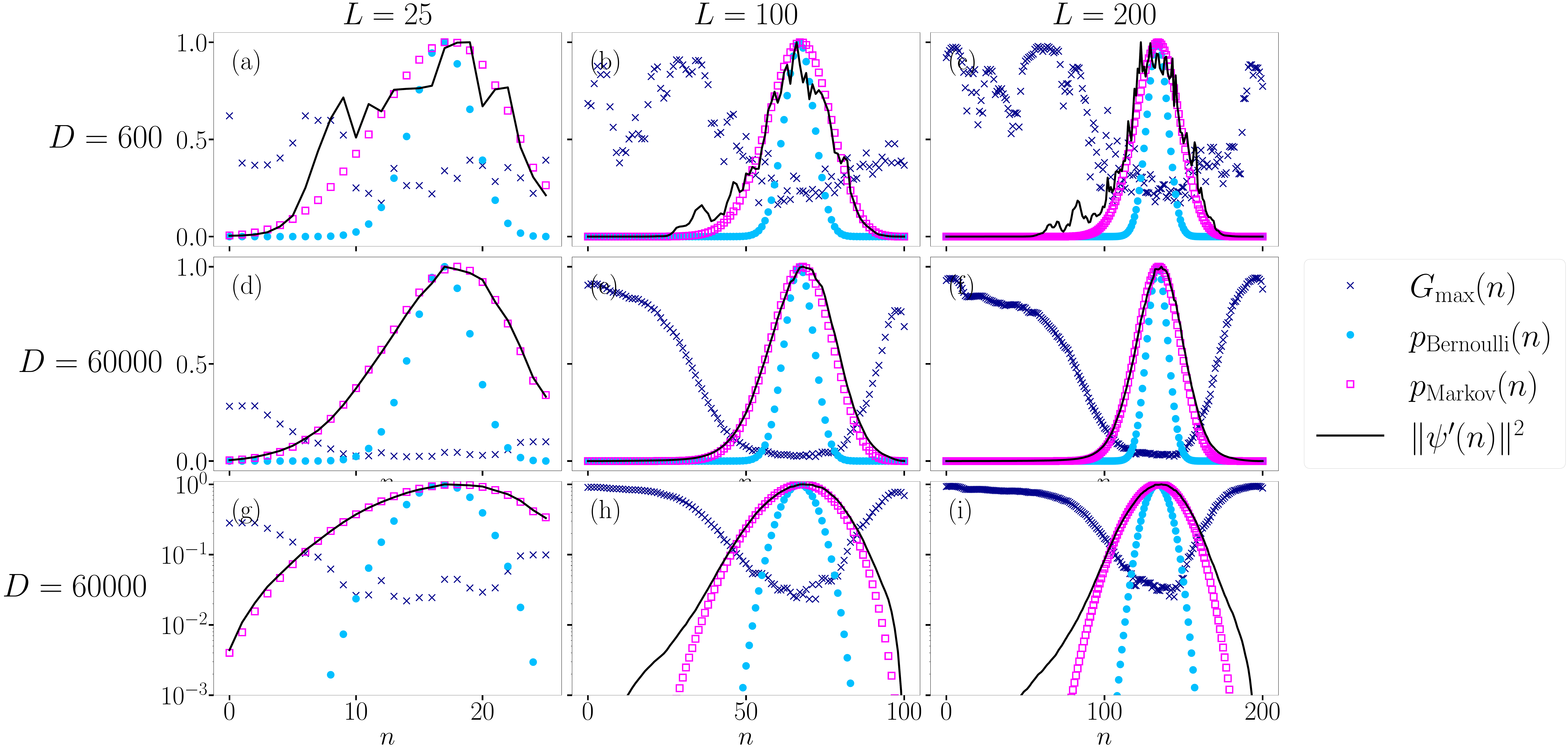}
	\caption{Decoherence, probabilities and weights of inhomogeneous histories as a function of $n\in\{0,1,\dots,L\}$ for $L\in\{25,100,200\}$ and $D\in\{600,60000\}$ (the third row is identical to the second but displayed on a logarithmic scale). We plot the maximum coherence (dark blue crosses), the weights of the history states (displayed for better visibility as a black solid line even though the values are discrete), the probabilities resulting from an idealized Markov process (magenta squares), and the probabilities of an idealized Bernoulli trial with the same average (sky blue disks). To plot the probabilistic quantities on the same scale as the (de)coherence, all of them got rescaled such that their maximum value equals one. In contrast to before, we here do not demand that the histories end in subspace $\C H_0$: we keep both outcomes $x_L\in\{0,1\}$ to construct the $|\psi(n)\rangle$. }
	\label{fig Born neq}
\end{figure*}

Figure~\ref{fig Born neq} displays the results. We consider four quantities for $L\in\{25,100,200\}$ and $D\in\{600,60000\}$. First, we consider the $n$-resolved maximum (de)coherence\footnote{Not to be confused with eqn~(\ref{eq dec S n max}): in the previous sections we computed $G_{\bs x,\bs x'}(\C S)$ for a subset of elementary histories $\bs x\in\C S$, from which we then inferred the maximum decoherence as a function of $n=\#_1(\bs x)$ and $n'=\#_1(\bs x')$. Here, instead, we consider the full NDF of the inhomogeneous history states $|\psi(n)\rangle$. }
\begin{equation}
	G_\text{max}(n) = \max_{n'\neq n} |G_{n,n'}|.
\end{equation}
If $G_\text{max}(n)$ is very small (ideally zero), this implies that $|\psi(n)\rangle$ is decohered from all other inhomogeneous histories and an associated record exists (and, ideally, identifies $n$ with certainty). These are the dark blue crosses in Fig.~\ref{fig Born neq}, which give rise to the following observation. First, there is a clear pattern emerging for large $L$, which seems even more pronounced for larger $D$. Looking at the pattern we see that some histories retain the same amount of minimum decoherence for all $L$ (as it becomes particularly clear from the logarithmic scale of the last row in Fig.~\ref{fig Born neq}), whereas the other histories start to recohere, partially even close to the maximum value $G_\text{max}(n)=1$. Of course, that this is possible could have been anticipated from the results of the previous section, but the truly interesting observation is related to the question \emph{where} this recoherence happens.

To this end, we consider three quantities related to the probabilistic interpretation of the process. First of all, there are the weights $q(n)$ of the wave function themselves (black solid line). We can see that they clearly concentrate around the value of strongest decoherence. Second, we consider the probabilities $p_\text{Markov}(n)$ that would result from assuming that the underlying elementary histories $\bs x$ obey a \emph{classical time-homogeneous Markov process}, meaning that 
\begin{equation}
	p(\bs x) = T_{x_L|x_{L-1}}\cdots T_{x_1|x_0}p(\bs x_0).
\end{equation}
Here, $T$ is a transition (or stochastic) matrix, whose elements $T_{x'|x}$ can be identified with the conditional probability to transition into subspace $\C H_{x'}$ starting from $\C H_x$ within time step $\Delta t$. Numerically, we extract $T_{x'|x}$ by averaging $\lr{\psi'(x',x)|\psi'(x',x)}$ with $|\psi'(x',x)\rangle = \Pi_{x'} U_{\Delta t}\Pi_x|\Psi_0\rangle$ over $100$ Haar random initial states $|\Psi_0\rangle$. Finally, we set $p_\text{Markov}(n) = p_0(n;t_L) + p_1(n;t_L)$, where $p_i(n;t)$ is the probability to find the system in subspace $\C H_i$ at time $t$ after a history with $n$ ``1s'', which is computed as
\begin{align}
	p_0(n;t_{k+1}) &= T_{0|0} p_0(n;t_k) + T_{0|1} p_1(n;t_k), \\
	p_1(n;t_{k+1}) &= T_{1|0} p_0(n-1;t_k) + T_{1|1} p_1(n-1;t_k).
\end{align}
The magenta squares in Fig.~\ref{fig Born neq} display the resulting probability distribution $p_\text{Markov}(n)$ after $L$ time steps. As we can see, for large $D$ they match very well the black line of $q(n)$ in the region of decoherence. Outside it, for the recoherent histories, the logarithmic scale (last row of Fig.~\ref{fig Born neq}) reveals clear deviations as expected for a coherent quantum process. That a process can be neatly divided into two sub-processes---one classical and Markovian, the other coherent and non-Markovian---is a curious observation in its own right. Finally, the sky blue discs in Fig.~\ref{fig Born neq} display the probability distribution
\begin{equation}
	p_\text{Bernoulli}(n) = \binom{L}{n} (1-p)^{L-n} p^n
\end{equation}
for $p=D_1/D$. This is the Bernoulli distribution for finding $n$ ``heads'' in a coin-flip experiment with a probability $D_1/D$ for heads. We mention this distribution here for two reasons. First, we can see that it does not match the black line (even though the averages match), which shows that we have a process here with correlations in time instead of a memoryless Bernoulli process. Second, at equilibrium time scales we expect the system to relax to subspace $\C H_1$ with probability $p_1 = D_1/D$, such that $p_\text{Bernoulli}(n)$ should match with $q(n)$ in presence of decoherence. This is indeed the case as shown in the SM (Sec.~\ref{sec SM numerics}), see also Fig.~\ref{fig N1 eq}(c).

To conclude, for large $L$ the inhomogenous histories of eqn~(\ref{eq inhomo history}) show a strong (de)coherence pattern with decoherence persisting on all those branches that one expects to likely occur if one were to perform an ideal frequency experiment within the Copenhagen interpretation of quantum mechanics. Put differently, maverick branches that are highly atypical according to Born's rule are strongly recoherent.

\subsection{Summary and discussion}\label{sec summary part 2}

We numerically studied decoherent histories in a simple yet, as we argued, generic random matrix model based on exact unitary (Schr\"odinger) time evolution for pure states. In particular, we studied the fate of decoherence for very long histories when we ``run out'' of Hilbert space to accommodate all of them.

We then found that the structure of decoherence changes drastically. Whereas for small $L$ all histories look equally decoherent [see Fig.~\ref{fig dec intro eq}(a) for $L=10$], for large $L$ there is a clear structure of decoherence: some histories remain decoherent from all other histories as before and some become (sometimes maximally) coherent with respect to some other histories. We call this change from initially decoherent histories for small $L$ to coherent histories for large $L$ \emph{recoherence}. Note that $L$ does not need to be very large: already for $L=25$ we see an emerging pattern.

This recoherence does not happen randomly. Instead, the dynamics of the system filters out a specific subset of histories, characterized by certain physical properties, that recohere. Recoherent histories $\bs x$ tend to have a high purity Petz recovery state (see Sec.~\ref{sec Petz}) and their states $|\psi(\bs x)\rangle$ tend to show strong localization (see Sec.~\ref{sec localization}). Moreover, histories which are not ``far'' from each other in the sense of a small Hamming distance $\Delta_\text{Ham}(\bs x,\bs x')$ are more likely to show recoherence (see Sec.~\ref{sec Hamming}). While we believe that these properties make sense intuitively, we have no elaborate theoretical explanation for it, and further research is necessary. Many more properties besides localization, purity and Hamming distance could also be studied---for example, how does decoherence scale with the algorithmic complexity of $\bs x$?

In addition, we saw that recoherence is strongly correlated with statistical outliers in a probabilistic sense (see Secs.~\ref{sec n dependence} and~\ref{sec Born}, and also Ref.~\cite{StrasbergSchindlerArXiv2023}). Histories whose frequencies deviate significantly from Born's rule are strongly recoherent and non-Markovian. Note that the focus on frequencies seems essential: it is not the case that elementary histories $\bs x$ with small weight $q(\bs x)$ are recoherent [see Figs.~\ref{fig heat map eq} and~\ref{fig N1 eq}(c)], but inhomogeneous history states $|\psi(n)\rangle$ with a frequency $n/L$ deviating from Born's rule are. Remarkably, this behavior is true in and out of equilibrium, but we have again no elaborate theoretical explanation for this observation. In the next and final section we ponder further about Born's rule and the theory confirmation problem within the MWI.

\begin{figure}[tb]
	\centering\includegraphics[width=0.49\textwidth,clip=true]{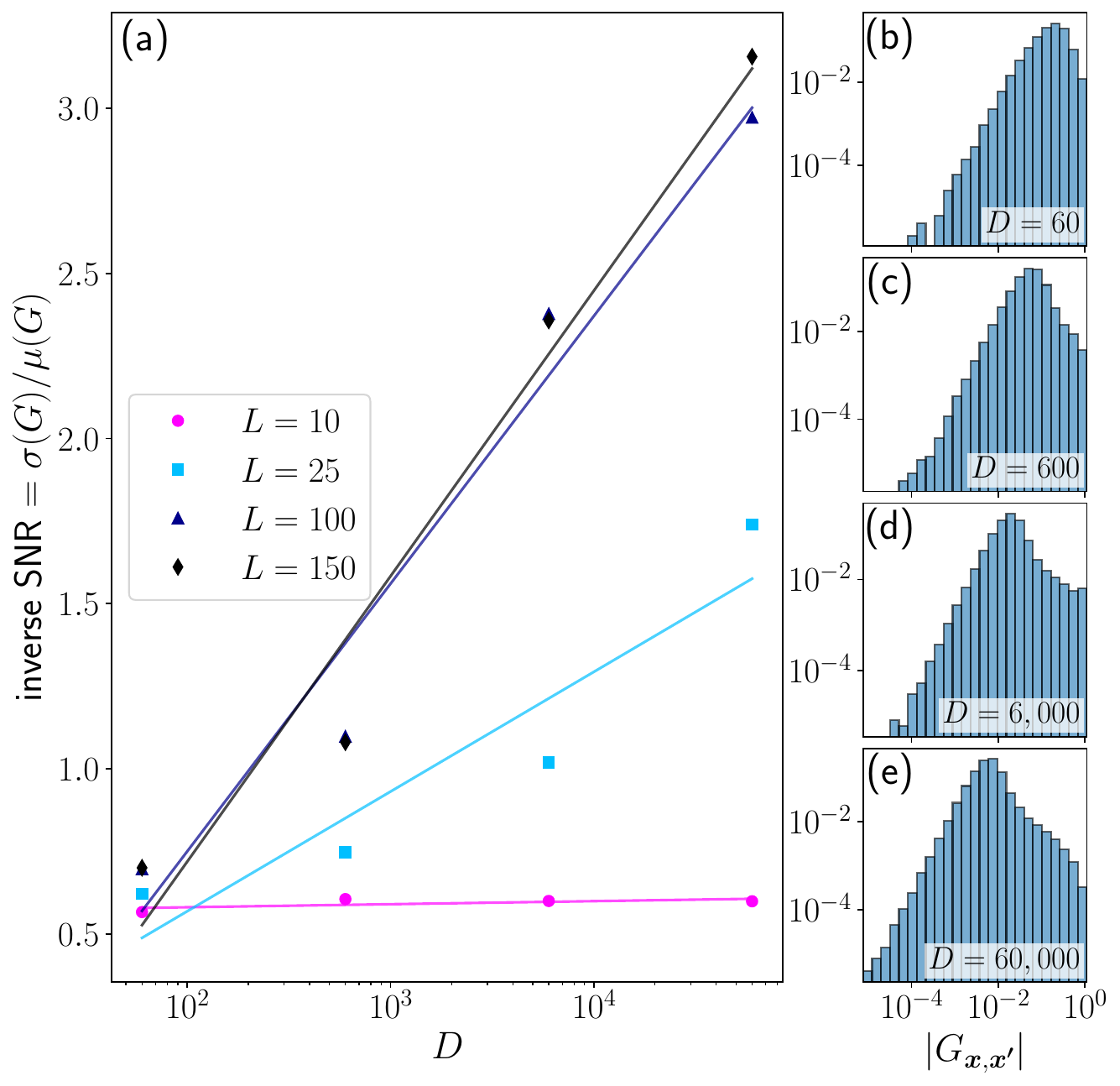}
	\caption{Left: We plot the inverse SNR of $|G_{\bs x,\bs x'}|$ (excluding the diagonal elements, which are always one, and avoiding the double counting of $|G_{\bs x,\bs x'}| = |G_{\bs x',\bs x}|$) for various $L$ over $D$ on a logarithmic scale. The lines are a linear fit to visualize the trend. Right: Probability histograms (on a double logarithmic scale to better visualize outliers) of $|G_{\bs x,\bs x'}|$ for $L=100$ and various $D$ from top to bottom.}
	\label{fig inverse SNR}
\end{figure}

Before doing so, we finish with a final investigation of how the above structure of decoherence scales with $D$ for fixed $L$. After all, one would expect for fixed $L$ and $D\rightarrow\infty$ that all structure vanishes. While that might be true (it is not accessible to numerical analysis), it seems that the observed structure is quite \emph{persistent}. This becomes clear from several plots above where a larger Hilbert space dimension actually resulted in a clearer pattern, but we investigate it in detail in Fig.~\ref{fig inverse SNR}. In the large plot we consider the inverse signal-to-noise ratio (SNR) equal to the standard deviation divided by the mean of $|G_{\bs x,\bs x'}|$ for different $L$ and $D$ (excluding the diagonal elements $G_{\bs x,\bs x}=1$ and avoiding the double counting of the off-diagonals $|G_{\bs x,\bs x'}| = |G_{\bs x',\bs x}|$). We see that for small $L$ the inverse SNR is fixed, as also suggested by Fig.~\ref{fig dec intro eq}(a), where statistical outliers behave similar to the mean. However, and somewhat suprisingly, for larger $L$ we find an \emph{increasing} inverse SNR as a function of $D$, indicating more structure (on a relative scale) in $G_{\bs x,\bs x'}$ for larger $D$ and fixed $L$. This can be also inferred from the histograms on the right of Fig.~\ref{fig inverse SNR}: the mean shifts towards smaller values for larger $D$, but the support of the histogram grows. However, the histograms also indicate that perhaps for very large $D$ the support no longer grows and only the mean shifts. This would then result in a constant instead of growing inverse SNR, but it would not imply a vanishing of the decoherence structure for $D\rightarrow\infty$ on a relative scale. It would be certainly good to get more insights here.

\section{Discussion and outlook}\label{sec conclusions}

This paper attempted a rigorous exploration of approximate decoherence and the structure of decoherence for long histories. While our results are relevant also for quantum subsystems that were isolated for some time, we here take the freedom to discuss, in a more speculative way than above, its consequences for the MWI, and what needs to be investigated to understand them better. Before we discuss part one (Sec.~\ref{sec discussion part 1}) and part two (Sec.~\ref{sec discussion part 2}), we briefly survey from a cosmological point of view arguments speaking in favor or against the assumptions of finite dimensional Hilbert spaces (which, it seems, can be overcome for a finite number of histories) or long histories (used only in part two).

The current cosmological standard (or $\Lambda$CDM) model predicts that the universe is flat and without boundaries. By many cosmologists this is seen as evidence that the universe is truly infinite, even though the $\Lambda$CDM model is also compatible with a finite torus-shape geometry (see Ref.~\cite{GillaniInterview2021} for discussion). Moreover, while the $\Lambda$CDM model agrees well with current observations, such as those resulting from the cosmic microwave background, a small positive curvature (closed universe) scenario is not completely ruled out. Moreover, the $\Lambda$CDM model relies on the cosmological principle, but all what we can really test is confined to our cosmological horizon: the \emph{observable} universe certainly is spatially finite even if the whole universe was infinite.

But even if we focus on the observable universe, the question whether a finite region of space is described by a finite Hilbert space in quantum mechanics (or, better, quantum field theory) remains. Indeed, students of quantum mechanics get haunted early on by the infinite dimensionality of some quantum systems: no finite quantum system can exactly reproduce the canonical commutator $[X,P]=i\hbar$. Yet, the conventional quantum formalism of a free particle also predicts that an ideal position measurement requires an infinite amount of energy, which perhaps indicates that infinite dimensionality is an unrealistic idealization. The problem of finite dimensional approximations gets further pronounced in quantum field theory: the Nielsen-Ninomiya theorem forbids any lattice discretization of electroweak interactions~\cite{NielsenNinomiyaNPB1981a, NielsenNinomiyaNPB1981b}---at least in a straightforward way, various proposals to overcome the problem exist~\cite{TongNotes2018}. Moreover, physics beyond the Standard Model awaits that might cure the problem. Indeed, an argument in favor of finite dimensionality is suggested by quantum gravity: If the Bekenstein-Hawking entropy of a black hole corresponds to some conventional entropy related to the number of microstates, then a finite spacetime region should be describable by a finite Hilbert space~\cite{BekensteinPT1980}. Thus, and in particular if one invokes assumptions such as a finite energy content of a finite universe, it seems plausible that a finite dimensional Hilbert space formulation suffices---certainly as far as all measurements can tell. Indeed, precise estimates of the (finite) number of qubits in the universe have even been computed~\cite{LloydNature2000, LloydPRL2002}.

Finally, we turn to the question whether long histories matter. Indeed, since the Big Bang happened a finite time ago, one could argue that even in a universe described by a finite dimensional Hilbert space histories will never get long enough to cause recoherences. Against this one could, of course, argue that it is nevertheless interesting to understand the eventual fate of the universe, similar to Dyson's speculations about the eventual fate of life in the universe~\cite{DysonRMP1979}. More profoundly, however, many current fundamental theories suggest that time itself, at a fundamental level, is a redundant concept~\cite{PriceBook1996, BarbourBook1999, KieferInBook2017}. This is evidenced, for instance, by the Wheeler-DeWitt equation $H|\Psi\rangle = 0$ suggesting that the universe is described by a static global wave function $|\Psi\rangle$~\cite{DeWittPR1967}, but already in classical mechanics time (or more precisely duration) is redundant~\cite{BarbourArXiv2009}. The resulting picture of a timeless universe is then often called eternalism or the block universe. In this picture long histories could matter much more: in a block universe all times and all kinds of histories should \emph{a priori} be treated on an equal footing.

To conclude, it is probably most honest to say that we do not really know the answers to these fundamental questions, but perhaps the research directions we explored here will eventually help to answer them.

\subsection{Part 1: Approximate Decoherence}\label{sec discussion part 1}

Part 1 revealed the \emph{branch selection problem}, which geometrically speaking translates to the problem that a set of almost orthogonal (history) vectors $\{|\psi_i\rangle\}_N$ can not in general be well approximated by a set of orthogonal (record) vectors $\{|r_j\rangle\}_D$. This is true if (i) the scalar product scales as $\lr{\psi_i|\psi_j} \sim D^{-\alpha}$ with $\alpha\le1/2$ and (ii) the $|\psi_i\rangle$ have some (approximate) isotropic and independent randomness. A large identification error $|\lr{\psi_i|r_i}|^2$ then follows when $N\rightarrow D$, not to mention the problem that $N$ can be much larger than $D$ without loosing almost orthogonality. This finding is quite general---it is only based on the geometry of Hilbert spaces---and robust---a large class of (pseudo-)random vectors shows the same universal behavior. Thus, for decoherence scaling with an exponent $\alpha\le1/2$ \emph{as per} eqn~(\ref{eq DF scaling form}) the \emph{branch selection problem} posits the questions: from the $N_\text{max}\gg D$ many equally decoherent histories how does the multiverse ``select'' the $N_\text{detectable}\ll D$ many branches that have observers aware of their history in it? Notice that also in infinite dimensions approximate decoherence can lead to a branch selection problem: even though the problem $N_\text{max}\gg D$ vanishes, the problem $N_\text{detectable}\ll N$ remains.

The branch selection problem persists even if we agree on the right history coarse-graining (the set selection problem) or the record basis to decompose $|\Psi\rangle$ (the preferred basis problem). Furthermore, we as observers inside the multiverse can not simply ``decide'' to disregard decompositions $|\Psi\rangle = \sum_{\bs x} |\psi'(\bs x)\rangle$ that contain too many histories to be identifiable by records because we do not like them. Everett's idea was precisely to get rid of any privileged role of an observer and all what we can know is our relative state, branch or history. We might share the Hilbert space with many more branches than we can imagine, for instance, those resulting from different initial conditions than our current big bang cosmology suggests. The multiverse is a ``democratic'' place in which each decomposition of $|\Psi\rangle = \sum_{\bs x} |\psi'(\bs x)\rangle$ is \emph{a priori} equally justified. The \emph{mechanics} of the multiverse alone, perhaps together with a specific choice of initial state (or boundary condition in a timeless formulation), must bound sensible history decompositions in a way that they are identifiable by records.

Following this line of thought, one way to avoid the branch selection problem could be to question the assumption that eqn~(\ref{eq DF scaling form}) remains valid also for very long histories. To be clear: the existence of decompositions $|\Psi\rangle = \sum_{\bs x} |\psi'(\bs x)\rangle$ with $N\gtrsim  \exp(D^{1-2\alpha})$ many states $|\psi'(\bs x)\rangle$ whose Gram matrix $G$ obeys eqn~(\ref{eq DF scaling form}) is a mathematical fact, but whether physical systems can produce histories with these properties is \emph{not} known.

Indeed, the example in Sec.~\ref{sec numerics} shows already that there is a lot of additional structure in the NDF $G$ for long histories, albeit it does not disprove that there could be $N\gg D$ many approximately decoherent histories. Other physical constraints could further help to solve the problem. For instance, a central point of quantum Darwinism~\cite{ZurekNP2009, KorbiczQuantum2021} is that records should be redundantly encoded in many different degrees of freedoms. While redundant records do not reduce the amount of decoherent histories\footnote{Each history giving rise to the \emph{same} redundant record must still be decoherent, otherwise there would be no redundance of records.}, perhaps physical systems obeying quantum Darwinism show a different structure of decoherence than what we assumed in eqn~(\ref{eq DF scaling form}) for long histories. Perhaps only a few histories (maybe $\ll D$) stay strongly decoherent for long histories (perhaps even with an $\alpha>1/2$), whereas the other histories become much more coherent again? At present, however, there is no rigorous evidence for that, and numerical evidence connecting quantum Darwinism and decoherent histories is scarce~\cite{FertleFarciCaoArXiv2025, ArefyevaPolyakovArXiv2025}.

To conclude, the branch selection problem is awaiting a clear answer in order to make sense of the MWI.

\subsection{Part 2: Structure of Decoherence}\label{sec discussion part 2}

Part 2 concerned a purely numerical analysis of a simple random matrix toy model within nonrelativistic quantum mechanics. Nevertheless, it was a ``first principles'' analysis based on a numerically exact integration of Schr\"odinger's equation (no assumptions violating unitarity entered) for the time evolution of pure states (no classical ensemble averages). Moreover, backed up by a large literature in statistical mechanics, we argued that results found for such random matrix models are generic.\footnote{We checked that also the non-integrable spin chain model of Ref.~\cite{WangStrasbergPRL2025} shows a behavior similar to Fig.~\ref{fig Born neq} (not shown here for brevity).}

We then found that histories do not decohere equally, but that some histories start to become \emph{recoherent} while others remain decoherent. Perhaps the most remarkable observation---confirming and extending the preliminary results of Ref.~\cite{StrasbergSchindlerArXiv2023}---was that strong recoherences happen for inhomogeneous history states $|\psi(n)\rangle$ whose frequencies $n=\#_1(\bs x)$ (where $\#_1(\bs x)$ denotes the number of ``1s'' in an elementary history $\bs x\in\{0,1\}^L$) deviate significantly from the expectation value $\overline n = \sum_{n=0}^L nq(n)$ with $q(n) = \lr{\psi(n)|\psi(n)}$.

Since the derivation of Born's rule is an essential and unsolved problem within the MWI, it is worth exploring the topic further (for an introduction to the problem see Refs.~\cite{SaundersEtAlBook2010, MaudlinBook2019, Vaidman2020, RidleyEJPS2025}, various attempts to solve it can be found in Refs.~\cite{EverettRMP1957, HartleAJP1968, DeWittGrahamBook1973, VaidmanISPS1998, DeutschPRSCA1999, WallaceSHPSB2003, HansonFP2003, GreavesSHPSB2004, ZurekPRA2005, BuniyHsuZeePLB2006, SaundersInBook2010, AguirreTegmarkPRD2011, WallaceBook2012, SebensCarrollBJP2018, MasanesGalleyMuellerNC2019, McQueenVaidmanSHPSB2019, Vaidman2020, SaundersPRSA2021, HsuMPLA2021, ShortQuantum2023, LazaroviciQR2023, RidleyQR2023, RidleyEJPS2025, WeidnerArXiv2025}\footnote{The fact that even the proponents of the MWI offer vastly different ``derivations'' of Born's rule strongly hints at the fact that this is an unsolved problem.}). In particular, the theory confirmation problem asks why do we find ourselves in a world where we confirm Born's rule if the \emph{vast majority} of histories are maverick branches, i.e., histories with frequencies that are atypical with respect to Born's rule? This becomes clear in the example we considered: the number of histories $\bs x$ with $n=\#_1(\bs x)$ many ``1s'' is \emph{always} given by the binomial coefficient $\binom{L}{n}$, independent of the details of the dynamics. Thus, $n\approx L/2$ for most histories $\bs x$ independent of their statistical weight $q(\bs x)$. In addition, the individual weights $q(\bs x)$ are not concentrated on histories $\bs x$ that are Born-typical as shown in Fig.~\ref{fig N1 eq}(c), only if one considers the frequency weights $q(n)$ of the inhomogeneous histories one finds a concentration around the correct mean value. However, As Kent emphasizes~\cite{KentInBook2010}: \emph{``It's no more scientifically respectable to declare that we can [...] confirm Everettian quantum theory by neglecting the observations made on selected low Born weight branches [...] unless we add further structure to the theory [...].''}

Here, we have evidence for such ``further structure'': histories with frequencies deviating significantly from the expectation of Born's rule are strongly recoherent. Since only decoherent histories leave detectable records in the multiverse, this argument suggests that branches violating Born's rule exist but are not detectable.

This argumentation, which is based solely on numerical observations, is clearly speculative. It also might be criticized as being circular: after all, does not the interpretation of decoherence, i.e., the meaning of the overlaps $\lr{\psi(\bs x)|\psi(\bs x')}$, presupposes a notion of Born's rule? While that is clearly true in the conventional Copenhagen interpretation, it is not clear to us whether one could not use the geometrical meaning of $\lr{\psi(\bs x)|\psi(\bs x')}$ to circumvent the problem. It is also interesting to note that the here seen structure of decoherence is close to the ``mangled worlds'' imagined in Ref.~\cite{HansonFP2003}. On the other hand, various proposals considered cutting away branches with very small amplitudes (see, e.g., Refs.~\cite{BuniyHsuZeePLB2006, HsuMPLA2021, WeidnerArXiv2025}). However, such proposals can only succeed if there is a pre-specified mechanism that decides in which basis this happens. Moreover, it can only work if applied to inhomogeneous frequency observables: cutting away the lowest amplitude branches of the elementary histories $\bs x$ does \emph{not} leave us with Born-typical branches as clearly seen in Fig.~\ref{fig N1 eq}(c).

To conclude, the current numerical observations reveals an interesting correlation between Born's rule and decoherence, but the derivation of Born's rule remains an open problem. In the future it could be worth, for example, to explore our observation here in context of the mangled worlds of Ref.~\cite{HansonFP2003}.

\subsection*{Acknowledgements}

We gratefully acknowledge discussions with Teresa E.~Reinhard, Giulio Gasbarri, Marco Fanizza, and the StackExchange community, in particular user Dustin G.~Mixon. This project has also benefited from AI support by Lumo and ChatGPT, mostly related to mathematical background information on random matrix theory and the generation of plots (AI has \emph{not} been used for numerical calculations or the writing of the manuscript).

PS, JS and AW acknowledge financial support from MICINN with funding from European Union NextGenerationEU (PRTR-C17.I1), the Generalitat de Catalunya, Spanish MICIN (project PID2022-141283NB-I00) with the support of FEDER funds, and the Spanish MTDFP through the QUANTUM ENIA project: Quantum Spain, funded by the European Union NextGenerationEU within the framework of the ``Digital Spain 2026 Agenda''. PS and AW also acknowledge the European Commission QuantERA grant ExTRaQT (Spanish MICIN grant no.~PCI2022-132965). PS was furthermore supported by ``la Caixa'' Foundation (ID 100010434, fellowship code LCF/BQ/PR21/11840014) and is supported by the Ram\'on y Cajal program RYC2022-035908-I. JS is supported by a Beatriu de Pin\'os fellowship of the Catalan Agency for Management of University and Research Grants (AGAUR)	and EU Horizon 2020 MSCA grant 801370. JW acknowledges the Deutsche Forschungsgemeinschaft (DFG), under Grant No.~531128043, as well as under Grants No.~397107022, No.~397067869, and No.~397082825 within the DFG Research Unit FOR 2692, under Grant No.~355031190. AW further acknowledges the Alexander von Humboldt Foundation.

\bibliography{/home/philipp/current/references/books.bib,/home/philipp/current/references/open_systems.bib,/home/philipp/current/references/thermo.bib,/home/philipp/current/references/info_thermo.bib,/home/philipp/current/references/general_QM.bib,/home/philipp/current/references/math_phys.bib,/home/philipp/current/references/equilibration.bib,/home/philipp/current/references/time.bib,/home/philipp/current/references/cosmology.bib,/home/philipp/current/references/general_refs.bib}

\begin{thebibliography}{191}%
\makeatletter
\providecommand \@ifxundefined [1]{%
 \@ifx{#1\undefined}
}%
\providecommand \@ifnum [1]{%
 \ifnum #1\expandafter \@firstoftwo
 \else \expandafter \@secondoftwo
 \fi
}%
\providecommand \@ifx [1]{%
 \ifx #1\expandafter \@firstoftwo
 \else \expandafter \@secondoftwo
 \fi
}%
\providecommand \natexlab [1]{#1}%
\providecommand \enquote  [1]{``#1''}%
\providecommand \bibnamefont  [1]{#1}%
\providecommand \bibfnamefont [1]{#1}%
\providecommand \citenamefont [1]{#1}%
\providecommand \href@noop [0]{\@secondoftwo}%
\providecommand \href [0]{\begingroup \@sanitize@url \@href}%
\providecommand \@href[1]{\@@startlink{#1}\@@href}%
\providecommand \@@href[1]{\endgroup#1\@@endlink}%
\providecommand \@sanitize@url [0]{\catcode `\\12\catcode `\$12\catcode
  `\&12\catcode `\#12\catcode `\^12\catcode `\_12\catcode `\%12\relax}%
\providecommand \@@startlink[1]{}%
\providecommand \@@endlink[0]{}%
\providecommand \url  [0]{\begingroup\@sanitize@url \@url }%
\providecommand \@url [1]{\endgroup\@href {#1}{\urlprefix }}%
\providecommand \urlprefix  [0]{URL }%
\providecommand \Eprint [0]{\href }%
\providecommand \doibase [0]{http://dx.doi.org/}%
\providecommand \selectlanguage [0]{\@gobble}%
\providecommand \bibinfo  [0]{\@secondoftwo}%
\providecommand \bibfield  [0]{\@secondoftwo}%
\providecommand \translation [1]{[#1]}%
\providecommand \BibitemOpen [0]{}%
\providecommand \bibitemStop [0]{}%
\providecommand \bibitemNoStop [0]{.\EOS\space}%
\providecommand \EOS [0]{\spacefactor3000\relax}%
\providecommand \BibitemShut  [1]{\csname bibitem#1\endcsname}%
\let\auto@bib@innerbib\@empty
\bibitem [{\citenamefont {Gell-Mann}\ and\ \citenamefont
  {Hartle}(1990)}]{GellMannHartleInBook1990}%
  \BibitemOpen
  \bibfield  {author} {\bibinfo {author} {\bibfnamefont {M.}~\bibnamefont
  {Gell-Mann}}\ and\ \bibinfo {author} {\bibfnamefont {J.~B.}\ \bibnamefont
  {Hartle}},\ }\enquote {\bibinfo {title} {{Complexity, Entropy and the Physics
  of Information}},}\ \ (\bibinfo  {publisher} {Reading: Addison-Wesley},\
  \bibinfo {year} {1990})\ Chap.\ \bibinfo {chapter} {Quantum Mechanics in the
  Light of Quantum Cosmology}, pp.\ \bibinfo {pages} {425--459.}\BibitemShut
  {Stop}%
\bibitem [{\citenamefont {Hartle}(1992)}]{HartleLecture1992}%
  \BibitemOpen
  \bibfield  {author} {\bibinfo {author} {\bibfnamefont {J.~B.}\ \bibnamefont
  {Hartle}},\ }\href {https://arxiv.org/abs/gr-qc/9304006} {\emph {\bibinfo
  {title} {{Spacetime Quantum Mechanics and the Quantum Mechanics of
  Spacetime}}}}\ (\bibinfo  {publisher} {Les Houches Lecture Notes},\ \bibinfo
  {year} {1992})\BibitemShut {NoStop}%
\bibitem [{\citenamefont {Halliwell}(1995)}]{HalliwellANY1995}%
  \BibitemOpen
  \bibfield  {author} {\bibinfo {author} {\bibfnamefont {J.~J.}\ \bibnamefont
  {Halliwell}},\ }\bibfield  {title} {\enquote {\bibinfo {title} {{A Review of
  the Decoherent Histories Approach to Quantum Mechanics}},}\ }\href {\doibase
  https://doi.org/10.1111/j.1749-6632.1995.tb39014.x} {\bibfield  {journal}
  {\bibinfo  {journal} {Ann. (N.Y.) Acad. Sci.}\ }\textbf {\bibinfo {volume}
  {755}},\ \bibinfo {pages} {726--740} (\bibinfo {year} {1995})}\BibitemShut
  {NoStop}%
\bibitem [{\citenamefont {Dowker}\ and\ \citenamefont
  {Kent}(1996)}]{DowkerKentJSP1996}%
  \BibitemOpen
  \bibfield  {author} {\bibinfo {author} {\bibfnamefont {F.}~\bibnamefont
  {Dowker}}\ and\ \bibinfo {author} {\bibfnamefont {A.}~\bibnamefont {Kent}},\
  }\bibfield  {title} {\enquote {\bibinfo {title} {On the consistent histories
  approach to quantum mechanics},}\ }\href {\doibase 10.1007/BF02183396}
  {\bibfield  {journal} {\bibinfo  {journal} {J. Stat. Phys.}\ }\textbf
  {\bibinfo {volume} {82}} (\bibinfo {year} {1996}),\
  10.1007/BF02183396}\BibitemShut {NoStop}%
\bibitem [{\citenamefont {Dowker}\ and\ \citenamefont
  {Halliwell}(1992)}]{DowkerHalliwellPRD1992}%
  \BibitemOpen
  \bibfield  {author} {\bibinfo {author} {\bibfnamefont {H.~F.}\ \bibnamefont
  {Dowker}}\ and\ \bibinfo {author} {\bibfnamefont {J.~J.}\ \bibnamefont
  {Halliwell}},\ }\bibfield  {title} {\enquote {\bibinfo {title} {{Quantum
  mechanics of history: The decoherence functional in quantum mechanics}},}\
  }\href {\doibase 10.1103/PhysRevD.46.1580} {\bibfield  {journal} {\bibinfo
  {journal} {Phys. Rev. D}\ }\textbf {\bibinfo {volume} {46}},\ \bibinfo
  {pages} {1580--1609} (\bibinfo {year} {1992})}\BibitemShut {NoStop}%
\bibitem [{\citenamefont {McElwaine}(1996)}]{McElwainePRA1996}%
  \BibitemOpen
  \bibfield  {author} {\bibinfo {author} {\bibfnamefont {J.~N.}\ \bibnamefont
  {McElwaine}},\ }\bibfield  {title} {\enquote {\bibinfo {title} {Approximate
  and exact consistency of histories},}\ }\href {\doibase
  10.1103/PhysRevA.53.2021} {\bibfield  {journal} {\bibinfo  {journal} {Phys.
  Rev. A}\ }\textbf {\bibinfo {volume} {53}},\ \bibinfo {pages} {2021--2032}
  (\bibinfo {year} {1996})}\BibitemShut {NoStop}%
\bibitem [{\citenamefont {Strasberg}\ \emph {et~al.}(2024)\citenamefont
  {Strasberg}, \citenamefont {Reinhard},\ and\ \citenamefont
  {Schindler}}]{StrasbergReinhardSchindlerPRX2024}%
  \BibitemOpen
  \bibfield  {author} {\bibinfo {author} {\bibfnamefont {P.}~\bibnamefont
  {Strasberg}}, \bibinfo {author} {\bibfnamefont {T.~E.}\ \bibnamefont
  {Reinhard}}, \ and\ \bibinfo {author} {\bibfnamefont {J.}~\bibnamefont
  {Schindler}},\ }\bibfield  {title} {\enquote {\bibinfo {title} {{First
  Principles Numerical Demonstration of Emergent Decoherent Histories}},}\
  }\href {\doibase 10.1103/PhysRevX.14.041027} {\bibfield  {journal} {\bibinfo
  {journal} {Phys. Rev. X}\ }\textbf {\bibinfo {volume} {14}},\ \bibinfo
  {pages} {041027} (\bibinfo {year} {2024})}\BibitemShut {NoStop}%
\bibitem [{\citenamefont {Wang}\ and\ \citenamefont
  {Strasberg}(2025)}]{WangStrasbergPRL2025}%
  \BibitemOpen
  \bibfield  {author} {\bibinfo {author} {\bibfnamefont {J.}~\bibnamefont
  {Wang}}\ and\ \bibinfo {author} {\bibfnamefont {P.}~\bibnamefont
  {Strasberg}},\ }\bibfield  {title} {\enquote {\bibinfo {title} {{Decoherence
  of Histories: Chaotic Versus Integrable Systems}},}\ }\href {\doibase
  10.1103/m8vq-l449} {\bibfield  {journal} {\bibinfo  {journal} {Phys. Rev.
  Lett.}\ }\textbf {\bibinfo {volume} {134}},\ \bibinfo {pages} {220401}
  (\bibinfo {year} {2025})}\BibitemShut {NoStop}%
\bibitem [{\citenamefont {Strasberg}\ and\ \citenamefont
  {Schindler}(2023)}]{StrasbergSchindlerArXiv2023}%
  \BibitemOpen
  \bibfield  {author} {\bibinfo {author} {\bibfnamefont {P.}~\bibnamefont
  {Strasberg}}\ and\ \bibinfo {author} {\bibfnamefont {J.}~\bibnamefont
  {Schindler}},\ }\bibfield  {title} {\enquote {\bibinfo {title} {{Shearing Off
  the Tree: Emerging Branch Structure and Born's Rule in an Equilibrated
  Multiverse}},}\ }\href {https://arxiv.org/abs/2310.06755} {\bibfield
  {journal} {\bibinfo  {journal} {arXiv: 2310.06755}\ } (\bibinfo {year}
  {2023})}\BibitemShut {NoStop}%
\bibitem [{\citenamefont {J\"onsson}(1961)}]{JoenssonZfP1961}%
  \BibitemOpen
  \bibfield  {author} {\bibinfo {author} {\bibfnamefont {C.}~\bibnamefont
  {J\"onsson}},\ }\bibfield  {title} {\enquote {\bibinfo {title}
  {{Elektroneninterferenzen an mehreren k\"unstlich hergestellten
  Feinspalten}},}\ }\href {\doibase 10.1007/bf01342460} {\bibfield  {journal}
  {\bibinfo  {journal} {Zeitschr. f. Phys.}\ }\textbf {\bibinfo {volume}
  {161}},\ \bibinfo {pages} {454} (\bibinfo {year} {1961})}\BibitemShut
  {NoStop}%
\bibitem [{\citenamefont {J\"onsson}(1974)}]{JoenssonAJP1974}%
  \BibitemOpen
  \bibfield  {author} {\bibinfo {author} {\bibfnamefont {C.}~\bibnamefont
  {J\"onsson}},\ }\bibfield  {title} {\enquote {\bibinfo {title} {{Electron
  Diffraction at Multiple Slits}},}\ }\href {\doibase 10.1119/1.1987592}
  {\bibfield  {journal} {\bibinfo  {journal} {Am. J. Phys.}\ }\textbf {\bibinfo
  {volume} {42}},\ \bibinfo {pages} {4} (\bibinfo {year} {1974})}\BibitemShut
  {NoStop}%
\bibitem [{\citenamefont {Arndt}\ \emph {et~al.}(1999)\citenamefont {Arndt},
  \citenamefont {Nairz}, \citenamefont {Vos-Andreae}, \citenamefont {Keller},
  \citenamefont {van~der Zouw},\ and\ \citenamefont
  {Zeilinger}}]{ArndtEtAlNat1999}%
  \BibitemOpen
  \bibfield  {author} {\bibinfo {author} {\bibfnamefont {M.}~\bibnamefont
  {Arndt}}, \bibinfo {author} {\bibfnamefont {O.}~\bibnamefont {Nairz}},
  \bibinfo {author} {\bibfnamefont {J.}~\bibnamefont {Vos-Andreae}}, \bibinfo
  {author} {\bibfnamefont {C.}~\bibnamefont {Keller}}, \bibinfo {author}
  {\bibfnamefont {G.}~\bibnamefont {van~der Zouw}}, \ and\ \bibinfo {author}
  {\bibfnamefont {A.}~\bibnamefont {Zeilinger}},\ }\bibfield  {title} {\enquote
  {\bibinfo {title} {Wave–particle duality of {C}$_{60}$ molecules},}\ }\href
  {\doibase 10.1038/44348} {\bibfield  {journal} {\bibinfo  {journal} {Nature}\
  }\textbf {\bibinfo {volume} {401}},\ \bibinfo {pages} {680--682} (\bibinfo
  {year} {1999})}\BibitemShut {NoStop}%
\bibitem [{\citenamefont {Gerlich}\ \emph {et~al.}(2011)\citenamefont
  {Gerlich}, \citenamefont {Eibenberger}, \citenamefont {Tomandl},
  \citenamefont {Nimmrichter}, \citenamefont {Hornberger}, \citenamefont
  {Fagan}, \citenamefont {T\"uxen}, \citenamefont {Mayor},\ and\ \citenamefont
  {Arndt}}]{GerlichEtAlNC2011}%
  \BibitemOpen
  \bibfield  {author} {\bibinfo {author} {\bibfnamefont {S.}~\bibnamefont
  {Gerlich}}, \bibinfo {author} {\bibfnamefont {S.}~\bibnamefont
  {Eibenberger}}, \bibinfo {author} {\bibfnamefont {M.}~\bibnamefont
  {Tomandl}}, \bibinfo {author} {\bibfnamefont {S.}~\bibnamefont
  {Nimmrichter}}, \bibinfo {author} {\bibfnamefont {K.}~\bibnamefont
  {Hornberger}}, \bibinfo {author} {\bibfnamefont {P.~J.}\ \bibnamefont
  {Fagan}}, \bibinfo {author} {\bibfnamefont {J.}~\bibnamefont {T\"uxen}},
  \bibinfo {author} {\bibfnamefont {M.}~\bibnamefont {Mayor}}, \ and\ \bibinfo
  {author} {\bibfnamefont {M.}~\bibnamefont {Arndt}},\ }\bibfield  {title}
  {\enquote {\bibinfo {title} {Quantum interference of large organic
  molecules},}\ }\href {\doibase 10.1038/ncomms1263} {\bibfield  {journal}
  {\bibinfo  {journal} {Nat. Comm.}\ }\textbf {\bibinfo {volume} {2}},\
  \bibinfo {pages} {263} (\bibinfo {year} {2011})}\BibitemShut {NoStop}%
\bibitem [{\citenamefont {Fein}\ \emph {et~al.}(2019)\citenamefont {Fein},
  \citenamefont {Geyer}, \citenamefont {Zwick}, \citenamefont {Kia{\l}ka},
  \citenamefont {Pedalino}, \citenamefont {Mayor}, \citenamefont {Gerlich},\
  and\ \citenamefont {Arndt}}]{FeinEtAlNP2019}%
  \BibitemOpen
  \bibfield  {author} {\bibinfo {author} {\bibfnamefont {Y.~Y.}\ \bibnamefont
  {Fein}}, \bibinfo {author} {\bibfnamefont {P.}~\bibnamefont {Geyer}},
  \bibinfo {author} {\bibfnamefont {P.}~\bibnamefont {Zwick}}, \bibinfo
  {author} {\bibfnamefont {F.}~\bibnamefont {Kia{\l}ka}}, \bibinfo {author}
  {\bibfnamefont {S.}~\bibnamefont {Pedalino}}, \bibinfo {author}
  {\bibfnamefont {M.}~\bibnamefont {Mayor}}, \bibinfo {author} {\bibfnamefont
  {S.}~\bibnamefont {Gerlich}}, \ and\ \bibinfo {author} {\bibfnamefont
  {M.}~\bibnamefont {Arndt}},\ }\bibfield  {title} {\enquote {\bibinfo {title}
  {{Quantum superposition of molecules beyond 25 kDa}},}\ }\href {\doibase
  10.1038/s41567-019-0663-9} {\bibfield  {journal} {\bibinfo  {journal} {Nat.
  Phys.}\ }\textbf {\bibinfo {volume} {15}},\ \bibinfo {pages} {1242} (\bibinfo
  {year} {2019})}\BibitemShut {NoStop}%
\bibitem [{\citenamefont {Nielsen}\ and\ \citenamefont
  {Chuang}(2000)}]{NielsenChuangBook2000}%
  \BibitemOpen
  \bibfield  {author} {\bibinfo {author} {\bibfnamefont {M.~A.}\ \bibnamefont
  {Nielsen}}\ and\ \bibinfo {author} {\bibfnamefont {I.~L.}\ \bibnamefont
  {Chuang}},\ }\href@noop {} {\emph {\bibinfo {title} {Quantum Computation and
  Quantum Information}}}\ (\bibinfo  {publisher} {Cambridge University Press},\
  \bibinfo {address} {Cambridge},\ \bibinfo {year} {2000})\BibitemShut
  {NoStop}%
\bibitem [{\citenamefont {Preskill}(2018)}]{PreskillQuantum2018}%
  \BibitemOpen
  \bibfield  {author} {\bibinfo {author} {\bibfnamefont {J.}~\bibnamefont
  {Preskill}},\ }\bibfield  {title} {\enquote {\bibinfo {title} {Quantum
  {C}omputing in the {NISQ} era and beyond},}\ }\href {\doibase
  10.22331/q-2018-08-06-79} {\bibfield  {journal} {\bibinfo  {journal}
  {{Quantum}}\ }\textbf {\bibinfo {volume} {2}},\ \bibinfo {pages} {79}
  (\bibinfo {year} {2018})}\BibitemShut {NoStop}%
\bibitem [{\citenamefont {Milz}\ and\ \citenamefont
  {Modi}(2021)}]{MilzModiPRXQ2021}%
  \BibitemOpen
  \bibfield  {author} {\bibinfo {author} {\bibfnamefont {S.}~\bibnamefont
  {Milz}}\ and\ \bibinfo {author} {\bibfnamefont {K.}~\bibnamefont {Modi}},\
  }\bibfield  {title} {\enquote {\bibinfo {title} {{Quantum Stochastic
  Processes and Quantum non-Markovian Phenomena}},}\ }\href {\doibase
  10.1103/PRXQuantum.2.030201} {\bibfield  {journal} {\bibinfo  {journal} {PRX
  Quantum}\ }\textbf {\bibinfo {volume} {2}},\ \bibinfo {pages} {030201}
  (\bibinfo {year} {2021})}\BibitemShut {NoStop}%
\bibitem [{\citenamefont {Taranto}\ \emph {et~al.}(2025)\citenamefont
  {Taranto}, \citenamefont {Milz}, \citenamefont {Murao}, \citenamefont
  {Quintino},\ and\ \citenamefont {Modi}}]{TarantoEtAlArXiv2025}%
  \BibitemOpen
  \bibfield  {author} {\bibinfo {author} {\bibfnamefont {P.}~\bibnamefont
  {Taranto}}, \bibinfo {author} {\bibfnamefont {S.}~\bibnamefont {Milz}},
  \bibinfo {author} {\bibfnamefont {M.}~\bibnamefont {Murao}}, \bibinfo
  {author} {\bibfnamefont {M.~T.}\ \bibnamefont {Quintino}}, \ and\ \bibinfo
  {author} {\bibfnamefont {K.}~\bibnamefont {Modi}},\ }\bibfield  {title}
  {\enquote {\bibinfo {title} {{Higher-Order Quantum Operations}},}\ }\href
  {https://arxiv.org/abs/2503.09693} {\bibfield  {journal} {\bibinfo  {journal}
  {arXiv 2503.09693}\ } (\bibinfo {year} {2025})}\BibitemShut {NoStop}%
\bibitem [{\citenamefont {Barnett}\ and\ \citenamefont
  {Croke}(2009)}]{BarnettCrokeAOP2009}%
  \BibitemOpen
  \bibfield  {author} {\bibinfo {author} {\bibfnamefont {S.~M.}\ \bibnamefont
  {Barnett}}\ and\ \bibinfo {author} {\bibfnamefont {S.}~\bibnamefont
  {Croke}},\ }\bibfield  {title} {\enquote {\bibinfo {title} {Quantum state
  discrimination},}\ }\href {\doibase 10.1364/AOP.1.000238} {\bibfield
  {journal} {\bibinfo  {journal} {Adv. Opt. Photon.}\ }\textbf {\bibinfo
  {volume} {1}},\ \bibinfo {pages} {238--278} (\bibinfo {year}
  {2009})}\BibitemShut {NoStop}%
\bibitem [{\citenamefont {Bae}\ and\ \citenamefont
  {Kwek}(2015)}]{BaeKwekJPA2015}%
  \BibitemOpen
  \bibfield  {author} {\bibinfo {author} {\bibfnamefont {J.}~\bibnamefont
  {Bae}}\ and\ \bibinfo {author} {\bibfnamefont {L.-C.}\ \bibnamefont {Kwek}},\
  }\bibfield  {title} {\enquote {\bibinfo {title} {Quantum state discrimination
  and its applications},}\ }\href {\doibase 10.1088/1751-8113/48/8/083001}
  {\bibfield  {journal} {\bibinfo  {journal} {J. Phys. A: Math. Theor.}\
  }\textbf {\bibinfo {volume} {48}},\ \bibinfo {pages} {083001} (\bibinfo
  {year} {2015})}\BibitemShut {NoStop}%
\bibitem [{\citenamefont {Watrous}(2018)}]{WatrousBook2018}%
  \BibitemOpen
  \bibfield  {author} {\bibinfo {author} {\bibfnamefont {J.}~\bibnamefont
  {Watrous}},\ }\href@noop {} {\emph {\bibinfo {title} {The Theory of Quantum
  Information}}}\ (\bibinfo  {publisher} {Cambridge University Press},\
  \bibinfo {year} {2018})\BibitemShut {NoStop}%
\bibitem [{\citenamefont {Cie\'sli\'nski}\ \emph {et~al.}(2024)\citenamefont
  {Cie\'sli\'nski}, \citenamefont {Imai}, \citenamefont {Dziewior},
  \citenamefont {G\"uhne}, \citenamefont {Knips}, \citenamefont {Laskowski},
  \citenamefont {Meinecke}, \citenamefont {Paterek},\ and\ \citenamefont
  {V\'ertesi}}]{CieslinskiEtAlPR2024}%
  \BibitemOpen
  \bibfield  {author} {\bibinfo {author} {\bibfnamefont {P.}~\bibnamefont
  {Cie\'sli\'nski}}, \bibinfo {author} {\bibfnamefont {S.}~\bibnamefont
  {Imai}}, \bibinfo {author} {\bibfnamefont {J.}~\bibnamefont {Dziewior}},
  \bibinfo {author} {\bibfnamefont {O.}~\bibnamefont {G\"uhne}}, \bibinfo
  {author} {\bibfnamefont {L.}~\bibnamefont {Knips}}, \bibinfo {author}
  {\bibfnamefont {W.}~\bibnamefont {Laskowski}}, \bibinfo {author}
  {\bibfnamefont {J.}~\bibnamefont {Meinecke}}, \bibinfo {author}
  {\bibfnamefont {T.}~\bibnamefont {Paterek}}, \ and\ \bibinfo {author}
  {\bibfnamefont {T.}~\bibnamefont {V\'ertesi}},\ }\bibfield  {title} {\enquote
  {\bibinfo {title} {Analysing quantum systems with randomised measurements},}\
  }\href {\doibase https://doi.org/10.1016/j.physrep.2024.09.009} {\bibfield
  {journal} {\bibinfo  {journal} {Physics Reports}\ }\textbf {\bibinfo {volume}
  {1095}},\ \bibinfo {pages} {1--48} (\bibinfo {year} {2024})}\BibitemShut
  {NoStop}%
\bibitem [{\citenamefont {Everett}(1957)}]{EverettRMP1957}%
  \BibitemOpen
  \bibfield  {author} {\bibinfo {author} {\bibfnamefont {H.}~\bibnamefont
  {Everett}},\ }\bibfield  {title} {\enquote {\bibinfo {title} {{"Relative
  State" Formulation of Quantum Mechanics}},}\ }\href {\doibase
  10.1103/RevModPhys.29.454} {\bibfield  {journal} {\bibinfo  {journal} {Rev.
  Mod. Phys.}\ }\textbf {\bibinfo {volume} {29}},\ \bibinfo {pages} {454--462}
  (\bibinfo {year} {1957})}\BibitemShut {NoStop}%
\bibitem [{\citenamefont {Witt}(1970)}]{DeWittPT1970}%
  \BibitemOpen
  \bibfield  {author} {\bibinfo {author} {\bibfnamefont {B.~S.~De}\
  \bibnamefont {Witt}},\ }\bibfield  {title} {\enquote {\bibinfo {title}
  {{Quantum Mechanics and Reality}},}\ }\href {\doibase 10.1063/1.3022331}
  {\bibfield  {journal} {\bibinfo  {journal} {Phys. Today}\ }\textbf {\bibinfo
  {volume} {23}},\ \bibinfo {pages} {30} (\bibinfo {year} {1970})}\BibitemShut
  {NoStop}%
\bibitem [{\citenamefont {Saunders}\ \emph {et~al.}(2010)\citenamefont
  {Saunders}, \citenamefont {Barrett}, \citenamefont {Kent},\ and\
  \citenamefont {Wallace}}]{SaundersEtAlBook2010}%
  \BibitemOpen
  \bibinfo {editor} {\bibfnamefont {S.}~\bibnamefont {Saunders}}, \bibinfo
  {editor} {\bibfnamefont {J.}~\bibnamefont {Barrett}}, \bibinfo {editor}
  {\bibfnamefont {A.}~\bibnamefont {Kent}}, \ and\ \bibinfo {editor}
  {\bibfnamefont {D.}~\bibnamefont {Wallace}},\ eds.,\ \href@noop {} {\emph
  {\bibinfo {title} {{Many Worlds? Everett, Quantum Theory and Reality}}}}\
  (\bibinfo  {publisher} {Oxford University Press},\ \bibinfo {address}
  {Oxford},\ \bibinfo {year} {2010})\BibitemShut {NoStop}%
\bibitem [{\citenamefont {Wallace}(2012)}]{WallaceBook2012}%
  \BibitemOpen
  \bibfield  {author} {\bibinfo {author} {\bibfnamefont {D.}~\bibnamefont
  {Wallace}},\ }\href@noop {} {\emph {\bibinfo {title} {{The Emergent
  Multiverse: Quantum Theory According to the Everett Interpretation}}}}\
  (\bibinfo  {publisher} {Oxford University Press},\ \bibinfo {address}
  {Oxford},\ \bibinfo {year} {2012})\BibitemShut {NoStop}%
\bibitem [{\citenamefont {Vaidman}(2021)}]{Vaidman2021}%
  \BibitemOpen
  \bibfield  {author} {\bibinfo {author} {\bibfnamefont {L.}~\bibnamefont
  {Vaidman}},\ }\enquote {\bibinfo {title} {{Stanford Encyclopedia of
  Philosophy}},}\ \ (\bibinfo {year} {2021})\ Chap.\ \bibinfo {chapter}
  {{Many-Worlds Interpretation of Quantum Mechanics}},\ \bibinfo {edition}
  {fall 2021}\ ed.\BibitemShut {Stop}%
\bibitem [{\citenamefont {Rovelli}(1996)}]{RovelliIJTP1996}%
  \BibitemOpen
  \bibfield  {author} {\bibinfo {author} {\bibfnamefont {C.}~\bibnamefont
  {Rovelli}},\ }\bibfield  {title} {\enquote {\bibinfo {title} {Relational
  quantum mechanics},}\ }\href {\doibase 10.1007/BF02302261} {\bibfield
  {journal} {\bibinfo  {journal} {Int. J. Theor. Phys.}\ }\textbf {\bibinfo
  {volume} {35}},\ \bibinfo {pages} {1637} (\bibinfo {year}
  {1996})}\BibitemShut {NoStop}%
\bibitem [{\citenamefont {Griffiths}(1984)}]{GriffithsJSP1984}%
  \BibitemOpen
  \bibfield  {author} {\bibinfo {author} {\bibfnamefont {R.~B.}\ \bibnamefont
  {Griffiths}},\ }\bibfield  {title} {\enquote {\bibinfo {title} {Consistent
  histories and the interpretation of quantum mechanics},}\ }\href {\doibase
  10.1007/BF01015734} {\bibfield  {journal} {\bibinfo  {journal} {J. Stat.
  Phys.}\ }\textbf {\bibinfo {volume} {36}},\ \bibinfo {pages} {219--272}
  (\bibinfo {year} {1984})}\BibitemShut {NoStop}%
\bibitem [{\citenamefont {Griffiths}(2002)}]{GriffithsBook2002}%
  \BibitemOpen
  \bibfield  {author} {\bibinfo {author} {\bibfnamefont {R.~B.}\ \bibnamefont
  {Griffiths}},\ }\href {\doibase 10.1017/CBO9780511606052} {\emph {\bibinfo
  {title} {{Consistent Quantum Theory}}}}\ (\bibinfo  {publisher} {Cambridge
  University Press},\ \bibinfo {address} {Cambridge},\ \bibinfo {year}
  {2002})\BibitemShut {NoStop}%
\bibitem [{\citenamefont {Griffiths}(2019)}]{Griffiths2019}%
  \BibitemOpen
  \bibfield  {author} {\bibinfo {author} {\bibfnamefont {R.~B.}\ \bibnamefont
  {Griffiths}},\ }\enquote {\bibinfo {title} {{Stanford Encyclopedia of
  Philosophy}},}\ \ (\bibinfo {year} {2019})\ Chap.\ \bibinfo {chapter} {{The
  Consistent Histories Approach to Quantum Mechanics}},\ \bibinfo {edition}
  {summer 2019}\ ed.\BibitemShut {Stop}%
\bibitem [{\citenamefont {Bassi}\ \emph {et~al.}(2013)\citenamefont {Bassi},
  \citenamefont {Lochan}, \citenamefont {Satin}, \citenamefont {Singh},\ and\
  \citenamefont {Ulbricht}}]{BassiEtAlRMP2013}%
  \BibitemOpen
  \bibfield  {author} {\bibinfo {author} {\bibfnamefont {A.}~\bibnamefont
  {Bassi}}, \bibinfo {author} {\bibfnamefont {K.}~\bibnamefont {Lochan}},
  \bibinfo {author} {\bibfnamefont {S.}~\bibnamefont {Satin}}, \bibinfo
  {author} {\bibfnamefont {T.~P.}\ \bibnamefont {Singh}}, \ and\ \bibinfo
  {author} {\bibfnamefont {H.}~\bibnamefont {Ulbricht}},\ }\bibfield  {title}
  {\enquote {\bibinfo {title} {Models of wave-function collapse, underlying
  theories and experimental tests},}\ }\href {\doibase
  10.1103/RevModPhys.85.471} {\bibfield  {journal} {\bibinfo  {journal} {Rev.
  Mod. Phys.}\ }\textbf {\bibinfo {volume} {85}},\ \bibinfo {pages} {471--527}
  (\bibinfo {year} {2013})}\BibitemShut {NoStop}%
\bibitem [{\citenamefont {Ghirardi}\ and\ \citenamefont
  {Bassi}(2024)}]{GhirardiBassi2024}%
  \BibitemOpen
  \bibfield  {author} {\bibinfo {author} {\bibfnamefont {G.}~\bibnamefont
  {Ghirardi}}\ and\ \bibinfo {author} {\bibfnamefont {A.}~\bibnamefont
  {Bassi}},\ }\enquote {\bibinfo {title} {{The Stanford Encyclopedia of
  Philosophy}},}\ \ (\bibinfo {year} {2024})\ Chap.\ \bibinfo {chapter}
  {{Collapse Theories}},\ \bibinfo {edition} {fall 2024}\ ed.\BibitemShut
  {Stop}%
\bibitem [{\citenamefont {Oppenheim}(2023)}]{OppenheimIJMP2023}%
  \BibitemOpen
  \bibfield  {author} {\bibinfo {author} {\bibfnamefont {J.}~\bibnamefont
  {Oppenheim}},\ }\bibfield  {title} {\enquote {\bibinfo {title} {Is it time to
  rethink quantum gravity?}}\ }\href {\doibase 10.1142/S0218271823420245}
  {\bibfield  {journal} {\bibinfo  {journal} {Int. J. Mod. Phys. D}\ }\textbf
  {\bibinfo {volume} {32}},\ \bibinfo {pages} {2342024} (\bibinfo {year}
  {2023})}\BibitemShut {NoStop}%
\bibitem [{\citenamefont {Carr}(2007)}]{Carr2007}%
  \BibitemOpen
  \bibinfo {editor} {\bibfnamefont {B.}~\bibnamefont {Carr}},\ ed.,\ \href@noop
  {} {\emph {\bibinfo {title} {{Universe or Multiverse?}}}}\ (\bibinfo
  {publisher} {Cambridge University Press},\ \bibinfo {address} {Cambridge},\
  \bibinfo {year} {2007})\BibitemShut {NoStop}%
\bibitem [{\citenamefont {Mukhanov}(2005)}]{MukhanovBook2005}%
  \BibitemOpen
  \bibfield  {author} {\bibinfo {author} {\bibfnamefont {V.}~\bibnamefont
  {Mukhanov}},\ }\href@noop {} {\emph {\bibinfo {title} {{Physical Foundations
  of Cosmology}}}}\ (\bibinfo  {publisher} {Cambridge University Press},\
  \bibinfo {address} {Cambridge},\ \bibinfo {year} {2005})\BibitemShut
  {NoStop}%
\bibitem [{\citenamefont {Zurek}(2003)}]{ZurekRMP2003}%
  \BibitemOpen
  \bibfield  {author} {\bibinfo {author} {\bibfnamefont {W.~H.}\ \bibnamefont
  {Zurek}},\ }\bibfield  {title} {\enquote {\bibinfo {title} {Decoherence,
  einselection and the quantum origins of the classical},}\ }\href {\doibase
  10.1103/RevModPhys.75.715} {\bibfield  {journal} {\bibinfo  {journal} {Rev.
  Mod. Phys.}\ }\textbf {\bibinfo {volume} {75}},\ \bibinfo {pages} {715--775}
  (\bibinfo {year} {2003})}\BibitemShut {NoStop}%
\bibitem [{\citenamefont {Joos}\ \emph {et~al.}(2003)\citenamefont {Joos},
  \citenamefont {Zeh}, \citenamefont {Kiefer}, \citenamefont {Giulini},
  \citenamefont {Kupsch},\ and\ \citenamefont {Stamatescu}}]{JoosEtAlBook2003}%
  \BibitemOpen
  \bibfield  {author} {\bibinfo {author} {\bibfnamefont {E.}~\bibnamefont
  {Joos}}, \bibinfo {author} {\bibfnamefont {H.~D.}\ \bibnamefont {Zeh}},
  \bibinfo {author} {\bibfnamefont {C.}~\bibnamefont {Kiefer}}, \bibinfo
  {author} {\bibfnamefont {D.}~\bibnamefont {Giulini}}, \bibinfo {author}
  {\bibfnamefont {J.}~\bibnamefont {Kupsch}}, \ and\ \bibinfo {author}
  {\bibfnamefont {I.-O.}\ \bibnamefont {Stamatescu}},\ }\href@noop {} {\emph
  {\bibinfo {title} {Decoherence and the Appearance of a Classical World in
  Quantum Theory}}}\ (\bibinfo  {publisher} {Springer},\ \bibinfo {address}
  {Berlin Heidelberg},\ \bibinfo {year} {2003})\BibitemShut {NoStop}%
\bibitem [{\citenamefont {Schlosshauer}(2019)}]{SchlosshauerPR2019}%
  \BibitemOpen
  \bibfield  {author} {\bibinfo {author} {\bibfnamefont {M.}~\bibnamefont
  {Schlosshauer}},\ }\bibfield  {title} {\enquote {\bibinfo {title} {Quantum
  decoherence},}\ }\href {\doibase
  https://doi.org/10.1016/j.physrep.2019.10.001} {\bibfield  {journal}
  {\bibinfo  {journal} {Phys. Rep.}\ }\textbf {\bibinfo {volume} {831}},\
  \bibinfo {pages} {1--57} (\bibinfo {year} {2019})}\BibitemShut {NoStop}%
\bibitem [{EEA(2022)}]{EEAAO2022}%
  \BibitemOpen
  \href {https://a24films.com/films/everything-everywhere-all-at-once}
  {\enquote {\bibinfo {title} {{Everything Everywhere All at Once}},}\
  }\bibinfo {howpublished} {{Movie directed by D. Kwan and D. Scheinert}}
  (\bibinfo {year} {2022})\BibitemShut {NoStop}%
\bibitem [{\citenamefont {Tegmark}(2003)}]{TegmarkSA2003}%
  \BibitemOpen
  \bibfield  {author} {\bibinfo {author} {\bibfnamefont {M.}~\bibnamefont
  {Tegmark}},\ }\bibfield  {title} {\enquote {\bibinfo {title} {Parallel
  universes},}\ }\href {\doibase 10.1038/scientificamerican0503-40} {\bibfield
  {journal} {\bibinfo  {journal} {Sci. Am.}\ }\textbf {\bibinfo {volume}
  {288}},\ \bibinfo {pages} {40--51} (\bibinfo {year} {2003})}\BibitemShut
  {NoStop}%
\bibitem [{\citenamefont {Talagrand}(1996)}]{TalagrandAP1996}%
  \BibitemOpen
  \bibfield  {author} {\bibinfo {author} {\bibfnamefont {M.}~\bibnamefont
  {Talagrand}},\ }\bibfield  {title} {\enquote {\bibinfo {title} {A new look at
  independence},}\ }\href {\doibase 10.1214/aop/1042644705} {\bibfield
  {journal} {\bibinfo  {journal} {Ann. Prob.}\ }\textbf {\bibinfo {volume}
  {24}},\ \bibinfo {pages} {1} (\bibinfo {year} {1996})}\BibitemShut {NoStop}%
\bibitem [{\citenamefont {Hayden}\ \emph {et~al.}(2008)\citenamefont {Hayden},
  \citenamefont {Shor},\ and\ \citenamefont
  {Winter}}]{HaydenShorWinterOSID2008}%
  \BibitemOpen
  \bibfield  {author} {\bibinfo {author} {\bibfnamefont {P.}~\bibnamefont
  {Hayden}}, \bibinfo {author} {\bibfnamefont {P.~W.}\ \bibnamefont {Shor}}, \
  and\ \bibinfo {author} {\bibfnamefont {A.}~\bibnamefont {Winter}},\
  }\bibfield  {title} {\enquote {\bibinfo {title} {{Random Quantum Codes from
  Gaussian Ensembles and an Uncertainty Relation}},}\ }\href {\doibase
  10.1142/S1230161208000079} {\bibfield  {journal} {\bibinfo  {journal} {Open
  Sys. Inf. Dyn.}\ }\textbf {\bibinfo {volume} {15}},\ \bibinfo {pages}
  {71--89} (\bibinfo {year} {2008})}\BibitemShut {NoStop}%
\bibitem [{\citenamefont {Kus}\ \emph {et~al.}(1988)\citenamefont {Kus},
  \citenamefont {Mostowski},\ and\ \citenamefont
  {Haake}}]{KusMostowskiHaakeJPA1988}%
  \BibitemOpen
  \bibfield  {author} {\bibinfo {author} {\bibfnamefont {M.}~\bibnamefont
  {Kus}}, \bibinfo {author} {\bibfnamefont {J.}~\bibnamefont {Mostowski}}, \
  and\ \bibinfo {author} {\bibfnamefont {F.}~\bibnamefont {Haake}},\ }\bibfield
   {title} {\enquote {\bibinfo {title} {Universality of eigenvector statistics
  of kicked tops of different symmetries},}\ }\href {\doibase
  10.1088/0305-4470/21/22/006} {\bibfield  {journal} {\bibinfo  {journal} {J.
  Phys. A}\ }\textbf {\bibinfo {volume} {21}},\ \bibinfo {pages} {L1073}
  (\bibinfo {year} {1988})}\BibitemShut {NoStop}%
\bibitem [{\citenamefont {Goodman}(1963{\natexlab{a}})}]{GoodmanAMSa1963}%
  \BibitemOpen
  \bibfield  {author} {\bibinfo {author} {\bibfnamefont {N.~R.}\ \bibnamefont
  {Goodman}},\ }\bibfield  {title} {\enquote {\bibinfo {title} {{Statistical
  Analysis Based on a Certain Multivariate Complex Gaussian Distribution (An
  Introduction)}},}\ }\href {\doibase 10.1214/aoms/1177704250} {\bibfield
  {journal} {\bibinfo  {journal} {Ann. Math. Statist.}\ }\textbf {\bibinfo
  {volume} {34}},\ \bibinfo {pages} {152} (\bibinfo {year}
  {1963}{\natexlab{a}})}\BibitemShut {NoStop}%
\bibitem [{\citenamefont {Goodman}(1963{\natexlab{b}})}]{GoodmanAMSb1963}%
  \BibitemOpen
  \bibfield  {author} {\bibinfo {author} {\bibfnamefont {N.~R.}\ \bibnamefont
  {Goodman}},\ }\bibfield  {title} {\enquote {\bibinfo {title} {{The
  Distribution of the Determinant of a Complex Wishart Distributed Matrix}},}\
  }\href {\doibase 10.1214/aoms/1177704251} {\bibfield  {journal} {\bibinfo
  {journal} {Ann. Math. Statist.}\ }\textbf {\bibinfo {volume} {34}},\ \bibinfo
  {pages} {178} (\bibinfo {year} {1963}{\natexlab{b}})}\BibitemShut {NoStop}%
\bibitem [{\citenamefont {Bai}\ and\ \citenamefont
  {Silverstein}(2010)}]{BaiSilversteinBook}%
  \BibitemOpen
  \bibfield  {author} {\bibinfo {author} {\bibfnamefont {Z.}~\bibnamefont
  {Bai}}\ and\ \bibinfo {author} {\bibfnamefont {J.~W.}\ \bibnamefont
  {Silverstein}},\ }\href@noop {} {\emph {\bibinfo {title} {{Spectral Analysis
  of Large Dimensional Random Matrices}}}},\ \bibinfo {edition} {2nd}\ ed.\
  (\bibinfo  {publisher} {Springer},\ \bibinfo {address} {New York},\ \bibinfo
  {year} {2010})\BibitemShut {NoStop}%
\bibitem [{\citenamefont {Marchenko}\ and\ \citenamefont
  {Pastur}(1967)}]{MarchenkoPastur1967}%
  \BibitemOpen
  \bibfield  {author} {\bibinfo {author} {\bibfnamefont {V.~A.}\ \bibnamefont
  {Marchenko}}\ and\ \bibinfo {author} {\bibfnamefont {L.~A.}\ \bibnamefont
  {Pastur}},\ }\bibfield  {title} {\enquote {\bibinfo {title} {{Distribution of
  eigenvalues for some sets of random matrices}},}\ }\href {\doibase
  10.1070/SM1967v001n04ABEH001994} {\bibfield  {journal} {\bibinfo  {journal}
  {Math. USSR-Sb.}\ }\textbf {\bibinfo {volume} {1}},\ \bibinfo {pages} {457}
  (\bibinfo {year} {1967})}\BibitemShut {NoStop}%
\bibitem [{\citenamefont {Lytova}\ and\ \citenamefont
  {Pastur}(2009)}]{LytovaPasturAP2009}%
  \BibitemOpen
  \bibfield  {author} {\bibinfo {author} {\bibfnamefont {A.}~\bibnamefont
  {Lytova}}\ and\ \bibinfo {author} {\bibfnamefont {L.}~\bibnamefont
  {Pastur}},\ }\bibfield  {title} {\enquote {\bibinfo {title} {Central limit
  theorem for linear eigenvalue statistics of random matrices with independent
  entries},}\ }\href {\doibase 10.1214/09-AOP452} {\bibfield  {journal}
  {\bibinfo  {journal} {Ann. Probab.}\ }\textbf {\bibinfo {volume} {37}},\
  \bibinfo {pages} {1778} (\bibinfo {year} {2009})}\BibitemShut {NoStop}%
\bibitem [{\citenamefont {Pastur}\ and\ \citenamefont
  {Shcherbina}(2010)}]{PasturShcherbinaBook2010}%
  \BibitemOpen
  \bibfield  {author} {\bibinfo {author} {\bibfnamefont {L.}~\bibnamefont
  {Pastur}}\ and\ \bibinfo {author} {\bibfnamefont {M.}~\bibnamefont
  {Shcherbina}},\ }\href@noop {} {\emph {\bibinfo {title} {{Eigenvalue
  Distribution of Large Random Matrices}}}},\ Vol.\ \bibinfo {volume} {171}\
  (\bibinfo  {publisher} {American Mathematical Society},\ \bibinfo {address}
  {Providence, Rhode Island},\ \bibinfo {year} {2010})\BibitemShut {NoStop}%
\bibitem [{\citenamefont {Blencowe}(1991)}]{BlencoweAP1991}%
  \BibitemOpen
  \bibfield  {author} {\bibinfo {author} {\bibfnamefont {M.}~\bibnamefont
  {Blencowe}},\ }\bibfield  {title} {\enquote {\bibinfo {title} {The consistent
  histories interpretation of quantum fields in curved spacetime},}\ }\href
  {\doibase https://doi.org/10.1016/0003-4916(91)90193-C} {\bibfield  {journal}
  {\bibinfo  {journal} {Ann. Phys.}\ }\textbf {\bibinfo {volume} {211}},\
  \bibinfo {pages} {87--111} (\bibinfo {year} {1991})}\BibitemShut {NoStop}%
\bibitem [{\citenamefont {Halliwell}\ and\ \citenamefont
  {Thorwart}(2002)}]{HalliwellThorwartPRD2002}%
  \BibitemOpen
  \bibfield  {author} {\bibinfo {author} {\bibfnamefont {J.~J.}\ \bibnamefont
  {Halliwell}}\ and\ \bibinfo {author} {\bibfnamefont {J.}~\bibnamefont
  {Thorwart}},\ }\bibfield  {title} {\enquote {\bibinfo {title} {{Life in an
  energy eigenstate: Decoherent histories analysis of a model timeless
  universe}},}\ }\href {\doibase 10.1103/PhysRevD.65.104009} {\bibfield
  {journal} {\bibinfo  {journal} {Phys. Rev. D}\ }\textbf {\bibinfo {volume}
  {65}},\ \bibinfo {pages} {104009} (\bibinfo {year} {2002})}\BibitemShut
  {NoStop}%
\bibitem [{\citenamefont {Christodoulakis}\ and\ \citenamefont
  {Wallden}(2011)}]{ChristodoulakisWalldenJPCS2011}%
  \BibitemOpen
  \bibfield  {author} {\bibinfo {author} {\bibfnamefont {T.}~\bibnamefont
  {Christodoulakis}}\ and\ \bibinfo {author} {\bibfnamefont {P.}~\bibnamefont
  {Wallden}},\ }\bibfield  {title} {\enquote {\bibinfo {title} {{The Problem of
  Time in Quantum Cosmology: A Decoherent Histories View}},}\ }\href {\doibase
  10.1088/1742-6596/283/1/012041} {\bibfield  {journal} {\bibinfo  {journal}
  {J. Phys. Conf. Series}\ }\textbf {\bibinfo {volume} {283}},\ \bibinfo
  {pages} {012041} (\bibinfo {year} {2011})}\BibitemShut {NoStop}%
\bibitem [{\citenamefont {Swingle}(2018)}]{SwingleNP2018}%
  \BibitemOpen
  \bibfield  {author} {\bibinfo {author} {\bibfnamefont {B.}~\bibnamefont
  {Swingle}},\ }\bibfield  {title} {\enquote {\bibinfo {title} {Unscrambling
  the physics of out-of-time-order correlators},}\ }\href {\doibase
  10.1038/s41567-018-0295-5} {\bibfield  {journal} {\bibinfo  {journal} {Nat.
  Phys.}\ }\textbf {\bibinfo {volume} {14}},\ \bibinfo {pages} {988} (\bibinfo
  {year} {2018})}\BibitemShut {NoStop}%
\bibitem [{\citenamefont {Gherardini}\ and\ \citenamefont
  {De~Chiara}(2024)}]{GherardiniDeChiaraPRXQ2024}%
  \BibitemOpen
  \bibfield  {author} {\bibinfo {author} {\bibfnamefont {S.}~\bibnamefont
  {Gherardini}}\ and\ \bibinfo {author} {\bibfnamefont {G.}~\bibnamefont
  {De~Chiara}},\ }\bibfield  {title} {\enquote {\bibinfo {title}
  {Quasiprobabilities in quantum thermodynamics and many-body systems},}\
  }\href {\doibase 10.1103/PRXQuantum.5.030201} {\bibfield  {journal} {\bibinfo
   {journal} {PRX Quantum}\ }\textbf {\bibinfo {volume} {5}},\ \bibinfo {pages}
  {030201} (\bibinfo {year} {2024})}\BibitemShut {NoStop}%
\bibitem [{\citenamefont {Arvidsson-Shukur}\ \emph {et~al.}(2024)\citenamefont
  {Arvidsson-Shukur}, \citenamefont {Braasch~Jr}, \citenamefont {De~Bi\`evre},
  \citenamefont {Dressel}, \citenamefont {Jordan}, \citenamefont {Langrenez},
  \citenamefont {Lostaglio}, \citenamefont {Lundeen},\ and\ \citenamefont
  {Halpern}}]{ArvidssonEtAlNJP2024}%
  \BibitemOpen
  \bibfield  {author} {\bibinfo {author} {\bibfnamefont {D.~R.~M.}\
  \bibnamefont {Arvidsson-Shukur}}, \bibinfo {author} {\bibfnamefont {W.~F.}\
  \bibnamefont {Braasch~Jr}}, \bibinfo {author} {\bibfnamefont
  {S.}~\bibnamefont {De~Bi\`evre}}, \bibinfo {author} {\bibfnamefont
  {J.}~\bibnamefont {Dressel}}, \bibinfo {author} {\bibfnamefont {A.~N.}\
  \bibnamefont {Jordan}}, \bibinfo {author} {\bibfnamefont {C.}~\bibnamefont
  {Langrenez}}, \bibinfo {author} {\bibfnamefont {M.}~\bibnamefont
  {Lostaglio}}, \bibinfo {author} {\bibfnamefont {J.~S.}\ \bibnamefont
  {Lundeen}}, \ and\ \bibinfo {author} {\bibfnamefont {N.~Y.}\ \bibnamefont
  {Halpern}},\ }\bibfield  {title} {\enquote {\bibinfo {title} {{Properties and
  applications of the Kirkwood–Dirac distribution}},}\ }\href {\doibase
  10.1088/1367-2630/ada05d} {\bibfield  {journal} {\bibinfo  {journal} {New
  Journal of Physics}\ }\textbf {\bibinfo {volume} {26}},\ \bibinfo {pages}
  {121201} (\bibinfo {year} {2024})}\BibitemShut {NoStop}%
\bibitem [{\citenamefont {Isham}(1994)}]{IshamJMP1994}%
  \BibitemOpen
  \bibfield  {author} {\bibinfo {author} {\bibfnamefont {C.~J.}\ \bibnamefont
  {Isham}},\ }\bibfield  {title} {\enquote {\bibinfo {title} {{Quantum logic
  and the histories approach to quantum theory}},}\ }\href {\doibase
  10.1063/1.530544} {\bibfield  {journal} {\bibinfo  {journal} {J. Math.
  Phys.}\ }\textbf {\bibinfo {volume} {35}},\ \bibinfo {pages} {2157--2185}
  (\bibinfo {year} {1994})}\BibitemShut {NoStop}%
\bibitem [{\citenamefont {Isham}\ \emph {et~al.}(1994)\citenamefont {Isham},
  \citenamefont {Linden},\ and\ \citenamefont
  {Schreckenberg}}]{IshamLindenSchreckenbergJMP2004}%
  \BibitemOpen
  \bibfield  {author} {\bibinfo {author} {\bibfnamefont {C.~J.}\ \bibnamefont
  {Isham}}, \bibinfo {author} {\bibfnamefont {N.}~\bibnamefont {Linden}}, \
  and\ \bibinfo {author} {\bibfnamefont {S.}~\bibnamefont {Schreckenberg}},\
  }\bibfield  {title} {\enquote {\bibinfo {title} {{The classification of
  decoherence functionals: An analog of Gleason’s theorem}},}\ }\href
  {\doibase 10.1063/1.530679} {\bibfield  {journal} {\bibinfo  {journal} {J.
  Math. Phys.}\ }\textbf {\bibinfo {volume} {35}},\ \bibinfo {pages}
  {6360--6370} (\bibinfo {year} {1994})}\BibitemShut {NoStop}%
\bibitem [{\citenamefont {Cotler}\ \emph {et~al.}(2018)\citenamefont {Cotler},
  \citenamefont {Jian}, \citenamefont {Qi},\ and\ \citenamefont
  {Wilczek}}]{CotlerEtAlJHEP2018}%
  \BibitemOpen
  \bibfield  {author} {\bibinfo {author} {\bibfnamefont {J.}~\bibnamefont
  {Cotler}}, \bibinfo {author} {\bibfnamefont {C.-M.}\ \bibnamefont {Jian}},
  \bibinfo {author} {\bibfnamefont {X.-L.}\ \bibnamefont {Qi}}, \ and\ \bibinfo
  {author} {\bibfnamefont {F.}~\bibnamefont {Wilczek}},\ }\bibfield  {title}
  {\enquote {\bibinfo {title} {Superdensity operators for spacetime quantum
  mechanics},}\ }\href {\doibase 10.1007/JHEP09(2018)093} {\bibfield  {journal}
  {\bibinfo  {journal} {J. High Energ. Phys.}\ }\textbf {\bibinfo {volume}
  {2018}},\ \bibinfo {pages} {93} (\bibinfo {year} {2018})}\BibitemShut
  {NoStop}%
\bibitem [{\citenamefont {Alicki}\ and\ \citenamefont
  {Fannes}(1994)}]{AlickiFannesLMP1994}%
  \BibitemOpen
  \bibfield  {author} {\bibinfo {author} {\bibfnamefont {R.}~\bibnamefont
  {Alicki}}\ and\ \bibinfo {author} {\bibfnamefont {M.}~\bibnamefont
  {Fannes}},\ }\bibfield  {title} {\enquote {\bibinfo {title} {Defining quantum
  dynamical entropy},}\ }\href {\doibase 10.1007/BF00761125} {\bibfield
  {journal} {\bibinfo  {journal} {Lett. Math. Phys.}\ }\textbf {\bibinfo
  {volume} {32}},\ \bibinfo {pages} {75} (\bibinfo {year} {1994})}\BibitemShut
  {NoStop}%
\bibitem [{\citenamefont {Dowling}\ and\ \citenamefont
  {Modi}(2024)}]{DowlingModiPRXQ2024}%
  \BibitemOpen
  \bibfield  {author} {\bibinfo {author} {\bibfnamefont {N.}~\bibnamefont
  {Dowling}}\ and\ \bibinfo {author} {\bibfnamefont {K.}~\bibnamefont {Modi}},\
  }\bibfield  {title} {\enquote {\bibinfo {title} {Operational metric for
  quantum chaos and the corresponding spatiotemporal-entanglement structure},}\
  }\href {\doibase 10.1103/PRXQuantum.5.010314} {\bibfield  {journal} {\bibinfo
   {journal} {PRX Quantum}\ }\textbf {\bibinfo {volume} {5}},\ \bibinfo {pages}
  {010314} (\bibinfo {year} {2024})}\BibitemShut {NoStop}%
\bibitem [{\citenamefont {O'Donovan}\ \emph {et~al.}(2025)\citenamefont
  {O'Donovan}, \citenamefont {Dowling}, \citenamefont {Modi},\ and\
  \citenamefont {Mitchison}}]{ODonovanEtAlArXiv2025}%
  \BibitemOpen
  \bibfield  {author} {\bibinfo {author} {\bibfnamefont {P.}~\bibnamefont
  {O'Donovan}}, \bibinfo {author} {\bibfnamefont {N.}~\bibnamefont {Dowling}},
  \bibinfo {author} {\bibfnamefont {K.}~\bibnamefont {Modi}}, \ and\ \bibinfo
  {author} {\bibfnamefont {M.~T.}\ \bibnamefont {Mitchison}},\ }\bibfield
  {title} {\enquote {\bibinfo {title} {Diagnosing chaos with projected
  ensembles of process tensors},}\ }\href {https://arxiv.org/abs/2502.13930}
  {\bibfield  {journal} {\bibinfo  {journal} {arXiv 2502.13930}\ } (\bibinfo
  {year} {2025})}\BibitemShut {NoStop}%
\bibitem [{\citenamefont {Bittner}\ and\ \citenamefont
  {Rossky}(1997)}]{BittnerRosskyJCP1997}%
  \BibitemOpen
  \bibfield  {author} {\bibinfo {author} {\bibfnamefont {E.~R.}\ \bibnamefont
  {Bittner}}\ and\ \bibinfo {author} {\bibfnamefont {P.~J.}\ \bibnamefont
  {Rossky}},\ }\bibfield  {title} {\enquote {\bibinfo {title} {Decoherent
  histories and nonadiabatic quantum molecular dynamics simulations},}\ }\href
  {\doibase 10.1063/1.475013} {\bibfield  {journal} {\bibinfo  {journal} {J.
  Chem. Phys.}\ }\textbf {\bibinfo {volume} {107}},\ \bibinfo {pages}
  {8611--8618} (\bibinfo {year} {1997})}\BibitemShut {NoStop}%
\bibitem [{\citenamefont {Smirne}\ \emph {et~al.}(2018)\citenamefont {Smirne},
  \citenamefont {Egloff}, \citenamefont {D\'iaz}, \citenamefont {Plenio},\ and\
  \citenamefont {Huelga}}]{SmirneEtAlQST2018}%
  \BibitemOpen
  \bibfield  {author} {\bibinfo {author} {\bibfnamefont {A.}~\bibnamefont
  {Smirne}}, \bibinfo {author} {\bibfnamefont {D.}~\bibnamefont {Egloff}},
  \bibinfo {author} {\bibfnamefont {M.~G.}\ \bibnamefont {D\'iaz}}, \bibinfo
  {author} {\bibfnamefont {M.~B.}\ \bibnamefont {Plenio}}, \ and\ \bibinfo
  {author} {\bibfnamefont {S.~F.}\ \bibnamefont {Huelga}},\ }\bibfield  {title}
  {\enquote {\bibinfo {title} {Coherence and non-classicality of quantum
  {M}arkov processes},}\ }\href {\doibase 10.1088/2058-9565/aaebd5} {\bibfield
  {journal} {\bibinfo  {journal} {Quantum Sci. Technol.}\ }\textbf {\bibinfo
  {volume} {4}},\ \bibinfo {pages} {01LT01} (\bibinfo {year}
  {2018})}\BibitemShut {NoStop}%
\bibitem [{\citenamefont {Strasberg}\ and\ \citenamefont
  {D\'{\i}az}(2019)}]{StrasbergDiazPRA2019}%
  \BibitemOpen
  \bibfield  {author} {\bibinfo {author} {\bibfnamefont {P.}~\bibnamefont
  {Strasberg}}\ and\ \bibinfo {author} {\bibfnamefont {M.~G.}\ \bibnamefont
  {D\'{\i}az}},\ }\bibfield  {title} {\enquote {\bibinfo {title} {Classical
  quantum stochastic processes},}\ }\href {\doibase
  10.1103/PhysRevA.100.022120} {\bibfield  {journal} {\bibinfo  {journal}
  {Phys. Rev. A}\ }\textbf {\bibinfo {volume} {100}},\ \bibinfo {pages}
  {022120} (\bibinfo {year} {2019})}\BibitemShut {NoStop}%
\bibitem [{\citenamefont {Milz}\ \emph
  {et~al.}(2020{\natexlab{a}})\citenamefont {Milz}, \citenamefont {Sakuldee},
  \citenamefont {Pollock},\ and\ \citenamefont {Modi}}]{MilzEtAlQuantum2020}%
  \BibitemOpen
  \bibfield  {author} {\bibinfo {author} {\bibfnamefont {S.}~\bibnamefont
  {Milz}}, \bibinfo {author} {\bibfnamefont {F.}~\bibnamefont {Sakuldee}},
  \bibinfo {author} {\bibfnamefont {F.~A.}\ \bibnamefont {Pollock}}, \ and\
  \bibinfo {author} {\bibfnamefont {K.}~\bibnamefont {Modi}},\ }\bibfield
  {title} {\enquote {\bibinfo {title} {Kolmogorov extension theorem for
  (quantum) causal modelling and general probabilistic theories},}\ }\href
  {\doibase 10.22331/q-2020-04-20-255} {\bibfield  {journal} {\bibinfo
  {journal} {{Quantum}}\ }\textbf {\bibinfo {volume} {4}},\ \bibinfo {pages}
  {255} (\bibinfo {year} {2020}{\natexlab{a}})}\BibitemShut {NoStop}%
\bibitem [{\citenamefont {Milz}\ \emph
  {et~al.}(2020{\natexlab{b}})\citenamefont {Milz}, \citenamefont {Egloff},
  \citenamefont {Taranto}, \citenamefont {Theurer}, \citenamefont {Plenio},
  \citenamefont {Smirne},\ and\ \citenamefont {Huelga}}]{MilzEtAlPRX2020}%
  \BibitemOpen
  \bibfield  {author} {\bibinfo {author} {\bibfnamefont {S.}~\bibnamefont
  {Milz}}, \bibinfo {author} {\bibfnamefont {D.}~\bibnamefont {Egloff}},
  \bibinfo {author} {\bibfnamefont {P.}~\bibnamefont {Taranto}}, \bibinfo
  {author} {\bibfnamefont {T.}~\bibnamefont {Theurer}}, \bibinfo {author}
  {\bibfnamefont {M.~B.}\ \bibnamefont {Plenio}}, \bibinfo {author}
  {\bibfnamefont {A.}~\bibnamefont {Smirne}}, \ and\ \bibinfo {author}
  {\bibfnamefont {S.~F.}\ \bibnamefont {Huelga}},\ }\bibfield  {title}
  {\enquote {\bibinfo {title} {{When Is a Non-Markovian Quantum Process
  Classical?}}}\ }\href {\doibase 10.1103/PhysRevX.10.041049} {\bibfield
  {journal} {\bibinfo  {journal} {Phys. Rev. X}\ }\textbf {\bibinfo {volume}
  {10}},\ \bibinfo {pages} {041049} (\bibinfo {year}
  {2020}{\natexlab{b}})}\BibitemShut {NoStop}%
\bibitem [{\citenamefont {Sza{\'{n}}kowski}\ and\ \citenamefont
  {Cywi{\'{n}}ski}(2024)}]{SzankowskiCywinskiQuantum2024}%
  \BibitemOpen
  \bibfield  {author} {\bibinfo {author} {\bibfnamefont {P.}~\bibnamefont
  {Sza{\'{n}}kowski}}\ and\ \bibinfo {author} {\bibfnamefont
  {{\L{}}.}~\bibnamefont {Cywi{\'{n}}ski}},\ }\bibfield  {title} {\enquote
  {\bibinfo {title} {Objectivity of classical quantum stochastic processes},}\
  }\href {\doibase 10.22331/q-2024-06-27-1390} {\bibfield  {journal} {\bibinfo
  {journal} {{Quantum}}\ }\textbf {\bibinfo {volume} {8}},\ \bibinfo {pages}
  {1390} (\bibinfo {year} {2024})}\BibitemShut {NoStop}%
\bibitem [{\citenamefont {Emary}\ \emph {et~al.}(2014)\citenamefont {Emary},
  \citenamefont {Lambert},\ and\ \citenamefont
  {Nori}}]{EmaryLambertNoriRPP2014}%
  \BibitemOpen
  \bibfield  {author} {\bibinfo {author} {\bibfnamefont {C.}~\bibnamefont
  {Emary}}, \bibinfo {author} {\bibfnamefont {N.}~\bibnamefont {Lambert}}, \
  and\ \bibinfo {author} {\bibfnamefont {F.}~\bibnamefont {Nori}},\ }\bibfield
  {title} {\enquote {\bibinfo {title} {Leggett-{G}arg inequalities},}\ }\href
  {\doibase 10.1088/0034-4885/77/1/016001} {\bibfield  {journal} {\bibinfo
  {journal} {Rep. Prog. Phys.}\ }\textbf {\bibinfo {volume} {77}},\ \bibinfo
  {pages} {039501} (\bibinfo {year} {2014})}\BibitemShut {NoStop}%
\bibitem [{\citenamefont {Zurek}(2009)}]{ZurekNP2009}%
  \BibitemOpen
  \bibfield  {author} {\bibinfo {author} {\bibfnamefont {W.~H.}\ \bibnamefont
  {Zurek}},\ }\bibfield  {title} {\enquote {\bibinfo {title} {{Quantum
  Darwinism}},}\ }\href {\doibase doi.org/10.1038/nphys1202} {\bibfield
  {journal} {\bibinfo  {journal} {Nature Phys.}\ }\textbf {\bibinfo {volume}
  {5}} (\bibinfo {year} {2009}),\ doi.org/10.1038/nphys1202}\BibitemShut
  {NoStop}%
\bibitem [{\citenamefont {Korbicz}(2021)}]{KorbiczQuantum2021}%
  \BibitemOpen
  \bibfield  {author} {\bibinfo {author} {\bibfnamefont {J.~K.}\ \bibnamefont
  {Korbicz}},\ }\bibfield  {title} {\enquote {\bibinfo {title} {Roads to
  objectivity: {Q}uantum {D}arwinism, {S}pectrum {B}roadcast {S}tructures and
  {S}trong quantum {D}arwinism – a review},}\ }\href {\doibase
  10.22331/q-2021-11-08-571} {\bibfield  {journal} {\bibinfo  {journal}
  {{Quantum}}\ }\textbf {\bibinfo {volume} {5}},\ \bibinfo {pages} {571}
  (\bibinfo {year} {2021})}\BibitemShut {NoStop}%
\bibitem [{\citenamefont {Riedel}\ \emph {et~al.}(2016)\citenamefont {Riedel},
  \citenamefont {Zurek},\ and\ \citenamefont
  {Zwolak}}]{RiedelZurekZwolakPRA2016}%
  \BibitemOpen
  \bibfield  {author} {\bibinfo {author} {\bibfnamefont {C.~J.}\ \bibnamefont
  {Riedel}}, \bibinfo {author} {\bibfnamefont {W.~H.}\ \bibnamefont {Zurek}}, \
  and\ \bibinfo {author} {\bibfnamefont {M.}~\bibnamefont {Zwolak}},\
  }\bibfield  {title} {\enquote {\bibinfo {title} {{Objective past of a quantum
  universe: Redundant records of consistent histories}},}\ }\href {\doibase
  10.1103/PhysRevA.93.032126} {\bibfield  {journal} {\bibinfo  {journal} {Phys.
  Rev. A}\ }\textbf {\bibinfo {volume} {93}},\ \bibinfo {pages} {032126}
  (\bibinfo {year} {2016})}\BibitemShut {NoStop}%
\bibitem [{\citenamefont {Scherer}\ and\ \citenamefont
  {Soklakov}(2005)}]{SchererSoklakovJMP2005}%
  \BibitemOpen
  \bibfield  {author} {\bibinfo {author} {\bibfnamefont {A.}~\bibnamefont
  {Scherer}}\ and\ \bibinfo {author} {\bibfnamefont {A.~N.}\ \bibnamefont
  {Soklakov}},\ }\bibfield  {title} {\enquote {\bibinfo {title} {Initial states
  and decoherence of histories},}\ }\href {\doibase 10.1063/1.1888030}
  {\bibfield  {journal} {\bibinfo  {journal} {J. Math. Phys.}\ }\textbf
  {\bibinfo {volume} {46}},\ \bibinfo {pages} {042108} (\bibinfo {year}
  {2005})}\BibitemShut {NoStop}%
\bibitem [{\citenamefont {Riedel}(2025)}]{RiedelQuantum2025}%
  \BibitemOpen
  \bibfield  {author} {\bibinfo {author} {\bibfnamefont {C.~Jess}\ \bibnamefont
  {Riedel}},\ }\bibfield  {title} {\enquote {\bibinfo {title} {Wavefunction
  branches demand a definition!}}\ }\href {\doibase 10.22331/qv-2025-06-16-85}
  {\bibfield  {journal} {\bibinfo  {journal} {{Quantum Views}}\ }\textbf
  {\bibinfo {volume} {9}},\ \bibinfo {pages} {85} (\bibinfo {year}
  {2025})}\BibitemShut {NoStop}%
\bibitem [{\citenamefont {Paz}\ and\ \citenamefont
  {Zurek}(1993)}]{PazZurekPRD1993}%
  \BibitemOpen
  \bibfield  {author} {\bibinfo {author} {\bibfnamefont {J.~P.}\ \bibnamefont
  {Paz}}\ and\ \bibinfo {author} {\bibfnamefont {W.~H.}\ \bibnamefont
  {Zurek}},\ }\bibfield  {title} {\enquote {\bibinfo {title}
  {Environment-induced decoherence, classicality and consistency of quantum
  histories},}\ }\href {\doibase 10.1103/PhysRevD.48.2728} {\bibfield
  {journal} {\bibinfo  {journal} {Phys. Rev. D}\ }\textbf {\bibinfo {volume}
  {48}},\ \bibinfo {pages} {2728--2738} (\bibinfo {year} {1993})}\BibitemShut
  {NoStop}%
\bibitem [{\citenamefont {Dowker}\ and\ \citenamefont
  {Kent}(1995)}]{DowkerKentPRL1995}%
  \BibitemOpen
  \bibfield  {author} {\bibinfo {author} {\bibfnamefont {F.}~\bibnamefont
  {Dowker}}\ and\ \bibinfo {author} {\bibfnamefont {A.}~\bibnamefont {Kent}},\
  }\bibfield  {title} {\enquote {\bibinfo {title} {{Properties of Consistent
  Histories}},}\ }\href {\doibase 10.1103/PhysRevLett.75.3038} {\bibfield
  {journal} {\bibinfo  {journal} {Phys. Rev. Lett.}\ }\textbf {\bibinfo
  {volume} {75}},\ \bibinfo {pages} {3038--3041} (\bibinfo {year}
  {1995})}\BibitemShut {NoStop}%
\bibitem [{\citenamefont {Riedel}(2017)}]{RiedelPRL2017}%
  \BibitemOpen
  \bibfield  {author} {\bibinfo {author} {\bibfnamefont {C.~Jess}\ \bibnamefont
  {Riedel}},\ }\bibfield  {title} {\enquote {\bibinfo {title} {{Classical
  Branch Structure from Spatial Redundancy in a Many-Body Wave Function}},}\
  }\href {\doibase 10.1103/PhysRevLett.118.120402} {\bibfield  {journal}
  {\bibinfo  {journal} {Phys. Rev. Lett.}\ }\textbf {\bibinfo {volume} {118}},\
  \bibinfo {pages} {120402} (\bibinfo {year} {2017})}\BibitemShut {NoStop}%
\bibitem [{\citenamefont {Weingarten}(2022)}]{WeingartenFP2022}%
  \BibitemOpen
  \bibfield  {author} {\bibinfo {author} {\bibfnamefont {D.}~\bibnamefont
  {Weingarten}},\ }\bibfield  {title} {\enquote {\bibinfo {title} {{Macroscopic
  Reality from Quantum Complexity}},}\ }\href {\doibase
  https://doi.org/10.1007/s10701-022-00554-0} {\bibfield  {journal} {\bibinfo
  {journal} {Found. Phys.}\ }\textbf {\bibinfo {volume} {52}},\ \bibinfo
  {pages} {45} (\bibinfo {year} {2022})}\BibitemShut {NoStop}%
\bibitem [{\citenamefont {Ollivier}(2022)}]{OllivierEnt2022}%
  \BibitemOpen
  \bibfield  {author} {\bibinfo {author} {\bibfnamefont {H.}~\bibnamefont
  {Ollivier}},\ }\bibfield  {title} {\enquote {\bibinfo {title} {{Emergence of
  Objectivity for Quantum Many-Body Systems}},}\ }\href {\doibase
  10.3390/e24020277} {\bibfield  {journal} {\bibinfo  {journal} {Entropy}\
  }\textbf {\bibinfo {volume} {24}},\ \bibinfo {pages} {277} (\bibinfo {year}
  {2022})}\BibitemShut {NoStop}%
\bibitem [{\citenamefont {Touil}\ \emph {et~al.}(2024)\citenamefont {Touil},
  \citenamefont {Anza}, \citenamefont {Deffner},\ and\ \citenamefont
  {Crutchfield}}]{TouilEtAlQuantum2024}%
  \BibitemOpen
  \bibfield  {author} {\bibinfo {author} {\bibfnamefont {A.}~\bibnamefont
  {Touil}}, \bibinfo {author} {\bibfnamefont {F.}~\bibnamefont {Anza}},
  \bibinfo {author} {\bibfnamefont {S.}~\bibnamefont {Deffner}}, \ and\
  \bibinfo {author} {\bibfnamefont {J.~P.}\ \bibnamefont {Crutchfield}},\
  }\bibfield  {title} {\enquote {\bibinfo {title} {{Branching States as The
  Emergent Structure of a Quantum Universe}},}\ }\href {\doibase
  10.22331/q-2024-10-10-1494} {\bibfield  {journal} {\bibinfo  {journal}
  {Quantum}\ }\textbf {\bibinfo {volume} {8}},\ \bibinfo {pages} {1494}
  (\bibinfo {year} {2024})}\BibitemShut {NoStop}%
\bibitem [{\citenamefont {Taylor}\ and\ \citenamefont
  {McCulloch}(2025)}]{TaylorMcCullochQuantum2025}%
  \BibitemOpen
  \bibfield  {author} {\bibinfo {author} {\bibfnamefont {J.~K.}\ \bibnamefont
  {Taylor}}\ and\ \bibinfo {author} {\bibfnamefont {I.~P.}\ \bibnamefont
  {McCulloch}},\ }\bibfield  {title} {\enquote {\bibinfo {title} {Wavefunction
  branching: when you can't tell pure states from mixed states},}\ }\href
  {\doibase 10.22331/q-2025-03-25-1670} {\bibfield  {journal} {\bibinfo
  {journal} {{Quantum}}\ }\textbf {\bibinfo {volume} {9}},\ \bibinfo {pages}
  {1670} (\bibinfo {year} {2025})}\BibitemShut {NoStop}%
\bibitem [{\citenamefont {Gemmer}\ and\ \citenamefont
  {Steinigeweg}(2014)}]{GemmerSteinigewegPRE2014}%
  \BibitemOpen
  \bibfield  {author} {\bibinfo {author} {\bibfnamefont {J.}~\bibnamefont
  {Gemmer}}\ and\ \bibinfo {author} {\bibfnamefont {R.}~\bibnamefont
  {Steinigeweg}},\ }\bibfield  {title} {\enquote {\bibinfo {title} {Entropy
  increase in $k$-step {M}arkovian and consistent dynamics of closed quantum
  systems},}\ }\href {\doibase 10.1103/PhysRevE.89.042113} {\bibfield
  {journal} {\bibinfo  {journal} {Phys. Rev. E}\ }\textbf {\bibinfo {volume}
  {89}},\ \bibinfo {pages} {042113} (\bibinfo {year} {2014})}\BibitemShut
  {NoStop}%
\bibitem [{\citenamefont {Schmidtke}\ and\ \citenamefont
  {Gemmer}(2016)}]{SchmidtkeGemmerPRE2016}%
  \BibitemOpen
  \bibfield  {author} {\bibinfo {author} {\bibfnamefont {D.}~\bibnamefont
  {Schmidtke}}\ and\ \bibinfo {author} {\bibfnamefont {J.}~\bibnamefont
  {Gemmer}},\ }\bibfield  {title} {\enquote {\bibinfo {title} {{Numerical
  evidence for approximate consistency and Markovianity of some quantum
  histories in a class of finite closed spin systems}},}\ }\href {\doibase
  10.1103/PhysRevE.93.012125} {\bibfield  {journal} {\bibinfo  {journal} {Phys.
  Rev. E}\ }\textbf {\bibinfo {volume} {93}},\ \bibinfo {pages} {012125}
  (\bibinfo {year} {2016})}\BibitemShut {NoStop}%
\bibitem [{\citenamefont {Nation}\ and\ \citenamefont
  {Porras}(2020)}]{NationPorrasPRE2020}%
  \BibitemOpen
  \bibfield  {author} {\bibinfo {author} {\bibfnamefont {C.}~\bibnamefont
  {Nation}}\ and\ \bibinfo {author} {\bibfnamefont {D.}~\bibnamefont
  {Porras}},\ }\bibfield  {title} {\enquote {\bibinfo {title} {{Taking
  snapshots of a quantum thermalization process: Emergent classicality in
  quantum jump trajectories}},}\ }\href {\doibase 10.1103/PhysRevE.102.042115}
  {\bibfield  {journal} {\bibinfo  {journal} {Phys. Rev. E}\ }\textbf {\bibinfo
  {volume} {102}},\ \bibinfo {pages} {042115} (\bibinfo {year}
  {2020})}\BibitemShut {NoStop}%
\bibitem [{\citenamefont {Strasberg}\ \emph {et~al.}(2023)\citenamefont
  {Strasberg}, \citenamefont {Winter}, \citenamefont {Gemmer},\ and\
  \citenamefont {Wang}}]{StrasbergEtAlPRA2023}%
  \BibitemOpen
  \bibfield  {author} {\bibinfo {author} {\bibfnamefont {P.}~\bibnamefont
  {Strasberg}}, \bibinfo {author} {\bibfnamefont {A.}~\bibnamefont {Winter}},
  \bibinfo {author} {\bibfnamefont {J.}~\bibnamefont {Gemmer}}, \ and\ \bibinfo
  {author} {\bibfnamefont {J.}~\bibnamefont {Wang}},\ }\bibfield  {title}
  {\enquote {\bibinfo {title} {{Classicality, Markovianity and local detailed
  balance from pure-state dynamics}},}\ }\href {\doibase
  10.1103/PhysRevA.108.012225} {\bibfield  {journal} {\bibinfo  {journal}
  {Phys. Rev. A}\ }\textbf {\bibinfo {volume} {108}},\ \bibinfo {pages}
  {012225} (\bibinfo {year} {2023})}\BibitemShut {NoStop}%
\bibitem [{\citenamefont {Strasberg}(2023)}]{StrasbergSP2023}%
  \BibitemOpen
  \bibfield  {author} {\bibinfo {author} {\bibfnamefont {P.}~\bibnamefont
  {Strasberg}},\ }\bibfield  {title} {\enquote {\bibinfo {title} {{Classicality
  with(out) decoherence: Concepts, relation to Markovianity and a random matrix
  theory approach}},}\ }\href {\doibase 10.21468/SciPostPhys.15.1.024}
  {\bibfield  {journal} {\bibinfo  {journal} {SciPost Phys.}\ }\textbf
  {\bibinfo {volume} {15}},\ \bibinfo {pages} {024} (\bibinfo {year}
  {2023})}\BibitemShut {NoStop}%
\bibitem [{\citenamefont {Halliwell}(2001)}]{HalliwellPRD2001}%
  \BibitemOpen
  \bibfield  {author} {\bibinfo {author} {\bibfnamefont {J.~J.}\ \bibnamefont
  {Halliwell}},\ }\bibfield  {title} {\enquote {\bibinfo {title} {{Approximate
  decoherence of histories and 't Hooft's deterministic quantum theory}},}\
  }\href {\doibase 10.1103/PhysRevD.63.085013} {\bibfield  {journal} {\bibinfo
  {journal} {Phys. Rev. D}\ }\textbf {\bibinfo {volume} {63}},\ \bibinfo
  {pages} {085013} (\bibinfo {year} {2001})}\BibitemShut {NoStop}%
\bibitem [{\citenamefont {Halliwell}(2005)}]{HalliwellPRA2005}%
  \BibitemOpen
  \bibfield  {author} {\bibinfo {author} {\bibfnamefont {J.~J.}\ \bibnamefont
  {Halliwell}},\ }\bibfield  {title} {\enquote {\bibinfo {title} {Commuting
  position and momentum operators, exact decoherence and emergent
  classicality},}\ }\href {\doibase 10.1103/PhysRevA.72.042109} {\bibfield
  {journal} {\bibinfo  {journal} {Phys. Rev. A}\ }\textbf {\bibinfo {volume}
  {72}},\ \bibinfo {pages} {042109} (\bibinfo {year} {2005})}\BibitemShut
  {NoStop}%
\bibitem [{\citenamefont {Chao}\ \emph {et~al.}(2017)\citenamefont {Chao},
  \citenamefont {Reichardt}, \citenamefont {Sutherland},\ and\ \citenamefont
  {Vidick}}]{ChaoEtAl2017}%
  \BibitemOpen
  \bibfield  {author} {\bibinfo {author} {\bibfnamefont {R.}~\bibnamefont
  {Chao}}, \bibinfo {author} {\bibfnamefont {B.~W.}\ \bibnamefont {Reichardt}},
  \bibinfo {author} {\bibfnamefont {C.}~\bibnamefont {Sutherland}}, \ and\
  \bibinfo {author} {\bibfnamefont {T.}~\bibnamefont {Vidick}},\ }\bibfield
  {title} {\enquote {\bibinfo {title} {{Overlapping Qubits}},}\ }in\ \href
  {\doibase 10.4230/LIPIcs.ITCS.2017.48} {\emph {\bibinfo {booktitle} {8th
  Innovations in Theoretical Computer Science Conference (ITCS 2017)}}},\
  Vol.~\bibinfo {volume} {67}\ (\bibinfo  {publisher} {Leibniz International
  Proceedings in Informatics (LIPIcs)},\ \bibinfo {address} {Schloss Dagstuhl
  – Leibniz-Zentrum für Informatik},\ \bibinfo {year} {2017})\ p.\ \bibinfo
  {pages} {48:1}\BibitemShut {NoStop}%
\bibitem [{\citenamefont {Chakravarty}(2021)}]{ChakravartyJHEP2021}%
  \BibitemOpen
  \bibfield  {author} {\bibinfo {author} {\bibfnamefont {J.}~\bibnamefont
  {Chakravarty}},\ }\bibfield  {title} {\enquote {\bibinfo {title}
  {{Overcounting of interior excitations: a resolution to the bags of gold
  paradox in AdS}},}\ }\href {\doibase 10.1007/JHEP02(2021)027} {\bibfield
  {journal} {\bibinfo  {journal} {J. High Energ. Phys.}\ }\textbf {\bibinfo
  {volume} {2021}},\ \bibinfo {pages} {27} (\bibinfo {year}
  {2021})}\BibitemShut {NoStop}%
\bibitem [{\citenamefont {Soulas}(2024)}]{SoulasFP2024}%
  \BibitemOpen
  \bibfield  {author} {\bibinfo {author} {\bibfnamefont {A.}~\bibnamefont
  {Soulas}},\ }\bibfield  {title} {\enquote {\bibinfo {title} {{Decoherence as
  a High-Dimensional Geometrical Phenomenon}},}\ }\href {\doibase
  10.1007/s10701-023-00740-8} {\bibfield  {journal} {\bibinfo  {journal}
  {Found. Phys.}\ }\textbf {\bibinfo {volume} {54}},\ \bibinfo {pages} {11}
  (\bibinfo {year} {2024})}\BibitemShut {NoStop}%
\bibitem [{\citenamefont {Johnson}\ and\ \citenamefont
  {Lindenstrauss}(1984)}]{JohnsonLindenstraussAMS1984}%
  \BibitemOpen
  \bibfield  {author} {\bibinfo {author} {\bibfnamefont {W.~B.}\ \bibnamefont
  {Johnson}}\ and\ \bibinfo {author} {\bibfnamefont {J.}~\bibnamefont
  {Lindenstrauss}},\ }\bibfield  {title} {\enquote {\bibinfo {title}
  {Extensions of {L}ipschitz mappings into a {H}ilbert space},}\ }in\ \href
  {\doibase 10.1090/conm/026/737400} {\emph {\bibinfo {booktitle} {Conference
  in modern analysis and probability ({N}ew {H}aven, {C}onn., 1982)}}},\
  \bibinfo {series} {Contemp. Math.}, Vol.~\bibinfo {volume} {26}\ (\bibinfo
  {publisher} {Amer. Math. Soc., Providence, RI},\ \bibinfo {year} {1984})\
  pp.\ \bibinfo {pages} {189--206}\BibitemShut {NoStop}%
\bibitem [{\citenamefont {Dasgupta}\ and\ \citenamefont
  {Gupta}(2003)}]{DasguptaGuptaRSA2003}%
  \BibitemOpen
  \bibfield  {author} {\bibinfo {author} {\bibfnamefont {S.}~\bibnamefont
  {Dasgupta}}\ and\ \bibinfo {author} {\bibfnamefont {A.}~\bibnamefont
  {Gupta}},\ }\bibfield  {title} {\enquote {\bibinfo {title} {{An elementary
  proof of a theorem of Johnson and Lindenstrauss}},}\ }\href {\doibase
  https://doi.org/10.1002/rsa.10073} {\bibfield  {journal} {\bibinfo  {journal}
  {Random Struct. Algorithms}\ }\textbf {\bibinfo {volume} {22}},\ \bibinfo
  {pages} {60--65} (\bibinfo {year} {2003})}\BibitemShut {NoStop}%
\bibitem [{\citenamefont {Alon}(2003)}]{AlonDM2003}%
  \BibitemOpen
  \bibfield  {author} {\bibinfo {author} {\bibfnamefont {N.}~\bibnamefont
  {Alon}},\ }\bibfield  {title} {\enquote {\bibinfo {title} {{Problems and
  results in extremal combinatorics--I}},}\ }\href {\doibase
  https://doi.org/10.1016/S0012-365X(03)00227-9} {\bibfield  {journal}
  {\bibinfo  {journal} {Discrete Math.}\ }\textbf {\bibinfo {volume} {273}},\
  \bibinfo {pages} {31--53} (\bibinfo {year} {2003})}\BibitemShut {NoStop}%
\bibitem [{\citenamefont {Tao}(2013)}]{TaoBlog2013}%
  \BibitemOpen
  \bibfield  {author} {\bibinfo {author} {\bibfnamefont {T.}~\bibnamefont
  {Tao}},\ }\bibfield  {title} {\enquote {\bibinfo {title} {{A cheap version of
  the Kabatjanskii-Levenstein bound for almost orthogonal vectors}},}\ }\href
  {https://terrytao.wordpress.com/2013/07/18/a-cheap-version-of-the-kabatjanskii-levenstein-bound-for-almost-orthogonal-vectors/}
  {\bibfield  {journal} {\bibinfo  {journal} {Blog post}\ } (\bibinfo {year}
  {2013})}\BibitemShut {NoStop}%
\bibitem [{\citenamefont {IV}\ \emph {et~al.}(2017)\citenamefont {IV},
  \citenamefont {Hammen},\ and\ \citenamefont {Mixon}}]{HaasHammenMixon2017}%
  \BibitemOpen
  \bibfield  {author} {\bibinfo {author} {\bibfnamefont {J.~I.~Haas}\
  \bibnamefont {IV}}, \bibinfo {author} {\bibfnamefont {N.}~\bibnamefont
  {Hammen}}, \ and\ \bibinfo {author} {\bibfnamefont {D.~G.}\ \bibnamefont
  {Mixon}},\ }\bibfield  {title} {\enquote {\bibinfo {title} {{The Levenstein
  bound for packings in projective spaces}},}\ }in\ \href {\doibase
  10.1117/12.2275373} {\emph {\bibinfo {booktitle} {Wavelets and Sparsity
  XVII}}},\ Vol.\ \bibinfo {volume} {10394},\ \bibinfo {editor} {edited by\
  \bibinfo {editor} {\bibfnamefont {Yue~M.}\ \bibnamefont {Lu}}, \bibinfo
  {editor} {\bibfnamefont {Dimitri Van~De}\ \bibnamefont {Ville}}, \ and\
  \bibinfo {editor} {\bibfnamefont {Manos}\ \bibnamefont {Papadakis}}},\
  \bibinfo {organization} {International Society for Optics and Photonics}\
  (\bibinfo  {publisher} {SPIE},\ \bibinfo {year} {2017})\ p.\ \bibinfo {pages}
  {103940V}\BibitemShut {NoStop}%
\bibitem [{\citenamefont {Welch}(1974)}]{WelchIEEE1974}%
  \BibitemOpen
  \bibfield  {author} {\bibinfo {author} {\bibfnamefont {L.}~\bibnamefont
  {Welch}},\ }\bibfield  {title} {\enquote {\bibinfo {title} {Lower bounds on
  the maximum cross correlation of signals (corresp.)},}\ }\href {\doibase
  10.1109/TIT.1974.1055219} {\bibfield  {journal} {\bibinfo  {journal} {IEEE
  Trans. Inf. Theory}\ }\textbf {\bibinfo {volume} {20}},\ \bibinfo {pages}
  {397--399} (\bibinfo {year} {1974})}\BibitemShut {NoStop}%
\bibitem [{\citenamefont {Kabatiansky}\ and\ \citenamefont
  {Levenshtein}(1978)}]{KabatianskyLevenshtein1978}%
  \BibitemOpen
  \bibfield  {author} {\bibinfo {author} {\bibfnamefont {G.~A.}\ \bibnamefont
  {Kabatiansky}}\ and\ \bibinfo {author} {\bibfnamefont {V.~I.}\ \bibnamefont
  {Levenshtein}},\ }\bibfield  {title} {\enquote {\bibinfo {title} {{On Bounds
  for Packings on a Sphere and in Space}},}\ }\href
  {https://www.mathnet.ru/php/archive.phtml?wshow=paper&jrnid=ppi&paperid=1518&option_lang=eng}
  {\bibfield  {journal} {\bibinfo  {journal} {Probl. Peredachi Inf.}\ }\textbf
  {\bibinfo {volume} {14}},\ \bibinfo {pages} {3} (\bibinfo {year}
  {1978})}\BibitemShut {NoStop}%
\bibitem [{\citenamefont {Holevo}(1979)}]{Holevo1979}%
  \BibitemOpen
  \bibfield  {author} {\bibinfo {author} {\bibfnamefont {A.~S.}\ \bibnamefont
  {Holevo}},\ }\bibfield  {title} {\enquote {\bibinfo {title} {On
  asymptotically optimal hypotheses testing in quantum statistics},}\ }\href
  {\doibase 10.1137/1123048} {\bibfield  {journal} {\bibinfo  {journal} {Theory
  of Probability and its Applications}\ }\textbf {\bibinfo {volume} {23}},\
  \bibinfo {pages} {411--415} (\bibinfo {year} {1979})}\BibitemShut {NoStop}%
\bibitem [{\citenamefont {Hausladen}\ and\ \citenamefont
  {Wootters}(1994)}]{HausladenWoottersJMO1994}%
  \BibitemOpen
  \bibfield  {author} {\bibinfo {author} {\bibfnamefont {P.}~\bibnamefont
  {Hausladen}}\ and\ \bibinfo {author} {\bibfnamefont {W.~K.}\ \bibnamefont
  {Wootters}},\ }\bibfield  {title} {\enquote {\bibinfo {title} {{A ‘Pretty
  Good’ Measurement for Distinguishing Quantum States}},}\ }\href {\doibase
  10.1080/09500349414552221} {\bibfield  {journal} {\bibinfo  {journal} {J.
  Mod. Opt.}\ }\textbf {\bibinfo {volume} {41}},\ \bibinfo {pages} {2385--2390}
  (\bibinfo {year} {1994})}\BibitemShut {NoStop}%
\bibitem [{\citenamefont {Barnum}\ and\ \citenamefont
  {Knill}(2002)}]{BarnumKnillJMP2002}%
  \BibitemOpen
  \bibfield  {author} {\bibinfo {author} {\bibfnamefont {H.}~\bibnamefont
  {Barnum}}\ and\ \bibinfo {author} {\bibfnamefont {E.}~\bibnamefont {Knill}},\
  }\bibfield  {title} {\enquote {\bibinfo {title} {{Reversing quantum dynamics
  with near-optimal quantum and classical fidelity}},}\ }\href {\doibase
  10.1063/1.1459754} {\bibfield  {journal} {\bibinfo  {journal} {J. Math.
  Phys.}\ }\textbf {\bibinfo {volume} {43}},\ \bibinfo {pages} {2097--2106}
  (\bibinfo {year} {2002})}\BibitemShut {NoStop}%
\bibitem [{\citenamefont {Chefles}(1998)}]{CheflesPLA1998}%
  \BibitemOpen
  \bibfield  {author} {\bibinfo {author} {\bibfnamefont {A.}~\bibnamefont
  {Chefles}},\ }\bibfield  {title} {\enquote {\bibinfo {title} {Unambiguous
  discrimination between linearly independent quantum states},}\ }\href
  {\doibase https://doi.org/10.1016/S0375-9601(98)00064-4} {\bibfield
  {journal} {\bibinfo  {journal} {Phys. Lett. A}\ }\textbf {\bibinfo {volume}
  {239}},\ \bibinfo {pages} {339--347} (\bibinfo {year} {1998})}\BibitemShut
  {NoStop}%
\bibitem [{\citenamefont {Sutter}\ \emph {et~al.}(2016)\citenamefont {Sutter},
  \citenamefont {Tomamichel},\ and\ \citenamefont
  {Harrow}}]{SutterTomamichelHarrowIEEE2016}%
  \BibitemOpen
  \bibfield  {author} {\bibinfo {author} {\bibfnamefont {D.}~\bibnamefont
  {Sutter}}, \bibinfo {author} {\bibfnamefont {M.}~\bibnamefont {Tomamichel}},
  \ and\ \bibinfo {author} {\bibfnamefont {A.~W.}\ \bibnamefont {Harrow}},\
  }\bibfield  {title} {\enquote {\bibinfo {title} {{Strengthened Monotonicity
  of Relative Entropy via Pinched Petz Recovery Map}},}\ }\href {\doibase
  10.1109/TIT.2016.2545680} {\bibfield  {journal} {\bibinfo  {journal} {IEEE
  Trans. Inf. Theory}\ }\textbf {\bibinfo {volume} {62}},\ \bibinfo {pages}
  {2907--2913} (\bibinfo {year} {2016})}\BibitemShut {NoStop}%
\bibitem [{\citenamefont {Horoshko}\ \emph {et~al.}(2019)\citenamefont
  {Horoshko}, \citenamefont {Eskandari},\ and\ \citenamefont
  {Kilin}}]{HoroshkoEskandariKilinPLA2019}%
  \BibitemOpen
  \bibfield  {author} {\bibinfo {author} {\bibfnamefont {D.~B.}\ \bibnamefont
  {Horoshko}}, \bibinfo {author} {\bibfnamefont {M.~M.}\ \bibnamefont
  {Eskandari}}, \ and\ \bibinfo {author} {\bibfnamefont {S.~Ya.}\ \bibnamefont
  {Kilin}},\ }\bibfield  {title} {\enquote {\bibinfo {title} {Equiprobable
  unambiguous discrimination of quantum states by symmetric
  orthogonalisation},}\ }\href {\doibase
  https://doi.org/10.1016/j.physleta.2019.03.006} {\bibfield  {journal}
  {\bibinfo  {journal} {Phys. Lett. A}\ }\textbf {\bibinfo {volume} {383}},\
  \bibinfo {pages} {1728--1732} (\bibinfo {year} {2019})}\BibitemShut {NoStop}%
\bibitem [{\citenamefont {Nica}(1993)}]{NicaPJP1993}%
  \BibitemOpen
  \bibfield  {author} {\bibinfo {author} {\bibfnamefont {A.}~\bibnamefont
  {Nica}},\ }\bibfield  {title} {\enquote {\bibinfo {title} {{Asymtotically
  Free Families of Random Unitaries in Symmetric Groups}},}\ }\href {\doibase
  10.2140/pjm.1993.157.295} {\bibfield  {journal} {\bibinfo  {journal} {Pacific
  J. Math.}\ }\textbf {\bibinfo {volume} {157}},\ \bibinfo {pages} {295}
  (\bibinfo {year} {1993})}\BibitemShut {NoStop}%
\bibitem [{\citenamefont {Bordenave}\ and\ \citenamefont
  {Collins}(2019)}]{BordenaveCollinsAM2019}%
  \BibitemOpen
  \bibfield  {author} {\bibinfo {author} {\bibfnamefont {C.}~\bibnamefont
  {Bordenave}}\ and\ \bibinfo {author} {\bibfnamefont {B.}~\bibnamefont
  {Collins}},\ }\bibfield  {title} {\enquote {\bibinfo {title} {{Eigenvalues of
  random lifts and polynomials of random permutation matrices}},}\ }\href
  {\doibase 10.4007/annals.2019.190.3.3} {\bibfield  {journal} {\bibinfo
  {journal} {Ann. Math.}\ }\textbf {\bibinfo {volume} {190}},\ \bibinfo {pages}
  {811} (\bibinfo {year} {2019})}\BibitemShut {NoStop}%
\bibitem [{\citenamefont {Ji}\ \emph {et~al.}(2018)\citenamefont {Ji},
  \citenamefont {Liu},\ and\ \citenamefont {Song}}]{JiLiuSongAC2018}%
  \BibitemOpen
  \bibfield  {author} {\bibinfo {author} {\bibfnamefont {Z.}~\bibnamefont
  {Ji}}, \bibinfo {author} {\bibfnamefont {Y.-K.}\ \bibnamefont {Liu}}, \ and\
  \bibinfo {author} {\bibfnamefont {F.}~\bibnamefont {Song}},\ }\enquote
  {\bibinfo {title} {Advances in cryptology – crypto 2018, lecture notes in
  computer science},}\ \ (\bibinfo  {publisher} {Springer},\ \bibinfo {address}
  {Cham},\ \bibinfo {year} {2018})\ Chap.\ \bibinfo {chapter} {{Pseudorandom
  Quantum States}}, p.\ \bibinfo {pages} {126}\BibitemShut {NoStop}%
\bibitem [{\citenamefont {Bouland}\ \emph {et~al.}(2019)\citenamefont
  {Bouland}, \citenamefont {Fefferman},\ and\ \citenamefont
  {Vazirani}}]{BoulandFeffermanVaziraniArXiv2019}%
  \BibitemOpen
  \bibfield  {author} {\bibinfo {author} {\bibfnamefont {A.}~\bibnamefont
  {Bouland}}, \bibinfo {author} {\bibfnamefont {B.}~\bibnamefont {Fefferman}},
  \ and\ \bibinfo {author} {\bibfnamefont {U.}~\bibnamefont {Vazirani}},\
  }\bibfield  {title} {\enquote {\bibinfo {title} {{Computational
  pseudorandomness, the wormhole growth paradox and constraints on the AdS/CFT
  duality}},}\ }\href {https://arxiv.org/abs/1910.14646} {\bibfield  {journal}
  {\bibinfo  {journal} {arXiv 1910.14646}\ } (\bibinfo {year}
  {2019})}\BibitemShut {NoStop}%
\bibitem [{\citenamefont {Aaronson}\ \emph {et~al.}(2022)\citenamefont
  {Aaronson}, \citenamefont {Bouland}, \citenamefont {Fefferman}, \citenamefont
  {Ghosh}, \citenamefont {Vazirani}, \citenamefont {Zhang},\ and\ \citenamefont
  {Zhou}}]{AaronsonEtAlArXiv2022}%
  \BibitemOpen
  \bibfield  {author} {\bibinfo {author} {\bibfnamefont {S.}~\bibnamefont
  {Aaronson}}, \bibinfo {author} {\bibfnamefont {A.}~\bibnamefont {Bouland}},
  \bibinfo {author} {\bibfnamefont {B.}~\bibnamefont {Fefferman}}, \bibinfo
  {author} {\bibfnamefont {S.}~\bibnamefont {Ghosh}}, \bibinfo {author}
  {\bibfnamefont {U.}~\bibnamefont {Vazirani}}, \bibinfo {author}
  {\bibfnamefont {C.}~\bibnamefont {Zhang}}, \ and\ \bibinfo {author}
  {\bibfnamefont {Z.}~\bibnamefont {Zhou}},\ }\bibfield  {title} {\enquote
  {\bibinfo {title} {{Quantum Pseudoentanglement}},}\ }\href
  {https://arxiv.org/abs/2211.00747} {\bibfield  {journal} {\bibinfo  {journal}
  {arXiv 2211.00747}\ } (\bibinfo {year} {2022})}\BibitemShut {NoStop}%
\bibitem [{\citenamefont {Klappenecker}\ and\ \citenamefont
  {Rotteler}(2005)}]{KlappeneckerRottelerIEEE2005}%
  \BibitemOpen
  \bibfield  {author} {\bibinfo {author} {\bibfnamefont {A.}~\bibnamefont
  {Klappenecker}}\ and\ \bibinfo {author} {\bibfnamefont {M.}~\bibnamefont
  {Rotteler}},\ }\bibfield  {title} {\enquote {\bibinfo {title} {Mutually
  unbiased bases are complex projective 2-designs},}\ }in\ \href {\doibase
  10.1109/ISIT.2005.1523643} {\emph {\bibinfo {booktitle} {Proceedings.
  International Symposium on Information Theory, 2005. ISIT 2005.}}}\ (\bibinfo
  {year} {2005})\ pp.\ \bibinfo {pages} {1740--1744}\BibitemShut {NoStop}%
\bibitem [{\citenamefont {Silverstein}(1989)}]{SilversteinJMA1989}%
  \BibitemOpen
  \bibfield  {author} {\bibinfo {author} {\bibfnamefont {J.~W.}\ \bibnamefont
  {Silverstein}},\ }\bibfield  {title} {\enquote {\bibinfo {title} {On the
  eigenvectors of large dimensional sample covariance matrices},}\ }\href
  {\doibase https://doi.org/10.1016/0047-259X(89)90084-5} {\bibfield  {journal}
  {\bibinfo  {journal} {J. Multivar. Anal.}\ }\textbf {\bibinfo {volume}
  {30}},\ \bibinfo {pages} {1--16} (\bibinfo {year} {1989})}\BibitemShut
  {NoStop}%
\bibitem [{\citenamefont {Bru}(1989)}]{BruJMA1989}%
  \BibitemOpen
  \bibfield  {author} {\bibinfo {author} {\bibfnamefont {M.-F.}\ \bibnamefont
  {Bru}},\ }\bibfield  {title} {\enquote {\bibinfo {title} {Diffusions of
  perturbed principal component analysis},}\ }\href {\doibase
  https://doi.org/10.1016/0047-259X(89)90080-8} {\bibfield  {journal} {\bibinfo
   {journal} {J. Multivar. Anal.}\ }\textbf {\bibinfo {volume} {29}},\ \bibinfo
  {pages} {127--136} (\bibinfo {year} {1989})}\BibitemShut {NoStop}%
\bibitem [{\citenamefont {Silverstein}(1990)}]{SilversteinAP1990}%
  \BibitemOpen
  \bibfield  {author} {\bibinfo {author} {\bibfnamefont {J.~W.}\ \bibnamefont
  {Silverstein}},\ }\bibfield  {title} {\enquote {\bibinfo {title} {{Weak
  Convergence of Random Functions Defined by The Eigenvectors of Sample
  Covariance Matrices}},}\ }\href {http://www.jstor.org/stable/2244421}
  {\bibfield  {journal} {\bibinfo  {journal} {Ann. Prob.}\ }\textbf {\bibinfo
  {volume} {18}},\ \bibinfo {pages} {1174--1194} (\bibinfo {year}
  {1990})}\BibitemShut {NoStop}%
\bibitem [{\citenamefont {Bai}\ \emph {et~al.}(2007)\citenamefont {Bai},
  \citenamefont {Miao},\ and\ \citenamefont {Pan}}]{BaiMiaoPanAP2007}%
  \BibitemOpen
  \bibfield  {author} {\bibinfo {author} {\bibfnamefont {Z.~D.}\ \bibnamefont
  {Bai}}, \bibinfo {author} {\bibfnamefont {B.~Q.}\ \bibnamefont {Miao}}, \
  and\ \bibinfo {author} {\bibfnamefont {G.~M.}\ \bibnamefont {Pan}},\
  }\bibfield  {title} {\enquote {\bibinfo {title} {{On asymptotics of
  eigenvectors of large sample covariance matrix}},}\ }\href {\doibase
  10.1214/009117906000001079} {\bibfield  {journal} {\bibinfo  {journal} {Ann.
  Prob.}\ }\textbf {\bibinfo {volume} {35}},\ \bibinfo {pages} {1532 -- 1572}
  (\bibinfo {year} {2007})}\BibitemShut {NoStop}%
\bibitem [{\citenamefont {Tao}\ and\ \citenamefont {Vu}(2012)}]{TaoVuRMTA2012}%
  \BibitemOpen
  \bibfield  {author} {\bibinfo {author} {\bibfnamefont {T.}~\bibnamefont
  {Tao}}\ and\ \bibinfo {author} {\bibfnamefont {V.}~\bibnamefont {Vu}},\
  }\bibfield  {title} {\enquote {\bibinfo {title} {Random matrices: Universal
  properties of eigenvectors},}\ }\href {\doibase 10.1142/S2010326311500018}
  {\bibfield  {journal} {\bibinfo  {journal} {Rand. Mat. Theory Appl.}\
  }\textbf {\bibinfo {volume} {01}},\ \bibinfo {pages} {1150001} (\bibinfo
  {year} {2012})}\BibitemShut {NoStop}%
\bibitem [{\citenamefont {Xia}\ \emph {et~al.}(2013)\citenamefont {Xia},
  \citenamefont {Qin},\ and\ \citenamefont {Bai}}]{XiaQinBaiAS2013}%
  \BibitemOpen
  \bibfield  {author} {\bibinfo {author} {\bibfnamefont {N.}~\bibnamefont
  {Xia}}, \bibinfo {author} {\bibfnamefont {Y.}~\bibnamefont {Qin}}, \ and\
  \bibinfo {author} {\bibfnamefont {Z.}~\bibnamefont {Bai}},\ }\bibfield
  {title} {\enquote {\bibinfo {title} {Convergence rates of eigenvector
  empirical spectral distribution of large dimensional sample covariance
  matrix},}\ }\href {\doibase 10.1214/13-aos1154} {\bibfield  {journal}
  {\bibinfo  {journal} {Ann. Statist.}\ }\textbf {\bibinfo {volume} {41}}
  (\bibinfo {year} {2013}),\ 10.1214/13-aos1154}\BibitemShut {NoStop}%
\bibitem [{\citenamefont {Pillai}\ and\ \citenamefont
  {Yin}(2014)}]{PillaiYinAAP2014}%
  \BibitemOpen
  \bibfield  {author} {\bibinfo {author} {\bibfnamefont {N.~S.}\ \bibnamefont
  {Pillai}}\ and\ \bibinfo {author} {\bibfnamefont {J.}~\bibnamefont {Yin}},\
  }\bibfield  {title} {\enquote {\bibinfo {title} {{Universality of covariance
  matrices}},}\ }\href {\doibase 10.1214/13-AAP939} {\bibfield  {journal}
  {\bibinfo  {journal} {Ann. Appl. Prob.}\ }\textbf {\bibinfo {volume} {24}},\
  \bibinfo {pages} {935 -- 1001} (\bibinfo {year} {2014})}\BibitemShut
  {NoStop}%
\bibitem [{\citenamefont {Alex}\ \emph {et~al.}(2014)\citenamefont {Alex},
  \citenamefont {Erdős}, \citenamefont {Knowles}, \citenamefont {Yau},\ and\
  \citenamefont {Yin}}]{AlexEtAlEJP2014}%
  \BibitemOpen
  \bibfield  {author} {\bibinfo {author} {\bibfnamefont {B.}~\bibnamefont
  {Alex}}, \bibinfo {author} {\bibfnamefont {L.}~\bibnamefont {Erdős}},
  \bibinfo {author} {\bibfnamefont {A.}~\bibnamefont {Knowles}}, \bibinfo
  {author} {\bibfnamefont {H.-T.}\ \bibnamefont {Yau}}, \ and\ \bibinfo
  {author} {\bibfnamefont {J.}~\bibnamefont {Yin}},\ }\bibfield  {title}
  {\enquote {\bibinfo {title} {{Isotropic local laws for sample covariance and
  generalized Wigner matrices}},}\ }\href {\doibase 10.1214/EJP.v19-3054}
  {\bibfield  {journal} {\bibinfo  {journal} {Electr. J. Prob.}\ }\textbf
  {\bibinfo {volume} {19}},\ \bibinfo {pages} {1 -- 53} (\bibinfo {year}
  {2014})}\BibitemShut {NoStop}%
\bibitem [{\citenamefont {Bourgade}\ and\ \citenamefont
  {Yau}(2017)}]{BourgadeYauCMP2017}%
  \BibitemOpen
  \bibfield  {author} {\bibinfo {author} {\bibfnamefont {P.}~\bibnamefont
  {Bourgade}}\ and\ \bibinfo {author} {\bibfnamefont {H.~T.}\ \bibnamefont
  {Yau}},\ }\bibfield  {title} {\enquote {\bibinfo {title} {The eigenvector
  moment flow and local quantum unique ergodicity},}\ }\href {\doibase
  10.1007/s00220-016-2627-6} {\bibfield  {journal} {\bibinfo  {journal}
  {Commun. Math. Phys.}\ }\textbf {\bibinfo {volume} {350}},\ \bibinfo {pages}
  {231} (\bibinfo {year} {2017})}\BibitemShut {NoStop}%
\bibitem [{\citenamefont {Ding}(2019)}]{DingAAP2019}%
  \BibitemOpen
  \bibfield  {author} {\bibinfo {author} {\bibfnamefont {X.}~\bibnamefont
  {Ding}},\ }\bibfield  {title} {\enquote {\bibinfo {title} {Singular vector
  distribution of sample covariance matrices},}\ }\href {\doibase
  10.1017/apr.2019.10} {\bibfield  {journal} {\bibinfo  {journal} {Adv. Appl.
  Prob.}\ }\textbf {\bibinfo {volume} {51}},\ \bibinfo {pages} {236–267}
  (\bibinfo {year} {2019})}\BibitemShut {NoStop}%
\bibitem [{\citenamefont {Xi}\ \emph {et~al.}(2020)\citenamefont {Xi},
  \citenamefont {Yang},\ and\ \citenamefont {Yin}}]{XiYangYinAS2020}%
  \BibitemOpen
  \bibfield  {author} {\bibinfo {author} {\bibfnamefont {H.}~\bibnamefont
  {Xi}}, \bibinfo {author} {\bibfnamefont {F.}~\bibnamefont {Yang}}, \ and\
  \bibinfo {author} {\bibfnamefont {J.}~\bibnamefont {Yin}},\ }\bibfield
  {title} {\enquote {\bibinfo {title} {{Convergence of eigenvector empirical
  spectral distribution of sample covariance matrices}},}\ }\href {\doibase
  10.1214/19-AOS1832} {\bibfield  {journal} {\bibinfo  {journal} {Ann. Stat.}\
  }\textbf {\bibinfo {volume} {48}},\ \bibinfo {pages} {953 -- 982} (\bibinfo
  {year} {2020})}\BibitemShut {NoStop}%
\bibitem [{\citenamefont {Pereyra}(1991)}]{PereyraJSP1991}%
  \BibitemOpen
  \bibfield  {author} {\bibinfo {author} {\bibfnamefont {P.}~\bibnamefont
  {Pereyra}},\ }\bibfield  {title} {\enquote {\bibinfo {title} {Random-matrix
  model for dissipative two-level systems},}\ }\href
  {https://link.springer.com/article/10.1007/BF01053754} {\bibfield  {journal}
  {\bibinfo  {journal} {J. Stat. Phys.}\ }\textbf {\bibinfo {volume} {65}},\
  \bibinfo {pages} {773--792} (\bibinfo {year} {1991})}\BibitemShut {NoStop}%
\bibitem [{\citenamefont {Esposito}\ and\ \citenamefont
  {Gaspard}(2003)}]{EspositoGaspardPRE2003b}%
  \BibitemOpen
  \bibfield  {author} {\bibinfo {author} {\bibfnamefont {M.}~\bibnamefont
  {Esposito}}\ and\ \bibinfo {author} {\bibfnamefont {P.}~\bibnamefont
  {Gaspard}},\ }\bibfield  {title} {\enquote {\bibinfo {title} {Spin relaxation
  in a complex environment},}\ }\href {\doibase 10.1103/PhysRevE.68.066113}
  {\bibfield  {journal} {\bibinfo  {journal} {Phys. Rev. E}\ }\textbf {\bibinfo
  {volume} {68}},\ \bibinfo {pages} {066113} (\bibinfo {year}
  {2003})}\BibitemShut {NoStop}%
\bibitem [{\citenamefont {Lebowitz}\ and\ \citenamefont
  {Pastur}(2004)}]{LebowitzPasturJPA2004}%
  \BibitemOpen
  \bibfield  {author} {\bibinfo {author} {\bibfnamefont {J.~L.}\ \bibnamefont
  {Lebowitz}}\ and\ \bibinfo {author} {\bibfnamefont {L.}~\bibnamefont
  {Pastur}},\ }\bibfield  {title} {\enquote {\bibinfo {title} {A random matrix
  model of relaxation},}\ }\href {\doibase 10.1088/0305-4470/37/5/004}
  {\bibfield  {journal} {\bibinfo  {journal} {J. Phys. A}\ }\textbf {\bibinfo
  {volume} {37}},\ \bibinfo {pages} {1517} (\bibinfo {year}
  {2004})}\BibitemShut {NoStop}%
\bibitem [{\citenamefont {Gorin}\ \emph {et~al.}(2008)\citenamefont {Gorin},
  \citenamefont {Pineda}, \citenamefont {Kohler},\ and\ \citenamefont
  {Seligman}}]{GorinEtAlNJP2008}%
  \BibitemOpen
  \bibfield  {author} {\bibinfo {author} {\bibfnamefont {T.}~\bibnamefont
  {Gorin}}, \bibinfo {author} {\bibfnamefont {C.}~\bibnamefont {Pineda}},
  \bibinfo {author} {\bibfnamefont {H.}~\bibnamefont {Kohler}}, \ and\ \bibinfo
  {author} {\bibfnamefont {T.~H.}\ \bibnamefont {Seligman}},\ }\bibfield
  {title} {\enquote {\bibinfo {title} {A random matrix theory of
  decoherence},}\ }\href {\doibase 10.1088/1367-2630/10/11/115016} {\bibfield
  {journal} {\bibinfo  {journal} {New Journal of Physics}\ }\textbf {\bibinfo
  {volume} {10}},\ \bibinfo {pages} {115016} (\bibinfo {year}
  {2008})}\BibitemShut {NoStop}%
\bibitem [{\citenamefont {Bartsch}\ \emph {et~al.}(2008)\citenamefont
  {Bartsch}, \citenamefont {Steinigeweg},\ and\ \citenamefont
  {Gemmer}}]{BartschSteinigewegGemmerPRE2008}%
  \BibitemOpen
  \bibfield  {author} {\bibinfo {author} {\bibfnamefont {C.}~\bibnamefont
  {Bartsch}}, \bibinfo {author} {\bibfnamefont {R.}~\bibnamefont
  {Steinigeweg}}, \ and\ \bibinfo {author} {\bibfnamefont {J.}~\bibnamefont
  {Gemmer}},\ }\bibfield  {title} {\enquote {\bibinfo {title} {Occurrence of
  exponential relaxation in closed quantum systems},}\ }\href {\doibase
  10.1103/PhysRevE.77.011119} {\bibfield  {journal} {\bibinfo  {journal} {Phys.
  Rev. E}\ }\textbf {\bibinfo {volume} {77}},\ \bibinfo {pages} {011119}
  (\bibinfo {year} {2008})}\BibitemShut {NoStop}%
\bibitem [{\citenamefont {Genway}\ \emph {et~al.}(2013)\citenamefont {Genway},
  \citenamefont {Ho},\ and\ \citenamefont {Lee}}]{GenwayHoLeePRL2013}%
  \BibitemOpen
  \bibfield  {author} {\bibinfo {author} {\bibfnamefont {S.}~\bibnamefont
  {Genway}}, \bibinfo {author} {\bibfnamefont {A.~F.}\ \bibnamefont {Ho}}, \
  and\ \bibinfo {author} {\bibfnamefont {D.~K.~K.}\ \bibnamefont {Lee}},\
  }\bibfield  {title} {\enquote {\bibinfo {title} {{Dynamics of Thermalization
  and Decoherence of a Nanoscale System}},}\ }\href {\doibase
  10.1103/PhysRevLett.111.130408} {\bibfield  {journal} {\bibinfo  {journal}
  {Phys. Rev. Lett.}\ }\textbf {\bibinfo {volume} {111}},\ \bibinfo {pages}
  {130408} (\bibinfo {year} {2013})}\BibitemShut {NoStop}%
\bibitem [{\citenamefont {Riera-Campeny}\ \emph {et~al.}(2021)\citenamefont
  {Riera-Campeny}, \citenamefont {Sanpera},\ and\ \citenamefont
  {Strasberg}}]{RieraCampenySanperaStrasbergPRXQ2021}%
  \BibitemOpen
  \bibfield  {author} {\bibinfo {author} {\bibfnamefont {A.}~\bibnamefont
  {Riera-Campeny}}, \bibinfo {author} {\bibfnamefont {A.}~\bibnamefont
  {Sanpera}}, \ and\ \bibinfo {author} {\bibfnamefont {P.}~\bibnamefont
  {Strasberg}},\ }\bibfield  {title} {\enquote {\bibinfo {title} {{Quantum
  Systems Correlated with a Finite Bath: Nonequilibrium Dynamics and
  Thermodynamics}},}\ }\href {\doibase 10.1103/PRXQuantum.2.010340} {\bibfield
  {journal} {\bibinfo  {journal} {PRX Quantum}\ }\textbf {\bibinfo {volume}
  {2}},\ \bibinfo {pages} {010340} (\bibinfo {year} {2021})}\BibitemShut
  {NoStop}%
\bibitem [{\citenamefont {Albrecht}\ \emph {et~al.}(2022)\citenamefont
  {Albrecht}, \citenamefont {Baunach},\ and\ \citenamefont
  {Arrasmith}}]{AlbrechtBaunachArrasmithPRD2022}%
  \BibitemOpen
  \bibfield  {author} {\bibinfo {author} {\bibfnamefont {A.}~\bibnamefont
  {Albrecht}}, \bibinfo {author} {\bibfnamefont {R.}~\bibnamefont {Baunach}}, \
  and\ \bibinfo {author} {\bibfnamefont {A.}~\bibnamefont {Arrasmith}},\
  }\bibfield  {title} {\enquote {\bibinfo {title} {Einselection, equilibrium
  and cosmology},}\ }\href {\doibase 10.1103/PhysRevD.106.123507} {\bibfield
  {journal} {\bibinfo  {journal} {Phys. Rev. D}\ }\textbf {\bibinfo {volume}
  {106}},\ \bibinfo {pages} {123507} (\bibinfo {year} {2022})}\BibitemShut
  {NoStop}%
\bibitem [{\citenamefont {Yan}\ and\ \citenamefont
  {Zurek}(2022)}]{YanZurekNJP2022}%
  \BibitemOpen
  \bibfield  {author} {\bibinfo {author} {\bibfnamefont {B.}~\bibnamefont
  {Yan}}\ and\ \bibinfo {author} {\bibfnamefont {W.~H.}\ \bibnamefont
  {Zurek}},\ }\bibfield  {title} {\enquote {\bibinfo {title} {{Decoherence
  factor as a convolution: an interplay between a Gaussian and an exponential
  coherence loss}},}\ }\href {\doibase 10.1088/1367-2630/ac9fe8} {\bibfield
  {journal} {\bibinfo  {journal} {New J. Phys.}\ }\textbf {\bibinfo {volume}
  {24}},\ \bibinfo {pages} {113029} (\bibinfo {year} {2022})}\BibitemShut
  {NoStop}%
\bibitem [{\citenamefont {Das}\ and\ \citenamefont
  {Ghosh}(2022)}]{DasGhoshJSM2022}%
  \BibitemOpen
  \bibfield  {author} {\bibinfo {author} {\bibfnamefont {A.~K.}\ \bibnamefont
  {Das}}\ and\ \bibinfo {author} {\bibfnamefont {A.}~\bibnamefont {Ghosh}},\
  }\bibfield  {title} {\enquote {\bibinfo {title} {{Chaos due to
  symmetry-breaking in deformed Poisson ensemble}},}\ }\href {\doibase
  10.1088/1742-5468/ac70dd} {\bibfield  {journal} {\bibinfo  {journal} {J.
  Stat. Mech.}\ }\textbf {\bibinfo {volume} {2022}},\ \bibinfo {pages} {063101}
  (\bibinfo {year} {2022})}\BibitemShut {NoStop}%
\bibitem [{\citenamefont {Strasberg}\ and\ \citenamefont
  {Schindler}(2024)}]{StrasbergSchindlerSP2024}%
  \BibitemOpen
  \bibfield  {author} {\bibinfo {author} {\bibfnamefont {P.}~\bibnamefont
  {Strasberg}}\ and\ \bibinfo {author} {\bibfnamefont {J.}~\bibnamefont
  {Schindler}},\ }\bibfield  {title} {\enquote {\bibinfo {title} {{Comparative
  Microscopic Study of Entropies and their Production}},}\ }\href {\doibase
  10.21468/SciPostPhys.17.5.143} {\bibfield  {journal} {\bibinfo  {journal}
  {SciPost Phys.}\ }\textbf {\bibinfo {volume} {17}},\ \bibinfo {pages} {143}
  (\bibinfo {year} {2024})}\BibitemShut {NoStop}%
\bibitem [{\citenamefont {{Wolfram Research, Inc.}}(2025)}]{Mathematica14}%
  \BibitemOpen
  \bibfield  {author} {\bibinfo {author} {\bibnamefont {{Wolfram Research,
  Inc.}}},\ }\href {https://www.wolfram.com/mathematica} {\emph {\bibinfo
  {title} {Mathematica}}},\ \bibinfo {edition} {version 14}\ ed.\ (\bibinfo
  {publisher} {Wolfram Research, Inc.},\ \bibinfo {address} {Champaign,
  Illinois},\ \bibinfo {year} {2025})\BibitemShut {NoStop}%
\bibitem [{\citenamefont {Wigner}(1967)}]{Wigner1967}%
  \BibitemOpen
  \bibfield  {author} {\bibinfo {author} {\bibfnamefont {E.}~\bibnamefont
  {Wigner}},\ }\bibfield  {title} {\enquote {\bibinfo {title} {Random matrices
  in physics},}\ }\href {\doibase 10.1137/1009001} {\bibfield  {journal}
  {\bibinfo  {journal} {SIAM Reviews}\ }\textbf {\bibinfo {volume} {9}},\
  \bibinfo {pages} {1--23} (\bibinfo {year} {1967})}\BibitemShut {NoStop}%
\bibitem [{\citenamefont {Brody}\ \emph {et~al.}(1981)\citenamefont {Brody},
  \citenamefont {Flores}, \citenamefont {French}, \citenamefont {Mello},
  \citenamefont {Pandey},\ and\ \citenamefont {Wong}}]{BrodyEtAlRMP1981}%
  \BibitemOpen
  \bibfield  {author} {\bibinfo {author} {\bibfnamefont {T.~A.}\ \bibnamefont
  {Brody}}, \bibinfo {author} {\bibfnamefont {J.}~\bibnamefont {Flores}},
  \bibinfo {author} {\bibfnamefont {J.~B.}\ \bibnamefont {French}}, \bibinfo
  {author} {\bibfnamefont {P.~A.}\ \bibnamefont {Mello}}, \bibinfo {author}
  {\bibfnamefont {A.}~\bibnamefont {Pandey}}, \ and\ \bibinfo {author}
  {\bibfnamefont {S.~S.~M.}\ \bibnamefont {Wong}},\ }\bibfield  {title}
  {\enquote {\bibinfo {title} {Random-matrix physics: spectrum and strength
  fluctuations},}\ }\href {\doibase 10.1103/RevModPhys.53.385} {\bibfield
  {journal} {\bibinfo  {journal} {Rev. Mod. Phys.}\ }\textbf {\bibinfo {volume}
  {53}},\ \bibinfo {pages} {385--479} (\bibinfo {year} {1981})}\BibitemShut
  {NoStop}%
\bibitem [{\citenamefont {Beenakker}(1997)}]{BeenakkerRMP1997}%
  \BibitemOpen
  \bibfield  {author} {\bibinfo {author} {\bibfnamefont {C.~W.~J.}\
  \bibnamefont {Beenakker}},\ }\bibfield  {title} {\enquote {\bibinfo {title}
  {Random-matrix theory of quantum transport},}\ }\href {\doibase
  10.1103/RevModPhys.69.731} {\bibfield  {journal} {\bibinfo  {journal} {Rev.
  Mod. Phys.}\ }\textbf {\bibinfo {volume} {69}},\ \bibinfo {pages} {731--808}
  (\bibinfo {year} {1997})}\BibitemShut {NoStop}%
\bibitem [{\citenamefont {Guhr}\ \emph {et~al.}(1998)\citenamefont {Guhr},
  \citenamefont {A},\ and\ \citenamefont
  {A.Weidenm\"uller}}]{GuhrMuellerGroelingWeidenmuellerPR1998}%
  \BibitemOpen
  \bibfield  {author} {\bibinfo {author} {\bibfnamefont {T.}~\bibnamefont
  {Guhr}}, \bibinfo {author} {\bibfnamefont {M\"uller-Groeling}\ \bibnamefont
  {A}}, \ and\ \bibinfo {author} {\bibfnamefont {H.}~\bibnamefont
  {A.Weidenm\"uller}},\ }\bibfield  {title} {\enquote {\bibinfo {title}
  {Random-matrix theories in quantum physics: common concepts},}\ }\href
  {\doibase 10.1016/S0370-1573(97)00088-4} {\bibfield  {journal} {\bibinfo
  {journal} {Phys. Rep.}\ }\textbf {\bibinfo {volume} {299}},\ \bibinfo {pages}
  {189--425} (\bibinfo {year} {1998})}\BibitemShut {NoStop}%
\bibitem [{\citenamefont {Haake}(2010)}]{HaakeBook2010}%
  \BibitemOpen
  \bibfield  {author} {\bibinfo {author} {\bibfnamefont {F.}~\bibnamefont
  {Haake}},\ }\href {\doibase 10.1007/978-3-642-05428-0} {\emph {\bibinfo
  {title} {Quantum Signatures of Chaos}}}\ (\bibinfo  {publisher}
  {Springer-Verlag},\ \bibinfo {address} {Berlin Heidelberg},\ \bibinfo {year}
  {2010})\BibitemShut {NoStop}%
\bibitem [{\citenamefont {Borgonovi}\ \emph {et~al.}(2016)\citenamefont
  {Borgonovi}, \citenamefont {Izrailev}, \citenamefont {Santos},\ and\
  \citenamefont {Zelevinsky}}]{BorgonoviEtAlPR2016}%
  \BibitemOpen
  \bibfield  {author} {\bibinfo {author} {\bibfnamefont {F.}~\bibnamefont
  {Borgonovi}}, \bibinfo {author} {\bibfnamefont {F.M.}\ \bibnamefont
  {Izrailev}}, \bibinfo {author} {\bibfnamefont {L.~F.}\ \bibnamefont
  {Santos}}, \ and\ \bibinfo {author} {\bibfnamefont {V.~G.}\ \bibnamefont
  {Zelevinsky}},\ }\bibfield  {title} {\enquote {\bibinfo {title} {Quantum
  chaos and thermalization in isolated systems of interacting particles},}\
  }\href
  {https://www.sciencedirect.com/science/article/pii/S0370157316000831?casa_token=ah3werHg1Q8AAAAA:XfkLZFNbO7u3o7UVgUslfAWKu6D-yOy0AkPPTxN4oK7YDmWnsUO0GGnAsLL8ovL-ep3mlmA6Kg}
  {\bibfield  {journal} {\bibinfo  {journal} {Phys. Rep.}\ }\textbf {\bibinfo
  {volume} {626}},\ \bibinfo {pages} {1--58} (\bibinfo {year}
  {2016})}\BibitemShut {NoStop}%
\bibitem [{\citenamefont {D'Alessio}\ \emph {et~al.}(2016)\citenamefont
  {D'Alessio}, \citenamefont {Kafri}, \citenamefont {Polkovnikov},\ and\
  \citenamefont {Rigol}}]{DAlessioEtAlAP2016}%
  \BibitemOpen
  \bibfield  {author} {\bibinfo {author} {\bibfnamefont {L.}~\bibnamefont
  {D'Alessio}}, \bibinfo {author} {\bibfnamefont {Y.}~\bibnamefont {Kafri}},
  \bibinfo {author} {\bibfnamefont {A.}~\bibnamefont {Polkovnikov}}, \ and\
  \bibinfo {author} {\bibfnamefont {M.}~\bibnamefont {Rigol}},\ }\bibfield
  {title} {\enquote {\bibinfo {title} {From quantum chaos and eigenstate
  thermalization to statistical mechanics and thermodynamics},}\ }\href
  {\doibase 10.1080/00018732.2016.1198134} {\bibfield  {journal} {\bibinfo
  {journal} {Adv. Phys.}\ }\textbf {\bibinfo {volume} {65}},\ \bibinfo {pages}
  {239--362} (\bibinfo {year} {2016})}\BibitemShut {NoStop}%
\bibitem [{\citenamefont {Deutsch}(2018)}]{DeutschRPP2018}%
  \BibitemOpen
  \bibfield  {author} {\bibinfo {author} {\bibfnamefont {J.~M.}\ \bibnamefont
  {Deutsch}},\ }\bibfield  {title} {\enquote {\bibinfo {title} {Eigenstate
  thermalization hypothesis},}\ }\href {\doibase 10.1088/1361-6633/aac9f1}
  {\bibfield  {journal} {\bibinfo  {journal} {Rep. Prog. Phys.}\ }\textbf
  {\bibinfo {volume} {81}},\ \bibinfo {pages} {082001} (\bibinfo {year}
  {2018})}\BibitemShut {NoStop}%
\bibitem [{\citenamefont {Deutsch}(1991)}]{DeutschPRA1991}%
  \BibitemOpen
  \bibfield  {author} {\bibinfo {author} {\bibfnamefont {J.~M.}\ \bibnamefont
  {Deutsch}},\ }\bibfield  {title} {\enquote {\bibinfo {title} {Quantum
  statistical mechanics in a closed system},}\ }\href {\doibase
  10.1103/PhysRevA.43.2046} {\bibfield  {journal} {\bibinfo  {journal} {Phys.
  Rev. A}\ }\textbf {\bibinfo {volume} {43}},\ \bibinfo {pages} {2046--2049}
  (\bibinfo {year} {1991})}\BibitemShut {NoStop}%
\bibitem [{\citenamefont {Reimann}\ and\ \citenamefont
  {Dabelow}(2021)}]{ReimannDabelowPRE2021}%
  \BibitemOpen
  \bibfield  {author} {\bibinfo {author} {\bibfnamefont {P.}~\bibnamefont
  {Reimann}}\ and\ \bibinfo {author} {\bibfnamefont {L.}~\bibnamefont
  {Dabelow}},\ }\bibfield  {title} {\enquote {\bibinfo {title} {{Refining
  Deutsch's approach to thermalization}},}\ }\href {\doibase
  10.1103/PhysRevE.103.022119} {\bibfield  {journal} {\bibinfo  {journal}
  {Phys. Rev. E}\ }\textbf {\bibinfo {volume} {103}},\ \bibinfo {pages}
  {022119} (\bibinfo {year} {2021})}\BibitemShut {NoStop}%
\bibitem [{\citenamefont {{Van Kampen}}(1954)}]{VanKampenPhys1954}%
  \BibitemOpen
  \bibfield  {author} {\bibinfo {author} {\bibfnamefont {N.}~\bibnamefont {{Van
  Kampen}}},\ }\bibfield  {title} {\enquote {\bibinfo {title} {Quantum
  statistics of irreversible processes},}\ }\href {\doibase
  10.1016/S0031-8914(54)80074-7} {\bibfield  {journal} {\bibinfo  {journal}
  {Physica}\ }\textbf {\bibinfo {volume} {20}},\ \bibinfo {pages} {603--622}
  (\bibinfo {year} {1954})}\BibitemShut {NoStop}%
\bibitem [{\citenamefont {DeWitt}(1967)}]{DeWittPR1967}%
  \BibitemOpen
  \bibfield  {author} {\bibinfo {author} {\bibfnamefont {B.~S.}\ \bibnamefont
  {DeWitt}},\ }\bibfield  {title} {\enquote {\bibinfo {title} {{Quantum Theory
  of Gravity. I. The Canonical Theory}},}\ }\href {\doibase
  10.1103/PhysRev.160.1113} {\bibfield  {journal} {\bibinfo  {journal} {Phys.
  Rev.}\ }\textbf {\bibinfo {volume} {160}},\ \bibinfo {pages} {1113--1148}
  (\bibinfo {year} {1967})}\BibitemShut {NoStop}%
\bibitem [{\citenamefont {Petz}(1986)}]{PetzCMP1986}%
  \BibitemOpen
  \bibfield  {author} {\bibinfo {author} {\bibfnamefont {D.}~\bibnamefont
  {Petz}},\ }\bibfield  {title} {\enquote {\bibinfo {title} {{Sufficient
  subalgebras and the relative entropy of states of a von Neumann algebra}},}\
  }\href {\doibase 10.1007/BF01212345} {\bibfield  {journal} {\bibinfo
  {journal} {Commun. Math. Phys.}\ }\textbf {\bibinfo {volume} {105}},\
  \bibinfo {pages} {123} (\bibinfo {year} {1986})}\BibitemShut {NoStop}%
\bibitem [{\citenamefont {Petz}(1988)}]{PetzQJM1988}%
  \BibitemOpen
  \bibfield  {author} {\bibinfo {author} {\bibfnamefont {D.}~\bibnamefont
  {Petz}},\ }\bibfield  {title} {\enquote {\bibinfo {title} {{Sufficiency of
  channels over von Neumann algebras}},}\ }\href {\doibase
  10.1093/qmath/39.1.97} {\bibfield  {journal} {\bibinfo  {journal} {Q. J.
  Math.}\ }\textbf {\bibinfo {volume} {39}},\ \bibinfo {pages} {97--108}
  (\bibinfo {year} {1988})}\BibitemShut {NoStop}%
\bibitem [{\citenamefont {Junge}\ \emph {et~al.}(2018)\citenamefont {Junge},
  \citenamefont {Renner}, \citenamefont {Sutter}, \citenamefont {Wilde},\ and\
  \citenamefont {Winter}}]{JungeEtAlAHP2018}%
  \BibitemOpen
  \bibfield  {author} {\bibinfo {author} {\bibfnamefont {M.}~\bibnamefont
  {Junge}}, \bibinfo {author} {\bibfnamefont {R.}~\bibnamefont {Renner}},
  \bibinfo {author} {\bibfnamefont {D.}~\bibnamefont {Sutter}}, \bibinfo
  {author} {\bibfnamefont {M.~M.}\ \bibnamefont {Wilde}}, \ and\ \bibinfo
  {author} {\bibfnamefont {A.}~\bibnamefont {Winter}},\ }\bibfield  {title}
  {\enquote {\bibinfo {title} {Universal recovery maps and approximate
  sufficiency of quantum relative entropy},}\ }\href {\doibase
  10.1007/s00023-018-0716-0} {\bibfield  {journal} {\bibinfo  {journal} {Ann.
  Henri Poincare}\ }\textbf {\bibinfo {volume} {19}},\ \bibinfo {pages}
  {2955–78} (\bibinfo {year} {2018})}\BibitemShut {NoStop}%
\bibitem [{\citenamefont {Parzygnat}\ and\ \citenamefont
  {Buscemi}(2023)}]{ParzygnatBuscemiQuantum2023}%
  \BibitemOpen
  \bibfield  {author} {\bibinfo {author} {\bibfnamefont {A.~J.}\ \bibnamefont
  {Parzygnat}}\ and\ \bibinfo {author} {\bibfnamefont {F.}~\bibnamefont
  {Buscemi}},\ }\bibfield  {title} {\enquote {\bibinfo {title} {Axioms for
  retrodiction: achieving time-reversal symmetry with a prior},}\ }\href
  {\doibase 10.22331/q-2023-05-23-1013} {\bibfield  {journal} {\bibinfo
  {journal} {{Quantum}}\ }\textbf {\bibinfo {volume} {7}},\ \bibinfo {pages}
  {1013} (\bibinfo {year} {2023})}\BibitemShut {NoStop}%
\bibitem [{\citenamefont {Bai}\ \emph {et~al.}(2025)\citenamefont {Bai},
  \citenamefont {Buscemi},\ and\ \citenamefont
  {Scarani}}]{BaiBuscemiScaraniPRL2025}%
  \BibitemOpen
  \bibfield  {author} {\bibinfo {author} {\bibfnamefont {G.}~\bibnamefont
  {Bai}}, \bibinfo {author} {\bibfnamefont {F.}~\bibnamefont {Buscemi}}, \ and\
  \bibinfo {author} {\bibfnamefont {V.}~\bibnamefont {Scarani}},\ }\bibfield
  {title} {\enquote {\bibinfo {title} {Quantum bayes' rule and petz transpose
  map from the minimum change principle},}\ }\href {\doibase 10.1103/5n4p-bxhm}
  {\bibfield  {journal} {\bibinfo  {journal} {Phys. Rev. Lett.}\ }\textbf
  {\bibinfo {volume} {135}},\ \bibinfo {pages} {090203} (\bibinfo {year}
  {2025})}\BibitemShut {NoStop}%
\bibitem [{\citenamefont {Hartle}(1991)}]{HartlePRD1991}%
  \BibitemOpen
  \bibfield  {author} {\bibinfo {author} {\bibfnamefont {J.~B.}\ \bibnamefont
  {Hartle}},\ }\bibfield  {title} {\enquote {\bibinfo {title} {Spacetime coarse
  grainings in nonrelativistic quantum mechanics},}\ }\href {\doibase
  10.1103/PhysRevD.44.3173} {\bibfield  {journal} {\bibinfo  {journal} {Phys.
  Rev. D}\ }\textbf {\bibinfo {volume} {44}},\ \bibinfo {pages} {3173--3196}
  (\bibinfo {year} {1991})}\BibitemShut {NoStop}%
\bibitem [{\citenamefont {Gillani}(2021)}]{GillaniInterview2021}%
  \BibitemOpen
  \bibfield  {author} {\bibinfo {author} {\bibfnamefont {N.}~\bibnamefont
  {Gillani}},\ }\href
  {https://theconversation.com/is-space-infinite-we-asked-5-experts-165742}
  {\enquote {\bibinfo {title} {{Is space infinite? We asked 5 experts}},}\
  }\bibinfo {howpublished} {The Conversation} (\bibinfo {year}
  {2021})\BibitemShut {NoStop}%
\bibitem [{\citenamefont {Nielsen}\ and\ \citenamefont
  {Ninomiya}(1981{\natexlab{a}})}]{NielsenNinomiyaNPB1981a}%
  \BibitemOpen
  \bibfield  {author} {\bibinfo {author} {\bibfnamefont {H.B.}\ \bibnamefont
  {Nielsen}}\ and\ \bibinfo {author} {\bibfnamefont {M.}~\bibnamefont
  {Ninomiya}},\ }\bibfield  {title} {\enquote {\bibinfo {title} {{Absence of
  neutrinos on a lattice: (I). Proof by homotopy theory}},}\ }\href {\doibase
  https://doi.org/10.1016/0550-3213(81)90361-8} {\bibfield  {journal} {\bibinfo
   {journal} {Nucl. Phys. B}\ }\textbf {\bibinfo {volume} {185}},\ \bibinfo
  {pages} {20--40} (\bibinfo {year} {1981}{\natexlab{a}})}\BibitemShut
  {NoStop}%
\bibitem [{\citenamefont {Nielsen}\ and\ \citenamefont
  {Ninomiya}(1981{\natexlab{b}})}]{NielsenNinomiyaNPB1981b}%
  \BibitemOpen
  \bibfield  {author} {\bibinfo {author} {\bibfnamefont {H.B.}\ \bibnamefont
  {Nielsen}}\ and\ \bibinfo {author} {\bibfnamefont {M.}~\bibnamefont
  {Ninomiya}},\ }\bibfield  {title} {\enquote {\bibinfo {title} {{Absence of
  neutrinos on a lattice: (II). Intuitive topological proof}},}\ }\href
  {\doibase https://doi.org/10.1016/0550-3213(81)90524-1} {\bibfield  {journal}
  {\bibinfo  {journal} {Nucl. Phys. B}\ }\textbf {\bibinfo {volume} {193}},\
  \bibinfo {pages} {173--194} (\bibinfo {year}
  {1981}{\natexlab{b}})}\BibitemShut {NoStop}%
\bibitem [{\citenamefont {D.Tong}(2018)}]{TongNotes2018}%
  \BibitemOpen
  \bibfield  {author} {\bibinfo {author} {\bibnamefont {D.Tong}},\ }\href
  {https://www.damtp.cam.ac.uk/user/tong/gaugetheory.html} {\emph {\bibinfo
  {title} {{Lectures on Gauge Theory}}}},\ {Lecture Notes}\ (\bibinfo
  {publisher} {University of Cambridge},\ \bibinfo {year} {2018})\BibitemShut
  {NoStop}%
\bibitem [{\citenamefont {Bekenstein}(1980)}]{BekensteinPT1980}%
  \BibitemOpen
  \bibfield  {author} {\bibinfo {author} {\bibfnamefont {J.~D.}\ \bibnamefont
  {Bekenstein}},\ }\bibfield  {title} {\enquote {\bibinfo {title} {Black‐hole
  thermodynamics},}\ }\href {\doibase 10.1063/1.2913906} {\bibfield  {journal}
  {\bibinfo  {journal} {Phys. Today}\ }\textbf {\bibinfo {volume} {33}},\
  \bibinfo {pages} {24--31} (\bibinfo {year} {1980})}\BibitemShut {NoStop}%
\bibitem [{\citenamefont {Lloyd}(2000)}]{LloydNature2000}%
  \BibitemOpen
  \bibfield  {author} {\bibinfo {author} {\bibfnamefont {S.}~\bibnamefont
  {Lloyd}},\ }\bibfield  {title} {\enquote {\bibinfo {title} {{Ultimate
  physical limits to computation}},}\ }\href {\doibase 10.1038/35023282}
  {\bibfield  {journal} {\bibinfo  {journal} {Nature}\ }\textbf {\bibinfo
  {volume} {406}},\ \bibinfo {pages} {1047} (\bibinfo {year}
  {2000})}\BibitemShut {NoStop}%
\bibitem [{\citenamefont {Lloyd}(2002)}]{LloydPRL2002}%
  \BibitemOpen
  \bibfield  {author} {\bibinfo {author} {\bibfnamefont {S.}~\bibnamefont
  {Lloyd}},\ }\bibfield  {title} {\enquote {\bibinfo {title} {Computational
  capacity of the universe},}\ }\href {\doibase 10.1103/PhysRevLett.88.237901}
  {\bibfield  {journal} {\bibinfo  {journal} {Phys. Rev. Lett.}\ }\textbf
  {\bibinfo {volume} {88}},\ \bibinfo {pages} {237901} (\bibinfo {year}
  {2002})}\BibitemShut {NoStop}%
\bibitem [{\citenamefont {Dyson}(1979)}]{DysonRMP1979}%
  \BibitemOpen
  \bibfield  {author} {\bibinfo {author} {\bibfnamefont {F.~J.}\ \bibnamefont
  {Dyson}},\ }\bibfield  {title} {\enquote {\bibinfo {title} {{Time without
  end: Physics and biology in an open universe}},}\ }\href {\doibase
  10.1103/RevModPhys.51.447} {\bibfield  {journal} {\bibinfo  {journal} {Rev.
  Mod. Phys.}\ }\textbf {\bibinfo {volume} {51}},\ \bibinfo {pages} {447--460}
  (\bibinfo {year} {1979})}\BibitemShut {NoStop}%
\bibitem [{\citenamefont {Price}(1996)}]{PriceBook1996}%
  \BibitemOpen
  \bibfield  {author} {\bibinfo {author} {\bibfnamefont {H.}~\bibnamefont
  {Price}},\ }\href@noop {} {\emph {\bibinfo {title} {{Time's Arrow and
  Archimedes' Point}}}}\ (\bibinfo  {publisher} {Oxford University Press},\
  \bibinfo {address} {New York},\ \bibinfo {year} {1996})\BibitemShut {NoStop}%
\bibitem [{\citenamefont {Barbour}(1999)}]{BarbourBook1999}%
  \BibitemOpen
  \bibfield  {author} {\bibinfo {author} {\bibfnamefont {J.}~\bibnamefont
  {Barbour}},\ }\href@noop {} {\emph {\bibinfo {title} {{The End of Time: The
  Next Revolution in Physics}}}}\ (\bibinfo  {publisher} {Oxford University
  Press},\ \bibinfo {address} {Oxford},\ \bibinfo {year} {1999})\BibitemShut
  {NoStop}%
\bibitem [{\citenamefont {Kiefer}(2017)}]{KieferInBook2017}%
  \BibitemOpen
  \bibfield  {author} {\bibinfo {author} {\bibfnamefont {C.}~\bibnamefont
  {Kiefer}},\ }\enquote {\bibinfo {title} {{Towards a Theory of Spacetime
  Theories. Einstein Studies}},}\ \ (\bibinfo  {publisher} {Birkhäuser},\
  \bibinfo {address} {New York},\ \bibinfo {year} {2017})\ Chap.\ \bibinfo
  {chapter} {{Does Time Exist in Quantum Gravity?}}\BibitemShut {Stop}%
\bibitem [{\citenamefont {Barbour}(2009)}]{BarbourArXiv2009}%
  \BibitemOpen
  \bibfield  {author} {\bibinfo {author} {\bibfnamefont {J.}~\bibnamefont
  {Barbour}},\ }\bibfield  {title} {\enquote {\bibinfo {title} {{The Nature of
  Time}},}\ }\href {https://arxiv.org/abs/0903.3489} {\bibfield  {journal}
  {\bibinfo  {journal} {arXiv: 0903.3489}\ } (\bibinfo {year}
  {2009})}\BibitemShut {NoStop}%
\bibitem [{\citenamefont {Fert\'e}\ \emph {et~al.}(2025)\citenamefont
  {Fert\'e}, \citenamefont {Farci},\ and\ \citenamefont
  {Cao}}]{FertleFarciCaoArXiv2025}%
  \BibitemOpen
  \bibfield  {author} {\bibinfo {author} {\bibfnamefont {B.}~\bibnamefont
  {Fert\'e}}, \bibinfo {author} {\bibfnamefont {D.}~\bibnamefont {Farci}}, \
  and\ \bibinfo {author} {\bibfnamefont {X.}~\bibnamefont {Cao}},\ }\bibfield
  {title} {\enquote {\bibinfo {title} {Decoherent histories with(out)
  objectivity in a (broken) apparatus},}\ }\href
  {https://arxiv.org/abs/2508.16482} {\bibfield  {journal} {\bibinfo  {journal}
  {arXiv 2508.16482}\ } (\bibinfo {year} {2025})}\BibitemShut {NoStop}%
\bibitem [{\citenamefont {Arefyeva}\ and\ \citenamefont
  {Polyakov}(2025)}]{ArefyevaPolyakovArXiv2025}%
  \BibitemOpen
  \bibfield  {author} {\bibinfo {author} {\bibfnamefont {N.}~\bibnamefont
  {Arefyeva}}\ and\ \bibinfo {author} {\bibfnamefont {E.}~\bibnamefont
  {Polyakov}},\ }\bibfield  {title} {\enquote {\bibinfo {title} {{Emergence of
  non-Markovian Decoherent Histories in Integrable Environment: A ``Tape
  Recorder'' Model for Local Quantum Observables}},}\ }\href {\doibase
  10.48550/arXiv.2509.00845} {\bibfield  {journal} {\bibinfo  {journal} {arXiv
  2509.00845}\ } (\bibinfo {year} {2025}),\
  10.48550/arXiv.2509.00845}\BibitemShut {NoStop}%
\bibitem [{\citenamefont {Maudlin}(2019)}]{MaudlinBook2019}%
  \BibitemOpen
  \bibfield  {author} {\bibinfo {author} {\bibfnamefont {T.}~\bibnamefont
  {Maudlin}},\ }\href@noop {} {\emph {\bibinfo {title} {Philosophy of Physics:
  Quantum Theory}}}\ (\bibinfo  {publisher} {Princeton University Press},\
  \bibinfo {address} {Princeton},\ \bibinfo {year} {2019})\BibitemShut
  {NoStop}%
\bibitem [{\citenamefont {Vaidman}(2020)}]{Vaidman2020}%
  \BibitemOpen
  \bibfield  {author} {\bibinfo {author} {\bibfnamefont {L.}~\bibnamefont
  {Vaidman}},\ }\enquote {\bibinfo {title} {{Derivations of the Born Rule}},}\
  in\ \href {\doibase 10.1007/978-3-030-34316-3_26} {\emph {\bibinfo
  {booktitle} {Quantum, Probability, Logic: The Work and Influence of Itamar
  Pitowsky}}},\ \bibinfo {editor} {edited by\ \bibinfo {editor} {\bibfnamefont
  {M.}~\bibnamefont {Hemmo}}\ and\ \bibinfo {editor} {\bibfnamefont
  {O.}~\bibnamefont {Shenker}}}\ (\bibinfo  {publisher} {Springer International
  Publishing},\ \bibinfo {address} {Cham},\ \bibinfo {year} {2020})\ Chap.\
  \bibinfo {chapter} {{Derivations of the Born Rule}}, pp.\ \bibinfo {pages}
  {567--584}\BibitemShut {NoStop}%
\bibitem [{\citenamefont {Ridley}(2025)}]{RidleyEJPS2025}%
  \BibitemOpen
  \bibfield  {author} {\bibinfo {author} {\bibfnamefont {M.}~\bibnamefont
  {Ridley}},\ }\bibfield  {title} {\enquote {\bibinfo {title} {{Many
  retrocausal worlds: A foundation for quantum probability}},}\ }\href
  {\doibase 10.1007/s13194-025-00696-8} {\bibfield  {journal} {\bibinfo
  {journal} {Euro. J. Phil. Sci.}\ }\textbf {\bibinfo {volume} {15}},\ \bibinfo
  {pages} {70} (\bibinfo {year} {2025})}\BibitemShut {NoStop}%
\bibitem [{\citenamefont {Hartle}(1968)}]{HartleAJP1968}%
  \BibitemOpen
  \bibfield  {author} {\bibinfo {author} {\bibfnamefont {J.~B.}\ \bibnamefont
  {Hartle}},\ }\bibfield  {title} {\enquote {\bibinfo {title} {{Quantum
  Mechanics of Individual Systems}},}\ }\href {\doibase 10.1119/1.1975096}
  {\bibfield  {journal} {\bibinfo  {journal} {Am. J. Phys.}\ }\textbf {\bibinfo
  {volume} {36}},\ \bibinfo {pages} {704--712} (\bibinfo {year}
  {1968})}\BibitemShut {NoStop}%
\bibitem [{\citenamefont {DeWitt}\ and\ \citenamefont
  {Graham}(1973)}]{DeWittGrahamBook1973}%
  \BibitemOpen
  \bibinfo {editor} {\bibfnamefont {B.~S.}\ \bibnamefont {DeWitt}}\ and\
  \bibinfo {editor} {\bibfnamefont {N.}~\bibnamefont {Graham}},\ eds.,\
  \href@noop {} {\emph {\bibinfo {title} {The many-worlds interpretation of
  quantum mechanics}}},\ Vol.~\bibinfo {volume} {63}\ (\bibinfo  {publisher}
  {Princeton University Press},\ \bibinfo {address} {Princeton},\ \bibinfo
  {year} {1973})\BibitemShut {NoStop}%
\bibitem [{\citenamefont {Vaidman}(1998)}]{VaidmanISPS1998}%
  \BibitemOpen
  \bibfield  {author} {\bibinfo {author} {\bibfnamefont {L.}~\bibnamefont
  {Vaidman}},\ }\bibfield  {title} {\enquote {\bibinfo {title} {{On
  schizophrenic experiences of the neutron or why we should believe in the
  many‐worlds interpretation of quantum theory}},}\ }\href {\doibase
  10.1080/02698599808573600} {\bibfield  {journal} {\bibinfo  {journal} {Int.
  Stud. Phil. Sci.}\ }\textbf {\bibinfo {volume} {12}},\ \bibinfo {pages} {245}
  (\bibinfo {year} {1998})}\BibitemShut {NoStop}%
\bibitem [{\citenamefont {Deutsch}(1999)}]{DeutschPRSCA1999}%
  \BibitemOpen
  \bibfield  {author} {\bibinfo {author} {\bibfnamefont {D.}~\bibnamefont
  {Deutsch}},\ }\bibfield  {title} {\enquote {\bibinfo {title} {Quantum theory
  of probability and decisions},}\ }\href {\doibase 10.1098/rspa.1999.0443}
  {\bibfield  {journal} {\bibinfo  {journal} {Proc. R. Soc. London, Ser. A}\
  }\textbf {\bibinfo {volume} {455}},\ \bibinfo {pages} {3129--3137} (\bibinfo
  {year} {1999})}\BibitemShut {NoStop}%
\bibitem [{\citenamefont {Wallace}(2003)}]{WallaceSHPSB2003}%
  \BibitemOpen
  \bibfield  {author} {\bibinfo {author} {\bibfnamefont {David}\ \bibnamefont
  {Wallace}},\ }\bibfield  {title} {\enquote {\bibinfo {title} {{Everettian
  rationality: defending Deutsch's approach to probability in the Everett
  interpretation}},}\ }\href {\doibase
  https://doi.org/10.1016/S1355-2198(03)00036-4} {\bibfield  {journal}
  {\bibinfo  {journal} {Stud. Hist. Phil. Sci. B}\ }\textbf {\bibinfo {volume}
  {34}},\ \bibinfo {pages} {415--439} (\bibinfo {year} {2003})}\BibitemShut
  {NoStop}%
\bibitem [{\citenamefont {Hanson}(2003)}]{HansonFP2003}%
  \BibitemOpen
  \bibfield  {author} {\bibinfo {author} {\bibfnamefont {R.}~\bibnamefont
  {Hanson}},\ }\bibfield  {title} {\enquote {\bibinfo {title} {{When Worlds
  Collide: Quantum Probability from Observer Selection?}}}\ }\href {\doibase
  10.1023/A:1025642019178} {\bibfield  {journal} {\bibinfo  {journal} {Found.
  Phys.}\ }\textbf {\bibinfo {volume} {33}},\ \bibinfo {pages} {1129--1150}
  (\bibinfo {year} {2003})}\BibitemShut {NoStop}%
\bibitem [{\citenamefont {Greaves}(2004)}]{GreavesSHPSB2004}%
  \BibitemOpen
  \bibfield  {author} {\bibinfo {author} {\bibfnamefont {H.}~\bibnamefont
  {Greaves}},\ }\bibfield  {title} {\enquote {\bibinfo {title} {{Understanding
  Deutsch's probability in a deterministic multiverse}},}\ }\href {\doibase
  https://doi.org/10.1016/j.shpsb.2004.04.006} {\bibfield  {journal} {\bibinfo
  {journal} {Stud. Hist. Phil. Sci. B}\ }\textbf {\bibinfo {volume} {35}},\
  \bibinfo {pages} {423--456} (\bibinfo {year} {2004})}\BibitemShut {NoStop}%
\bibitem [{\citenamefont {Zurek}(2005)}]{ZurekPRA2005}%
  \BibitemOpen
  \bibfield  {author} {\bibinfo {author} {\bibfnamefont {W.~H.}\ \bibnamefont
  {Zurek}},\ }\bibfield  {title} {\enquote {\bibinfo {title} {{Probabilities
  from entanglement, Born's rule
  ${p}_{k}={\ensuremath{\mid}{\ensuremath{\psi}}_{k}\ensuremath{\mid}}^{2}$
  from envariance}},}\ }\href {\doibase 10.1103/PhysRevA.71.052105} {\bibfield
  {journal} {\bibinfo  {journal} {Phys. Rev. A}\ }\textbf {\bibinfo {volume}
  {71}},\ \bibinfo {pages} {052105} (\bibinfo {year} {2005})}\BibitemShut
  {NoStop}%
\bibitem [{\citenamefont {Buniy}\ \emph {et~al.}(2006)\citenamefont {Buniy},
  \citenamefont {Hsu},\ and\ \citenamefont {Zee}}]{BuniyHsuZeePLB2006}%
  \BibitemOpen
  \bibfield  {author} {\bibinfo {author} {\bibfnamefont {R.~V.}\ \bibnamefont
  {Buniy}}, \bibinfo {author} {\bibfnamefont {S.~D.H.}\ \bibnamefont {Hsu}}, \
  and\ \bibinfo {author} {\bibfnamefont {A.}~\bibnamefont {Zee}},\ }\bibfield
  {title} {\enquote {\bibinfo {title} {Discreteness and the origin of
  probability in quantum mechanics},}\ }\href {\doibase
  https://doi.org/10.1016/j.physletb.2006.07.050} {\bibfield  {journal}
  {\bibinfo  {journal} {Phys. Lett. B}\ }\textbf {\bibinfo {volume} {640}},\
  \bibinfo {pages} {219--223} (\bibinfo {year} {2006})}\BibitemShut {NoStop}%
\bibitem [{\citenamefont {Saunders}(2010)}]{SaundersInBook2010}%
  \BibitemOpen
  \bibfield  {author} {\bibinfo {author} {\bibfnamefont {S.}~\bibnamefont
  {Saunders}},\ }\enquote {\bibinfo {title} {{Many Worlds? Everett, Quantum
  Theory and Reality}},}\ \ (\bibinfo  {publisher} {Oxford University Press},\
  \bibinfo {address} {Oxford},\ \bibinfo {year} {2010})\ Chap.\ \bibinfo
  {chapter} {{Chance in the Everett Interpretation}}, pp.\ \bibinfo {pages}
  {181--205}\BibitemShut {NoStop}%
\bibitem [{\citenamefont {Aguirre}\ and\ \citenamefont
  {Tegmark}(2011)}]{AguirreTegmarkPRD2011}%
  \BibitemOpen
  \bibfield  {author} {\bibinfo {author} {\bibfnamefont {A.}~\bibnamefont
  {Aguirre}}\ and\ \bibinfo {author} {\bibfnamefont {M.}~\bibnamefont
  {Tegmark}},\ }\bibfield  {title} {\enquote {\bibinfo {title} {Born in an
  infinite universe: A cosmological interpretation of quantum mechanics},}\
  }\href {\doibase 10.1103/PhysRevD.84.105002} {\bibfield  {journal} {\bibinfo
  {journal} {Phys. Rev. D}\ }\textbf {\bibinfo {volume} {84}},\ \bibinfo
  {pages} {105002} (\bibinfo {year} {2011})}\BibitemShut {NoStop}%
\bibitem [{\citenamefont {Sebens}\ and\ \citenamefont
  {Carroll}(2018)}]{SebensCarrollBJP2018}%
  \BibitemOpen
  \bibfield  {author} {\bibinfo {author} {\bibfnamefont {C.~T.}\ \bibnamefont
  {Sebens}}\ and\ \bibinfo {author} {\bibfnamefont {S.~M.}\ \bibnamefont
  {Carroll}},\ }\bibfield  {title} {\enquote {\bibinfo {title} {{Self-locating
  Uncertainty and the Origin of Probability in Everettian Quantum
  Mechanics}},}\ }\href {\doibase 10.1093/bjps/axw004} {\bibfield  {journal}
  {\bibinfo  {journal} {Brit. J. Phil. Sci.}\ }\textbf {\bibinfo {volume}
  {69}},\ \bibinfo {pages} {25--74} (\bibinfo {year} {2018})}\BibitemShut
  {NoStop}%
\bibitem [{\citenamefont {Masanes}\ \emph {et~al.}(2019)\citenamefont
  {Masanes}, \citenamefont {Galley},\ and\ \citenamefont
  {M\"uller}}]{MasanesGalleyMuellerNC2019}%
  \BibitemOpen
  \bibfield  {author} {\bibinfo {author} {\bibfnamefont {L.}~\bibnamefont
  {Masanes}}, \bibinfo {author} {\bibfnamefont {T.~D.}\ \bibnamefont {Galley}},
  \ and\ \bibinfo {author} {\bibfnamefont {M.~P.}\ \bibnamefont {M\"uller}},\
  }\bibfield  {title} {\enquote {\bibinfo {title} {The measurement postulates
  of quantum mechanics are operationally redundant},}\ }\href {\doibase
  10.1038/s41467-019-09348-x} {\bibfield  {journal} {\bibinfo  {journal} {Nat.
  Comm.}\ }\textbf {\bibinfo {volume} {10}},\ \bibinfo {pages} {1361} (\bibinfo
  {year} {2019})}\BibitemShut {NoStop}%
\bibitem [{\citenamefont {McQueen}\ and\ \citenamefont
  {Vaidman}(2019)}]{McQueenVaidmanSHPSB2019}%
  \BibitemOpen
  \bibfield  {author} {\bibinfo {author} {\bibfnamefont {K.~J.}\ \bibnamefont
  {McQueen}}\ and\ \bibinfo {author} {\bibfnamefont {L.}~\bibnamefont
  {Vaidman}},\ }\bibfield  {title} {\enquote {\bibinfo {title} {In defence of
  the self-location uncertainty account of probability in the many-worlds
  interpretation},}\ }\href {\doibase
  https://doi.org/10.1016/j.shpsb.2018.10.003} {\bibfield  {journal} {\bibinfo
  {journal} {Stud. His. Phil. Sci. B}\ }\textbf {\bibinfo {volume} {66}},\
  \bibinfo {pages} {14--23} (\bibinfo {year} {2019})}\BibitemShut {NoStop}%
\bibitem [{\citenamefont {Saunders}(2021)}]{SaundersPRSA2021}%
  \BibitemOpen
  \bibfield  {author} {\bibinfo {author} {\bibfnamefont {S.}~\bibnamefont
  {Saunders}},\ }\bibfield  {title} {\enquote {\bibinfo {title}
  {{Branch-counting in the Everett interpretation of quantum mechanics}},}\
  }\href {\doibase 10.1098/rspa.2021.0600} {\bibfield  {journal} {\bibinfo
  {journal} {Proc. R. Soc. A}\ }\textbf {\bibinfo {volume} {477}},\ \bibinfo
  {pages} {20210600} (\bibinfo {year} {2021})}\BibitemShut {NoStop}%
\bibitem [{\citenamefont {Hsu}(2021)}]{HsuMPLA2021}%
  \BibitemOpen
  \bibfield  {author} {\bibinfo {author} {\bibfnamefont {S.~D.~H.}\
  \bibnamefont {Hsu}},\ }\bibfield  {title} {\enquote {\bibinfo {title}
  {{Discrete Hilbert space, the Born Rule, and quantum gravity}},}\ }\href
  {\doibase 10.1142/S0217732321500139} {\bibfield  {journal} {\bibinfo
  {journal} {Mod. Phys. Lett. A}\ }\textbf {\bibinfo {volume} {36}},\ \bibinfo
  {pages} {2150013} (\bibinfo {year} {2021})}\BibitemShut {NoStop}%
\bibitem [{\citenamefont {Short}(2023)}]{ShortQuantum2023}%
  \BibitemOpen
  \bibfield  {author} {\bibinfo {author} {\bibfnamefont {A.~J.}\ \bibnamefont
  {Short}},\ }\bibfield  {title} {\enquote {\bibinfo {title} {Probability in
  many-worlds theories},}\ }\href {\doibase 10.22331/q-2023-04-06-971}
  {\bibfield  {journal} {\bibinfo  {journal} {Quantum}\ }\textbf {\bibinfo
  {volume} {7}},\ \bibinfo {pages} {971} (\bibinfo {year} {2023})}\BibitemShut
  {NoStop}%
\bibitem [{\citenamefont {Lazarovici}(2023)}]{LazaroviciQR2023}%
  \BibitemOpen
  \bibfield  {author} {\bibinfo {author} {\bibfnamefont {D.}~\bibnamefont
  {Lazarovici}},\ }\bibfield  {title} {\enquote {\bibinfo {title} {{How Everett
  Solved the Probability Problem in Everettian Quantum Mechanics}},}\ }\href
  {\doibase 10.3390/quantum5020026} {\bibfield  {journal} {\bibinfo  {journal}
  {Quantum Rep.}\ }\textbf {\bibinfo {volume} {5}},\ \bibinfo {pages}
  {407--417} (\bibinfo {year} {2023})}\BibitemShut {NoStop}%
\bibitem [{\citenamefont {Ridley}(2023)}]{RidleyQR2023}%
  \BibitemOpen
  \bibfield  {author} {\bibinfo {author} {\bibfnamefont {M.}~\bibnamefont
  {Ridley}},\ }\bibfield  {title} {\enquote {\bibinfo {title} {{Quantum
  Probability from Temporal Structure}},}\ }\href {\doibase
  10.3390/quantum5020033} {\bibfield  {journal} {\bibinfo  {journal} {Quantum
  Rep.}\ }\textbf {\bibinfo {volume} {5}},\ \bibinfo {pages} {496--509}
  (\bibinfo {year} {2023})}\BibitemShut {NoStop}%
\bibitem [{\citenamefont {Weidner}(2025)}]{WeidnerArXiv2025}%
  \BibitemOpen
  \bibfield  {author} {\bibinfo {author} {\bibfnamefont {M.}~\bibnamefont
  {Weidner}},\ }\href {https://arxiv.org/abs/2504.06495} {\enquote {\bibinfo
  {title} {{Unified Quantum Dynamics: The Emergence of the Born Rule}},}\ }
  (\bibinfo {year} {2025}),\ \Eprint {http://arxiv.org/abs/2504.06495}
  {arXiv:2504.06495 [quant-ph]} \BibitemShut {NoStop}%
\bibitem [{\citenamefont {Kent}(2010)}]{KentInBook2010}%
  \BibitemOpen
  \bibfield  {author} {\bibinfo {author} {\bibfnamefont {A.}~\bibnamefont
  {Kent}},\ }\enquote {\bibinfo {title} {{Many Worlds? Everett, Quantum Theory
  and Reality}},}\ \ (\bibinfo  {publisher} {Oxford University Press},\
  \bibinfo {address} {Oxford},\ \bibinfo {year} {2010})\ Chap.\ \bibinfo
  {chapter} {{One World Versus Many: The Inadequacy of Everettian Accounts of
  Evolution, Probability and Scientific Confirmation}}, pp.\ \bibinfo {pages}
  {307--354}\BibitemShut {NoStop}%
\bibitem [{\citenamefont {Bengtsson}\ and\ \citenamefont
  {\.{Z}yczkowski}(2017)}]{BengtssonZyczkowskiBook2017}%
  \BibitemOpen
  \bibfield  {author} {\bibinfo {author} {\bibfnamefont {I.}~\bibnamefont
  {Bengtsson}}\ and\ \bibinfo {author} {\bibfnamefont {K.}~\bibnamefont
  {\.{Z}yczkowski}},\ }\href@noop {} {\emph {\bibinfo {title} {Geometry of
  quantum states: an introduction to quantum entanglement}}},\ \bibinfo
  {edition} {2nd}\ ed.\ (\bibinfo  {publisher} {Cambridge University Press},\
  \bibinfo {address} {Cambridge},\ \bibinfo {year} {2017})\BibitemShut
  {NoStop}%
\bibitem [{\citenamefont {Fanizza}(2023)}]{FanizzaPC2023}%
  \BibitemOpen
  \bibfield  {author} {\bibinfo {author} {\bibfnamefont {M.}~\bibnamefont
  {Fanizza}},\ }\href@noop {} {\bibfield  {journal} {\bibinfo  {journal}
  {private communication}\ } (\bibinfo {year} {2023})}\BibitemShut {NoStop}%
\end{thebibliography}%

\newpage\clearpage
\onecolumngrid
\section{Supplement: Mathematical derivations}\label{sec SM proofs}

\subsection{Proof of pairwise commutativity from the DHC}\label{sec SM pairwise commutativity}

Before we prove the statement, we give a counterexample that shows that the DHC for a specific initial state is not sufficient to establish pairwise commutativity. \\

\noindent\bb{Counterexample.} The model results from a unitary dilation of a repeatedly measured quantum system using additional apparatus subsystems that keeps track of the measurement results, an extension of a setting originally conceived by von Neumann. Specifically, we consider a ``system'' $S$ and $L$ ancilla systems $A_1, \dots, A_L$ for the apparatus. The unitary evolution $U_S$ of $S$ alone is interrupted at times $t_k = k\Delta t$ by a unitary $V_k$ acting on $S$ and $A_k$ such that $|\Psi_L\rangle = V_L U_S \cdots V_2 U_S V_1 U_S |\Psi_0\rangle$. Here, $V_k = \sum_{x=0}^{M-1} \Pi^S_x\otimes S_x^{(k)}$ is a unitary ``measurement'' operator, where $\{\Pi_x^S\}$ is a complete set of orthogonal system projectors and $S_x^{(k)}$ the unitary shift operator acting on $A_k$, defined by $S_x|j\rangle = |j+x \mod M\rangle$ where $(|j\rangle)$ is some orthonormal basis in the ancilla space.

Now, let us consider the specific initial state $|\Psi_0\rangle = |\psi_0\rangle_S\otimes |0\rangle_{A_1}\otimes\cdots\otimes|0\rangle_{A_L}$. Then, it follows that history states $|\psi'(\bs x)\rangle$ defined by projectors $\Pi_{x_k}^S$ at times $t_k$ for $k\in\{1,\dots,L\}$ obey the DHC. Note that $[V_k,\Pi_{x_k}^S] = 0$, so the order of the $V_k$ and the history projectors does not matter. Yet, for system evolutions $U_S$ that do not commute with $\Pi_x^S$ one finds that the resulting Heisenberg picture projectors do not commute in general. Furthermore, note that the orthogonal record states are proportional to $|\bs x\rangle_{\bs A} \equiv |x_1\rangle_{A_1}\otimes\cdots\otimes|x_L\rangle_{A_L}$ such that one possible preferred decomposition of the final wave function is $|\Psi_L\rangle = \sum_{\bs x} |\psi'(\bs x)\rangle_S\otimes|\bs x\rangle_{\bs A}$ with $|\psi'(\bs x)\rangle_S = \lr{\bs x|_{\bs A}|\Psi_L}$. \\

\noindent\bb{Proof of pairwise commutativity.} We start by rewriting the DHC using the trace and eqn~(\ref{eq histories Heisenberg}):
\begin{equation}
	\lr{\psi'(\bs x)|\psi'(\bs x')}
	= \mbox{tr}\left\{ \Pi_{x_1}^{(L-1)} \cdots \Pi_{x_{L-1}}^{(1)}\Pi_{x_L}^{(0)} \Pi_{x'_L}^{(0)}\Pi_{x'_{L-1}}^{(1)} \cdots \Pi_{x'_1}^{(L-1)}U_{t_L-t_0}|\Psi_0\rl\Psi_0|U_{t_L-t_0}^\dagger \right\}
	= 0 ~~~ \forall \bs x\neq\bs x'.
\end{equation}
We use that decoherence of all histories of length $L$ implies decoherence of any history defined on any subset of times. Hence, for $t_k > t_j$ we find for $x_k=x'_k$ and $x_j\neq x'_j$
\begin{equation}
	\lr{\psi'(x_k,x_j)|\psi'(x_k,x'_j)}
	= \mbox{tr}\left\{ \Pi_{x_j}^{(L-j)} \Pi_{x_k}^{(L-k)} \Pi_{x'_j}^{(L-j)}U_{t_L-t_0}|\Psi_0\rl\Psi_0|U_{t_L-t_0}^\dagger \right\} = 0.
\end{equation}
Since $\mbox{tr}\{A\rho\} = 0$ for a basis of density matrices $\rho$ implies $A=0$, we find that the DHC for an informationally complete set $\{|\Psi_0\rangle\}$, by which we mean that $\{|\Psi_0\rl\Psi_0|\}$ is a basis of the space of $D\times D$ complex matrices, implies
\begin{equation}
	\Pi_{x_j}^{(L-j)} \Pi_{x_k}^{(L-k)} \Pi_{x'_j}^{(L-j)} = 0.
\end{equation}
For simplicity of notation, we now label $P_j \equiv \Pi_{x_j}^{(L-j)}$ and $Q_k \equiv \Pi_{x_k}^{(L-k)}$, where $\{P_j\}$ and $\{Q_k\}$ are a complete set of orthogonal projectors. We now claim that $P_jQ_kP_\ell = 0$ for all $k$ and all $j\neq\ell$ implies $[P_j,Q_k] = 0$ for all $j$ and $k$.

To prove it, we set $R_{jk} \equiv [P_j,Q_k]$. Multiplying $R_{jk}$ by $P_\ell$ and using $P_jQ_kP_\ell = 0$, we find $R_{jk}P_\ell = 0$ for all $k$ and all $j\neq\ell$. This implies $R_{jk}P_j = R_{jk}$. Using this property and multiplying $R_{jk}$ by $P_j$ from the left, we further find $P_j R_{jk} = 0$. Finally, note that $R_{jk}$ is skew Hermitian: $R_{jk}^\dagger = -R_{jk}$. Taking the Hermitian conjugate of $R_{jk} = R_{jk}P_j$ and using $P_j R_{jk} = 0$, we find $R_{jk} = [P_j,Q_k] = 0$.

\subsection{Proof of the lower bound in eqn~(\ref{eq N bounds})}\label{sec SM number of histories}

Before we present two different proofs for the lower bound in eqn~(\ref{eq N bounds}), and for easier comparison with related results in the literature (which often consider a real vector space $\mathbb{R}^D$), it is useful to note 
\begin{equation}\label{eq N relations}
	N_\mathbb{R}(D,\epsilon) \le N_{\mathbb{C}}(D,\epsilon)
	\le N_{\mathbb{R}}(2D,\epsilon),
\end{equation}
where $N_\mathbb{R}(D,\epsilon)$ denotes the maximum number of vectors $|\psi_i\rangle\in\mathbb{R}^D$, whose mutual scalar product satisfies $|\lr{\psi_i|\psi_j}_{\mathbb{R}^D}| \le \epsilon$ for $i\neq j$. The first inequality follows from the fact that $\mathbb{R}^D\subset\mathbb{C}^D$. For the second inequality we use the embedding $f(\psi) = [\Re|\psi\rangle,\Im|\psi\rangle]^T\in\mathbb{R}^{2D}$ for $|\psi\rangle\in\mathbb{C}^D$ together with the facts that $\lr{f(\psi)|f(\phi)}_{\mathbb{R}^{2D}} = \Re\lr{\psi|\phi}_{\mathbb{C}^D}$ and $|\Re\lr{\psi|\phi}_{\mathbb{C}^D}| \le |\lr{\psi|\phi}_{\mathbb{C}^D}|$.

\subsubsection{Geometric proof}

The idea of this proof is best illustrated by first assuming that we are in $\mathbb{R}^D$. The set of normalized states then equals $\bb S^{D-1}$ (the $D-1$ dimensional unit sphere) and all states $|\psi\rangle$ with fidelity $|\lr{\chi|\psi}|^2 \le \cos^2(\theta)$ define a spherical cap $\bb C_\chi(\theta)$ with polar angle $\theta$ around a vector $|\chi\rangle$. If $A(\bb M)$ denotes the area (or volume) of some manifold $\bb M$, then $A(\bb S^{D-1})/A[\bb C(\theta)]$ is the number of spherical cap areas that fit into the $D-1$ dimensional unit sphere (since this number is independent of $|\chi\rangle$, we dropped it in the notation). Then, starting with an arbitrary $|\psi_1\rangle$, we successively place vectors $|\psi_k\rangle$ at the boundary of the previous caps $\bigcup_{j<k}\bb C_{\psi_j}(\theta)$ choosing $\theta = \arccos(\epsilon)$. This ensures that all the $|\psi_j\rangle$ are at least an angle $\theta$ apart or, equivalently, that $|\lr{\psi_j|\psi_k}|\le\epsilon$ for all $j<k$. This explicit construction can be repeated $A(\bb S^{D-1})/A[\bb C(\arccos \epsilon)]$ many times until we run out of free volume, thereby providing a lower bound on $N_\mathbb{R}(D,\epsilon)$. Note that the caps themselves might get deformed or ripped apart while fitting their areas onto the unit sphere.

The same argument works in the complex case, but, owing to an overall phase redundancy in quantum mechanics, the manifold of states is the complex projective plane $\mathbb{C}\bb{P}^{D-1} = \bb{S}^{2D-1}/\bb{S}^1$ instead of a unit sphere (see, e.g., Ref.~\cite{BengtssonZyczkowskiBook2017} for further details and an explicit coordinate system to compute the areas below). The area of this manifold is 
\begin{equation}
	A(\mathbb{C}\bb{P}^{D-1}) = \frac{A(\bb{S}^{2D-1})}{A(\bb{S}^1)} = \frac{\pi^{D-1}}{(D-1)!}.
\end{equation}
The area of a complex spherical cap $\bb C_\chi(\theta)$ defined by all vectors $|\psi\rangle$ satisfying $|\lr{\chi|\psi}|^2 \le \cos^2(\theta)$ for some fixed reference vector $|\chi\rangle$ is 
\begin{equation}\label{eq volume ratio}
	A[\bb C(\theta)] = \frac{\pi^{D-1}}{(D-1)!} \sin^{2D-2}(\theta) ~~~ \text{such that} ~~~ 
	\frac{A(\mathbb{C}\bb{P}^{D-1})}{A[\bb C(\theta)]} = \frac{1}{\sin^{2D-2}(\theta)}
\end{equation}
many cap areas fit into $\mathbb{C}\bb{P}^{D-1}$. Thus, we find the lower bound
\begin{equation}\label{eq N lbound 1}
	N_{\mathbb{C}}(D,\epsilon) \ge \frac{1}{\sin^{2D-2}[\arccos(\epsilon)]}
	\approx \frac{1}{(1-\epsilon^2/2)^{2D-2}} \approx e^{(D-1)\epsilon^2}.
\end{equation}
In the first approximation we used $\arccos(x) = \pi/2-x+\C O(x^3)$, $\sin(\pi/2-x) = \cos x$ and $\cos x = 1 - x^2/2 + \C O(x^4)$. In the second approximation we used $(1+x)^n \approx e^{nx}$, which works well for $x\approx0$.

\subsubsection{Probabilistic proof}

The idea of this proof is to ask~\cite{FanizzaPC2023}: When is the probability to draw $N$ Haar random vectors, whose pairwise fidelity is smaller than $\epsilon^2$, strictly greater than zero?

We start from
\begin{equation}
	\mbox{Pr}\left[|\lr{\psi_i|\psi_j}|^2\le\epsilon^2 ~\forall i\neq j\right]
	= 1 - \mbox{Pr}\left[\exists i\neq j \text{ s.t. } |\lr{\psi_i|\psi_j}|^2>\epsilon^2\right]
\end{equation}
and use the union bound
\begin{equation}
	\mbox{Pr}\left[\exists i\neq j \text{ s.t. } |\lr{\psi_i|\psi_j}|^2>\epsilon^2\right]
	\le \sum_{i>j} \mbox{Pr}\left[|\lr{\psi_i|\psi_j}|^2>\epsilon^2\right]
	= \frac{N(N-1)}{2} \mbox{Pr}\left[F^2 > \epsilon^2\right].
\end{equation}
where $F = |\lr{\psi_i|\psi_j}|$ is the fidelity. From eqn~(\ref{eq fidelity pd}) we thus find
\begin{equation}
	\mbox{Pr}\left[|\lr{\psi_i|\psi_j}|^2\le\epsilon^2 ~\forall i\neq j\right]
	\ge 1 - \frac{N(N-1)}{2} (1-\epsilon^2)^{D-1},
\end{equation}
which we want to be positive. This yields after rearrangement $N(N-1) \ge 2 (1-\epsilon^2)^{-(D-1)}$. Since $N$ will be large, we do not loose much by using $N^2>N(N-1)$, which finally gives the lower bound
\begin{equation}\label{eq N lbound Marco}
	N_{\mathbb{C}}(D,\epsilon) \ge \sqrt{2}e^{(D-1)\epsilon^2/2}.
\end{equation}
This bound is looser than eqn~(\ref{eq N lbound 1}) for $\epsilon^2 \ge \ln(2)/(D-1)$, but it does not involve any approximations and the scaling is the same.

\subsection{Bounding $|\overline P_S-\overline Q_S|$ [eqn~(\ref{eq QSD SLP bound})]}\label{sec SM bound V}

We first explicitly construct the unitary $V$ rotating $|\Phi\rangle$ to $|\Psi\rangle$, assuming we have chosen the global phase of $|\Phi\rangle$ such that the fidelity $F=\lr{\Phi|\Psi}\in\mathbb{R}_+$ is real-valued and positive. Applying a Gram-Schmidt process to $\C K \equiv \mbox{span}\{|\Phi\rangle,|\Psi\rangle\}$ gives the orthonormal basis vectors $|u_1\rangle = |\Phi\rangle$ and $|u_2\rangle = (|\Psi\rangle - F|\Phi\rangle)/\sqrt{1-F^2}$. In $\C K$ we have
\begin{equation}
 V_{\C K} = \left(\begin{array}{cc}
                            F	&-\sqrt{1-F^2} \\
                            \sqrt{1-F^2} & F \\
                           \end{array}\right).
\end{equation}
The unitary in the full space $\C H = \C K\oplus\C K_\perp$ is then $V = V_{\C K}\oplus I_{\C K_\perp}$.

Next, using $|r_j\rangle = V|s_j\rangle$ and defining $Y\equiv V-I_{\C H} = (V_{\C K} - I_{\C K})\oplus0_{\C K_\perp}$, we write $|\lr{\psi_j|r_j}|^2 - |\lr{\psi_j|s_j}|^2 = 2\Re[\lr{\psi_j|Y|s_j}\lr{s_j|\psi_j}] + |\lr{\psi_j|Y|s_j}|^2$. Averaging over $q_j$ and applying first the triangle and then the Cauchy-Schwarz inequality, yields
\begin{equation}\label{eq help QSD SLP bound}
 \left|\overline P_S-\overline Q_S\right|
 \le \left|2\sum_jq_j\Re[\lr{\psi_j|Y|r_j}\lr{s_j|\psi_j}]\right| + \sum_j q_j|\lr{\psi_j|Y|s_j}|^2
 \le 2\sqrt{\overline P_S B} + B.
\end{equation}
Here, we introduced the term $B\equiv\sum_j q_j|\lr{\psi_j|Y|s_j}|^2$. We now note that $Y$ acts non-trivially only on $\C K$ and has squared operator norm $\|Y\|^2 = 2(1-F)$. Using the Cauchy-Schwarz inequality again, we find two possible bounds
\begin{equation}
 |\lr{\psi_j|Y|s_j}|^2 =
 \left\{\begin{array}{ccccc}
  |\lr{s_j|Y^\dagger\Pi|\psi_j}|^2 &\le& \lr{s_j|Y^\dagger Y|s_j} \lr{\psi_j|\Pi|\psi_j} &\le& \|Y\|^2\lr{\psi_j|\Pi|\psi_j} \\
  |\lr{\psi_j|Y\Pi|s_j}|^2 &\le& \lr{\psi_j|Y Y^\dagger|\psi_j} \lr{s_j|\Pi|s_j} &\le& \|Y\|^2\lr{s_j|\Pi|s_j} \\
 \end{array}\right\},
\end{equation}
where $\Pi$ is the projector onto $\C K$. We thus find $B\le 2(1-F)T$ with either $T = \sum_j q_j\lr{\psi_j|\Pi|\psi_j}$ or $T = \sum_j q_j\lr{s_j|\Pi|s_j}$. In the main text we only consider the second option as it is easier to handle owing to the exact orthogonality of the $|s_j\rangle$.

\subsection{Derivation of the general bound in eqn~(\ref{eq result Andreas})}\label{sec SM Andreas result}

In this section, we need the notion of a positive operator valued measure (POVM): a collection $(M_x)_x$ is called a POVM if and only if $M_x\ge 0$ for all $x$ and $\sum_x M_x = I$. Furthermore, $S(A)_\rho$ is the von Neumann entropy of some (sub)system $A$ in state $\rho^A$, $I(A:B)_\rho$ the mutual information between subsystems $A$ and $B$ in a joint state $\rho^{AB}$, and $H(p)$ is the Shannon entropy of some probability distribution $(p_x)_x$. \\

\noindent\bb{Proof of the upper bound.}
We start with a general lemma relating near-maximum Holevo information of an ensemble $\{p_x,\rho_x\}$ of arbitrary (pure or mixed) states. To state it, it is convenient to introduce the bipartite cq-state $\rho^{XA} = \sum_x p_x \proj{x}^X \otimes \rho_x^A$: 

\begin{lemma}
	\label{lemma:recoverability}
	There exists a POVM $(M_x)_x$ with $P_{\text{S}}(x) = \mbox{\normalfont tr}\{\rho_x M_x\}$ and $\overline{P}_{\text{S}} = \sum_x p_x P_{\text{S}}(x)$, such that 
	\begin{equation}
	S(X|A)_\rho \equiv S(XA)_\rho - S(A)_\rho = H(p)-I(X:A)_\rho \geq -\sum_x p_x \log P_{\text{S}}(x) \geq -\log \overline{P}_{\text{S}}.
	\end{equation}
\end{lemma}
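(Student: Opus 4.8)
\emph{Proof plan.} The stated chain has three parts, and only the middle inequality carries real content. The identity $S(X|A)_\rho = S(XA)_\rho - S(A)_\rho = H(p) - I(X:A)_\rho$ is the standard bookkeeping identity for a classical--quantum state: since $\rho^{XA}$ is block diagonal in the $X$ register, $S(XA)_\rho = H(p) + \sum_x p_x S(\rho_x)$ and $S(X)_\rho = H(p)$, so $I(X:A)_\rho = S(X)_\rho + S(A)_\rho - S(XA)_\rho = H(p) - [S(XA)_\rho - S(A)_\rho]$. The last inequality $-\sum_x p_x\log P_{\text{S}}(x) \ge -\log\overline{P}_{\text{S}}$ is immediate from concavity of $\log$ applied to the distribution $p$, i.e.\ $\sum_x p_x\log\frac{1}{P_{\text{S}}(x)} \ge \log\frac{1}{\sum_x p_x P_{\text{S}}(x)}$.

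For the middle inequality I would take $(M_x)_x$ to be the \emph{pretty good} (square-root) measurement of the ensemble $\{p_x,\rho_x\}$, namely $M_x = \rho_A^{-1/2}(p_x\rho_x)\rho_A^{-1/2}$ on $\mathrm{supp}(\rho_A)$ (completed by $M_\perp = I - \Pi_{\mathrm{supp}(\rho_A)}$ when $\rho_A := \sum_x p_x\rho_x$ is not full rank); this is a POVM since $\sum_x M_x = \Pi_{\mathrm{supp}(\rho_A)}$, and $\mathrm{supp}(\rho_x)\subseteq\mathrm{supp}(\rho_A)$ so nothing is lost. A direct computation gives $P_{\text{S}}(x) = \mbox{tr}\{\rho_x M_x\} = p_x\,\mbox{tr}\{\rho_x\rho_A^{-1/2}\rho_x\rho_A^{-1/2}\} = p_x\,\mbox{tr}\{(\rho_A^{-1/4}\rho_x\rho_A^{-1/4})^2\}$, so that $\log(P_{\text{S}}(x)/p_x)$ equals the sandwiched Rényi-$2$ divergence $\widetilde D_2(\rho_x\|\rho_A)$. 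The key step is then the per-label bound $D(\rho_x\|\rho_A) \le \widetilde D_2(\rho_x\|\rho_A)$, which is monotonicity of the sandwiched Rényi family in the order ($\widetilde D_1 \le \widetilde D_2$); for the pure history states $\rho_x = \proj{\psi_x}$ of interest to us this reduces to the elementary inequality $-\langle\psi_x|\log\rho_A|\psi_x\rangle \le 2\log\langle\psi_x|\rho_A^{-1/2}|\psi_x\rangle$, which is Jensen's inequality for $\log$ applied to the eigenvalue distribution of $\rho_A$ weighted by $|\langle i|\psi_x\rangle|^2$. Finally I would average over $x$: using $I(X:A)_\rho = \sum_x p_x D(\rho_x\|\rho_A)$ (block-diagonal structure again) together with the per-label bound yields $I(X:A)_\rho \le \sum_x p_x\log(P_{\text{S}}(x)/p_x) = H(p) + \sum_x p_x\log P_{\text{S}}(x)$, and rearranging gives $S(X|A)_\rho = H(p) - I(X:A)_\rho \ge -\sum_x p_x\log P_{\text{S}}(x)$.

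The only delicate point is the per-label inequality $D(\rho_x\|\rho_A)\le\widetilde D_2(\rho_x\|\rho_A)$ for genuinely mixed $\rho_x$, where one must invoke (or re-derive by an operator-Jensen / pinching argument, in the spirit of Refs.~\cite{SutterTomamichelHarrowIEEE2016, HoroshkoEskandariKilinPLA2019}) the monotonicity of the sandwiched Rényi divergences; everything else is a short calculation. Since in our application all $\rho_x$ are pure, this obstacle disappears and the lemma collapses to the cq-state identities plus two applications of Jensen's inequality. One should also dispose of the edge cases ($p_x = 0$, $\rho_A$ rank-deficient), but these are handled by restricting throughout to $\mathrm{supp}(\rho_A)$ as indicated above.
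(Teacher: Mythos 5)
Your proposal is correct, but it proves the middle inequality by a genuinely different route than the paper. The paper invokes the recoverability strengthening of data processing (Cor.~8 of Ref.~\cite{SutterTomamichelHarrowIEEE2016}): it lower-bounds $H(p)-I(X:A)_\rho$ by a measured relative entropy involving a recovery map $\C R:A\to X'$, then dephases in the $X'$ basis so that the POVM is the (implicit) dephased adjoint of $\C R$, and finishes with convexity of $-\log$ exactly as you do. You instead make the POVM explicit -- the pretty good measurement $M_x=\rho_A^{-1/2}p_x\rho_x\rho_A^{-1/2}$ -- use the identity $\log\!\big(P_S(x)/p_x\big)=\widetilde D_2(\rho_x\|\rho_A)$, and combine $I(X:A)_\rho=\sum_x p_x D(\rho_x\|\rho_A)$ with the order-monotonicity $D=\widetilde D_1\le\widetilde D_2$ of the sandwiched R\'enyi family; your algebra checks out, including the pure-state reduction where the key step is literally Jensen's inequality, $\sum_i w_i\log\lambda_i^{-1}\le 2\log\sum_i w_i\lambda_i^{-1/2}$ for the eigenvalue weights $w_i=|\lr{i|\psi_x}|^2$ of $\rho_A$. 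What each buys: the paper's route handles arbitrary mixed $\rho_x$ with a single citation and has a ``recoverability'' interpretation, but leaves the measurement implicit; your route names the measurement concretely (and pleasingly it is the same sqrt/PGM measurement used throughout Sec.~\ref{sec Haar random histories}), is fully elementary in the pure-state case that the paper actually needs downstream, and for mixed $\rho_x$ only requires the standard (though nontrivial) monotonicity-in-$\alpha$ of sandwiched R\'enyi divergences, which is of comparable depth to the recovery-map theorem. The small loose ends you flag -- completing the POVM off $\mathrm{supp}(\rho_A)$ (or absorbing $M_\perp$ into any $M_x$, which leaves every $P_S(x)$ unchanged since $\mathrm{supp}(\rho_x)\subseteq\mathrm{supp}(\rho_A)$ whenever $p_x>0$) and discarding $p_x=0$ terms -- are indeed harmless.
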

\begin{proof}
	We use the recovery map result \cite[Cor.~8]{SutterTomamichelHarrowIEEE2016} for the state $\gamma^{XX'} = \sum_x p_x \proj{x}^{X} \otimes \proj{x}^{X'}$ and the cq-channel $\C N:X'\rightarrow A$ mapping $\proj{x}$ to $\rho_x$. Then, we have $H(p)-I(X:A)_\rho = I(X:X')_\gamma-I(X:A)_\rho$ and the cited result lower-bounds this by $D_{\mathbb{M}}(\gamma^{XX'}|(I_X\otimes\C R)\rho^{XA})$ for a suitable recovery map $\C R:A\rightarrow X'$, where $D_{\mathbb{M}}$ is the measured relative entropy: $D_\mathbb{M}(\rho|\sigma) = \sup_{M} D(\Phi_M\rho|\Phi_M\sigma)$, where $M=(M_x)_x$ is a POVM and $\Phi_M$ the qc-channel implementing $M$, i.e., $\Phi_M\rho = \sum_x \mbox{tr}\{M_x\rho\}|x\rl x|$. Thus, we have 
	\begin{equation}
		H(p)-I(X:A)_\rho \ge \sup_{M} D\left.\left[\Phi_M\gamma^{XX'}\right|\Phi_M(I_X\otimes\C R)\rho^{XA}\right] 
		\ge D[\gamma^{XX'}|\widetilde{\gamma}^{XX'}],
	\end{equation}
	where for the second inequality we simply chose a specific qc-channel, namely the one which dephases the state in the computational basis of $X'$. This leaves $\gamma^{XX'}$ invariant, but turns $(I_X\otimes\C R)\rho^{XA}$ into the following classical state
	\begin{equation}
		\widetilde{\gamma}^{XX'} = \sum_{xx'} p_x \proj{x}^X \otimes \proj{x'}^{X'} \mbox{tr}\{\rho_x R_{x'}\}
	\end{equation}
	for a suitable POVM $(R_x)_x$. Writing $P_{\text{S}}(x) = \mbox{tr}\{R_x\rho_x\}$ for the individual success probability, the relative entropy $D[\gamma^{XX'}|\widetilde{\gamma}^{XX'}]$ is now easily evaluated to be 
	\begin{equation}
		D(\gamma\|\widetilde{\gamma}) = -\sum_x p_x \log P_{\text{S}}(x),
	\end{equation}
	and the latter is $\geq -\log \overline{P}_{\text{S}}$ by the convexity of the negative logarithm. 
\end{proof}

\begin{rmrk}
	Note that this is a kind of reverse Fano inequality, because it says that, if the equivocation (conditional entropy of the input given the output of the channel) is small, then a good decoder exists. Fano's inequality (equivalently the
	Audenaert-Fannes-Csisz\'ar inequality) instead states that if $\overline{P}_{\text{S}} \geq 1-\epsilon$, then, using the data processing inequality for the conditional entropy (aka strong subadditivity), $S(X|A) \leq H(X|\widehat{X}) \leq \epsilon\log(n-1)+H(\epsilon,1-\epsilon)$.
\end{rmrk}

Next, we need another auxiliary result.

\begin{lemma}
	\label{lemma:D-maxmixed} 
	For an arbitrary state $\rho$ and the maximally mixed state $\tau = I/d$
	in dimension $d$,
	\begin{equation}
		D(\rho\|\tau) \leq D(\tau\|\rho),
	\end{equation}
	with equality iff $\rho=\tau$.
\end{lemma}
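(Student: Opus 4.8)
\medskip\noindent\emph{Proof strategy.} The plan is to reduce the statement to the spectrum of $\rho$ and then to a one-variable inequality. Because $\tau$ commutes with everything and $\ln\tau=-(\ln d)I$, both $D(\rho\|\tau)$ and $D(\tau\|\rho)$ depend on $\rho$ only through its eigenvalues $\{\lambda_i\}_{i=1}^d$, so the claim reduces to one about the classical probability vector $p=(\lambda_1,\dots,\lambda_d)$ against the uniform distribution. First I would record $D(\rho\|\tau)=\ln d-S(\rho)=\ln d+\sum_i\lambda_i\ln\lambda_i$ and $D(\tau\|\rho)=-\ln d-\tfrac1d\sum_i\ln\lambda_i$ (with $D(\tau\|\rho)=+\infty$ if $\rho$ is not full rank, where there is nothing to prove), so that, setting $r_i\equiv d\lambda_i>0$ (hence $\sum_i r_i=d$), one gets the compact identity
\begin{equation}
 D(\rho\|\tau)-D(\tau\|\rho)=\frac1d\sum_{i=1}^d(1+r_i)\ln r_i .
\end{equation}
Thus the lemma is \emph{equivalent} to the scalar inequality $\sum_i(1+r_i)\ln r_i\le 0$ on the scaled simplex $\{r_i>0,\ \sum_i r_i=d\}$, with equality precisely when all $r_i=1$.

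Next I would attack this scalar inequality by extremisation. The boundary is harmless: $\sum_i(1+r_i)\ln r_i\to-\infty$ whenever some $r_i\to0$, so the supremum is attained at an interior stationary point, characterised by $\psi(r_i)$ being equal for all $i$, where $\psi(r)=\ln r+1+1/r$. Since $\psi$ has a unique minimum at $r=1$ with $\psi(1)=2$ and is strictly monotone on $(0,1)$ and on $(1,\infty)$, the components of any such point take at most two values $a<1<b$ with $ka+(d-k)b=d$, and the remaining task is to check $\sum_i(1+r_i)\ln r_i\le 0$ there.

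The hard part is exactly this last step, and I expect that it cannot be completed in the stated generality: $r\mapsto(1+r)\ln r$ is concave on $(0,1)$ but convex on $(1,\infty)$ (its second derivative is $(r-1)/r^2$), so Jensen is unavailable, and for $d\ge 3$ the two-valued stationary points genuinely give $\sum_i(1+r_i)\ln r_i\ge 0$. For example $r=(\tfrac12,\tfrac12,2)$ gives $3\ln\tfrac12+3\ln2=0$, i.e.\ $\rho=\mbox{diag}(\tfrac16,\tfrac16,\tfrac23)$ satisfies $D(\rho\|\tau)=D(\tau\|\rho)=\tfrac13\ln2$ although $\rho\ne\tau$; and $r=(\tfrac9{10},\tfrac9{10},\tfrac65)$ gives $3.8\ln\tfrac9{10}+2.2\ln\tfrac65>0$, i.e.\ $D(\rho\|\tau)>D(\tau\|\rho)$ for $\rho=\mbox{diag}(0.3,0.3,0.4)$. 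I would therefore prove the lemma only for $d=2$, where the one-variable reduction $a\mapsto(1+a)\ln a+(3-a)\ln(2-a)$ on $(0,2)$ is concave (its second derivative is $(a-1)\big[a^{-2}-(2-a)^{-2}\big]<0$) with maximum $0$ at $a=1$, delivering both the inequality and the strict ``iff''; for general $d$ one needs an extra hypothesis on $\rho$. In the place where the lemma is used (to obtain $\sqrt[N]{\det G}\le\overline P_S^\text{opt}$) that hypothesis can be sidestepped entirely by observing instead that $\overline P_S^\text{opt}\ge\mu_{\text{sqrt}}(\gamma)^2=\mathbb{E}[\sqrt\lambda]^2\ge e^{\mathbb{E}[\ln\lambda]}=\sqrt[N]{\det G}$, the middle step being Jensen applied to $\exp$ on the spectrum of $G$.
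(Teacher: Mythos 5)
Your proposal is correct, and it in fact uncovers a genuine error in the paper: Lemma~\ref{lemma:D-maxmixed} is false as stated once $d\ge 3$. Your spectral reduction and the identity $D(\rho\|\tau)-D(\tau\|\rho)=\frac1d\sum_i(1+r_i)\ln r_i$ with $r_i=d\lambda_i$ are right, and both counterexamples check out numerically: $\rho=\mbox{diag}(1/6,1/6,2/3)$ gives equality with $\rho\neq\tau$, and $\rho=\mbox{diag}(0.3,0.3,0.4)$ gives $D(\rho\|\tau)\approx 0.00971>0.00947\approx D(\tau\|\rho)$ (in nats). A compact way to certify this is the expansion around $\tau$: writing $\lambda_i=(1+\epsilon_i)/d$ with $\sum_i\epsilon_i=0$, one finds $D(\rho\|\tau)-D(\tau\|\rho)=\frac{1}{6d}\sum_i\epsilon_i^3+\C O(\epsilon^4)$, which takes either sign for $d\ge3$ (your second example is exactly $\epsilon=(-0.1,-0.1,0.2)$), while for $d=2$ the cubic term vanishes identically, consistent with your complete proof of the $d=2$ case. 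The paper's own proof breaks precisely where you anticipated: it argues by Lagrange multipliers that the only interior critical point of $D(\tau\|\rho)-D(\rho\|\tau)$ is the uniform spectrum, but the stationarity condition $\ln q_x+1+1/(dq_x)=\mathrm{const}$ is not injective in $q_x$ (the left-hand side decreases on $(0,1/d)$ and increases on $(1/d,\infty)$), so two-valued critical points exist and the claimed uniqueness does not follow.

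Your repair of the downstream application is also essentially right, with one caveat: $\overline P_S^\text{opt}\ge\mu_\text{sqrt}(\gamma)^2$ is an asymptotic, model-dependent statement, whereas eqn~(\ref{eq result Andreas}) is asserted for an arbitrary Gram matrix with equal weights. The Jensen idea you invoke does, however, give a fully general, finite-$N$ fix that bypasses both Lemma~\ref{lemma:recoverability} and Lemma~\ref{lemma:D-maxmixed}: for the sqrt measurement with $q_j=1/N$,
\begin{equation}
 \overline P_S=\frac1N\sum_j\big[(\sqrt G)_{jj}\big]^2
 \;\ge\;\left(\frac1N\,\mbox{tr}\{\sqrt G\}\right)^2
 =\left(\frac1N\sum_i\sqrt{\lambda_i}\right)^2
 \;\ge\;\prod_i\lambda_i^{1/N}=\sqrt[N]{\det G},
\end{equation}
by Cauchy--Schwarz on the diagonal of $\sqrt G$ followed by AM--GM on the eigenvalues, and hence $\overline P_S^\text{opt}\ge\sqrt[N]{\det G}$ still holds. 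So the upper bound in eqn~(\ref{eq result Andreas}) survives; what needs amending is the lemma itself (restrict it to $d=2$, or drop it and use the direct bound above).
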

\begin{proof}
This follows by finding the critical points of $D(\tau\|\rho) - D(\rho\|\tau)$ 
as a function of $\rho$. First of all, this function depends only on the
spectrum of $\rho$, $\operatorname{spec}\rho = (q_1,\ldots,q_d)$, and it is 
$+\infty$ on the boundary of the state space. Thus, all, potential minima 
lie in the interior of the probability simplex.

To do that, we differentiate it with respect to the $q_x$ and use Lagrange 
multipliers to represent the normalisation condition, which yields the only 
candidate $p_x = \frac1d$ for all $x$, at which point the difference is $0$.
\end{proof}

Now, using the fact that we have the uniform distribution over pure states, we have in dimension $d=N$
\begin{equation}
	-\log \overline{P}_{\text{S}} \leq \log N - S(\rho^A) 
	=    D(\rho^A\|\tau^A) 
	\leq D(\tau^A\|\rho^A) 
	=    -\log N - \frac1N \mbox{tr}\log\rho^A.
\end{equation}
Here, the inequalities follow from Lemma~\ref{lemma:recoverability} (which lower bounds $\overline P_S$ in terms of one concrete measurement) and~\ref{lemma:D-maxmixed}, respectively, and the steps in between are just elementary algebra. Finally, we notice that $N\rho^A = \sum_x |\psi_x\rl\psi_x|$ is isospectral with the Gram matrix $G$, except for potentially different multiplicity of the eigenvalue $0$. Thus,
\begin{equation}
	-\log \overline{P}_{\text{S}} \leq -\frac1N \mbox{tr}\log(N\rho^A)
	=    -\frac1N \mbox{tr}\log G
	=    -\frac1N \log\det G,
\end{equation}
and we are done. 
\qed

\medskip
\noindent\bb{Proof of the lower bound.}
We first note that there cannot be a universal lower bound on $\det G$ in terms of $\overline{P}_{\text{S}}$. To see this, consider the example that $|\psi_x\rangle$ are mutually orthogonal states for $x=1,\ldots,N-1$, and an arbitrary state for $x=N$. Irrespective of the last state, $\overline{P}_{\text{S}} \geq 1-1/N$, whereas $\det G$ can vary arbitrarily between $0$ and $1$, the extreme cases attained when $|\psi_N\rangle$ is in the subspace $\mbox{span}\{|\psi_x\rangle:x=1,\ldots,N-1\}$ or orthogonal to it, respectively. 

Using the notation $[N] \equiv \{1,2,\dots,N\}$ and $\psi_x=|\psi_x\rl\psi_x|$, we now define the worst-case success probability of unambiguous discrimination
\begin{equation}
P_{\text{UN}} := \max \pi \text{ s.t. } 0 \leq \pi \leq \mbox{tr}\{\psi_x M_x\}\ \forall x\in[N],\ \{M_x\}_{x=0}^N \text{ is a POVM},\ \forall x\neq y\in[N]\ \mbox{tr}\{\psi_x M_y\} = 0.
\end{equation}
Note that the above definition includes an additional measurement operator $M_0$ corresponding to an inconclusive result and that an outcome $x$ can only occur when the state was indeed $\psi_x$. 

That we can find a a lower bound for unambiguous state discrimination comes from the fact that it is known to be sensitive to the linear dependence of the state vectors. Namely, every state $|\psi_x\rangle$ that is a linear combination of the other $|\psi_y\rangle$, $y\neq x$, cannot ever be detected, i.e.~$\mbox{tr}\{\psi_x M_x\} = 0$ in any admissible unambiguous detection protocol~\cite{CheflesPLA1998}. Indeed, to prove the lower bound , we can quote the main result of \cite{HoroshkoEskandariKilinPLA2019}, which states that 
\begin{equation}
P_{\text{UN}} = \min\operatorname{spec}(G) \leq \sqrt[N]{\det G},
\end{equation}
and we are done.
\qed 

\subsection{Derivation of $\overline P_S$ for the sqrt measurement}\label{sec SM av succ prob sqrt meas}

We want to compute the individual success probability $|\lr{s_j|\psi_j}|^2 = |\sqrt{G}_{jj}|^2 \approx |\sqrt{W}_{jj}|^2$, replacing the Gram matrix $G$ constructed from the normalized Haar random vectors by the corresponding Wishart matrix $W$ constructed from almost normalized Gaussian vectors. Writing $W = U\mbox{diag}(\lambda_1,\dots,\lambda_N)U^\dagger$ we find
\begin{equation}
 |\sqrt{W}_{jj}|^2 = \sum_{kl} \sqrt{\lambda_k\lambda_\ell} U_{jk} U_{jk}^* U_{j\ell} U_{j\ell}^*
 = \sum_{kl} \sqrt{\lambda_k\lambda_\ell} w_j^{(k)} w_j^{(\ell)}.
\end{equation}
In the last step we introduced the weight vector $w^{(k)}$ with components $w_j^{(k)} \equiv |U_{jk}|^2$, which is uniformly distributed over the $(N-1)$ probability simplex. Averaging over the probability simplex (which we indicate by $\mathbb{E}_w[\dots]$) with the help of eqn~(\ref{eq Dirichlet}), we find
\begin{equation}\label{eq average Wjj}
 \mathbb{E}_w\left[|\sqrt{W}_{jj}|^2\right] = \frac{N}{N+1} \left(\frac{\sum_k \sqrt{\lambda_k}}{N}\right)^2 + \frac{1}{N+1}\frac{\sum_k\lambda_k}{N} \longrightarrow \mu_\text{sqrt}(\gamma)^2.
\end{equation}
Here, $\longrightarrow$ indicates the asymptotic behavior for $N,d\rightarrow\infty$ while keeping their ratio $\gamma = N/d$ fixed, and we further used the CLT (Lemma~\ref{lemma CLT}).

Next, we want to show that fluctuations around the average~(\ref{eq average Wjj}) are small by considering the variance of $|\sqrt{W}_{jj}|^2$. We use that the variance of a function $f(x,y)$ of two independent random variables $x$ and $y$ is given by $\mathbb{V}_{xy}(f) = \mathbb{E}_x[\mathbb{V}_y(f)] + \mathbb{V}_x[\mathbb{E}_y(f)]$, hence
\begin{equation}
 \mathbb{V}\left[|\sqrt{W}_{jj}|^2\right] = \mathbb{E}_\lambda\left[\mathbb{V}_w\left[|\sqrt{W}_{jj}|^2\right]\right]
 + \mathbb{V}_\lambda\left[\frac{N}{N+1} \left(\frac{\sum_k \sqrt{\lambda_k}}{N}\right)^2 + \frac{1}{N+1}\frac{\sum_k\lambda_k}{N}\right].
\end{equation}
From the CLT we know that the variance of the second term is of order $\C O(N^{-2})$, and hence negligible. Therefore, it remains to compute the first term. To this end, we note that
\begin{equation}
 |\sqrt{W}_{jj}|^4 = \sum_{klmn} \sqrt{\lambda_k\lambda_\ell\lambda_m\lambda_n} w_j^{(k)} w_j^{(\ell)}w_j^{(m)} w_j^{(n)}.
\end{equation}
Taking again the average over the probability simplex, we find
\begin{equation}
 \begin{split}
  \mathbb{E}_w\left[|\sqrt{W}_{jj}|^4\right]
  &= \sum_{k\neq\ell\neq m\neq n} \frac{\sqrt{\lambda_k\lambda_\ell\lambda_m\lambda_n}}{N(N+1)(N+2)(N+3)}
  + 12\sum_{k\neq\ell\neq m} \frac{\lambda_k\sqrt{\lambda_\ell\lambda_m}}{N(N+1)(N+2)(N+3)} + \dots \\
  &= \frac{N^3}{(N+1)(N+2)(N+3)} \left(\frac{\sum_k\sqrt{\lambda_k}}{N}\right)^4
  +  \frac{6N^2}{(N+1)(N+2)(N+3)} \frac{\sum_k \lambda_k}{N}\left(\frac{\sum_k \sqrt{\lambda_k}}{N}\right)^2 + \dots
 \end{split}
\end{equation}
Here, a sum running over, e.g., $k\neq\ell\neq m\neq n$ shall indicate that $k, \ell, m, n$ are \emph{all different}. Moreover, terms that we neglected contain sums over fewer indices and are therefore of subleading order $\C O(N^{-2})$. Now, subtracting $\mathbb{E}_w[|\sqrt{W}_{jj}|^2]^2$ we find to leading order
\begin{equation}
 \mathbb{V}_w\left[|\sqrt{W}_{jj}|^2\right] = \frac{4}{N}\left\{\frac{\sum_k \lambda_k}{N}\left(\frac{\sum_k \sqrt{\lambda_k}}{N}\right)^2 - \left(\frac{\sum_k\sqrt{\lambda_k}}{N}\right)^4\right\} + \dots
\end{equation}
From the CLT we know that the terms inside the curly bracket are strongly concentrated around their $\C O(1)$ mean value, hence we find $\mathbb{V}_w[|\sqrt{W}_{jj}|^2] \sim N^{-1}$.

To conclude, for large $N$ the individual success probability $|\sqrt{W}_{jj}|^2$ fluctuates around an $\C O(1)$ mean value $\mu_\text{sqrt}(\gamma)^2$ with fluctuations of size $1/\sqrt{N}$ (measured in terms of the standard deviation). Since $\mu_\text{sqrt}(\gamma)^2$ does not depend on $j$,
\begin{equation}
 \overline{P}_S = \sum_j q_j |\sqrt{W}_{jj}|^2 \simeq \mu_\text{sqrt}(\gamma)^2.
\end{equation}
Note that the conventional CLT implies that fluctuations of $\overline{P}_S$ around its mean are further suppressed compared to the fluctuations of $|\sqrt{W}_{jj}|^2$ by an additional factor $1/\sqrt{N}$ if $q_j\approx 1/N$ is sufficiently smeared out.

\subsection{Derivation of $F = |\lr{\Phi|\Psi}|$ for the sqrt measurement}\label{sec SM fidelity sqrt meas}

We start with some consideration about $|\Psi\rangle = \sum_j |\psi'_j\rangle$ assuming that $q_j = \lr{\psi'_j|\psi'_j} = 1/N$, and first note that the state $|\Psi'\rangle = \sum_j |\psi_j\rangle/\sqrt{N}$ for Haar random $|\psi_j\rangle\in\mathbb{C}^d$ is not yet normalized, although we will now show that its norm is concentrated around $1$.

To get the correct normalization we start from the unnormalized $|\Psi'\rangle = \sum_j |\psi'_j\rangle$, where the $|\psi'_j\rangle$ are (almost normalized) Gaussian vectors, which implies that the $q_j$ actually have small fluctuations around $1/N$. We denote the squared norm by $S \equiv \lr{\Psi'|\Psi'} = \sum_{jk} W_{jk}$. Since $W = X'^\dagger X'$ with $X' \equiv [|\psi'_1\rangle,\dots,|\psi'_N\rangle]$ we find $S = \|X'|\bs{1}\rangle\|$ with $|\bs{1}\rangle$ the uniform $N$-dimensional vector of ones. Furthermore, the vector $Z = X'|\bs{1}\rangle\in\mathbb{C}^d$ has entries whose real and imaginary parts are drawn from a zero-mean Gaussian with variance $N/{2d}$. Its squared norm is thus distributed as
\begin{equation}
 S = \|Z\|^2 \sim \frac{N}{2d}\chi^2_{2d}
\end{equation}
with the chi-square distribution $\chi^2_{2d}$ with $2d$ degrees of freedom. The expectation value and variance of this distribution are $\mathbb{E}(S)/N = 1$ and $\mathbb{V}(S)/N^2 = 1/d$. An application of Chebyshev's inequality
\begin{equation}
 \mbox{Pr}\left[\left|\frac{S}{N}-1\right|>\frac{1}{d^{1/3}}\right] \le \frac{1}{d^{1/3}},
\end{equation}
shows concentration of measure. Hence, $|\Psi\rangle = \sum_j |\psi'_j\rangle/\sqrt{S}$ with $S \simeq N$.

Next, we turn to the fidelity, starting from $|\Phi\rangle = \sum_j |s_j\rangle/\sqrt{N}$ with $|s_j\rangle = \sum_{k=1}^N (\sqrt{G}^{-1})_{kj}|\psi_k\rangle$. Replacing $|\psi_k\rangle\approx|\psi'_k\rangle$, we find
\begin{equation}
 \lr{\Psi|\Phi} \approx \frac{1}{\sqrt{N S}} \sum_{jk} \sqrt{W}_{jk} \equiv \frac{S'}{\sqrt{N S}}.
\end{equation}
Even though $S$ and $S'$ both depend on the same random variables $W_{jk}$, we will estimate $S'$ separately below. This is justified \emph{a posteriori} because we will find concentration of measure also for $S'$, which implies concentration of measure for $S'/\sqrt{S}$. The following auxiliary result justifies this strategy.

\begin{lemma}
 Suppose $f(x)\simeq\mu_f$ and $g(x)\simeq\mu_g$. Then, $f(x)g(x)\simeq\mu_f\mu_g$.
\end{lemma}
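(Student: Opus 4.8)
The plan is to reduce the claim to the triangle inequality combined with a union bound, so that no independence between $f$ and $g$ is needed. First I would make the meaning of ``$\simeq$'' precise for the purposes of this lemma: ``$f(x)\simeq\mu_f$'' means that for any desired tolerance there are $\epsilon_f,\delta_f$ (both small in the relevant asymptotic regime, e.g.\ $\delta_f$ exponentially small in the dimension) with $\mathrm{Pr}[|f(x)-\mu_f|>\epsilon_f]\le\delta_f$, and similarly for $g$. Define the ``good event'' $\mathcal{E}$ on which \emph{both} $|f(x)-\mu_f|\le\epsilon_f$ and $|g(x)-\mu_g|\le\epsilon_g$ hold; the union bound gives $\mathrm{Pr}[\mathcal{E}]\ge 1-\delta_f-\delta_g$, which is still close to one since each failure probability is negligible.

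Next I would carry out the elementary estimate on $\mathcal{E}$. Writing
\[
 |f(x)g(x)-\mu_f\mu_g| = \bigl|(f(x)-\mu_f)g(x)+\mu_f(g(x)-\mu_g)\bigr| \le |g(x)|\,|f(x)-\mu_f| + |\mu_f|\,|g(x)-\mu_g|,
\]
and using that on $\mathcal{E}$ one has $|g(x)|\le|\mu_g|+\epsilon_g$, the right-hand side is bounded by $(|\mu_g|+\epsilon_g)\,\epsilon_f + |\mu_f|\,\epsilon_g$, which tends to zero together with $\epsilon_f,\epsilon_g$. Hence, with probability at least $1-\delta_f-\delta_g$, the product $f(x)g(x)$ lies within $(|\mu_g|+\epsilon_g)\epsilon_f+|\mu_f|\epsilon_g$ of $\mu_f\mu_g$, which is precisely $f(x)g(x)\simeq\mu_f\mu_g$.

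The only genuine subtlety — and the point one must be careful about — is that $f$ and $g$ may be strongly correlated (in the intended application $S$ and $S'$ are built from the same Wishart matrix $W$), so one cannot multiply tail probabilities. The union bound circumvents this entirely at the mild cost of \emph{adding} rather than multiplying the failure probabilities, which is harmless since both are individually negligible. A secondary bookkeeping point is that the quantitative rates depend on which concentration statement underlies each ``$\simeq$''; I would simply track the $\epsilon$'s and $\delta$'s symbolically and note that the product of two concentrating quantities concentrates with a rate no worse than the worse of the two inputs, up to the benign prefactor $|\mu_g|+\epsilon_g$. This is exactly what is needed to legitimise estimating $S'/\sqrt{S}$ by treating $S$ and $S'$ separately in the surrounding computation of the fidelity.
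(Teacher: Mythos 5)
Your proposal is correct and follows essentially the same route as the paper: both arguments avoid any independence assumption between $f$ and $g$ by using a union bound over the two deviation events and then an elementary algebraic decomposition of $f(x)g(x)-\mu_f\mu_g$. The only cosmetic difference is that the paper phrases concentration in relative terms and expands $\frac{f g}{\mu_f\mu_g}-1$ into three terms (handling the cross term via $\mathrm{Pr}[|AB|>\delta]\le \mathrm{Pr}[|A|>\sqrt{\delta}]+\mathrm{Pr}[|B|>\sqrt{\delta}]$), whereas you condition on the good event and bound $|g|$ there directly — both yield the same conclusion with comparable rates.
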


\begin{proof}
 We define the symbol $\simeq$ (concentration of measure) by the following criteria for $f$ and $g$, respectively:
 \begin{equation}
  \mbox{Pr}\left[\left|\frac{f(x)}{\mu_f}-1\right|>\delta\right] \le \epsilon_f(\delta), ~~~
  \mbox{Pr}\left[\left|\frac{g(x)}{\mu_g}-1\right|>\delta\right] \le \epsilon_g(\delta).
 \end{equation}
 Here, $\epsilon_f(\delta)$ and $\epsilon_g(\delta)$ are suitable concentration functions, remaining very small (i.e., $\ll 1$) even for very small ($\ll 1$) $\delta$. Of course, for $\delta\rightarrow0$ we must eventually have $\epsilon_{f,g}(\delta)\rightarrow1$.

 We now consider
 \begin{equation}\label{eq help concentration}
  \mbox{Pr}\left[\left|\frac{f(x)g(x)}{\mu_f\mu_g}-1\right|>\delta\right] =
  \mbox{Pr}\left[\left|\left(\frac{f(x)}{\mu_f}-1\right)\left(\frac{g(x)}{\mu_g}-1\right) + \frac{f(x)}{\mu_f}-1 + \frac{g(x)}{\mu_g}-1\right|>\delta\right].
 \end{equation}
 Since $\mbox{Pr}[|A+B+C|>\delta] \le \mbox{Pr}[|A| > \delta/3] + \mbox{Pr}[|B| > \delta/3] + \mbox{Pr}[|C| > \delta/3]$, we have
 \begin{equation}
  \text{eqn~(\ref{eq help concentration})} \le
  \mbox{Pr}\left[\left|\left(\frac{f(x)}{\mu_f}-1\right)\left(\frac{g(x)}{\mu_g}-1\right)\right|>\frac{\delta}{3}\right]
  + \epsilon_f(\delta/3) + \epsilon_g(\delta/3).
 \end{equation}
 Next, we use $\mbox{Pr}[|AB|>\delta] \le \mbox{Pr}[|A|>\sqrt{\delta}] + \mbox{Pr}[|B|>\sqrt{\delta}]$ to find
 \begin{equation}
  \text{eqn~(\ref{eq help concentration})} \le
  \epsilon_f(\sqrt{\delta/3}) + \epsilon_g(\sqrt{\delta/3}) + \epsilon_f(\delta/3) + \epsilon_g(\delta/3) \equiv \epsilon_{fg}(\delta),
 \end{equation}
 where $\epsilon_{fg}(\delta)$ is the concentration function of $f(x)g(x)$ around the mean value $\mu_f\mu_g$.
\end{proof}

To estimate $S'$ we now proceed similarly to Sec.~\ref{sec SM av succ prob sqrt meas}. The essential steps are:

\emph{Step 1.} Starting from $W = U\mbox{diag}(\lambda_1,\dots,\lambda_N)U^\dagger$, we find $S' = \sum_j \sqrt{\lambda_j} |c_j|^2$ with $\bs{c} = U^\dagger|\bs{1}\rangle$. Rewritten in terms of the weights $w_j = |c_j|^2/N$, which are normalized according to $\sum_{j=1}^N w_j = 1$, we find $S' = \sum_j \sqrt{\lambda_j}w_j$.

\emph{Step 2.} Recall Lemma~\ref{lemma Wishart}. From eqn~(\ref{eq Dirichlet}) we find the conditional moments with respect to an average over the weights:
\begin{equation}\label{eq help SM 1}
 \mathbb{E}_w[S'] = \sum_j \sqrt{\lambda_j}, ~~~ \mathbb{V}_w(S') = \frac{N}{N+1}\left[\sum_j \lambda_j-\frac{1}{N}\left(\sum_j \sqrt{\lambda_j}\right)^2\right].
\end{equation}

\emph{Step 3.} We now apply the CLT for the eigenvalue statistics of the Wishart ensemble:
\begin{equation}
 \frac{\mathbb{E}_w[S']}{N} = \frac{\sum_j \sqrt{\lambda_j}}{N} \longrightarrow \mu_\text{sqrt}(\gamma).
\end{equation}
Thus, for large $N$ we have the expectation value $\mathbb{E}_w[S'] = N\mu_\text{sqrt}(\gamma)$.

\emph{Step 4.} It remains to bound the fluctuations. From the CLT we know that fluctuations around $N\mu_\text{sqrt}(\gamma)$ are of $\C O(1)$, and hence strongly suppressed compared to the mean. Thus, it remains to be shown that also fluctuations with respect to the weights $w$ are strongly suppressed. To this end, we use that the variance of a function $f(x,y)$ of two independent random variables $x$ and $y$ is given by $\mathbb{V}_{xy}(f) = \mathbb{E}_x[\mathbb{V}_y(f)] + \mathbb{E}_x[\mathbb{V}_y(f)]$. It follows from eqn~(\ref{eq help SM 1}) that
\begin{equation}
 \mathbb{V}(S') = \mathbb{E}_\lambda\left\{\frac{N}{N+1}\left[\sum_j \lambda_j-\frac{1}{N}\left(\sum_j \sqrt{\lambda_j}\right)^2\right]\right\} + \mathbb{V}_\lambda\left(\sum_j\sqrt{\lambda_j}\right) \equiv A + B + C.
\end{equation}
The asymptotic form of the terms $A$, $B$ and $C$ now easily follows from the CLT. First, since $\mbox{tr}\{W\}\simeq N$, we have $A \simeq N^2/(N+1)$. Second, $B \simeq N^2\mu_\text{sqrt}(\gamma)^2/(N+1)$. Third, the variance of $\sum_j\sqrt{\lambda_j}$ does not scale with $N$ for fixed $\gamma$ (see Ref.~\cite{LytovaPasturAP2009}).

\emph{Step 5.} Taken together, we can summarize to leading order in $N$:
\begin{equation}
 \mathbb{E}(S') = N\mu_\text{sqrt}(\gamma), ~~~ \mathbb{V}(S') \approx N (1-\mu_\text{sqrt}(\gamma)^2).
\end{equation}
From this we still get concentration of measure. To see this, we notice first that for small $N$, i.e., $\gamma = N/d\rightarrow0$, we can use $1-\mu_\text{sqrt}(\gamma)^2 \approx \gamma/4 + \C O(\gamma^{3/2})$ such that $\mathbb{V}(S') \approx N^2/d \ll 1$ as long as $N\ll\sqrt{d}$. In the opposite case, for large $N$, we can use Chebychev's inequality to find:
\begin{equation}
 \mbox{Pr}\left[\left|\frac{S'}{N}-\mu_\text{sqrt}(\gamma)\right| > \frac{1}{N^{1/3}}\right]
 \le \frac{1-\mu_\text{sqrt}(\gamma)^2}{N^{1/3}}.
\end{equation}

Thus, the final conclusion is that $F = |\lr{\Phi|\Psi}| = S'/\sqrt{NS} \simeq \mu_\text{sqrt}(\gamma)$.

\subsection{Estimating $\overline P_S - \overline Q_S$ for the sqrt measurement}\label{sec SM estimating SLP vs QSD}

We start with some elementary algebra:
\begin{equation}
	\overline P_S - \overline Q_S 
	= \frac{1}{N}\sum_j \left(|\lr{\psi_j|s_j}|^2 - |\lr{\psi_j|r_j}|^2\right)
	= -\frac{1}{N}\sum_j \left(|\lr{\psi_j|Y|s_j}|^2 + 2\Re \lr{\psi_j|Y|s_j}\lr{s_j|\psi_j}\right).
\end{equation}
Here, we only used $|r_j\rangle = V|s_j\rangle$ and $Y$ was defined above eqn~(\ref{eq help QSD SLP bound}). There has been no assumption so far. Now, a crucial quantity is $\lr{\psi_j|Y|s_j}$, which we can write using the explicit form of $Y$ as 
\begin{equation}
	\begin{split}
		\lr{\psi_j|Y|s_j} =&~ 
		(F-1) \lr{\psi_j|\Phi}\lr{\Phi|s_j} + \lr{\psi_j|\Psi}\lr{\Phi|s_j} - \lr{\psi_j|\Phi}\lr{\Psi|s_j} \\
		&- \frac{1}{1+F}\big(\lr{\psi_j|\Psi}\lr{\Psi|s_j} - F\lr{\psi_j|\Phi}\lr{\Psi|s_j} - F\lr{\psi_j|\Psi}\lr{\Phi|s_j} + F^2\lr{\psi_j|\Phi}\lr{\Phi|s_j}\big).
	\end{split}	
\end{equation}
Here, we only assumed that the fidelity is real valued, which can be always achieved by a proper choice of phases. 

We now consider the sqrt measurement. We write $|\Psi\rangle = \sum_j |\psi'_j\rangle/\sqrt{S}$ (the normalization factor $S$ was computed in Sec.~\ref{sec SM fidelity sqrt meas}), $|\Phi\rangle = \sum_j |s_j\rangle/\sqrt{N}$ and we use $|\psi_j\rangle = \sum_k \sqrt{G}_{kj}|s_k\rangle$. Inserting all this and approximating $|\psi_j\rangle\approx |\psi'_j\rangle$ and $G\approx W$, straightforward but rather tedious calculations reveal that
\begin{equation}
\lr{\psi_j|Y|s_j} = \frac{1}{\sqrt{NS}}\left[F\frac{2+F}{1+F}W_j^{(1)} - \frac{1}{1+F}|W_j^{(1/2)}|^2\right] 
- \frac{1}{1+F}\left[\frac{1}{S}W_j^{(1)}(W_j^{(1/2)})^* + \frac{F^2}{N}W_j^{(1/2)}\right]
\end{equation}
Here, we introduced $W_j^{(1)} \equiv \sum_k W_{jk}$ and $W_j^{(1/2)} \equiv \sum_k \sqrt{W}_{jk}$. The important point to notice is that $W_j^{(1)}$ and $W_j^{(1/2)}$ have mean and fluctuations of order one, something which one can show using calculations similar to those in Secs.~\ref{sec SM av succ prob sqrt meas} and~\ref{sec SM fidelity sqrt meas}, and which is also easily confirmed numerically (not shown here). Moreover, we know that $S\simeq N$, hence we can conclude that $\lr{\psi_j|Y|s_j} = \C O(1/N)$. Plugging this insight into the original definition reveals that
\begin{equation}
	\overline P_S - \overline Q_S \sim \frac{1}{N} \sum_j \C O(1/N) \sim \C O(1/N),
\end{equation}
as claimed in the main text.

\subsection{Computation of the mutual information}\label{sec SM mutual info}

\noindent\bb{Elementary considerations about $q_{ij}$ for $\gamma\le1$.} 
The joint probability we are starting from is $q_{ij} = q_{j|i}q_i$ with $q_{j|i} = |\sqrt{G}_{ji}|^2$. Despite the symmetry $|\sqrt{G}_{ji}|^2 = |\sqrt{G}_{ij}|^2$, the conditional probability of having history $i$ given record $j$ is by Bayes' rule $q'_{i|j} = q_{j|i}q_i/q'_j \neq |\sqrt{G}_{ij}|^2$. However, for $q_i=1/N$ (the case we consider in the main text) we find 
\begin{equation}
	q'_j = \sum_i q_{ij} = \frac{1}{N} G_{jj} = \frac{1}{N}
\end{equation}
so that $q'_{i|j} = q_{j|i}$. 

The mean field model is defined by taking the ensemble average over $G$, $\mu_{j|i} = \mathbb{E}_G[|\sqrt{G}_{ji}|^2]$. We know that (asymptotically for large $N$) $\mu_{i|i} = \overline P_S$ and by symmetry the $\mu_{j|i}$ for $j\neq i$ must be identical, hence $\mu_{j|i} = (1-\overline P_S)/(N-1)$. \\

\noindent\bb{Mutual information with fluctuations for $\gamma\le1$.} 
Evaluation of the mutual information, eqn~(\ref{eq MI def}), yields for $q_i = q'_j = 1/N$ 
\begin{equation}
	I = \ln N - \frac{1}{N}\sum_i H(q_{j|i}) ~~~ \text{with} ~~~ H(q_{j|i}) = -\sum_j q_{j|i}\ln q_{j|i}.
\end{equation}
We approximate $q_{i|i} = \overline P_S$, i.e., we neglect fluctuations of the diagonal components, to write
\begin{equation}
	H(q_{j|i}) = -\overline P_S\ln\overline P_S - (1-\overline P_S)\ln(1-\overline P_S) - (1-\overline P_S)\sum_{j(\neq i)} w_{j|i} \ln w_{j|i}  ~~~ \text{with} ~~~ w_{j|i} = \frac{q_{j|i}}{(1-\overline P_S)}.
\end{equation}
Now, the central simplification arises by assuming that the $w_{j|i}$ are uniformly distributed over the $(N-2)$-probability simplex, which is justified by symmetry reasons ($\mathbb{E}[q_{j|i}]$ is independent of $j$ and $i$ for $j\neq i$). The distribution for $w = w_{j|i}$ is the beta distribution $B(1,N-2)$ [compare with eqn~(\ref{eq fidelity pd})], which follows from marginalizing the Dirichlet distribution. It satisfies $\mathbb{E}[w\ln w] = (1-H_{N-1})/(N-1)$ with $H_N$ the $N$'th harmonic number. Thus,
\begin{equation}
	H(q_{j|i}) = H_2(\overline P_S) + (1-\overline P_S)(H_{N-1}-1)
\end{equation}
as stated in the main text. \\

\noindent\bb{Bound on mutual information.} We have
\begin{equation}
 I(G) \simeq \mathbb{E}_G[I(G)] = \mathbb{E}_G\left[D(q_{ij}|q_i q'_j) \right] \ge D(\mu_{ij}|q_i q'_j) = I_\text{mf}
\end{equation}
Here, we used the convexity of relative entropy with respect to the first argument, $\mu_{ij} = \mathbb{E}[q_{ij}]$ and that the marginal distributions $q_i = 1/N = q'_j$ are independent of $G$. \\

\noindent\bb{Probabilities and mutual information for $\gamma>1$.} We now consider two different ways to deal with the case $N>d$ (more histories than records).

For the case used in the main text, we construct the first $d-1$ records from the sqrt measurement for the first $d-1$ histories. The final record $|r_d\rangle$ is then fixed by orthogonality to the previous $d-1$ records. Since $\mbox{span}\{|\psi_i\rangle\}_{i=1}^{d-1} = \mbox{span}\{|r_i\rangle\}_{i=1}^{d-1}$, the table of conditional probabilities is
\begin{equation}
 \begin{array}{c|cc}
  q_{j|i} & i<d & i\ge d \\ \hline
  j<d & |\sqrt{G}_{ji}|^2 & 1/d\\
  j=d & 0 & 1/d \\
 \end{array}
\end{equation}
Here, as stated in the main text, the last column is a mean field result, assuming an average fidelity for all records and histories that are not tuned to each other via the sqrt measurement (mean field assumption). Replacing $|\sqrt{G}_{ji}|^2$ by its mean field value gives then the probabilities in eqn~(\ref{eq muij full}).

For the computation of the mutual information we first note that $q'_j \approx 1/d$. Thus, we have $I_\text{mf} \approx \ln d - \frac{1}{N}\sum_i H(q_{j|i})$. Distinguishing the cases for $i<d$ and $i\ge d$ carefully, we find 
\begin{equation}
	I_\text{mf} \approx \frac{d-1}{N}\left[\ln d - H_2(\overline P_S) - (1-\overline P_S)\ln(d-2)\right]
\end{equation}
with $\overline P_S = \overline P_S(\gamma = 1)$. This is the mutual information that we plot in Fig.~\ref{fig mutual info} for $\gamma>1$. Note that $I_\text{mf} \rightarrow0$ for $N\rightarrow\infty$.

Finally, we consider a second strategy for the SLP. To this end, let us choose $|r_d\rangle \sim \sum_{i\ge d}|\psi'_i\rangle$ and consider the projector $\Pi = I - |r_d\rl r_d|$ onto $\mbox{span}\{|r_i\rangle\}_{i=1}^{d-1}$. We now project the first $d-1$ histories onto this subspace, defining new histories $|\phi'_i\rangle = \Pi|\psi_i\rangle$. Note that the squared norm $\|\phi'_i\|^2$ is concentrated around $1-1/d$. Then, within a mean field spirit we find for $N-d\gg1$ the probability table
\begin{equation}
 \begin{array}{c|cc}
  q_{j|i} & i<d & i\ge d \\ \hline
  j<d & \left(1-\frac{1}{d}\right)|\sqrt{\tilde G}_{ji}|^2 & 1/d\\
  j=d & 1/d & 1/d \\
 \end{array}
\end{equation}
Here, $\tilde G$ is the Gram matrix with entries $\tilde G_{ij} = \lr{\phi_i|\phi_j}$. Moreover, the second column relies on the assumption that $N-d \gg 1$. In this case we find
\begin{equation}
 |\lr{r_d|\psi_i}|^2 \approx \frac{1}{N-d+1}\left|1+\sum_{j(\neq i)}\lr{\psi_j|\psi_i}\right|^2
 \approx \frac{\left[1+\sqrt{(N-d+1)/d}\right]^2}{N-d+1} \approx \frac{1}{d},
\end{equation}
i.e., the fact that $|r_d\rangle$ contains $|\psi_i\rangle$ contributes negligibly to the scalar product with $|\psi_i\rangle$ (in contrast, for $N=d$ one has $q_{j|d}=\delta_{jd}$). Thus, if $d$ is also large one finds asymptotically the same mutual information.

\section{Supplement: Further Numerical Results}\label{sec SM numerics}

We here display the same plots as in the main text but for a different time scale and initial condition. Specifically, Figs.~\ref{fig dec intro neq} to~\ref{fig N1 neq} and Fig.~\ref{fig inverse SNR neq}---complementary to Figs.~\ref{fig dec intro eq} to~\ref{fig N1 eq} and Fig.~\ref{fig inverse SNR} in the main text---consider a randomly selected initial energy eigenstate $|\Psi_0\rangle = |E_k\rangle$ (with $k\in\{1,\dots,D\}$ drawn at random) and short nonequilibrium histories defined for a time-scale $\Delta t=\tau/2$ (compare with Fig.~\ref{fig averages}). In Fig.~\ref{fig Born eq} (complementary to Fig.~\ref{fig Born neq} in the main text) we consider the equilibrium time-scale $\Delta t=8\tau$ and a Haar random initial state.

The results are qualitatively in unison with what we found in the main text, despite sometimes notable quantitative differences. Those, however, could be also a result from considering only a single sample. For instance, we observed that energy eigenstates in the bulk of the spectrum behave more like a Haar random state and give rise to stronger decoherence than energy eigenstates at the edges. While a more detailed investigation seems desirable, we believe the here displayed results nevertheless convey a clear trend. Unless specifically mentioned, all results are generated in exactly the same way as in the main manuscript and figure captions are suppressed as they are identical to the main text. Furthermore, note that both the random matrix Hamiltonian as well as the sampled history set $\C S$ are the same as in the main text. Only the initial state and time scale differ. \\

\begin{figure*}[ht]
 \centering\includegraphics[width=0.99\textwidth,clip=true]{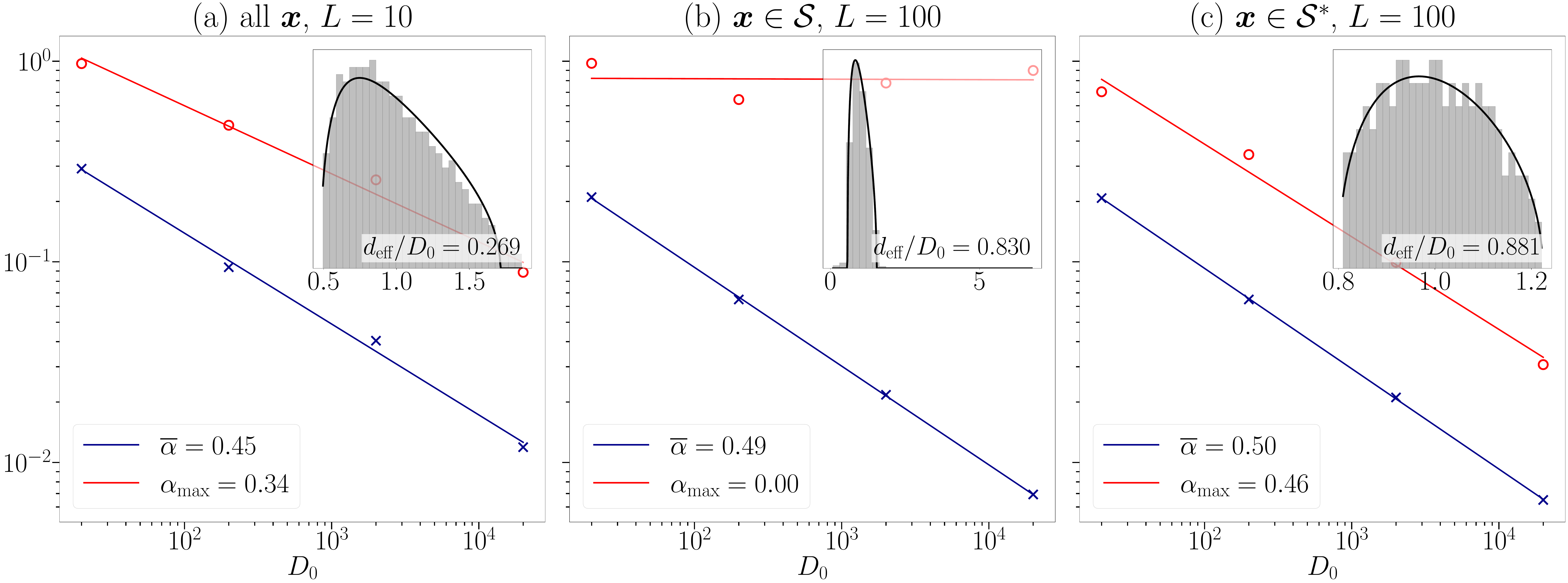}
 \label{fig dec intro neq}
 \caption{
 }
\end{figure*}

\noindent\bb{Overall decoherence.}
We start with Fig.~\ref{fig dec intro neq}, which gives an overview of the decoherence properties of the system. Roughly speaking, we observe the same phenomenology as in the main text. Noteworthy observations are:
\begin{itemize}
 \item For $L=10$ we find an exponent $\overline\alpha\approx0.45$ that is smaller than $1/2$ in unison with what we claimed below eqn~(\ref{eq DF scaling form}): for short nonequilibrium times histories tend to decohere with an $\alpha<1/2$. It should be noted, however, that for the present random matrix example this value does not strongly differ from $1/2$ (unless one considers very short times $\Delta t\ll\tau$, not shown here for brevity). In more realistic spin chains one generally expects smaller values for $\alpha$ for nonequilibrium time scales~\cite{WangStrasbergPRL2025}. In addition, also the value $\alpha_\text{max}\approx0.34$ differs notably from the main text.
 \item While the fit to the Marchenko-Pastur distribution looks well, $d_\text{eff} \approx D_0/4$ is notably different from $D_0$. This makes sense because the average decoherence scales like $D_0^{-0.45}$, suggesting that the history states look Haar random in an effective Hilbert space of dimension $D_0^{0.9}$. For $D_0=20000$ we find $D_0^{0.9} \approx 0.37 D_0 \approx d_\text{eff}$ in unison with the expected result from the Wishart ensemble.
 \item For $L=100$ Fig.~\ref{fig dec intro neq}(b) shows that the average decoherence is actually stronger than in the main text with $\overline\alpha=0.49$ instead of $\overline\alpha=0.41$. We believe this is, however, not a general effect, but related to the specific sample.
 \item The restoration of decoherence by focusing on the 20\% of history states $|\psi(\bs x)\rangle$ with the smallest localization works very well as shown in Fig.~\ref{fig dec intro neq}(c). Indeed, we even restore decoherence to higher values of $\overline\alpha$ and $\alpha_\text{max}$ than for the short histories with $L=10$ in Fig.~\ref{fig dec intro neq}(a).
\end{itemize}

\begin{figure}[ht]
	\centering
	\begin{minipage}[b]{0.48\textwidth}
		\centering\includegraphics[width=\linewidth,clip=true]{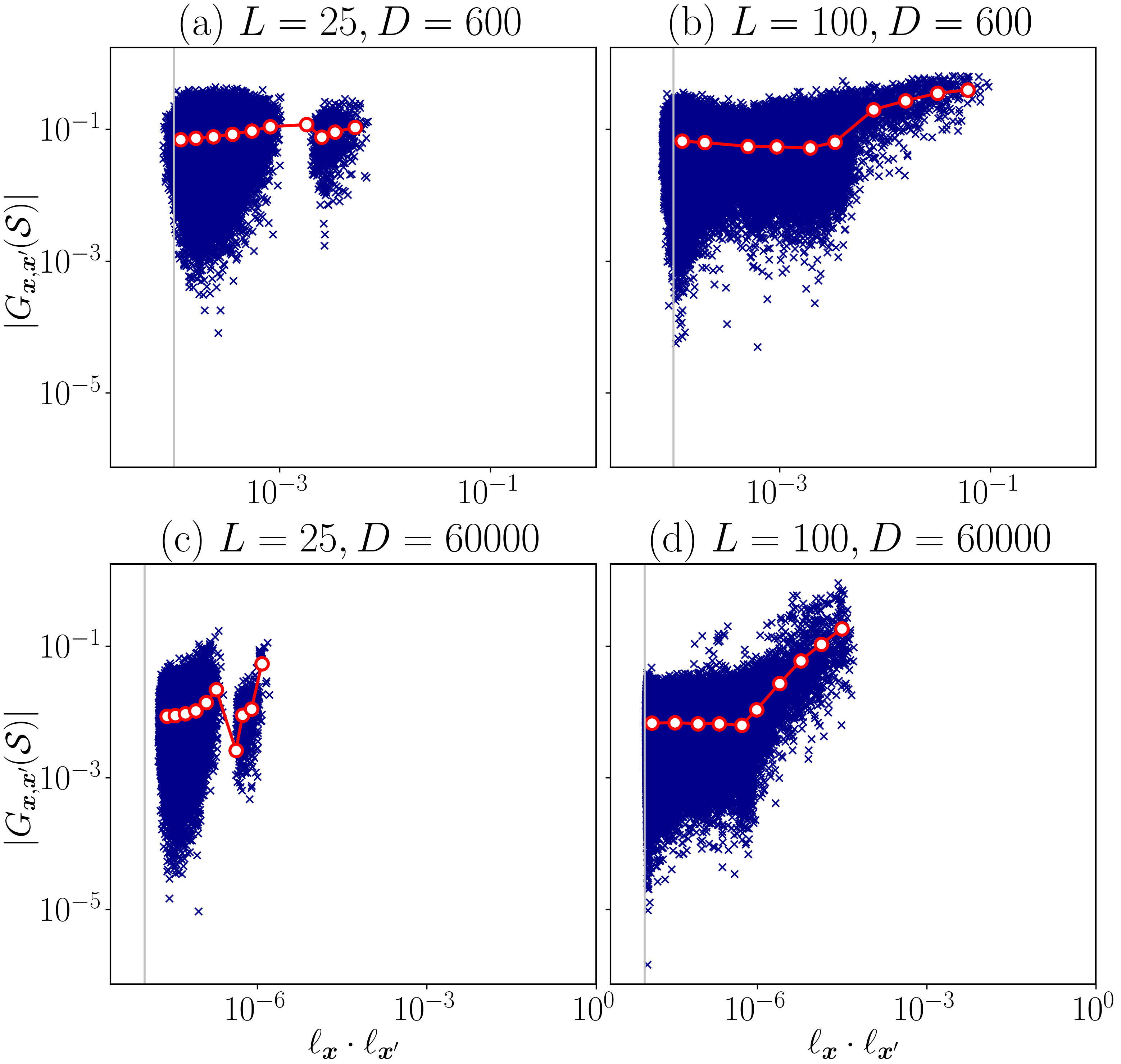}
		\caption{
		}
		\label{fig dec loc neq}
	\end{minipage}\hfill
	\begin{minipage}[b]{0.48\textwidth}
		\centering\includegraphics[width=\linewidth,clip=true]{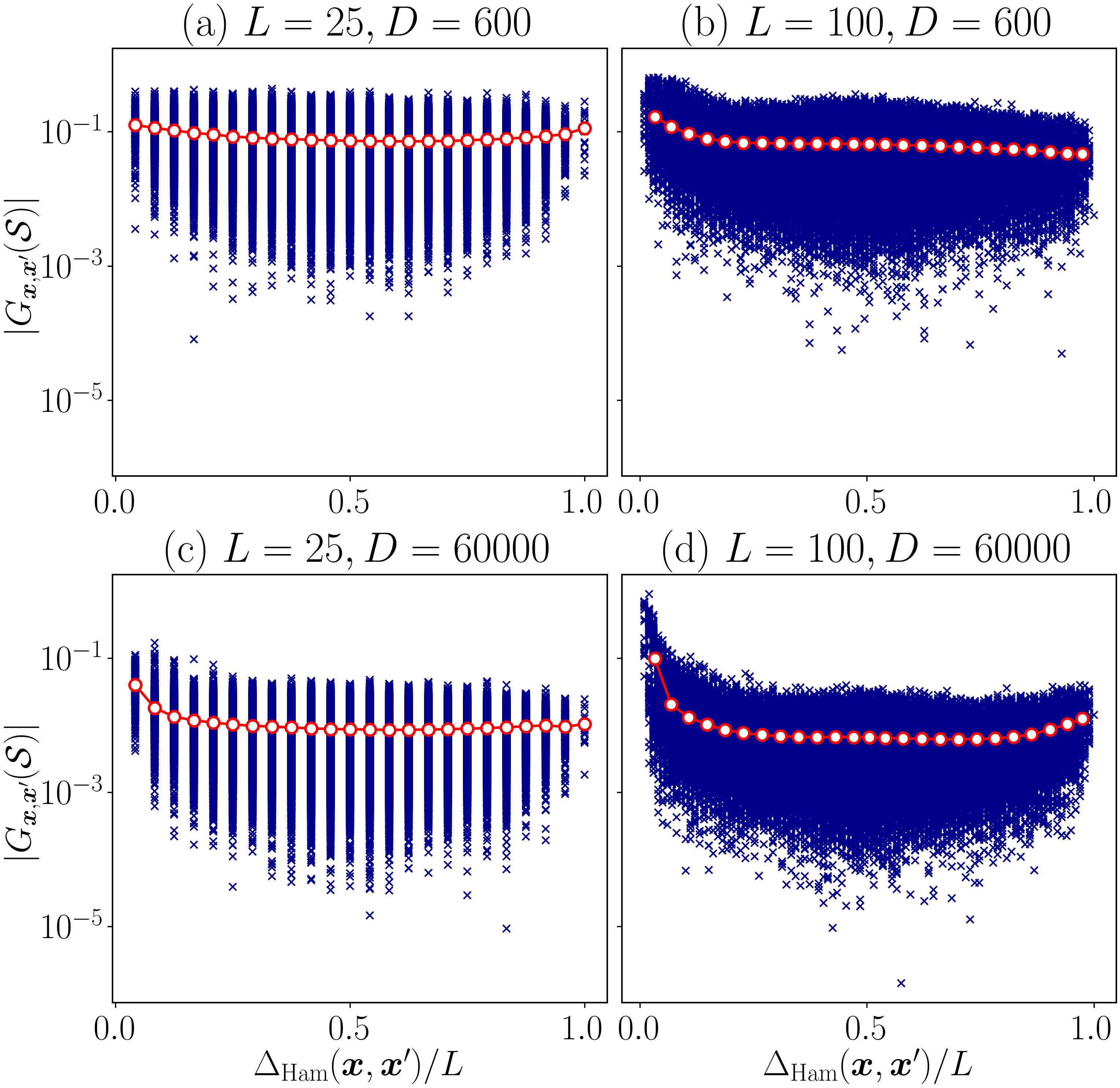}
		\caption{
		}
		\label{fig dec Ham neq}
	\end{minipage}
\end{figure}

\noindent\bb{Localization.}
Figure~\ref{fig dec loc neq} confirms again that for long histories there is a correlation between coherence and localization. In contrast to the main text, however, there is additional structure for $L=25$ (for which we have no explanation) and the effect for $L=100$ is slightly weaker (the trend is, however, the same) \\

\noindent\bb{Hamming distance.}
Figure~\ref{fig dec Ham neq} also confirms that decoherence correlates with the Hamming distance, even though we can clearly see here a weaker correlation, in particular for $L=25$. We have no specific explanation for this. \\

\begin{figure}[ht]
	\centering
	\begin{minipage}[b]{0.45\textwidth}
		\centering
		\includegraphics[width=\linewidth]{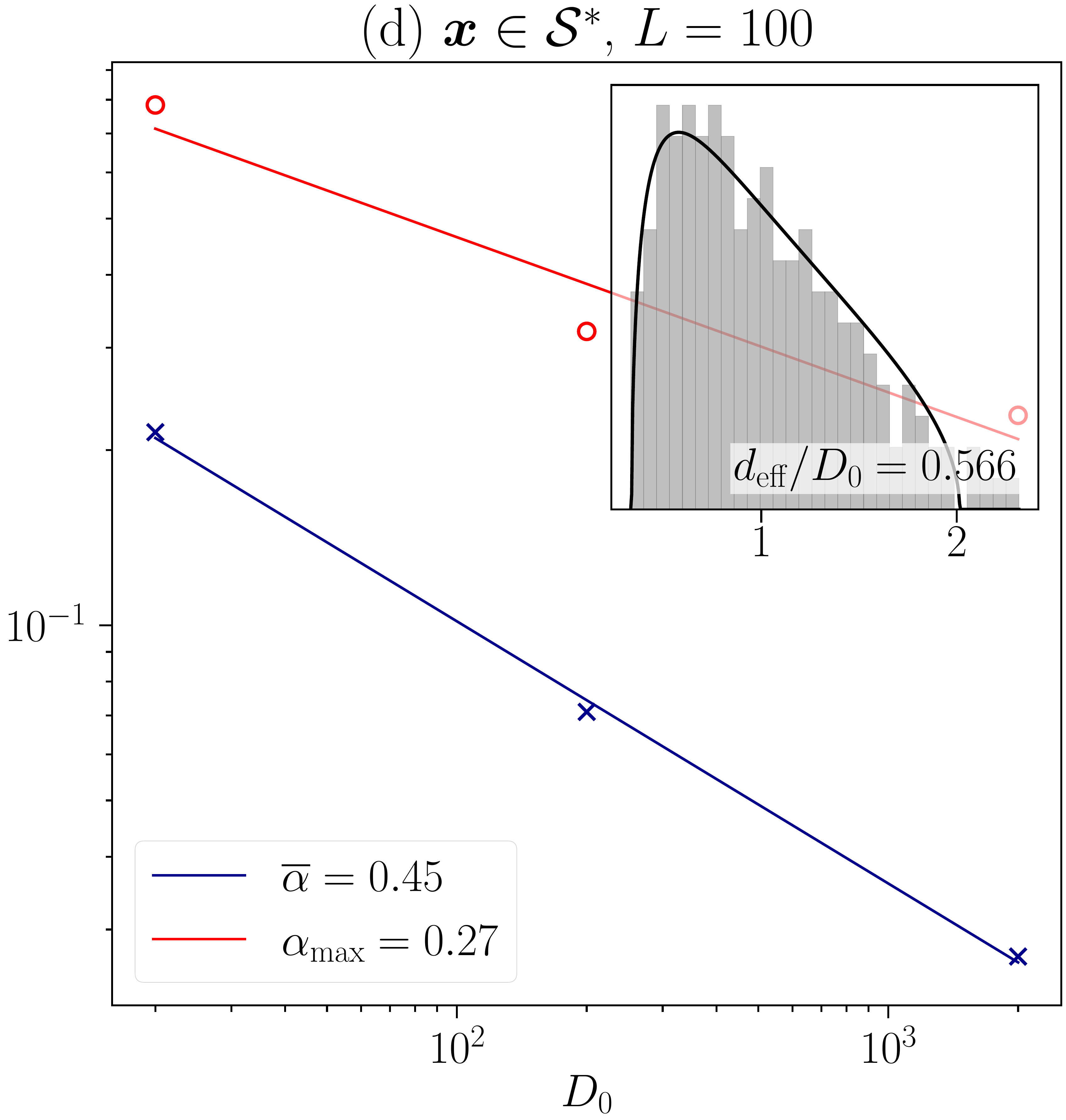} 
		\caption{
		}
		\label{fig sub dec Petz neq}
	\end{minipage}\hfill
	\begin{minipage}[b]{0.48\textwidth}
		\centering
		\includegraphics[width=\linewidth]{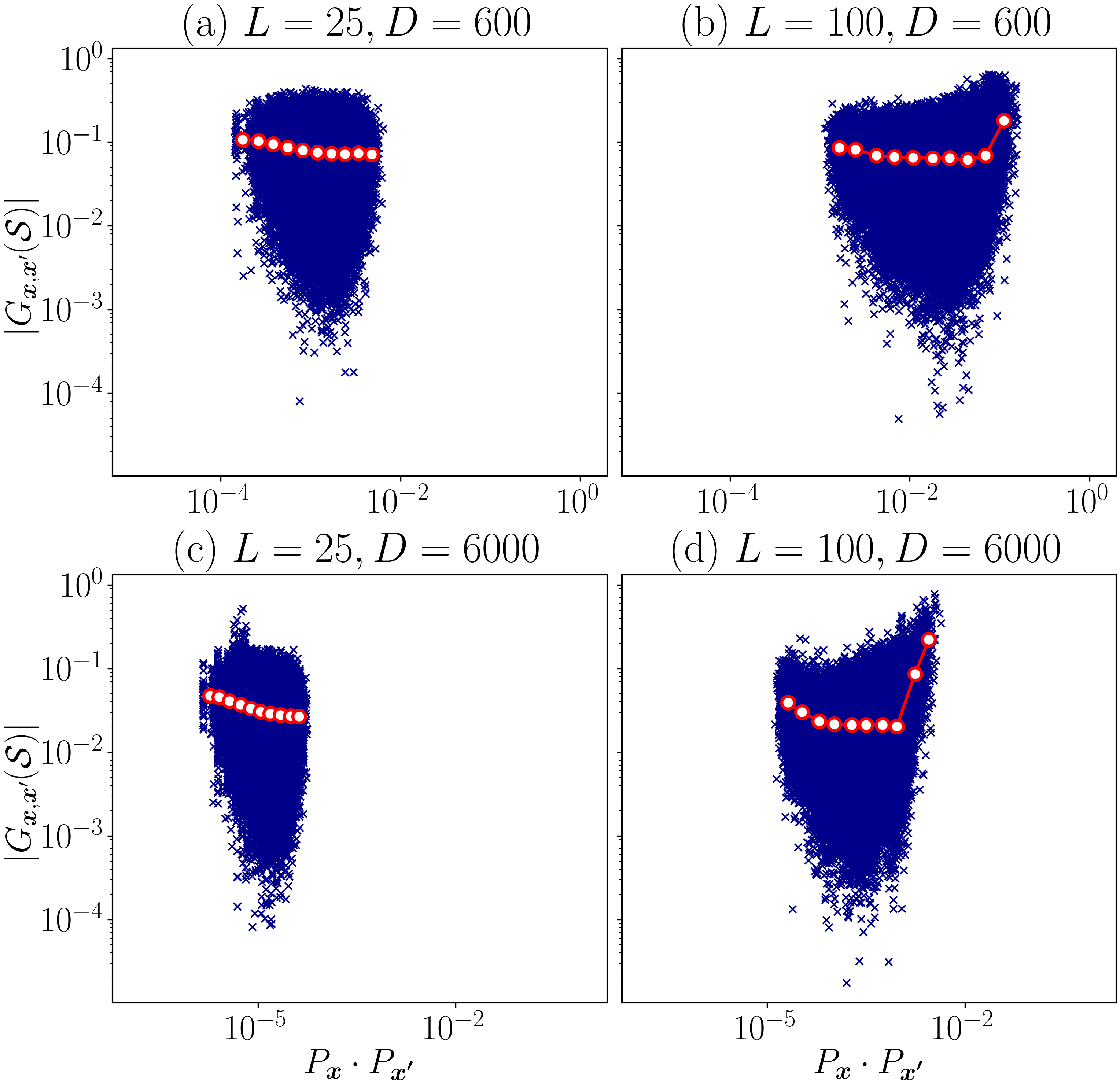} 
		\caption{
		}
		\label{fig dec pur neq}
	\end{minipage}
\end{figure}

\noindent\bb{Petz purity.}
The results for the purity of the Petz recovered state confirm the above observations: there is a similar trend, but it is quantitatively weaker. In particular, while restoration of decoherence works well, as shown in Fig.~\ref{fig sub dec Petz neq}, the scatter plot of Fig.~\ref{fig dec pur neq} shows partially even an inverse tendency: smaller Petz purity can cause larger coherences. \\

\begin{figure*}[ht]
	\centering\includegraphics[width=0.49\textwidth,clip=true]{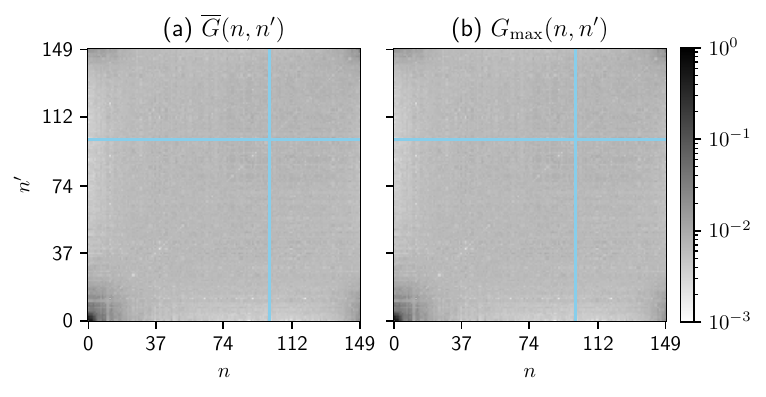}
	\label{fig heat map neq}
	\caption{
	}
\end{figure*}

\noindent\bb{$n$-dependence.}
Unfortunately, the ``heat map'' plot in Fig.~\ref{fig heat map neq}, investigating decoherence as a function of $n$ for the randomly sampled histories $\bs x\in \C S$, shows very little structure compared to the main text. In contrast, the average localization [Fig.~\ref{fig N1 neq}(a)] and average purity [Fig.~\ref{fig N1 neq}(b)] show a clear correlation with deviations of $n$ from the expected mean $\overline n = (2/3)(L-1)$. Note that the expectation value $\overline n$ for the nonequilibrium time scale is the same as for the equilibrium time scale since the ratio to jump from $\C H_0$ to $\C H_1$ over the inverse process is $T_{1|0}/T_{0|1} \approx 2/3$, where $T_{x|x'}$ is the sampled transition matrix for the nonequilibrium time scale, generated in the same way as explained in Sec.~\ref{sec n dependence} of the main text. Despite having the same mean, Fig.~\ref{fig N1 neq}(c) plots the weights $q(\bs x)$ and compares them with the probabilities $p(\bs x) = (2/3)^n (1/3)^{L-1-n}$ of a memoryless Bernoulli trial (black solid line), which, as expected, show no relation to $q(\bs x)$. Thus, while the mean is the same, correlations matter for the full probabilities of the elementary histories $\bs x$. \\

\begin{figure*}[ht]
	\centering\includegraphics[width=0.99\textwidth,clip=true]{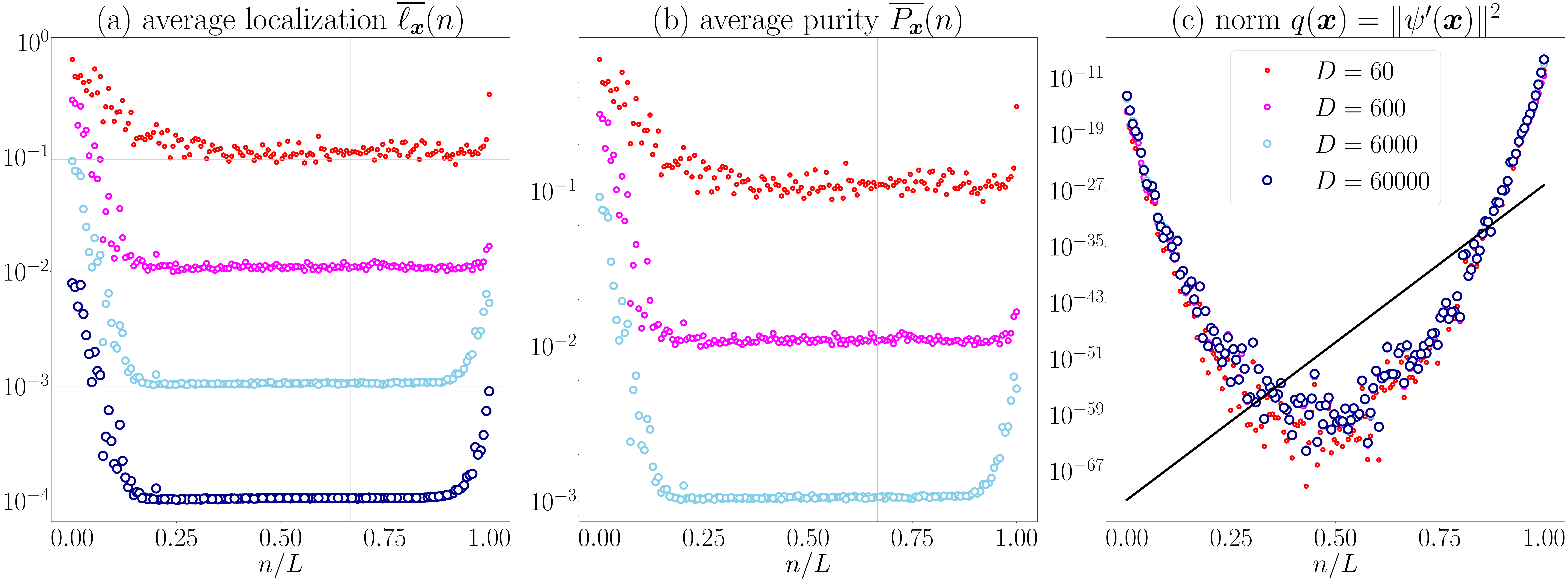}
	\label{fig N1 neq}
	\caption{
	}
\end{figure*}

\noindent\bb{Inhomogeneous histories and Born's rule.}
For the investigation of inhomogeneous histories we now consider equilibrium time scales and a global Haar random initial state (\emph{not} confined to subspace $\C H_1$) as in Ref.~\cite{StrasbergSchindlerArXiv2023}. The results in Fig.~\ref{fig Born eq} show exactly the same tendency as in the main text: decoherence happens for histories that are Born-typical. As a consequence of the equilibrium time scale, the only difference is that the probabilities $p_\text{Markov}(n)$ (magenta squares) and $p_\text{Bernoulli}(n)$ (sky blue disks) now match perfectly. Moreover, we see that recoherent histories have weights $q(n)$ clearly deviating from the Bernoulli distribution, which becomes clear when looking at the logarithmic scale in the last row of Fig.~\ref{fig Born eq}. \\

\begin{figure*}[ht]
	\centering\includegraphics[width=0.99\textwidth,clip=true]{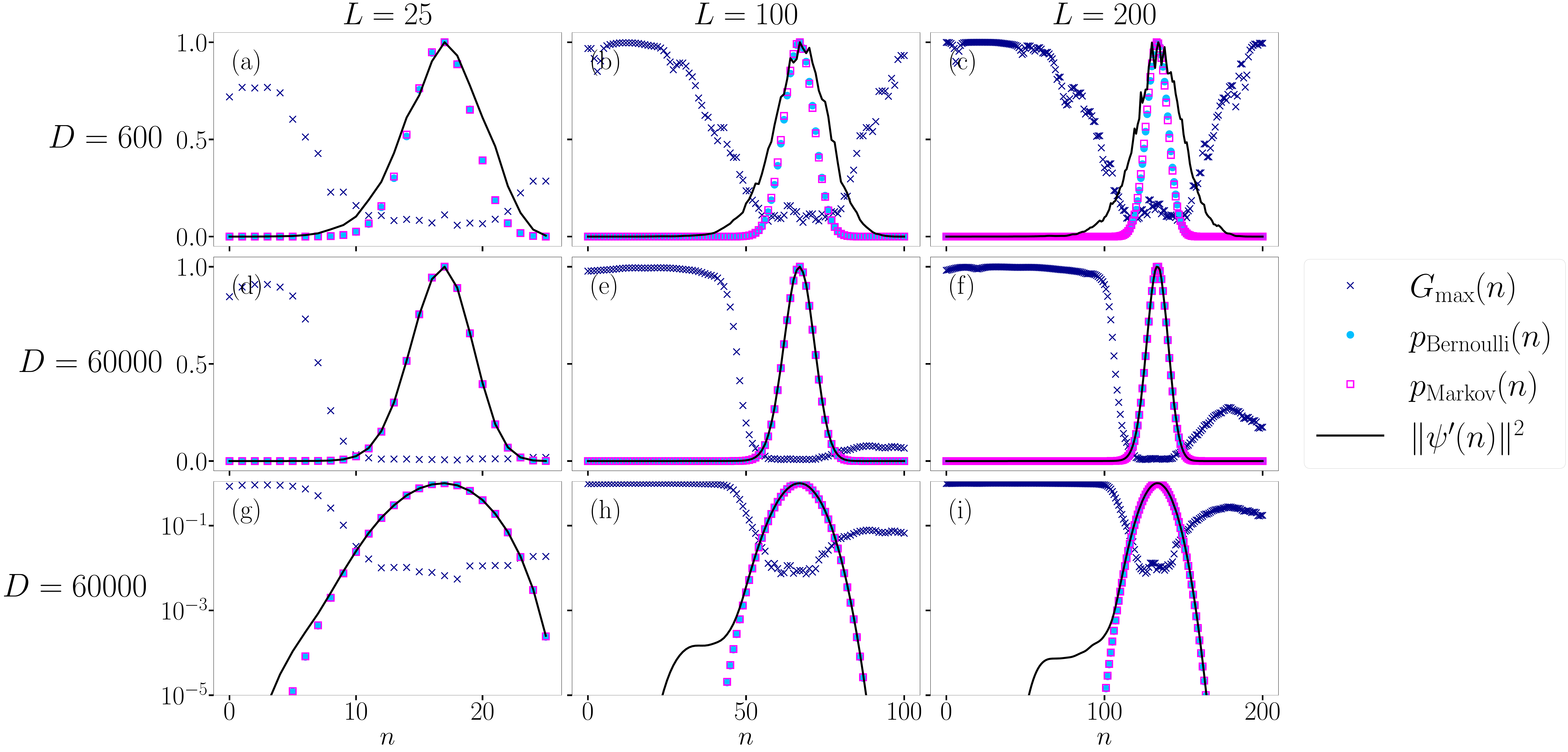}
	\caption{
	}
	\label{fig Born eq}
\end{figure*}

\noindent\bb{Scaling with $D$ for fixed $L$.}
Finally, we consider in Fig.~\ref{fig inverse SNR neq} how persistent the structure of decoherence is for increasing $D$ and different fixed $L$. The conclusion is identical to the main text.

\begin{figure}[ht]
	\centering\includegraphics[width=0.49\textwidth,clip=true]{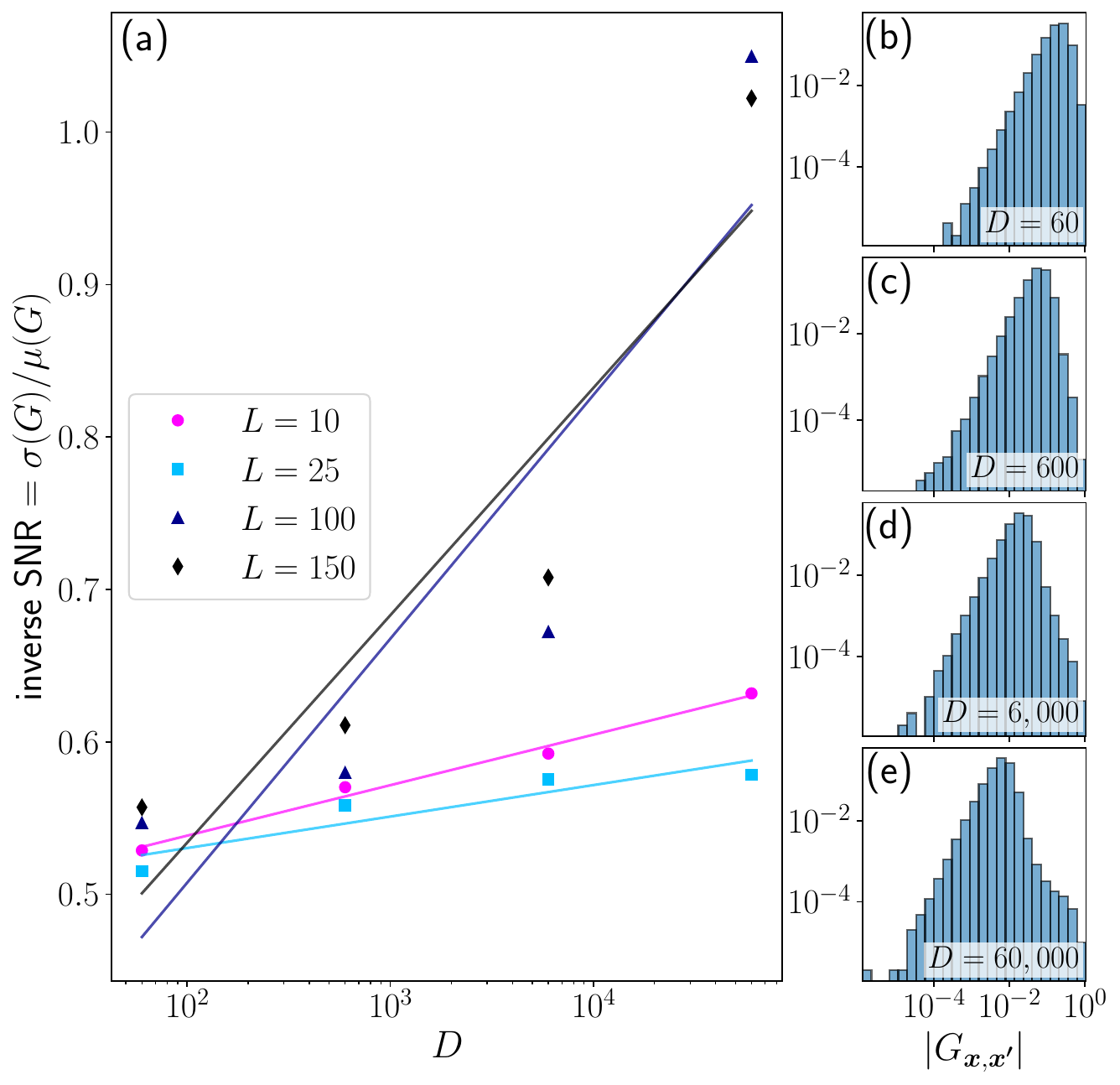}
	\caption{
	}
	\label{fig inverse SNR neq}
\end{figure}

\end{document}